\documentclass[11pt]{article} %11-point
\usepackage{hyperref}
\usepackage{changepage}
\usepackage[letterpaper, margin=1in]{geometry} %letter-size, 1-inch margins all around
\usepackage{setspace}
\usepackage{url}
\usepackage{graphicx} % Required for inserting images
\usepackage{amssymb,amsmath,amsthm}
\usepackage{thmtools, thm-restate}
\usepackage{algorithm}
\usepackage{algorithmic}
\usepackage{enumitem} 
\usepackage{cases} 

\newtheorem{theorem}{Theorem}
\newtheorem{lemma}{Lemma}

\theoremstyle{definition}
\newtheorem{definition}{Definition}[section]

\usepackage{xcolor}
\definecolor{WildStrawberry}{RGB}{255,67,164}
\definecolor{Dousha}{RGB}{47, 136, 67}
\definecolor{BlueBerry}{RGB}{44, 96, 189}

\newcommand{\be}{\begin{equation}}
\newcommand{\ee}{\end{equation}}
\newcommand{\beq}{\begin{equation*}}
\newcommand{\eeq}{\end{equation*}}

\newcommand{\R}{\mathbb{R}}

\newcommand{\E}{\mathbb{E}}

\newcommand{\eps}{\varepsilon}
\newcommand{\ind}[1]{\mathbb{I}\hspace{-0.1em}\left[\vphantom{\sum}#1\right]}

\newcommand{\AutoAdjust}[3]{\mathchoice{ \left #1 #2  \right #3}{#1 #2 #3}{#1 #2 #3}{#1 #2 #3} }
\newcommand{\Xcomment}[1]{{}}

\newcommand{\InBrackets}[1]{\AutoAdjust{[}{#1}{]}}% {\left[{#1}\right]}
\newcommand{\abs}[1]{\AutoAdjust{|}{#1}{|}}%
\newcommand{\Ex}[2][]{\operatorname{\mathbf E}_{#1}\InBrackets{#2}}

\newcommand{\eqdef}{\overset{\mathrm{def}}{=\mathrel{\mkern-3mu}=}}
\newcommand{\vect}[1]{\ensuremath{\mathbf{#1}}}

\newcommand\restr[2]{{% we make the whole thing an ordinary symbol
  \left.\kern-\nulldelimiterspace % automatically resize the bar with \right
  #1 % the function
  \vphantom{\big|} % pretend it's a little taller at normal size
  \right|_{#2} % this is the delimiter
  }}
\newcommand{\dd}{\mathrm{d}}

\newcommand{\costi}[1][i]{\texttt{cost}_{#1}}
\newcommand{\SC}{\texttt{SC}}

\newcommand{\mech}{\mathcal{M}}

\newcommand{\loc}{\vect{p}}

\newcommand{\f}{\mathbf{f}}

\newcommand{\sign}{\texttt{sign}}

\title{Product Gap Mechanisms for Multi-Facility Location}
\author{Jianhao Jia}
\date{}

\begin{document}

\maketitle

    \begin{abstract}
We study randomized strategyproof mechanisms for locating multiple
facilities on the real line. We introduce the \emph{Product-Gap
mechanism}, which selects $k$ reported locations with probability
proportional to the product of the consecutive gaps between them and
opens facilities at the selected locations. We prove that, for every
$k\geq 2$, the mechanism achieves a tight approximation ratio of $2k$
for social cost. We then study its incentive properties and show that
it is strategyproof in expectation for $k=2$ and $k=3$, but is not
strategyproof for $k\geq 4$. In particular, the mechanism gives
strategyproof $4$- and $6$-approximations for two and three facilities,
respectively. Finally, for two facilities, we combine Product-Gap with
the Proportional mechanism of Lu et al. We show that an optimized
report-independent mixture is strategyproof and has a tight
approximation ratio of
$(74+4\sqrt{3})/23\approx 3.519$ on the line, improving upon the
previous factor of $4$.
\end{abstract}

\section{Introduction}
\label{sec:intro}
Facility location is one of the canonical problems in mechanism design
without money. A planner wishes to locate public facilities---such as
schools, hospitals, or service centers---based on locations reported by
the agents who will use them. Each agent prefers a facility to be close
to her true location and may therefore benefit from misreporting. The
planner must simultaneously guarantee that truthful reporting is
optimal for every agent and that the resulting facilities serve the
population efficiently.

We consider the standard $k$-facility location problem on the real
line. There are $n$ agents, and a mechanism opens $k$ facilities as a
function of their reported locations. The cost of an agent is her
distance to the nearest open facility, and the social cost is the sum
of the agents' costs. Our objective is to design a randomized
strategyproof mechanism whose expected social cost is within a small
factor of the optimal $k$-facility social cost. Since the underlying
one-dimensional $k$-median problem can be solved optimally, the loss
measured by this approximation ratio arises from the incentive
constraint rather than from computational hardness. Facility location
has consequently served as a central test case for approximate
mechanism design without money
\cite{procaccia2013approximate,chan2021mechanism}.

For a single facility, the interaction between incentives and
efficiency is particularly well understood. On the line, opening the
facility at a median report is strategyproof, indeed
group-strategyproof, and exactly minimizes the social cost. This
observation is rooted in the classical theory of single-peaked
preferences initiated by Black and in Moulin's characterization of
strategyproof rules on the line
\cite{Black48,moulin1980strategy}. The one-facility problem has also
been studied extensively beyond the line. Recent work has refined the
approximation guarantees of the coordinate-wise median in
higher-dimensional and general normed spaces and has demonstrated
further improvements through randomization
\cite{jia1,barak26,ChanLW26,hastings2026}. Thus, although important
questions remain in richer domains, the basic structure of
strategyproof single-facility location is by now understood rather
well.

The picture changes substantially when two facilities must be opened.
Simple deterministic mechanisms have approximation ratios that grow
linearly with the number of agents. In particular, the Two-Extremes
mechanism, which opens facilities at the leftmost and rightmost
reports, has approximation ratio $n-2$ on the line, and this dependence
on $n$ is unavoidable for deterministic strategyproof mechanisms
\cite{LuSWZ10,FotakisT14}. Lu, Sun, Wang, and Zhu showed that
randomization overcomes this linear barrier. Their Proportional
mechanism first selects one agent uniformly at random and then selects
a second agent with probability proportional to her distance from the
first. The mechanism is strategyproof in expectation and achieves
approximation ratio $4$ in every metric space
\cite{LuSWZ10}. This result has been the basic positive benchmark for
multiple-facility location.

For more than two facilities, however, the standard model has resisted
constant-factor strategyproof approximation. Fotakis and Tzamos proved
that, for every $k\geq 3$, no deterministic anonymous strategyproof
mechanism has a bounded approximation ratio on the line, even on
instances with only $k+1$ agents
\cite{FotakisT14}. They also obtained positive results under the
winner-imposing relaxation. In that model, an agent selected to host a
facility is required to use that particular facility rather than her
nearest open facility. In particular, they gave an
$(n-1)$-approximate imposing strategyproof mechanism for three
facilities on the line, and more generally a $4k$-approximation in the
winner-imposing model
\cite{FotakisT10,FotakisT14}. These mechanisms demonstrate that
multiple facilities can be handled after modifying the agents' service
rule, but they do not resolve the standard model considered here.

In the standard nearest-facility model, the previously known
strategyproof guarantees for $k>2$ depended on the number of agents.
For three facilities on the line, Lu et al.\ described a randomized
mechanism with approximation ratio $n-1$
\cite{LuSWZ10}. Fotakis and Tzamos later proposed the Equal-Cost
mechanism, which applies to every $k$ and $n$ and has social-cost
approximation ratio at most $n$
\cite{FotakisT16}. Constant guarantees were also known in special
regimes, such as when $n=k+1$
\cite{EscoffierGTPS11,FotakisT16}. Nevertheless, for the unrestricted
standard model, it remained open whether a randomized strategyproof
mechanism could achieve an approximation factor independent of $n$ for
any fixed $k>2$.

This leads to the central open question answered in this paper:

\begin{quote}
\emph{Can one design a randomized strategyproof mechanism with a
constant approximation ratio for more than two facilities on the
line?}
\end{quote}

We answer this question through a simple randomized mechanism that
extends the distance-based idea underlying two-facility mechanisms.
For a set $S$ of $k$ agents, let
$p_1(S)\leq\cdots\leq p_k(S)$ be their ordered reported locations. The
\emph{Product-Gap mechanism} assigns $S$ the weight

$$
W_k(S)
=
\prod_{j=1}^{k-1}
\bigl(p_{j+1}(S)-p_j(S)\bigr),
$$

selects a set with probability proportional to this weight, and opens
facilities at the selected reports. For $k=2$, the weight is simply
the distance between the two selected agents. For larger $k$, the
product rewards sets that are separated across every consecutive gap.
The definition is uniform in $k$, but, perhaps surprisingly, its
approximation and incentive properties exhibit very different
boundaries.

\paragraph{Our results.}
Our first result is an exact approximation analysis that applies to
every number of facilities. For every $k\geq2$, the Product-Gap
mechanism has expected social cost at most
$2k\cdot\mathrm{OPT}_k$. Moreover, this guarantee is tight for the
mechanism: for every fixed $k$, there are instances whose approximation
ratio approaches $2k$. This welfare analysis does not rely on
strategyproofness and therefore continues to apply for values of $k$
for which the mechanism is manipulable.

Our second result determines the incentive boundary of the mechanism.
We prove that the Product-Gap mechanism is strategyproof in expectation
for $k=3$. Together with the approximation theorem, this gives a
strategyproof $6$-approximation for three-facility location on the
line. In contrast, we construct a profitable deviation for $k=4$ and
extend the construction to every $k\geq4$. Thus, although the
Product-Gap mechanism is defined and achieves a $2k$-approximation for
every $k$, three is the largest number of facilities for which the
mechanism remains strategyproof.

Finally, we revisit the two-facility problem. For $k=2$, Product-Gap
selects an unordered pair of agents with probability proportional to
the distance between their reports; this is the mechanism called
\emph{Global Pair} by Ma and Peng \cite{ma2026}. We combine Global
Pair with the Proportional mechanism of Lu et al.\ \cite{LuSWZ10}.
For every fixed mixing probability $\lambda$, we determine the exact
worst-case approximation ratio of the resulting mixture on the line.
Optimizing over $\lambda$ gives the tight ratio

$$
\frac{74+4\sqrt{3}}{23}
\approx 3.519.
$$

The tightness statement is with respect to the class of fixed,
report-independent mixtures of Proportional and Global Pair. The proof
exploits the order structure of the line and the complementary
worst-case behavior of the two component mechanisms.

\paragraph{Concurrent and independent work.}
This work was developed independently and concurrently with the work of
Ma and Peng \cite{ma2026}, whose preprint appeared while this
manuscript was in preparation. Ma and Peng were the first to make
public a strategyproof two-facility mechanism with an approximation
ratio strictly below $4$. They independently introduced the mechanism
they call \emph{Global Pair}, which coincides with the $k=2$
specialization of Product-Gap, and independently proposed mixing it
with the Proportional mechanism of Lu et al.\ \cite{LuSWZ10}. They
prove an approximation ratio of $11/3$ for their mixture on the
broader class of Ptolemaic metric spaces.

Before becoming aware of their work, we had independently discovered
the Product-Gap mechanism, its Global Pair specialization, and the idea
of mixing this specialization with the Proportional mechanism, together
with an analysis showing that the mixture beats the factor-$4$
barrier on the line. We therefore regard the discovery of the
mechanism and the mixture idea as independent and concurrent, while
recognizing Ma and Peng's priority as the first public account of the
below-$4$ result.

Our two-facility contribution is a sharper analysis specialized to the
line. We obtain the exact ratio
$(74+4\sqrt{3})/23$ among all fixed mixtures of Proportional and
Global Pair. Our proof of the Global Pair dispersion bound uses the
same basic triple-decomposition argument as Ma and Peng. We retain
their attribution and defer a self-contained line-specific proof to
the appendix. The main focus of our paper is complementary to their
work: we define Product-Gap for arbitrary $k$, prove its tight
$2k$ approximation guarantee, establish strategyproofness for three
facilities, and show that strategyproofness fails for every
$k\geq4$.

\paragraph{Organization.}
Section~\ref{sec:model-mechanism} introduces the model and formally
defines the Product-Gap mechanism. Section~\ref{sec:approximation}
proves its tight $2k$ approximation guarantee for every $k$.
Section~\ref{sec:strategyproofness} proves strategyproofness for
$k=3$ and gives profitable deviations for every $k\geq4$.
Section~\ref{sec:two-facility-mixture} studies the two-facility
specialization and proves the exact approximation ratio of the optimal
fixed mixture of Global Pair and the Proportional mechanism of Lu et
al. A self-contained proof of the Global Pair dispersion bound on the
line is provided in the appendix.

\paragraph{Use of artificial intelligence.}
Generative AI tools were used in the preparation of this manuscript to
assist with writing and with formalizing parts of the proofs. In
particular, these tools helped improve the exposition, organize proof
arguments, and expand proof sketches into explicit intermediate claims
and calculations. AI-generated material was treated as draft material
rather than as mathematical evidence. The research questions,
mechanisms, and underlying proof ideas were developed by the authors.

\section{Other related work}
Facility location lies at the intersection of social choice and
mechanism design without money; see the survey of Chan et al.\
\cite{chan2021mechanism} and the framework of Procaccia and
Tennenholtz \cite{procaccia2013approximate}. Its social-choice
foundations go back to Black's median-voter analysis for single-peaked
preferences \cite{Black48}. Moulin subsequently characterized a broad
class of strategyproof rules on single-peaked domains
\cite{moulin1980strategy}. Related characterizations investigate the
roles of unanimity, continuity, anonymity, range restrictions, and
phantom voters
\cite{border1983straightforward,peters1993range,
barbera1993generalized,ching1997strategy,barbera1998strategy}.
For a broader overview of strategyproof social choice, see
\cite{barbera2011strategyproof}. These results help explain why the
single-facility problem on a line admits a particularly clean
description in terms of median and generalized-median mechanisms.

A substantial literature studies how this structure changes with the
geometry or topology of the location space. Schummer and Vohra
consider strategyproof facility location on networks
\cite{schummer2002strategy}, while Dokow et al.\ study discrete lines
and cycles \cite{DokowFMN12}. Meir investigates the three-agent
facility-location problem on a circle
\cite{meir2019strategyproof}. In multidimensional domains, Walsh
studies strategyproof mechanisms in Euclidean and Manhattan spaces
\cite{walsh2020strategy}, and Tang et al.\ characterize
group-strategyproof mechanisms in strictly convex spaces
\cite{TangYZ20}. Goel and Hann-Caruthers analyze the optimality of the
coordinate-wise median in two dimensions
\cite{goel2023optimality}, while El-Mhamdi et al.\ investigate the
strategyproofness of the geometric median
\cite{el2023strategyproofness}. More recently, Gravin and Jia obtain
approximation guarantees for the median mechanism in $\R^d$
\cite{jia1}. Recent work has also studied the value of randomization
and strategyproof approximation under general $L_p$-norm social-cost
objectives \cite{barak26,ChanLW26,hastings2026}.

The social cost considered in this paper is only one of several
objectives studied in strategic facility location. Procaccia and
Tennenholtz study approximate mechanism design for both utilitarian and
minimax objectives \cite{procaccia2013approximate}. Alon et al.\
consider strategyproof approximation of the minimax objective on
networks \cite{AlonFPT10}, whereas Feldman and Wilf study the
least-squares objective \cite{FeldmanW13}. Feigenbaum et al.\ consider
the more general objective given by the $L_p$ norm of the agents'
individual costs \cite{FeigenbaumSY17}. Procaccia et al.\ study the
tradeoff between approximation quality and the variance of randomized
facility-location mechanisms \cite{procaccia2018approximation}.
Aziz et al.\ instead focus on proportional fairness and characterize a
randomized facility-location rule satisfying both fairness and
strategyproofness \cite{AzizLSW22}. Facility-location mechanisms are
also closely related to voting mechanisms, particularly when the
facility must be chosen from a restricted set of alternatives
\cite{Feldman16}.

Beyond the standard homogeneous nearest-facility model, several works
consider different facility types, agent preferences, or service
rules. Fotakis and Tzamos study winner-imposing mechanisms for multiple
facilities \cite{FotakisT10}, deterministic mechanisms for
multi-facility location \cite{FotakisT14}, and multi-facility location games
with concave cost functions \cite{FotakisT16}. Escoffier et al.\
consider settings with many facilities, including regimes in which the
number of facilities is close to the number of agents
\cite{EscoffierGTPS11}. Serafino and Ventre study heterogeneous
facilities, where facilities may provide different services
\cite{SerafinoV16}. Other variants allow agents to have dual or
distinct preferences over facility locations \cite{zou15,mei2019}.
Obnoxious facility-location problems, in which agents prefer a facility
to be far away, have also been studied on networks and in
prediction-augmented models
\cite{cheng2013strategy,istrate2022mechanism}.

Another line of work strengthens the manipulation model. Todo et al.\
introduce false-name-proof mechanism design without money, allowing an
agent to participate under multiple identities \cite{todo11}.
False-name and identity manipulations have subsequently been studied
for two-facility location and for discrete facility-location problems
with optional preferences
\cite{sonoda2016false,ono2017rename}. Related extensions include
distributed facility location, in which information or decisions are
aggregated across multiple districts \cite{filoratsikas2021}, and
dynamic facility reallocation, in which already opened facilities must
move as the instance changes over time
\cite{fotakis2019,dekeijzer2022}. These settings impose requirements
that are absent from our static model, but they illustrate the variety
of strategic constraints that arise once multiple facilities are
present.

Finally, a recent literature augments facility-location mechanisms
with predictions or advice. Agrawal et al.\ initiate the systematic
study of learning-augmented mechanism design for facility location
\cite{agrawal2022learning}, while Xu and Lu develop a more general
framework for mechanism design with predictions \cite{XuL22}. Barak
et al.\ study advice about agents' locations
\cite{barak2024mac}, whereas Christodoulou et al.\ consider mechanisms
augmented with advice about the desired output
\cite{christodoulou2024mechanismdesignaugmentedoutput}. Balkanski et
al.\ further investigate randomized strategic facility location with
predictions
\cite{balkanski2024randomizedstrategicfacilitylocation}. These works
seek mechanisms that exploit accurate advice while retaining
worst-case guarantees when the advice is inaccurate. In contrast, the
Product-Gap mechanism uses no predictions or auxiliary information and
is evaluated solely through worst-case strategyproofness and
approximation guarantees.

% \subsection{Related Work:sample}
% \input{related_sample}

% \section{Preliminaries}
% \label{sec:prelim}
% In the preamble
% \usepackage{amsthm}

% \theoremstyle{definition}
% \newtheorem{definition}{Definition}[section]

\section{Model and the Product-Gap Mechanism}
\label{sec:model-mechanism}
We consider the $k$-facility location problem on the real line. There are $n$ agents, indexed by $N=[n]$, and the mechanism
locates $k$ facilities, where $n\geq k\geq 2$. Agent $i$ has her private location $\ell_i$ and reports a
location $p_i\in\R$. We denote the reported location profile by
$\loc=(p_1,\ldots,p_n)\in\R^n$.
An outcome is a vector of facility locations
$\f=(f_1,\ldots,f_k)\in\R^k$. We order the facilities from left to
right, so $f_1\leq f_2\leq\cdots\leq f_k$. The cost of agent $i$ is
her distance to the nearest facility, namely
$\costi(p_i,\f)=\min_{j\in[k]}\abs{p_i-f_j}$. The social cost of
$\f$ on profile $\loc$ is
$\SC(\loc,\f)=\sum_{i\in N}\costi(p_i,\f)$.
The optimal social cost for profile $\loc$ is
$\mathrm{OPT}_k(\loc)
\eqdef
\min_{\f\in\R^k}\SC(\loc,\f)$.

A randomized mechanism $\mech$ maps each reported profile $\loc$ to
a distribution over facility locations. We say that $\mech$ is an
$\alpha$-approximation if, for every profile $\loc$, $\Ex[\f\sim\mech(\loc)]{\SC(\loc,\f)}
\leq
\alpha\cdot\mathrm{OPT}_k(\loc)$.

We use strategyproofness to mean strategyproofness in expectation.
To define it, fix a true location profile. Under truthful
reporting, agent $i$ reports $\ell_i$. If she deviates to
$p_i\in\R$, the reported profile is
$(p_i,\loc_{-i})$, while her cost continues to be evaluated at
her true location $\ell_i$. A randomized mechanism $\mech$ is
\emph{strategyproof} if, for every agent $i$ with true location $\ell_i$
and every alternative report $p_i\in\R$, $\Ex[\f\sim\mech(\loc)]
    {\costi(\ell_i,\f)}
\leq
\Ex[\f\sim\mech(p_i,\loc_{-i})]
    {\costi(\ell_i,\f)}$. Under truthful reporting, we can assume without loss of generality that $\ell_i=p_i$

\paragraph{The Product-Gap Mechanism}
\label{subsec:product-gap}

We now introduce the Product-Gap mechanism. For every set
$S\in\binom{N}{k}$, let
$p_1(S)\leq p_2(S)\leq\cdots\leq p_k(S)$ denote the ordered reported
locations of the agents in $S$. The weight of $S$ is the product of
the consecutive gaps between these locations:

$$
W_k(S)
\eqdef
\prod_{j=1}^{k-1}
\bigl(p_{j+1}(S)-p_j(S)\bigr).
$$

Let $Z_k(\loc)$ denote the total weight of all sets of $k$ agents:

$$
Z_k(\loc)
\eqdef
\sum_{S\in\binom{N}{k}}W_k(S).
$$

\begin{definition}[Product-Gap mechanism]
\label{def:product-gap}
Given a reported location profile $\loc$, the Product-Gap mechanism
$\mech_{\mathrm{PG},k}$ operates as follows.

If $Z_k(\loc)>0$, the mechanism selects a set
$S\in\binom{N}{k}$ with probability $W_k(S)/Z_k(\loc)$. For the
selected set $S$, it locates the $j$-th facility at
$f_j=p_j(S)$ for every $j\in[k]$.

If $Z_k(\loc)=0$, the mechanism returns an arbitrary outcome $\f$
satisfying $\SC(\loc,\f)=0$, according to a fixed rule.
\end{definition}

Thus, the probability of selecting a set of $k$ reported locations is
proportional to the product of the $k-1$ consecutive gaps between
them. When $k$ is clear from the context, we write
$\mech_{\mathrm{PG}}$, $W(S)$, and $Z(\loc)$ instead of
$\mech_{\mathrm{PG},k}$, $W_k(S)$, and $Z_k(\loc)$, respectively.

% $\bullet$ Convention: bold face script for vectors \\
% $\bullet$ Facility location: mechanism $M$ outputs location $F_M(L)\in \reals^d$\\
% $\bullet$ $L_q-norm$: $\qnorm{x},\qnorm{X_i^j}$ $\Vert \cdot \Vert_q$, $L_\infty-norm$: $\Vert \cdot \Vert_\infty$ \\
% $\bullet$ median point: $\m = (m_1,m_2,\cdots,m_d)\in \reals^d$ \\
% $\bullet$ optimal location: $\f= (f_1,f_2,\cdots,f_d)\in \reals^d$ \\
% $\bullet$ Preferred location of agent $i$: $\loc_i=(l_1,l_2,\cdots,l_d)\in \mathbf{R}^d$ \\
% $\bullet$ Upper and lower bound of approximation ratio depending on q and d: $UB(q,d), LB(q,d)$. $UB(q)=UB(q,+\infty)$

\section{Approximation Guarantees}
\label{sec:approximation}
% In the preamble, if these environments are not already defined
% \theoremstyle{plain}
% \newtheorem{theorem}{Theorem}[section]
% \newtheorem{lemma}[theorem]{Lemma}
\newtheorem{corollary}[theorem]{Corollary}

% \section{Approximation}
% \label{sec:approximation}

In this section, we analyze the Product-Gap mechanism under truthful
reporting. We show that, for every $k\geq 2$, its approximation ratio
is $2k$, and that this guarantee is tight.

For the analysis, relabel the agents so that
$p_1\leq p_2\leq\cdots\leq p_n$, and define the elementary gaps
$g_q=p_{q+1}-p_q$ for every $q\in[n-1]$. We write $\E[\SC]$ for the
expected social cost of $\mech_{\mathrm{PG}}$ on the profile $\loc$.

\begin{theorem}[Approximation guarantee]
\label{thm:2k-approximation}
For every $k\geq 2$, the Product-Gap mechanism is a
$2k$-approximation for social cost. That is, for every truthful
profile $\loc$,
$
\Ex[\f\sim\mech_{\mathrm{PG}}(\loc)]{\SC(\loc,\f)}
\leq
2k\cdot\mathrm{OPT}_k(\loc).
$
\end{theorem}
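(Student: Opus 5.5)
The plan is to clear the denominator and prove the equivalent polynomial inequality
\[
\sum_{S\in\binom{N}{k}} W(S)\,\SC(\loc,\mathbf{p}(S)) \;\le\; 2k\cdot \mathrm{OPT}_k(\loc)\cdot Z(\loc),
\]
where $\mathbf{p}(S)$ denotes the facility vector $(p_1(S),\ldots,p_k(S))$. We may assume $Z(\loc)>0$, since otherwise the mechanism has social cost $0$. Fix an optimal solution. On the line we may take it to partition the sorted agents into $k$ contiguous clusters $C_1,\ldots,C_k$, each served at a median $m_j$. Both sides of the inequality will be written in terms of the elementary gaps $g_q$. On the optimal side,
\[
\mathrm{OPT}_k=\sum_q g_q\, o_q ,
\]
where $o_q=\min(L_q,R_q)$ if $g_q$ lies inside a cluster with $L_q$ agents of that cluster to its left and $R_q$ to its right, and $o_q=0$ if $g_q$ lies between clusters. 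On the mechanism side, $\SC(\loc,\mathbf{p}(S))=\sum_q g_q\,c_q(S)$, where $c_q(S)$ is the number of agents whose segment to their nearest selected facility crosses gap $q$.

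The argument then splits according to whether $S$ hits every cluster. First, suppose $S$ contains exactly one point in each cluster. Then only intra-cluster gaps are charged, and an agent of $C_j$ is charged at most as if it were served by the selected point $s_j\in C_j$. Since $\sum_{i\in C_j}|p_i-s_j|\le D_j+|C_j|\,|s_j-m_j|$, where $D_j$ is the cost of cluster $j$, it suffices to bound the $W$-weighted average of $|C_j|\,|s_j-m_j|$. Conditional on the other $k-1$ points, the weight depends on $s_j$ only through one or two factors. These factors are linear in $s_j$ for an extreme cluster and a concave quadratic for a middle cluster. The plan is to show that a point chosen with such a density, restricted to the points of $C_j$, has expected distance to the median at most a constant (ideally $1$) times the average distance $D_j/|C_j|$. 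This gives about $2D_j$ per cluster, hence the factor $2$.

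Second, suppose $S$ misses some cluster $C_j$. By pigeonhole, some other cluster contains two selected points that are consecutive in $S$, so $W(S)$ contains a factor equal to an intra-cluster distance. I would build a charging map that replaces one of these two points by a point of $C_j$, for example the median $m_j$, or sum over all agents of $C_j$. The goal is to show that the total weight of such ``bad'' sets, multiplied by their excess cost (essentially the number of agents of $C_j$ times the inter-cluster distance), is at most the weight of the corresponding ``good'' sets times the intra-cluster cost of the doubled cluster. Each replacement map is at most $k$-to-one, since a good set can arise by removing any of its $k$ points. This multiplicity is where the factor $k$ should enter, so that the two parts together give $2k$.

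The main obstacle is this second step. The product weight is not monotone under moving a single point: replacing an interior point changes two gaps at once, and the excess cost of a bad set involves distances to far clusters. I would first try to prove a clean one-point exchange inequality. Take a bad set $S$ with a doubled cluster $C_a$ and a missed cluster $C_j$. Move the point $x\in S\cap C_a$ that is closer to $C_j$ to a point $y\in C_j$, and compare $W(S)\cdot |C_j|\cdot \mathrm{dist}(x,C_j)$ with $\sum_{y\in C_j} W(S-x+y)$ times the intra-cluster gap at $C_a$. I expect the monotonicity of the product of gaps along the line to make this comparison work, with some care when several clusters are missed at once; that case I would handle by induction on the number of missed clusters.
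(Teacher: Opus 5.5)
\paragraph{The good-set step fails, not just the bad-set step.} Your bad-set exchange inequality is only conjectured, but the plan already breaks before that. You claim that a point drawn with density proportional to $(s_j-s_{j-1})$ is, in expectation, within a constant times $D_j/|C_j|$ of the median. This is false, because the linear factor rewards outliers.

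Take $k=2$ with $M$ agents at $-L$, $M$ agents at $0$, and one agent at $D$, where $D$ is slightly below $ML$. The unique optimal clustering is $C_1=\{-L,\ldots,-L\}$ and $C_2=\{0,\ldots,0,D\}$, with $\mathrm{OPT}_2=D\approx ML$. Given $s_1=-L$, the outlier is chosen as $s_2$ with probability $(L+D)/(ML+L+D)\approx 1/2$. So its expected distance to the median is about $ML/2$, versus $D_2/|C_2|\approx L$.

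Summed over good sets, serving $C_2$ from $s_2$ contributes about $M(L+D)\cdot MD\approx M^4L^2$. The entire right-hand side $2k\,\mathrm{OPT}_2\,Z=4D\bigl(M^2L+M(L+D)+MD\bigr)$ is only about $12M^3L^2$. The mechanism itself is nearly optimal on this profile. The loss comes entirely from your bound: under $\{-L,D\}$ the agents at $0$ are served by the facility at $-L$, which lies in the other cluster. So good sets cannot be bounded cluster by cluster from their own representatives, and the accounting ``$2$ from good sets, $k$ from bad sets'' breaks down.

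\paragraph{The missing idea.} Do not condition on the optimal clusters. Instead, expand each consecutive gap of $S$ into elementary gaps $g_q$. This shows that the mechanism picks a cut vector $c_1<\cdots<c_{k-1}$ with probability proportional to $\prod_j m_j(\vect{c})\prod_j g_{c_j}$, and then picks a \emph{uniform} representative in each cut block. The expected cost of a block under a uniform representative is exactly $\frac{2}{m}\sum_q(q-s+1)(t-q)g_q$.

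In the example, the part $D$ of the weight $L+D$ of $\{-L,D\}$ belongs to the cut at the large gap. Under that cut the agents at $0$ share a block with those at $-L$, so cross-cluster service is built in automatically.

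Multiplying by the cut mass gives $Z\,\E[\SC]\le 2k\sum Q(q_1,\ldots,q_k)\prod_h g_{q_h}$. The factor $k$ counts which of the $k$ gaps is the service gap. Then a pigeonhole charging finishes the proof. Among the $k+1$ intervals cut out by $q_1,\ldots,q_k$, two adjacent intervals must contain agents of the same optimal cluster. This yields $Q\le\sum_j\mu_{q_j}A_{(q_1,\ldots,q_k)\setminus q_j}$, and hence $Z\,\E[\SC]\le 2k\,\mathrm{OPT}_k\,Z$. Your pigeonhole intuition is the right one, but it has to be applied to these gap monomials rather than to selected sets.
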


If $Z_k(\loc)=0$, the mechanism returns an outcome of social cost zero,
and hence Theorem~\ref{thm:2k-approximation} is immediate. We therefore
assume $Z_k(\loc)>0$ throughout the upper-bound proof.

\paragraph{A random-cut representation.}

Consider a selected set $S=\{i_1,\ldots,i_k\}$ with
$i_1<\cdots<i_k$. For every $j\in[k-1]$,
$
p_{i_{j+1}}-p_{i_j}
=
\sum_{q=i_j}^{i_{j+1}-1}g_q.
$
Expanding the product of these $k-1$ sums gives
$
W_k(S)
=
\sum_{\substack{i_j\leq c_j<i_{j+1}\\
j=1,\ldots,k-1}}
\prod_{j=1}^{k-1}g_{c_j}.
$
Every vector appearing in this expansion satisfies
$1\leq c_1<\cdots<c_{k-1}\leq n-1$. Let $\mathcal C_{k-1}$ be the
set of all such cut vectors
$\vect{c}=(c_1,\ldots,c_{k-1})$. For
$\vect{c}\in\mathcal C_{k-1}$, set $c_0=0$ and $c_k=n$, and define

$$
I_j(\vect{c})
=
\{c_{j-1}+1,\ldots,c_j\},
\qquad
m_j(\vect{c})
=
c_j-c_{j-1}
$$

for every $j\in[k]$. We further define

$$
A_{\vect{c}}
=
\prod_{j=1}^k m_j(\vect{c}),
\qquad
G_{\vect{c}}
=
\prod_{j=1}^{k-1}g_{c_j}.
$$

\begin{lemma}[Normalization by cuts]
\label{lem:cut-normalization}
The normalization of the Product-Gap mechanism satisfies

$$
Z_k(\loc)
=
\sum_{\vect{c}\in\mathcal C_{k-1}}
A_{\vect{c}}G_{\vect{c}}.
$$
\end{lemma}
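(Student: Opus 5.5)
We prove the identity by double counting. Start from the definition $Z_k(\loc)=\sum_{S}W_k(S)$ and substitute the expansion of $W_k(S)$ from the random-cut representation above. This writes $Z_k(\loc)$ as a sum over pairs $(S,\vect{c})$, with $S=\{i_1<\cdots<i_k\}$ and $\vect{c}\in\mathcal C_{k-1}$ satisfying the interlacing condition $i_j\le c_j<i_{j+1}$ for every $j\in[k-1]$; each pair contributes $G_{\vect{c}}$. Exchanging the order of summation gives
$$
Z_k(\loc)=\sum_{\vect{c}\in\mathcal C_{k-1}} N(\vect{c})\,G_{\vect{c}},
$$
where $N(\vect{c})$ is the number of sets $S$ interlaced with $\vect{c}$. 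The lemma therefore reduces to the counting claim $N(\vect{c})=A_{\vect{c}}$.

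To prove the counting claim, rewrite the interlacing condition as $c_{j-1}<i_j\le c_j$ for all $j\in[k]$, using the conventions $c_0=0$ and $c_k=n$. For $j=1$ this is $1\le i_1\le c_1$. For $2\le j\le k-1$ it combines the two inequalities involving $i_j$. For $j=k$ it is $c_{k-1}<i_k\le n$. So the condition says exactly that $i_j\in I_j(\vect{c})$ for each $j$.

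The blocks $I_1(\vect{c}),\dots,I_k(\vect{c})$ are disjoint consecutive intervals that appear in increasing order. Hence any choice of one index from each block automatically gives $i_1<\cdots<i_k$, and so yields a valid $k$-set. Conversely, every interlaced set arises from exactly one such choice. Therefore $N(\vect{c})=\prod_j |I_j(\vect{c})|=\prod_j m_j(\vect{c})=A_{\vect{c}}$.

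Two points need checking along the way. First, the expansion of each $W_k(S)$ is a genuine sum of distinct cut vectors with no multiplicity. This holds because distinct choices of $(c_1,\dots,c_{k-1})$ in the product expansion are distinct vectors. Second, ties among the reports cause no trouble, since zero gaps simply give zero terms on both sides. The only real content is the bijection in the counting claim, and I expect the main care to lie in handling the boundary blocks correctly via $c_0$ and $c_k$.
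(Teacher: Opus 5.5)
Your proposal is correct and follows the paper's own argument. You expand each $W_k(S)$ over interlaced cut vectors, exchange the order of summation, and note that the sets compatible with $\vect{c}$ are exactly those with one agent in each block $I_j(\vect{c})$, so there are $A_{\vect{c}}$ of them. The only difference is that you spell out the boundary conventions $c_0=0$, $c_k=n$ and the bijection more explicitly than the paper does.
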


\begin{proof}
Fix a cut vector $\vect{c}$. The monomial $G_{\vect{c}}$ appears in
the expansion of $W_k(S)$ exactly when $S$ contains one agent from
each block $I_j(\vect{c})$. There are

$$
\prod_{j=1}^k m_j(\vect{c})
=
A_{\vect{c}}
$$

such sets. Summing first over selected sets and then over cut vectors
proves the identity.
\end{proof}

\begin{corollary}[Random-cut representation]
\label{cor:random-cut}
The Product-Gap mechanism can be sampled as follows. First choose a
cut vector $\vect{c}\in\mathcal C_{k-1}$ with probability
$A_{\vect{c}}G_{\vect{c}}/Z_k(\loc)$. Then, independently in each
block $I_j(\vect{c})$, choose one agent uniformly and open a facility
at each chosen report.
\end{corollary}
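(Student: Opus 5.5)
The plan is to show that the two-stage procedure in Corollary~\ref{cor:random-cut} induces exactly the distribution over sets $S\in\binom{N}{k}$ used by $\mech_{\mathrm{PG}}$. Both procedures open facilities at the reports of the selected agents, so equal set distributions give equal outcome distributions.

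First, the procedure is well defined. By Lemma~\ref{lem:cut-normalization}, the weights $A_{\vect{c}}G_{\vect{c}}/Z_k(\loc)$ are nonnegative and sum to $1$ over $\mathcal C_{k-1}$; here we use $Z_k(\loc)>0$. Also, every block $I_j(\vect{c})$ is nonempty, because $c_0<c_1<\cdots<c_k$ gives $m_j(\vect{c})\geq 1$. Hence uniform choice inside each block makes sense.

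Next, fix $S=\{i_1,\ldots,i_k\}$ with $i_1<\cdots<i_k$ in the sorted labelling. The procedure outputs $S$ only if the sampled cut vector $\vect{c}$ has exactly one element of $S$ in each block. Call such a $\vect{c}$ \emph{compatible with $S$}. I will check that compatibility is equivalent to $i_j\leq c_j<i_{j+1}$ for all $j\in[k-1]$:
\begin{itemize}
\item Agent $i_j$ lies in $I_j(\vect{c})$ iff $c_{j-1}<i_j\leq c_j$.
\item Since blocks are consecutive intervals in increasing order and the $i_j$ are increasing, having one element of $S$ per block forces $i_j\in I_j(\vect{c})$.
\item Taking all $j$ together, with $c_0=0<i_1$ and $i_k\leq n=c_k$ automatic, the conditions reduce to $i_j\leq c_j<i_{j+1}$ for $j\in[k-1]$.
\end{itemize}
This is exactly the index range of the expansion of $W_k(S)$ written before Lemma~\ref{lem:cut-normalization}.

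For a compatible $\vect{c}$, the conditional probability of picking exactly $S$ is $\prod_j 1/m_j(\vect{c})=1/A_{\vect{c}}$. For an incompatible $\vect{c}$ it is $0$. Therefore
\[
\Pr[S]=\sum_{\vect{c}\text{ compatible with }S}\frac{A_{\vect{c}}G_{\vect{c}}}{Z_k(\loc)}\cdot\frac{1}{A_{\vect{c}}}
=\frac{1}{Z_k(\loc)}\sum_{\substack{i_j\leq c_j<i_{j+1}\\ j=1,\ldots,k-1}}\prod_{j=1}^{k-1}g_{c_j}
=\frac{W_k(S)}{Z_k(\loc)},
\]
which is the Product-Gap selection probability.

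Ties in reports need no special treatment. Agents are indexed by labels, and equal reports simply contribute zero gaps. The opened facilities are the reports of the chosen agents, which after ordering are $p_1(S)\leq\cdots\leq p_k(S)$.

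The only step requiring care is the compatibility characterization, i.e.\ that "one agent of $S$ per block" coincides with the summation range of the expansion. Once that is in place, the cancellation of $A_{\vect{c}}$ finishes the proof.
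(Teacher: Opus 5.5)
Your proof is correct and follows essentially the same route as the paper's: each set with one agent per block has conditional probability $1/A_{\vect{c}}$, this cancels the $A_{\vect{c}}$ in the cut mass, and summing $G_{\vect{c}}/Z_k(\loc)$ over compatible cut vectors recovers $W_k(S)/Z_k(\loc)$ via the product expansion. Your explicit check that compatibility is equivalent to $i_j\leq c_j<i_{j+1}$, together with the well-definedness remarks, fills in details the paper leaves implicit.
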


\begin{proof}
Conditioned on $\vect{c}$, every set containing one agent from each
block has probability $1/A_{\vect{c}}$. Its joint probability with
$\vect{c}$ is therefore $G_{\vect{c}}/Z_k(\loc)$. For a fixed set
$S$, summing over all cut vectors compatible with $S$ gives
$W_k(S)/Z_k(\loc)$ by the product expansion above.
\end{proof}

\paragraph{The cost for a fixed cut vector.}

We first bound the expected cost generated inside one block.

\begin{lemma}[Uniform block representative]
\label{lem:block-representative}
Let $I=\{s,s+1,\ldots,t\}$ be a nonempty block of size
$m=t-s+1$, and let $R$ be chosen uniformly from $I$. Then

\[
\begin{aligned}
\Ex[R]{\sum_{i=s}^t\abs{p_i-p_R}}
&=
\frac{1}{m}
\sum_{r=s}^t\sum_{i=s}^t\abs{p_i-p_r} \\
&=
\frac{2}{m}
\sum_{q=s}^{t-1}(q-s+1)(t-q)g_q.
\end{aligned}
\]
\end{lemma}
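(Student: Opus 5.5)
The first equality is just the definition of expectation for a uniform $R$ on $I$. Each $r\in I$ has probability $1/m$, so the expectation equals $\frac1m\sum_{r=s}^t\sum_{i=s}^t\abs{p_i-p_r}$. All the work is in the second equality, which I will obtain by a gap-counting (double-counting) argument using the sorted order $p_s\le\cdots\le p_t$.

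The plan is to write every pairwise distance as a sum of elementary gaps. For $r,i\in I$ we have
\[
\abs{p_i-p_r}=\sum_{q=\min(i,r)}^{\max(i,r)-1}g_q ,
\]
so a gap $g_q$ with $s\le q\le t-1$ contributes to $\abs{p_i-p_r}$ exactly when one of $i,r$ lies in $\{s,\ldots,q\}$ and the other lies in $\{q+1,\ldots,t\}$. I then substitute this into the double sum and exchange the order of summation. The coefficient of $g_q$ becomes the number of ordered pairs $(r,i)\in I\times I$ that are separated by the cut between $q$ and $q+1$.

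The left part $\{s,\ldots,q\}$ has $q-s+1$ elements and the right part $\{q+1,\ldots,t\}$ has $t-q$ elements. There are therefore $(q-s+1)(t-q)$ unordered separated pairs, and the ordered count doubles this to $2(q-s+1)(t-q)$. Dividing by $m$ yields $\frac{2}{m}\sum_{q=s}^{t-1}(q-s+1)(t-q)g_q$, as claimed. Pairs with $i=r$ contribute nothing, which is consistent with the formula. In the degenerate case $m=1$ both sides are $0$ because the sum over $q$ is empty.

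I expect no real obstacle. The only point needing care is the factor of $2$ coming from counting ordered rather than unordered pairs, together with correctly checking the range of $q$ and the sizes of the two sides of the cut.
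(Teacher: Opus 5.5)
Your proposal is correct and matches the paper's proof: the paper likewise notes that the gap $g_q$ separates $q-s+1$ agents on its left from $t-q$ on its right. Each ordered separated pair picks up $g_q$, which gives the factor $2$ once every unordered pair is counted in both directions.
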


\begin{proof}
The gap $g_q$ separates $q-s+1$ agents on its left from $t-q$
agents on its right. Each ordered pair separated by this gap
contributes $g_q$, and every unordered pair is counted in both
directions.
\end{proof}

Conditioned on a cut vector $\vect{c}$, every agent can be served by
the representative chosen from her own block. Hence,
Lemma~\ref{lem:block-representative} gives

\[
\E[\SC\mid\vect{c}]
\leq
\sum_{j=1}^k
\frac{2}{m_j(\vect{c})}
\sum_{r=c_{j-1}+1}^{c_j-1}
(r-c_{j-1})(c_j-r)g_r.
\]

\paragraph{Expansion over service gaps.}

We next combine the block cost with the probability of the cut vector.

\begin{lemma}[Service-gap expansion]
\label{lem:service-gap-expansion}
For $1\leq q_1<\cdots<q_k\leq n-1$, set $q_0=0$,
$q_{k+1}=n$, and define

$$
Q(q_1,\ldots,q_k)
\eqdef
\prod_{h=0}^k(q_{h+1}-q_h).
$$

Then

\[
Z_k(\loc)\E[\SC]
\leq
2k
\sum_{1\leq q_1<\cdots<q_k\leq n-1}
Q(q_1,\ldots,q_k)
\prod_{h=1}^k g_{q_h}.
\]
\end{lemma}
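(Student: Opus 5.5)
The plan is to multiply the conditional bound by the cut probabilities and then reindex, so that each term becomes a single $k$-cut monomial with coefficient exactly $Q$. I expect this to give the stated inequality with the factor $2k$ arising purely from counting.

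\textbf{Step 1: reduce to a sum over cuts.} By Corollary~\ref{cor:random-cut},
\[
Z_k(\loc)\,\E[\SC]=\sum_{\vect{c}\in\mathcal C_{k-1}}A_{\vect{c}}G_{\vect{c}}\,\E[\SC\mid\vect{c}].
\]
Inserting the conditional bound obtained from Lemma~\ref{lem:block-representative} gives
\[
Z_k(\loc)\,\E[\SC]\leq 2\sum_{\vect{c}}\sum_{j=1}^{k}\sum_{r=c_{j-1}+1}^{c_j-1}\frac{A_{\vect{c}}}{m_j(\vect{c})}(r-c_{j-1})(c_j-r)\,G_{\vect{c}}\,g_r .
\]

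\textbf{Step 2: identify each term as $Q$ times a product of $k$ gaps.} Here $A_{\vect{c}}/m_j(\vect{c})=\prod_{i\neq j}m_i(\vect{c})$. The index $r$ lies strictly inside block $j$, so $c_{j-1}<r<c_j$. Hence inserting $r$ into $\vect{c}$ yields a strictly increasing vector $\vect{q}=(q_1,\ldots,q_k)$ with $1\le q_1<\cdots<q_k\le n-1$.

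The blocks of $\vect{q}$ are the blocks of $\vect{c}$, except that block $j$ splits into two pieces. These pieces have sizes $r-c_{j-1}$ and $c_j-r$. Therefore
\[
\prod_{i\neq j}m_i(\vect{c})\,(r-c_{j-1})(c_j-r)=Q(q_1,\ldots,q_k),
\qquad
G_{\vect{c}}\,g_r=\prod_{h=1}^k g_{q_h}.
\]

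\textbf{Step 3: count preimages.} The map $(\vect{c},j,r)\mapsto\vect{q}$ sends each triple to a $k$-cut vector with a distinguished element, namely $r$. Conversely, fix $\vect{q}$ and choose which of its $k$ entries plays the role of $r$. Deleting that entry recovers a unique $\vect{c}\in\mathcal C_{k-1}$. The index $j$ is then the block of $\vect{c}$ containing $r$, and $r$ lies strictly inside it.

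So the map is exactly $k$-to-one onto $\{1\le q_1<\cdots<q_k\le n-1\}$. Summing over all triples therefore gives $k\sum_{\vect{q}}Q(\vect{q})\prod_h g_{q_h}$, and with the prefactor $2$ this is the claimed bound $2k\sum_{\vect{q}}Q(\vect{q})\prod_h g_{q_h}$.

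\textbf{Main obstacle.} The only delicate point is the bijective bookkeeping in Step 3. One must check that the inner range for $r$ is exactly $c_{j-1}+1\le r\le c_j-1$. This range ensures $r$ is never equal to an existing cut, so $\vect{q}$ has $k$ distinct entries. One must also check that every choice of deleted entry produces a valid $\vect{c}$ with $r$ strictly inside its block; this holds because the neighbouring entries of $\vect{q}$ become the block endpoints. Once this is verified, the inequality follows immediately.
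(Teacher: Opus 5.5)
Your proposal is correct and follows the paper's proof essentially step for step. You multiply the per-cut block bound by $A_{\vect{c}}G_{\vect{c}}$, insert the service gap $r$ into the cut vector so that each coefficient becomes exactly $Q(q_1,\ldots,q_k)\prod_h g_{q_h}$, and observe that every $k$-tuple arises from exactly $k$ triples $(\vect{c},j,r)$, one for each choice of which entry is the service gap. This counting produces the factor $2k$.
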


\begin{proof}
Multiply the conditional-cost bound by the unnormalized cut mass
$A_{\vect{c}}G_{\vect{c}}$ and sum over $\vect{c}$. Each resulting
term is indexed by a triple $(\vect{c},j,r)$ satisfying
$c_{j-1}<r<c_j$.

Adjoin the service gap $r$ to the $k-1$ cut gaps and write the
resulting ordered tuple as $q_1<\cdots<q_k$. Suppose that $r=q_h$.
Deleting $q_h$ merges the two adjacent intervals of sizes
$q_h-q_{h-1}$ and $q_{h+1}-q_h$. Since
$A_{\vect{c}}/m_j(\vect{c})$ is the product of the remaining block
sizes,

$$
\frac{A_{\vect{c}}}{m_j(\vect{c})}
(r-c_{j-1})(c_j-r)
=
Q(q_1,\ldots,q_k).
$$

Conversely, every ordered $k$-tuple $(q_1,\ldots,q_k)$ and every
choice $h\in[k]$ uniquely determine the triple obtained by taking
$r=q_h$ and deleting $q_h$ to form the cut vector. Thus every ordered
$k$-tuple occurs exactly $k$ times, once for each choice of its
service gap. This proves the bound.
\end{proof}

\paragraph{Decomposing the optimal cost.}

The optimal social cost has a convenient representation in terms of
the same elementary gaps.

\begin{lemma}[Contiguous optimal clustering]
\label{lem:contiguous-optimum}
There is an optimal $k$-facility solution whose nonempty service
clusters are contiguous index intervals. Moreover, the minimum
one-facility cost of a cluster $\{a,a+1,\ldots,b\}$ is

$$
\sum_{q=a}^{b-1}
\min\{q-a+1,b-q\}g_q.
$$
\end{lemma}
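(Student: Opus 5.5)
The lemma has two parts: contiguity of some optimal clustering, and a closed form for the one-cluster cost. I will prove them separately.

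\textbf{Contiguity.} Fix an optimal solution $\f$ with $f_1\le\cdots\le f_k$. Assign each agent to a nearest facility, breaking ties toward the leftmost nearest facility. For a point $x$ on the line, the set of indices $j$ minimizing $\abs{x-f_j}$ is determined by the Voronoi boundaries $(f_j+f_{j+1})/2$. These boundaries are nondecreasing in $j$, so the index of the assigned facility is a nondecreasing function of the location $x$. Consequently, if $p_a\le p_b$ with $a<b$, agent $a$ is served by a facility with index at most that of agent $b$.

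Equal reports need a little care. Agents with the same report receive the same assignment under the deterministic tie rule, so the preimage of each facility index is a contiguous interval of indices in the sorted order. Discarding facilities that serve nobody leaves nonempty clusters that are contiguous index intervals. The social cost is unchanged, so this solution is still optimal. If one prefers, we may also move each facility to the optimal point of its own cluster; this can only lower the cost, and by optimality it leaves the cost equal to $\mathrm{OPT}_k(\loc)$.

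\textbf{One-cluster cost.} For a cluster $\{a,\dots,b\}$, the function $x\mapsto\sum_{i=a}^b\abs{p_i-x}$ is convex and piecewise linear, and it is minimized at a median. I will write the cost at a point $x$ as a sum over elementary gaps. For $x\in[p_q,p_{q+1}]$ the gap $g_q$ is split between the two sides, and for every other gap $g_{q'}$ the contribution is $g_{q'}$ times the number of cluster points on the side of $q'$ away from $x$.

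Take $x=p_{m}$ with $m=\lceil (a+b)/2\rceil$, a median index. A gap $g_q$ with $q<m$ lies to the left of $x$, so it is crossed by the $q-a+1$ points $p_a,\dots,p_q$ on its left. A gap with $q\ge m$ is crossed by the $b-q$ points on its right. In each case the count equals $\min\{q-a+1,\,b-q\}$, because $q-a+1\le b-q$ exactly when $q\le (a+b-1)/2$, which matches the split at $m$.

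For the lower bound, we must show that no $x$ does better. For any $x$ and any gap $g_q$ lying entirely on one side of $x$, at least $\min\{q-a+1,b-q\}$ points lie beyond that gap on the far side from $x$. So each gap contributes at least $\min\{q-a+1,b-q\}\,g_q$. If $x$ lies strictly inside a gap $g_q$, then the contribution of that gap is $(x-p_q)\cdot(q-a+1)+(p_{q+1}-x)\cdot(b-q)$, which is at least $\min\{q-a+1,\,b-q\}\,g_q$. This establishes the formula.

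The only delicate step is handling ties in the contiguity argument. The monotone Voronoi assignment with a fixed tie-break rule handles it cleanly.
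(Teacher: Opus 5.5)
Your proof is correct and follows the same route as the paper. A nearest-facility assignment with a consistent leftmost tie-break is monotone, so the clusters are contiguous, and the cost formula comes from counting, gap by gap, the agents that cross each $g_q$ when the facility sits at a median. The only difference is that you prove directly, through the per-gap lower bound and the explicit median index $m=\lceil (a+b)/2\rceil$, what the paper simply cites as the standard fact that a median minimizes the sum of distances, so your version is slightly more self-contained.
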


\begin{proof}
Order the facilities from left to right and assign every agent to a
nearest facility, breaking ties consistently from left to right. The
nonempty service regions are then contiguous intervals.

For a fixed interval $\{a,\ldots,b\}$, a median minimizes the sum of
distances. A gap $g_q$ to the left of a chosen median is crossed by
the $q-a+1$ agents on its left, while a gap to its right is crossed
by the $b-q$ agents on its right. At a median, the relevant
coefficient is the smaller of these two quantities.
\end{proof}

Fix an optimal contiguous clustering. For each $q\in[n-1]$, set
$\mu_q=0$ if $q$ lies between two optimal clusters. If $q$ lies
inside the optimal cluster $\{a,\ldots,b\}$, set
$\mu_q=\min\{q-a+1,b-q\}$. Lemma~\ref{lem:contiguous-optimum}
gives
$
\mathrm{OPT}_k(\loc)
=
\sum_{q=1}^{n-1}\mu_qg_q.
$

\paragraph{Charging to the optimum.}

We now compare each term in
Lemma~\ref{lem:service-gap-expansion} with the optimal cost.

\begin{lemma}[Charging inequality]
\label{lem:charging}
For every $1\leq q_1<\cdots<q_k\leq n-1$,

$$
Q(q_1,\ldots,q_k)
\leq
\sum_{j=1}^k
\mu_{q_j}A_{(q_1,\ldots,q_k)\setminus q_j},
$$

where, writing $a_h=q_{h+1}-q_h$ for $h=0,\ldots,k$,

$$
A_{(q_1,\ldots,q_k)\setminus q_j}
\eqdef
(a_{j-1}+a_j)
\prod_{h\notin\{j-1,j\}}a_h.
$$
\end{lemma}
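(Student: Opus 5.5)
The plan is to divide through by $Q(q_1,\ldots,q_k)=\prod_{h=0}^k a_h$. The claim then becomes
\[
1\;\leq\;\sum_{j=1}^k \mu_{q_j}\Bigl(\frac{1}{a_{j-1}}+\frac{1}{a_j}\Bigr).
\]
We view the agents $1,\ldots,n$ as split by the cuts $q_1<\cdots<q_k$ into $k+1$ consecutive segments $S_0,\ldots,S_k$, where $S_h=\{q_h+1,\ldots,q_{h+1}\}$ has size $a_h$. We fix the optimal contiguous clustering from Lemma~\ref{lem:contiguous-optimum}; it has at most $k$ clusters and hence at most $k-1$ boundary gaps. A gap $q_j$ has $\mu_{q_j}=0$ exactly when it is a boundary. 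Otherwise, for the cluster $\{a,\ldots,b\}$ containing it, $\mu_{q_j}=\min\{q_j-a+1,\,b-q_j\}$. Here $q_j-a+1$ is the number of cluster agents in the part of the cluster left of $q_j$, and $b-q_j$ the number to its right.

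\emph{Basic local estimate.} Suppose the cluster containing $q_j$ includes all of $S_{j-1}$ and all of $S_j$. Then $\mu_{q_j}\geq\min\{a_{j-1},a_j\}$, so the $j$-th term alone is at least $\min\{a_{j-1},a_j\}/\min\{a_{j-1},a_j\}=1$, and we are done. The general tool is the partial version. If the cluster extends $x$ agents to the left of $q_j$ and $y$ agents to the right (with $x$ capped by the left segment and $y$ by the right segment), then the $j$-th term is at least
\[
\min\{x,y\}\Bigl(\frac{1}{a_{j-1}}+\frac{1}{a_j}\Bigr)\;\geq\;\max\Bigl\{\frac{\min\{x,y\}}{a_{j-1}},\,\frac{\min\{x,y\}}{a_j}\Bigr\}.
\]

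\emph{Global counting.} We need a counting argument showing that these fractional contributions add up to at least $1$. There are $k$ cut gaps but at most $k-1$ cluster boundaries. So at least one $q_j$ is interior to a cluster, and more precisely the number of non-boundary $q_j$'s exceeds the number of clusters containing no $q_j$ at all. A cluster containing no $q_j$ lies inside a single segment.

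I would proceed by induction on $k$, peeling off the leftmost optimal cluster $C_1=\{1,\ldots,b\}$.
\begin{itemize}
\item If $C_1$ contains no cut, then $b<q_1$. The remaining $k-1$ clusters must then handle the $k$ cuts on agents $b+1,\ldots,n$. Shrinking $a_0$ only increases $1/a_0$, which helps, so we can restart the argument with one more cut than clusters.
\item If $C_1$ contains cuts $q_1,\ldots,q_t$, then $C_1$ covers $S_0,\ldots,S_{t-1}$ entirely and a portion $y$ of $S_t$. If $t\geq 2$, the term for $q_1$ already gives $\mu_{q_1}\geq\min\{a_0,a_1\}$, hence at least $1$. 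If $t=1$, the $q_1$ term contributes at least $\min\{a_0,y\}/a_1$. The remaining deficit is pushed onto the instance on agents $b+1,\ldots,n$, with cuts $q_2,\ldots,q_k$ and $k-1$ clusters, whose first segment has size $a_1-y$.
\end{itemize}

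The main obstacle is making this induction close. The deficit $1-\min\{a_0,y\}/a_1$ must be dominated by what the shrunken instance yields. So the right inductive hypothesis is probably a strengthened one, of the form
\[
\sum_j \mu_{q_j}\Bigl(\frac{1}{a_{j-1}}+\frac{1}{a_j}\Bigr)\;\geq\;1+(\text{slack depending on a truncated first segment}),
\]
or equivalently a version allowing the first segment's weight $1/a_0$ to be replaced by $1/a_0'$ with $a_0'\geq a_0$. I would first try the hypothesis ``with $k$ cuts and $k$ clusters the sum is at least $1$, and with $k$ cuts and at most $k-1$ clusters the sum is at least $1$ even when the first segment's reciprocal is computed with the original, larger length.'' I would check the $k=1$ and $k=2$ cases by hand to calibrate the exact slack term before carrying out the general step.
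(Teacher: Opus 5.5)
Your normalized form $1\le\sum_{j}\mu_{q_j}\bigl(\frac{1}{a_{j-1}}+\frac{1}{a_j}\bigr)$ is correct, and so are your local estimates, including the case $t\ge2$. There is a genuine gap, however: the proposal stops exactly where the proof has to happen. You leave the global step as ``the main obstacle,'' and no inductive hypothesis is stated or checked.

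Two assertions you do make are also wrong.
\begin{itemize}
\item In the first bullet, the instance on agents $b+1,\ldots,n$ has first segment of size $a_0-b<a_0$. Its sum therefore contains $1/(a_0-b)>1/a_0$ and is at least the sum you must bound from below. A lower bound of $1$ for the shrunken instance says nothing about the original, so shrinking $a_0$ hurts rather than helps. You also omit the case $b=q_1$, where $q_1$ is a boundary.
\item Your ``more precisely'' count is false. Take $n=6$, $k=3$, clusters $\{1\},\{2\},\{3,\ldots,6\}$ and cuts $1,2,4$. Only one cut is non-boundary, while two clusters contain no cut.
\end{itemize}

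The missing idea is the paper's pigeonhole charging, which makes the per-cut contributions additive. Read $Q=\prod_h a_h$ as the number of tuples consisting of one agent from each segment $S_0,\ldots,S_k$. The $k+1$ chosen agents appear from left to right, and the optimal clusters form at most $k$ contiguous runs. Hence two chosen agents from some adjacent pair $S_{j-1},S_j$ lie in a common cluster; charge the tuple to the smallest such $j$.

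A common cluster of agents on both sides of $q_j$ must cross $q_j$, so nothing is charged to a boundary. Otherwise, let $\ell,r$ count the agents of the crossing cluster in $S_{j-1},S_j$; at most $\ell r\prod_{h\notin\{j-1,j\}}a_h$ tuples are charged to $j$. Let $L,R$ be the cluster's sizes weakly left and strictly right of $q_j$. The bounds $\ell\le\min\{L,a_{j-1}\}$ and $r\le\min\{R,a_j\}$ give $\ell r\le\mu_{q_j}a_j$ if $L\le R$ and $\ell r\le\mu_{q_j}a_{j-1}$ if $R\le L$. Summing over $j$ proves the lemma with no induction. In your normalization, the fraction of tuples charged to $j$ is at most $\ell r/(a_{j-1}a_j)\le\mu_{q_j}\bigl(\frac{1}{a_{j-1}}+\frac{1}{a_j}\bigr)$, and these fractions sum to $1$.

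Your induction can also be closed without any strengthened hypothesis, provided you state it for arbitrary partitions into at most $k$ contiguous blocks.
\begin{itemize}
\item If $C_1$ contains no cut, apply the hypothesis to agents $q_1+1,\ldots,n$ with cuts $q_2,\ldots,q_k$. These agents meet at most $k-1$ blocks, and truncating a block only lowers $\mu$. So the $q_1$-term is not even needed.
\item If $t=1$ and $y<a_0$, assume $q_2>b$; otherwise $y=a_1$ and the $q_1$-term alone suffices. Multiply the hypothesis for the remainder by $(a_1-y)/a_1$ and use $\frac{a_1-y}{a_1}\bigl(\frac{1}{a_1-y}+\frac{1}{a_2}\bigr)\le\frac{1}{a_1}+\frac{1}{a_2}$. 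This gives $\sum_{j\ge2}\mu_{q_j}\bigl(\frac{1}{a_{j-1}}+\frac{1}{a_j}\bigr)\ge1-y/a_1$, and the $q_1$-term $y/a_0+y/a_1$ covers the rest.
\end{itemize}
Both steps are the pigeonhole argument in disguise, and neither appears in your proposal.
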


\begin{proof}
The cuts $q_1,\ldots,q_k$ partition the agents into $k+1$ intervals

$$
J_h
=
\{q_h+1,\ldots,q_{h+1}\},
\qquad
h=0,\ldots,k.
$$

Since $\abs{J_h}=a_h$, the quantity
$
Q(q_1,\ldots,q_k)
=
\prod_{h=0}^k a_h
$
counts the tuples obtained by choosing one agent from each $J_h$.
Color every agent by her cluster in the fixed optimal solution. These
colors form at most $k$ contiguous runs. Hence every tuple of $k+1$
chosen agents contains two agents from adjacent intervals
$J_{j-1}$ and $J_j$ with the same color. Charge the tuple to the
smallest such $j$.
Fix $j$. If $q_j$ is a boundary between two optimal clusters, no tuple
is charged to $j$, and $\mu_{q_j}=0$. Otherwise, let $L$ and $R$ be
the numbers of agents in the optimal cluster crossing $q_j$ that lie
weakly to the left and strictly to the right of $q_j$, respectively.
Then $\mu_{q_j}=\min\{L,R\}$.
Let $\ell$ and $r$ be the numbers of agents of this cluster in
$J_{j-1}$ and $J_j$. We have
$\ell\leq\min\{L,a_{j-1}\}$ and
$r\leq\min\{R,a_j\}$. If $L\leq R$, then
$\ell r\leq La_j=\mu_{q_j}a_j$. If $R\leq L$, then
$\ell r\leq Ra_{j-1}=\mu_{q_j}a_{j-1}$. Thus,
$
\ell r
\leq
\mu_{q_j}(a_{j-1}+a_j).
$
The choices in all other intervals contribute
$\prod_{h\notin\{j-1,j\}}a_h$. Hence the number of tuples charged to
$j$ is at most
$\mu_{q_j}A_{(q_1,\ldots,q_k)\setminus q_j}$. Summing over $j$
proves the claim.
\end{proof}

Multiplying the charging inequality by $\prod_{j=1}^k g_{q_j}$ and
summing over all ordered $k$-tuples yields

\[
\begin{aligned}
&\sum_{1\leq q_1<\cdots<q_k\leq n-1}
Q(q_1,\ldots,q_k)
\prod_{j=1}^k g_{q_j} \\
&\leq
\sum_{q=1}^{n-1}\mu_qg_q
\sum_{\substack{\vect{c}\in\mathcal C_{k-1}\\
q\notin\{c_1,\ldots,c_{k-1}\}}}
A_{\vect{c}}G_{\vect{c}} \\
&\leq
\left(\sum_{q=1}^{n-1}\mu_qg_q\right)
\left(
\sum_{\vect{c}\in\mathcal C_{k-1}}
A_{\vect{c}}G_{\vect{c}}
\right) \\
&=
\mathrm{OPT}_k(\loc)Z_k(\loc).
\end{aligned}
\]

\begin{proof}[Proof of Theorem~\ref{thm:2k-approximation}]
If $Z_k(\loc)=0$, the mechanism has social cost zero. Otherwise,
Lemma~\ref{lem:service-gap-expansion} and the charging bound above give
$
Z_k(\loc)\E[\SC]
\leq
2k\cdot\mathrm{OPT}_k(\loc)Z_k(\loc).
$
Dividing by $Z_k(\loc)>0$ proves the theorem.
\end{proof}

\subsection{Tightness}
\label{subsec:2k-tightness}

The factor $2k$ cannot be improved for the Product-Gap mechanism.

\begin{theorem}[Tightness]
\label{thm:2k-tightness}
For every $k\geq 2$ and every $\eps>0$, there is a location profile
$\loc$ for which

$$
\frac{
\Ex[\f\sim\mech_{\mathrm{PG}}(\loc)]{\SC(\loc,\f)}
}{
\mathrm{OPT}_k(\loc)
}
>
2k-\eps.
$$
\end{theorem}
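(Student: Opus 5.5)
The plan is to find a family of profiles on which each inequality in the proof of Theorem~\ref{thm:2k-approximation} is asymptotically tight, and then compute the expected cost directly. Three inequalities can lose something. The first is the conditional bound, which serves each agent from the representative of her own block. The second is the charging inequality of Lemma~\ref{lem:charging}. The third is the step that drops the constraint $q\notin\{c_1,\ldots,c_{k-1}\}$. The third is tight when the optimal cost sits on gaps that almost never serve as cuts. The charging step is tight when, in the dominant terms, the two same-colored agents lie in a heavy/light configuration: one side of $q_j$ contains exactly $\mu_{q_j}$ agents of the cluster, and the other side is an entire interval $J$. I would therefore build the instance from a heavy point, a light point, and far-away isolated points.

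The concrete family I would try first is the following. It has one ``cheap'' cluster, consisting of $M$ agents at $0$ and one agent at $1$, so that its optimal one-facility cost is exactly $1$. It also has $k-1$ singleton agents placed far away, at distances of order $D$ and mutually separated by order $D$. Then $\mathrm{OPT}_k(\loc)=1$, since the singletons get their own facilities.

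In the random-cut picture of Corollary~\ref{cor:random-cut}, I would classify the outcomes into two types.

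\emph{Type 1.} All $k-1$ cuts are the large gaps. This has mass about $M\,D^{k-1}$ (up to the product of the large gaps). The cheap cluster then gets a uniform representative, and by Lemma~\ref{lem:block-representative} its expected cost is $\frac{M}{M+1}+\frac{M}{M+1}\to 2$.

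\emph{Type 2.} One cut is the unit gap and one large gap is not cut. This has mass about $M\,D^{k-2}$ times the appropriate large gaps, with $k-1$ such choices. In this case two facilities are spent inside the cheap cluster and a singleton or block is merged, producing cost of order $D$.

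Taking $M,D\to\infty$ with $D$ large but the ratio of the Type~2 mass to the Type~1 mass of order $1/D$, both types contribute constants. The ratio becomes
\[
\frac{2+\sum_{\text{merges}}(\text{cost of merge})/D}{1+o(1)}.
\]
A naive placement gives $2+(k-1)=k+1$. The refinement is to position the singletons, and possibly to replace each singleton by its own heavy/light pair at a scale matched to $M$, so that each of the $k-1$ merge events costs about $2D$, i.e.\ twice the removed gap. Then the total approaches $2+2(k-1)=2k$. This amounts to making each of the $k$ service gaps in Lemma~\ref{lem:service-gap-expansion} carry a full factor $2$, as in Lemma~\ref{lem:block-representative} with a heavy point and a single outlier.

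The steps, in order, are as follows. (i) Fix the parametrized profile. (ii) Compute $Z_k(\loc)$ to leading order via Lemma~\ref{lem:cut-normalization}, keeping only cut vectors that use at most one small gap. (iii) Compute the expected social cost of each leading cut class exactly, using Lemma~\ref{lem:block-representative} together with the fact that widely separated blocks serve themselves. (iv) Take limits in the order $M\to\infty$, then the scale ratio chosen, and check that the ratio exceeds $2k-\eps$.

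The main obstacle is step (i): choosing the geometry so that every merge event costs twice the corresponding optimal saving, while the probability of each event is exactly calibrated against the Type~1 mass. I would first attempt the self-similar construction, with nested heavy/light pairs at scales $1, D, D^2,\ldots$ and multiplicities tuned. I would then verify by the computation in steps (ii)--(iii) that each of the $k$ positions in the service-gap expansion contributes $2\cdot\mathrm{OPT}$.
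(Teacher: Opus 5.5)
Your proposal leaves the instance itself unspecified, and that instance is the entire content of the theorem. The one concrete family you describe provably falls short of $2k$. Take $M$ agents at $0$, one agent at $1$, and single agents at $x_1<\cdots<x_{k-1}$ with large gaps $g_1,\ldots,g_{k-1}$. Every positive-weight set omits exactly one of the $k+1$ occupied locations. Normalize by the dominant mass, which comes from omitting the agent at $1$ and is $\approx M\prod_i g_i$. Then:
\begin{itemize}
\item the dominant event itself contributes $1$, since its cost is $1=\mathrm{OPT}_k$;
\item omitting $0$ contributes $\approx1$ (mass $\prod_i g_i$, cost $M$);
\item omitting $x_{k-1}$ contributes $\approx1$ (mass $M\prod_{i<k-1}g_i$, cost $g_{k-1}$);
\item omitting an interior singleton with adjacent gaps $g,g'$ contributes $(g+g')/\max\{g,g'\}\leq2$.
\end{itemize}
The ratio therefore tends to at most $3+2(k-2)=2k-1$, attained with equal spacing; geometric spacing gives your $k+1$. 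For $k=2$ this is $3$, not $4$.

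In your own Type~2 accounting, the merge that fails is the one leaving the junction gap between the cluster and $x_1$ uncut. When the block $\{1,x_1\}$ picks $x_1$, the agent at $1$ walks only distance $1$ to the open facility at $0$, so this event contributes at most $1$ rather than $2$. The variants do not rescue this. Flipping the cluster makes the heavy point sit between gaps $1$ and $g_1$, so it contributes $(1+g_1)/g_1\to1$. Placing singletons on both sides creates a second far endpoint. Your suggested nested scales $1,D,D^2,\ldots$ push every factor $(g+g')/\max\{g,g'\}$ toward $1$, which is the wrong direction.

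The missing idea is to calibrate probabilities through multiplicities rather than scale separation. The paper places one agent at $0$ and $M$ agents at each of $1,2,\ldots,k$, with all gaps equal to $1$ and no large parameter $D$. Again each positive-weight set omits one location:
\begin{itemize}
\item omitting $0$ gives mass $M^k$ and cost $1$;
\item omitting the endpoint $k$ gives mass $M^{k-1}$ and cost $M$;
\item omitting an interior $r\in\{1,\ldots,k-1\}$ gives mass $2M^{k-1}$, since the merged gap has length $2$, and cost $M$.
\end{itemize}
Hence $Z_k(\loc)=M^{k-1}(M+2k-1)$ and $Z_k(\loc)\,\E[\SC]=2kM^k$. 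Also $\mathrm{OPT}_k(\loc)=1$, because two of the $k+1$ unit-spaced occupied locations must share a facility. The ratio is $2kM/(M+2k-1)\to2k$. The light agent sits at an end, so the main event and the far endpoint are the only locations contributing $1$. All $k-1$ interior locations, including the one next to the light agent, have two equal adjacent gaps and contribute the full $2$.
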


\begin{proof}
Fix an integer $M\geq 1$. Place one agent at $0$ and $M$ agents at
each location $1,2,\ldots,k$. Every positive-weight selected set uses
$k$ of these $k+1$ locations and therefore omits exactly one location.

If location $0$ is omitted, there are $M^k$ choices, the Product-Gap
weight is $1$, and the social cost is $1$. If location $k$ is omitted,
there are $M^{k-1}$ choices, the weight is $1$, and the social cost is
$M$. If an interior location $r\in\{1,\ldots,k-1\}$ is omitted, there
are $M^{k-1}$ choices, the weight is $2$, and the social cost is $M$.
Consequently,

\[
\begin{aligned}
Z_k(\loc)
&=
M^k+M^{k-1}+2(k-1)M^{k-1} \\
&=
M^{k-1}(M+2k-1), \\
Z_k(\loc)\E[\SC]
&=
M^k+M^k+2(k-1)M^k \\
&=
2kM^k.
\end{aligned}
\]

Therefore,

$$
\E[\SC]
=
\frac{2kM}{M+2k-1}.
$$

Opening facilities at $1,2,\ldots,k$ has social cost $1$. Conversely,
consider any solution with $k$ facilities and assign each of the
$k+1$ occupied locations to a nearest facility. Two occupied locations
must be assigned to the same facility. The combined cost of one agent
at each of these locations is at least the distance between them,
which is at least $1$. Hence $\mathrm{OPT}_k(\loc)=1$.

Letting $M$ tend to infinity makes the approximation ratio tend to
$2k$, proving the theorem.
\end{proof}

\newtheorem{proposition}[theorem]{Proposition}

\section{Strategyproofness for $k=2,3$ and Failure for $k\geq4$}
\label{sec:strategyproofness}

We now determine the exact incentive boundary of the Product-Gap
mechanism on the line.

\begin{theorem}[Strategyproofness classification]
\label{thm:sp-classification}
The Product-Gap mechanism is strategyproof in expectation for
$k=2$ and $k=3$. For every $k\geq4$, the mechanism is not
strategyproof.
\end{theorem}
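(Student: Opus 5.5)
The plan is to reduce strategyproofness to a one-dimensional statement about a single deviating agent. Fix agent $i$ with true location $\ell=\ell_i$ and the other reports $\loc_{-i}$. Write $Z(x)$ for the normalization when $i$ reports $x$, and $N(x)$ for the unnormalized expected cost of $i$:
\[
N(x)=\sum_{S}W_k(S;x)\,\costi(\ell,\f_S(x)).
\]
Split $N(x)$ according to whether $i\in S$. Sets not containing $i$ contribute a constant $C_0=\sum_{S\not\ni i}W(S)\,d(\ell,S)$ and a constant weight $Z_0$. Sets containing $i$ contribute a weight $w(x)$ and a cost term $c(x)$. Strategyproofness is then the inequality
\[
\frac{C_0+c(\ell)}{Z_0+w(\ell)}\le \frac{C_0+c(x)}{Z_0+w(x)}\quad\text{for all } x.
\]
It is convenient to expand $W(S)$ into monomials and use ordered positions, as in the cut representation of Lemma~\ref{lem:cut-normalization}. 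Each of $w$ and $c$ is then a piecewise polynomial in $x$ between consecutive reports of the others. The natural approach is to write the difference $\Delta(x)=(C_0+c(x))(Z_0+w(\ell))-(C_0+c(\ell))(Z_0+w(x))$ and show that $\Delta(x)\ge 0$.

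\textbf{The case $k=2$.} Here $w(x)=\sum_j|x-p_j|$ and $c(x)=\sum_j|x-p_j|\min(|\ell-x|,|\ell-p_j|)$. I would use the pointwise bound $\min(|\ell-x|,|\ell-p_j|)\ge \min(|\ell-\ell|,\cdot)=0$ only at truth, and more usefully the triangle inequality $|x-p_j|\le |x-\ell|+|\ell-p_j|$. The misreport raises the weight on pairs containing $i$ by at most $|x-\ell|$ per pair. Those pairs give $i$ cost at least $\min(|\ell-x|,|\ell-p_j|)$, which is at least the average cost $\bar c$ of the other mass whenever this is the profitable direction. A cleaner route is to verify that $\Delta$ is convex on each interval and vanishes to first order at $x=\ell$, using that $c(\ell)=0$ with $i$ selected. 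I would carry out the $k=2$ case by the direct inequality
\[
c(x)\ge \frac{C_0+c(\ell)}{Z_0+w(\ell)}\bigl(w(x)-w(\ell)\bigr),
\]
which is the same as $\Delta\ge 0$. To prove it, pair each term of $w(x)-w(\ell)\le m|x-\ell|$ (with $m$ the number of other agents) against $c(x)$, and bound the truthful average cost by a ``dispersion'' inequality of the Global Pair type. This is essentially the triple-decomposition idea the paper attributes to Ma and Peng.

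\textbf{The case $k=3$.} This is the main obstacle. For a triple containing $i$ at position $x$, the weight is quadratic in $x$ on each segment: it is linear$\times$linear when $i$ is in the middle, and linear$\times$constant at the ends. I would reduce, via the cut representation, to comparing contributions block by block. The strategy is to show that for each fixed pair $\{a,b\}$ of other agents, the map $x\mapsto$ (weight, cost) is dominated in the required ratio sense by its truthful value. This requires an inequality of the form
\[
W(\{a,b,x\})\,d(\ell,\{a,b,x\})-\rho\, W(\{a,b,x\})\ \ge\ -\rho\,W(\{a,b,\ell\}),
\]
where $\rho$ is the truthful expected cost. I expect this not to hold termwise, so I would instead sum over the pairs lying on each side of $\ell$. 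I would then exploit monotonicity: moving $x$ away from $\ell$ past a neighbor increases the gaps multiplicatively, but only for sets in which $i$'s cost becomes at least $|x-\ell|$. The comparison needs $\rho$ to be at most a suitable weighted average of distances, which I would obtain by reapplying the $k=2$ dispersion bound inside each side. Finding the correct grouping of terms so that the quadratic dependence cancels is where I expect most of the difficulty.

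\textbf{Failure for $k\ge4$.} I would search for an explicit small profile, with $n=k+1$ or $k+2$ agents placed in clusters. The intuition is that a middle agent can stretch two gaps at once, and for $k\ge4$ the product then grows faster than the cost penalty. A concrete attempt is to take clusters at $0,1,2,3$ with large multiplicities plus agent $i$ slightly off a cluster, and compute $E[\mathrm{cost}_i]$ exactly as a rational function of the deviation. One then checks that the derivative at truth has the wrong sign. To extend to all $k$, append extra far-away heavy clusters: these factor out of every weight as a common multiplicative constant whenever they are always selected, so the $k=4$ deviation persists for every $k\ge 4$.
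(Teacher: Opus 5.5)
Your plan does not establish either half of the theorem. For $k=3$ you correctly expect the termwise inequality $W(\{a,b,x\})\,d(\ell,\{a,b,x\})\ge\rho\,\bigl(W(\{a,b,x\})-W(\{a,b,\ell\})\bigr)$ to fail when $\rho$ is the truthful expected cost. But ``summing over pairs on each side'' and ``reapplying the $k=2$ dispersion bound'' put nothing in its place. Lemma~\ref{lem:global-pair-dispersion} (the Ma--Peng triple decomposition) compares social cost with $\mathrm{OPT}_2$ and says nothing about one agent's incentives. The missing idea is to replace $\rho$ by the distance $\rho_p=|p-y|$ of the \emph{closest} other report $p$. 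Since $\rho_p\le\delta_A$ for every pair $A$, the local pair inequality $W_r(A)-W_y(A)\le W_r(A)c_A/\delta_A$ makes every surplus $H_A=W_r(A)c_A-\rho_p\bigl(W_r(A)-W_y(A)\bigr)$ nonnegative. One then shows that the potential $\Psi=ZT_r-B(S_r-S_y)$ cannot decrease when $p$ is restored, and inducts by deleting closest reports until $\Psi=0$. Even this leaves the cross terms $Zc_p\Phi_r-B(\Phi_r-\Phi_y)$, which are not termwise nonnegative. They must be regrouped into atoms with at most four witnesses and certified separately for same-side, opposite-side and exterior deviations. None of this structure appears in your plan.

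Your $k=2$ route also fails as written, because bounding $w(x)-w(\ell)$ by $m|x-\ell|$ throws away negative increments. Take $\ell=0$, $K$ others at $1$, one other at each of $\pm2$, and $x=1/2$. Already for $K=10$ you get $c(x)=4.5<\rho\,m\,|x-\ell|\approx4.97$, even though the true increment $w(x)-w(\ell)=-K/2$ is negative. The paper needs no separate $k=2$ argument: it obtains $k=2$ from $k=3$ by adding one far anchor (Lemma~\ref{lem:far-anchor-descent}).

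For $k\ge4$, non-strategyproofness is an existence claim, and you give no instance, only a search plan. Nothing supports the hope that $n=k+1$ or $k+2$ agents, or heavy clusters at $0,1,2,3$, will suffice. The paper's witness is delicate: the manipulator is co-located with $100$ other agents, there are single agents at $\pm1$, and a cascade of $2^j$ agents sits at $-1-1/(100\cdot2^j)$ for $j\le100$. The gain from reporting $h=1/1000$ has relative size about $3\cdot10^{-6}$ and is certified only by an exact rational determinant. Your heuristic that a middle agent stretches two gaps applies equally to $k=3$, where the mechanism is strategyproof, so it does not tell you where manipulations occur.

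The lifting step is essentially sound. The paper uses single anchors at $R,R^2,\dots,R^s$. Far heavy clusters also work if they sit at $s$ distinct locations (a positive-weight set uses at most one agent per location), with multiplicities large enough to suppress sets that omit an anchor location. However, the anchor factor $(L-\max F)\prod(\cdots)$ is only asymptotically constant, so you still need the limiting argument together with strictness of the $k=4$ deviation.
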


The proof has three parts. We first establish strategyproofness for
three facilities through a closest-point deletion argument. We then
derive the two-facility case from a far-anchor limit. Finally, we give
an exact manipulation for four facilities and lift it to every larger
number of facilities.

\subsection{Strategyproofness for Three Facilities}
\label{subsec:sp-k3}

\begin{theorem}[Three-facility strategyproofness]
\label{thm:sp-k3}
The Product-Gap mechanism is strategyproof in expectation for $k=3$.
\end{theorem}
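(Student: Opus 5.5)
The plan is to fix an agent $i$ with true location $\ell$ and study her expected cost $C(x)$ as a function of her report $x$, with all other reports fixed. Write $N(x)$ for the unnormalized expected cost and $Z(x)=Z_3(x,\loc_{-i})$. We have $C(x)=N(x)/Z(x)$. Split the triples into those not containing $i$ and those containing $i$. Let $Z_0$ and $N_0$ be the total weight and the weighted cost $\sum_{S\not\ni i}W(S)\,d(\ell,S)$ of the triples avoiding $i$. These do not depend on $x$. Each triple containing $i$ is $T\cup\{i\}$ for a pair $T=\{a<b\}$ of other agents. It has weight $w_T(x)=W_3(T\cup\{x\})$ and gives $i$ cost $\min\{|x-\ell|,d(\ell,T)\}$. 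Under truthful reporting that cost is $0$. With $Y(x)=\sum_T w_T(x)$ and $D(x)=\sum_T w_T(x)\min\{|x-\ell|,d(\ell,T)\}$, the target $C(x)\ge C(\ell)=N_0/(Z_0+Y(\ell))$ becomes, after clearing denominators,
\[
D(x)\;\ge\; C(\ell)\,\bigl(Y(x)-Y(\ell)\bigr).
\]

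The first concrete step is a pointwise analysis of $w_T$. For $x<a$ it equals $(a-x)(b-a)$. For $a\le x\le b$ it equals $(x-a)(b-x)$. For $x>b$ it equals $(b-a)(x-b)$. Hence $w_T$ is $(b-a)$-Lipschitz, so $w_T(x)-w_T(\ell)\le (b-a)|x-\ell|$. When $\ell$ lies outside $[a,b]$, the increase is moreover controlled by $w_T(x)$ itself. Using these bounds, the displayed inequality would follow termwise from
\[
w_T(x)\min\{|x-\ell|,d(\ell,T)\}\;\ge\;C(\ell)\,\bigl(w_T(x)-w_T(\ell)\bigr)^+ \qquad\text{for every pair }T,
\]
since pairs whose weight decreases only help. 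So the problem reduces to an upper bound on the truthful cost $C(\ell)$ in terms of the local geometry of each pair $T$ relative to $\ell$.

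The main obstacle is this upper bound on $C(\ell)$, and here the closest-point deletion argument enters. Let $j$ be the other agent closest to $\ell$, at distance $\delta$. I would try to show that $C(\ell)$ is at most a suitable multiple of $\delta$, and more precisely at most $d(\ell,T)$ weighted appropriately. The idea is to compare each triple $S\not\ni i$ whose cost to $i$ is large with the triple obtained by deleting the point of $S$ closest to $\ell$ and inserting $i$ itself. That triple has cost $0$ for $i$, and for $k=3$ its weight is comparable, because replacing the nearest point by $\ell$ changes only the one or two adjacent gaps. This yields a charging of $N_0$ against $Y(\ell)$, giving inequalities of the form $N_0\le(\text{local distance})\cdot Y(\ell)$. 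Combined with the Lipschitz/ratio bounds on $w_T$, these should close the termwise inequality.

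I would handle the cases separately: $\ell$ to the left of, between, or to the right of $a,b$, and $x$ on either side of $\ell$. The case where $x$ passes over other agents is where the piecewise structure matters most. I expect this case analysis, and especially the interior case $a<\ell<b$ where $w_T(\ell)>0$ can be the maximum, to be where the argument is tight and where the restriction to $k=3$ is genuinely used.
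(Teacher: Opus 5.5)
\paragraph{Verdict: the termwise step fails.} Your reduction to $D(x)\ge C(\ell)\bigl(Y(x)-Y(\ell)\bigr)$ is correct; it is exactly the cross-multiplied numerator in the paper. The termwise sufficient condition you then aim for is false, so the plan breaks at its central step.

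\paragraph{Counterexample to the termwise inequality.} Take true location $\ell=0$, other reports at $\delta,10,20,30$ with $\delta>0$ small, deviation $x=-5$, and the pair $T=\{\delta,10\}$.
\begin{itemize}
\item For this pair, $w_T(x)-w_T(\ell)=5(10-\delta)$, while $w_T(x)\min\{|x-\ell|,d(\ell,T)\}=\delta(5+\delta)(10-\delta)$.
\item So your inequality for this $T$ forces $C(0)\le\delta(1+\delta/5)$.
\item But the ordinary triple $\{10,20,30\}$ alone contributes $1000$ to $N_0$, while $Z_0+Y(0)=1100+O(\delta)$. Hence $C(0)\approx 10/11\gg\delta$.
\end{itemize}
In general, $C(\ell)$ is a global average that can be arbitrarily larger than the distance to the nearest other report. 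Meanwhile, any pair containing that nearest report has its cost capped at that distance, yet can gain almost all of $w_T(x)$ when you move away. So no upper bound on $C(\ell)$ in terms of the local geometry of a single pair can rescue a termwise argument. Your charging map, which replaces the nearest point of $S$ by $\ell$, is many-to-one, so it does not yield such a bound either.

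\paragraph{Missing idea: aggregation across pairs.} The paper proves the stronger statement $\Psi(O)=Z(O)T_r(O)-B(O)\bigl(S_r(O)-S_y(O)\bigr)\ge0$ by induction. At each step it deletes the ordinary report $p$ closest to $y$, at distance $\rho$.
\begin{itemize}
\item Every new ordinary triple through $p$ costs exactly $\rho$. So in the increment $\Psi(O)-\Psi(Q)$, the role of your $C(\ell)$ is played by $\rho\le\delta_A$.
\item With $\rho$ in place of $C(\ell)$, your termwise inequality is true: it is the local pair inequality $W_r(A)-W_y(A)\le W_r(A)c_A/\delta_A$, giving $H_A\ge 0$.
\item Even so, the increment contains the cross term $Zc_p\Phi_r-B(\Phi_r-\Phi_y)$. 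This term has no sign, because $B/Z$ is again a possibly large average cost while $c_p\le\rho$.
\item The paper controls it only after regrouping the whole increment into atoms $\Theta_V$ supported on at most four witnesses. Each atom is certified by explicit nonnegative polynomial identities in three deviation regimes: same-side interior, opposite-side interior, and exterior (via clipping to the boundary).
\end{itemize}
Your outline has no counterpart of this induction or this cancellation across pairs, so as stated it does not prove the theorem.
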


Fix a strategic agent with true location $y$, an alternative report
$r$, and the collection $O$ of reports of the other agents. For a
selected set $S$, write

$$
d(y,S)=\min_{p\in S}\abs{y-p}.
$$

We first dispose of the zero-normalization cases. If the truthful
three-set normalization is zero, the mechanism opens at every occupied
truthful location, including $y$, and the strategic agent has cost
zero. Suppose instead that the truthful normalization is positive but
the normalization after reporting $r$ is zero. Then $O\cup\{r\}$
occupies at most two distinct locations, whereas $O\cup\{y\}$ occupies
at least three. Hence $O$ occupies exactly two distinct locations,
$r$ is already occupied, and $y$ is a third location. Every
positive-weight truthful triple must use all three locations and
therefore contains the strategic report at $y$. The truthful cost is
again zero.

It remains to consider the case in which both normalizations are
positive. For a pair $A\in\binom{O}{2}$ and a report
$a\in\{y,r\}$, define

$$
W_a(A)=W_3(A\cup\{a\}),
\qquad
c_A=d(y,A\cup\{r\}).
$$

For a triple $F\in\binom{O}{3}$, let $d_F=d(y,F)$. We use the
following four aggregate quantities:

$$
\begin{aligned}
Z(O)
&=
\sum_{F\in\binom{O}{3}}W_3(F),
&&
\text{the mass of triples using only ordinary reports},\\
B(O)
&=
\sum_{F\in\binom{O}{3}}W_3(F)d_F,
&&
\text{their cost-weighted mass},\\
S_a(O)
&=
\sum_{A\in\binom{O}{2}}W_a(A),
&&
\text{the mass of triples using report $a$},\\
T_r(O)
&=
\sum_{A\in\binom{O}{2}}W_r(A)c_A,
&&
\text{the cost-weighted mass of triples using $r$}.
\end{aligned}
$$

A truthful selected triple containing the strategic report at $y$
gives the agent cost zero. Her truthful and deviating expected costs
are therefore

$$
C_y(y\mid O)
=
\frac{B(O)}{Z(O)+S_y(O)}
$$

and

$$
C_y(r\mid O)
=
\frac{B(O)+T_r(O)}{Z(O)+S_r(O)}.
$$

Cross-multiplication gives

\[
C_y(r\mid O)-C_y(y\mid O)
=
\frac{
Z(O)T_r(O)
+S_y(O)T_r(O)
-B(O)\bigl(S_r(O)-S_y(O)\bigr)
}{
\bigl(Z(O)+S_r(O)\bigr)
\bigl(Z(O)+S_y(O)\bigr)
}.
\]

Since $S_y(O)T_r(O)\geq0$, it is enough to prove

$$
\Psi(O)
\eqdef
Z(O)T_r(O)
-
B(O)\bigl(S_r(O)-S_y(O)\bigr)
\geq0.
$$

The proof is based on deleting the ordinary report closest to the
strategic agent.

\begin{lemma}[Closest-point deletion]
\label{lem:k3-closest-point-deletion}
Assume first that the profile is in generic position: no ordinary
report equals $y$, the ordinary report closest to $y$ is unique, and
all coordinate and radial comparisons used in the proof are strict.
Let $p\in O$ be closest to $y$, and let $Q=O\setminus\{p\}$. Then

$$
\Psi(O)\geq\Psi(Q).
$$
\end{lemma}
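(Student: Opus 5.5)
The plan is to compare $\Psi(O)$ and $\Psi(Q)$ term by term. Every quantity entering $\Psi$ is a sum over triples, and each triple either avoids $p$ or contains it. Write $\delta=\abs{y-p}$ and $\rho=\abs{y-r}$. Because $p$ is the ordinary report closest to $y$, every selected set containing $p$ gives the strategic agent distance exactly $\delta$ when only ordinary reports are involved, and distance $\min\{\delta,\rho\}$ once $r$ is added. This leads to the increments
\[
Z(O)=Z(Q)+Z_p,\qquad B(O)=B(Q)+\delta Z_p,\qquad Z_p\eqdef\sum_{A\in\binom{Q}{2}}W_3(A\cup\{p\}),
\]
\[
S_a(O)=S_a(Q)+\sigma_a,\qquad \sigma_a\eqdef\sum_{x\in Q}W_3(\{x,p,a\}),\qquad
T_r(O)=T_r(Q)+\min\{\delta,\rho\}\,\sigma_r .
\]
I also intend to use the pointwise bounds $d_F\ge\delta$ for $F\in\binom{Q}{3}$ and $c_A\ge\min\{\delta,\rho\}$ for $A\in\binom{Q}{2}$, which give $B(Q)\ge\delta Z(Q)$ and $T_r(Q)\ge\min\{\delta,\rho\}S_r(Q)$.

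Substituting the increments, $\Psi(O)-\Psi(Q)$ becomes
\[
Z(Q)\min\{\delta,\rho\}\sigma_r+Z_pT_r(Q)+Z_p\min\{\delta,\rho\}\sigma_r
-B(Q)(\sigma_r-\sigma_y)-\delta Z_p\bigl(S_r(Q)-S_y(Q)+\sigma_r-\sigma_y\bigr).
\]
The terms multiplying $Z_p$ should be handled by the lower bound $T_r(Q)\ge\min\{\delta,\rho\}S_r(Q)$ together with $S_y(Q)\ge0$. When $\rho\ge\delta$ these already give $Z_p\bigl(T_r(Q)-\delta S_r(Q)+\delta S_y(Q)+\delta\sigma_y\bigr)\ge0$.

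The dangerous term is $-B(Q)(\sigma_r-\sigma_y)$, because $B(Q)$ is not bounded above by anything in terms of $Z(Q)$. I therefore split into cases according to where $r$ lies relative to $y$ and $p$ (same side of $y$ as $p$ or opposite side, and $\rho\le\delta$ or $\rho>\delta$). The goal in each case is a per-$x$ comparison of $W_3(\{x,p,r\})$ with $W_3(\{x,p,y\})$.

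The geometric fact to exploit is that no ordinary report lies strictly closer to $y$ than $p$. In the triple $\{x,p,y\}$, the point $y$ is therefore either an endpoint adjacent to $p$ or sits in the gap next to $p$. Moving $y$ to $r$ changes the product of gaps in a controlled, essentially linear way in the displacement.

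\begin{itemize}
\item Where $\sigma_r\le\sigma_y$, the dangerous term has the good sign and we are done.
\item Where $\sigma_r>\sigma_y$, I would try to absorb $B(Q)(\sigma_r-\sigma_y)$ not by $Z(Q)$ but by pairing it with the term $Z(O)T_r(O)$ of $\Psi(O)$, at the level of individual triples $F\in\binom{Q}{3}$ and $x\in Q$. I would expand $B(Q)\sigma_r$ and $Z(Q)T_r(Q)$ as double sums over $(F,\{x\})$ and over $(F,A)$ and look for an injective charging.
\end{itemize}

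I expect this last step to be the main obstacle: controlling $B(Q)(\sigma_r-\sigma_y)$ when deviating toward or beyond $p$ increases the weight of triples using the report. If the direct charging fails, the fallback is to strengthen the induction hypothesis. That means proving, alongside $\Psi\ge0$, an auxiliary inequality of the form $T_r(Q)\ge \kappa\, B(Q)\,(\ldots)$ that survives closest-point deletion. The deletion step then only needs the local increments computed above, with the generic-position assumption ensuring all the case distinctions are strict.
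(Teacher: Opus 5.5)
Your increments are correct, and the difference you derive agrees with the paper's exact deletion identity. In your notation, $\Psi(O)-\Psi(Q)=Z_p\sum_{A\in\binom{O}{2}}H_A+Z(Q)\min\{\delta,\rho\}\sigma_r-B(Q)(\sigma_r-\sigma_y)$, where $H_A=W_r(A)c_A-\delta\bigl(W_r(A)-W_y(A)\bigr)$. Beyond that, however, the proposal does not prove the lemma.

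\textbf{The $Z_p$-part.} You settle it only for $\rho\ge\delta$. When $\rho<\delta$, your bound $T_r(Q)\ge\rho S_r(Q)$ only shows the bracket is at least $(\rho-\delta)S_r(O)+\delta S_y(O)$. Knowing $S_y\ge0$ does not make this nonnegative. What you need is a pairwise comparison of $W_r(A)$ with $W_y(A)$. The paper proves $W_r(A)-W_y(A)\le W_r(A)c_A/\delta_A$, where $\delta_A=d(y,A)$, by a short case check on where $A$ sits relative to $y$ and $r$. Since $\delta\le\delta_A$, this gives $H_A\ge0$ for every pair. With that in hand, your first bullet ($\sigma_r\le\sigma_y$, which holds termwise when $r$ lies between $y$ and $p$) is immediate.

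\textbf{The main gap: the regime $\sigma_r>\sigma_y$.} This arises for deviations to the opposite side of $y$ from $p$ and for deviations beyond distance $\delta$. Essentially all of the work lies here, and your proposal leaves it open.

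The charging you sketch cannot work as stated. The term $Z(Q)T_r(Q)$ occurs in both $\Psi(O)$ and $\Psi(Q)$ and cancels, so it is not available to pay for $B(Q)(\sigma_r-\sigma_y)$ when the goal is $\Psi(O)\ge\Psi(Q)$. The only positive resources in the difference are:
\begin{itemize}
\item $Z(Q)\min\{\delta,\rho\}\sigma_r$, which cannot absorb the deficit alone, because $d_F/\min\{\delta,\rho\}$ is unbounded;
\item $Z_p\sum_A H_A$, whose $\delta W_y(A)$ component you throw away when you use only $S_y(Q)\ge0$.
\end{itemize}

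The paper's missing idea is to combine exactly these two. It expands everything into products of two triple weights and groups the terms by their exact support $V\subseteq Q$ of ordinary witnesses, with $2\le\abs{V}\le4$. It then proves each reduced atom $\Theta_V$ nonnegative:
\begin{itemize}
\item for opposite-side deviations within distance $\delta$, by exact polynomial identities with nonnegative coefficients in radial gap variables, plus an affine-in-the-outer-gap argument for four witnesses;
\item for exterior deviations, by a clipping inequality that moves the report to the nearer endpoint of $[y-\delta,y+\delta]$, supported by shadow inequalities and further certificates.
\end{itemize}
Your fallback of strengthening the induction with an unspecified inequality of the form $T_r(Q)\ge\kappa B(Q)(\ldots)$ is not an argument. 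Without a replacement for this atomwise reduction, $\Psi(O)\ge\Psi(Q)$ remains unproved in exactly the cases that matter.
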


The proof of Lemma~\ref{lem:k3-closest-point-deletion} is given in
Appendix~\ref{app:k3-proof}. The proof first
decomposes the deletion difference into reduced atoms involving at most
four witness reports. Deviations toward the deleted report are
termwise nonnegative. Deviations through the true location are handled
by a radial certificate, and deviations outside the closest-point
interval are reduced to its boundary by an exterior-clipping
certificate.

\begin{proof}[Proof of Theorem~\ref{thm:sp-k3}]
Consider first a generic profile. Repeatedly delete an ordinary report
closest to $y$. By Lemma~\ref{lem:k3-closest-point-deletion}, the
potential cannot decrease when the deleted reports are restored.

Once fewer than three ordinary reports remain, there are no ordinary
triples. Hence $Z(O)=B(O)=0$ and therefore $\Psi(O)=0$. Reversing
the sequence of deletions gives

$$
\Psi(O)\geq0
$$

for every generic profile. The expected-cost identity above then
implies

$$
C_y(r\mid O)\geq C_y(y\mid O).
$$

For fixed labels, every order statistic, consecutive-gap product, and
nearest-distance term is continuous in all reported coordinates.
Consequently, $Z(O)$, $B(O)$, $S_a(O)$, $T_r(O)$, and
$\Psi(O)$ are continuous. Approximating an arbitrary profile by
generic profiles and taking a limit handles repeated reports,
nonunique closest reports, coordinate ties, and radial ties. Since the
agent, her true location, her deviation, and the other reports were
arbitrary, the mechanism is strategyproof in expectation.
\end{proof}

\subsection{Strategyproofness for Two Facilities}
\label{subsec:sp-k2}

The two-facility result follows from the three-facility result through
a general descent property.

For a strategic agent with true location $y$, report $a$, and other
reports $O$, let $C_k(y;a\mid O)$ denote her expected cost under the
$k$-facility Product-Gap mechanism.

\begin{lemma}[Far-anchor descent]
\label{lem:far-anchor-descent}
Fix $k\geq2$. If the $(k+1)$-facility Product-Gap mechanism is
strategyproof on every finite profile, then the $k$-facility
Product-Gap mechanism is also strategyproof.
\end{lemma}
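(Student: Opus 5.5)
Given a $k$-facility instance with strategic agent at true location $y$, deviation $r$, and other reports $O$, add one extra ordinary agent (the anchor) at a far location $X$, say $X\to+\infty$. We then apply the assumed strategyproofness of the $(k+1)$-facility mechanism to the profile $O\cup\{X\}$ with the same $y$ and $r$, and let $X\to\infty$. The aim is to show that, after the appropriate normalization, the $(k+1)$-facility expected cost converges to the $k$-facility expected cost. The inequality $C_{k+1}(y;y\mid O\cup\{X\})\le C_{k+1}(y;r\mid O\cup\{X\})$ then passes to the limit and gives $C_k(y;y\mid O)\le C_k(y;r\mid O)$.

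\textbf{Step 1 (asymptotics of weights).} Let $T\in\binom{N'}{k+1}$ be any set, where $N'$ includes the anchor.
\begin{itemize}
\item If $T$ contains the anchor, then $T=S\cup\{X\}$ with $|S|=k$, and
\[
W_{k+1}(T)=W_k(S)\,\bigl(X-\max S\bigr)=W_k(S)\,X+O(1).
\]
\item If $T$ avoids the anchor, $W_{k+1}(T)$ does not depend on $X$.
\end{itemize}
Hence
\[
Z_{k+1}=X\,Z_k(\cdot)+O(1)
\]
whenever $Z_k>0$, and sets avoiding the anchor carry probability $O(1/X)$.

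\textbf{Step 2 (cost convergence).} For a selected set $S\cup\{X\}$, the strategic agent's cost at $y$ equals $d(y,S)$ once $X$ is large, since $y$ is fixed and bounded. The contribution of anchor-free sets is bounded cost times $O(1/X)$ probability. Therefore, for each report $a\in\{y,r\}$, the distribution over selected $k$-sets, conditioned on containing the anchor, is exactly the $k$-facility Product-Gap distribution up to the $-\max S$ correction, and
\[
C_{k+1}(y;a\mid O\cup\{X\})\to C_k(y;a\mid O).
\]

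\textbf{Step 3 (degenerate normalization, the main obstacle).} If $Z_k=0$ for the profile $O\cup\{a\}$, the $k$-facility mechanism uses its fixed zero-cost rule, and the limit of the $(k+1)$-facility mechanism need not match it. I would handle this as the paper did for $k=3$:
\begin{itemize}
\item If the truthful $k$-normalization is zero, the truthful cost is already zero, so there is nothing to prove.
\item If the truthful normalization is positive but the deviating one is zero, then $O\cup\{r\}$ has at most $k-1$ distinct points while $O\cup\{y\}$ has at least $k$. So $y$ is a new distinct location, and every positive-weight truthful $k$-set must contain it. The truthful cost is therefore zero.
\end{itemize}
The remaining case has both normalizations positive, and Steps 1--2 apply directly; no tie issues arise since $X$ is distinct from all reports. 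Taking limits in the strategyproofness inequality then concludes the proof.
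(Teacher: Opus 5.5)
Your proposal is correct and follows the paper's proof essentially step for step. Both add a single remote anchor at $X\to\infty$ and use $W_{k+1}(S\cup\{X\})=W_k(S)(X-\max S)$ to get normalization and cost numerator of the form $X\cdot(\text{$k$-facility quantity})+O(1)$, so the $(k+1)$-facility expected costs converge to the $k$-facility ones and the strategyproofness inequality passes to the limit; the two zero-normalization cases are handled exactly as you describe.
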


\begin{proof}
Fix the other reports $O$, a true location $y$, and an alternative
report $r$. We first assume that the $k$-facility normalizations are
positive under both reports.

Choose $L$ larger than every point in $O\cup\{y,r\}$ and add one
ordinary agent at $L$. For $a\in\{y,r\}$, let

$$
D_k(a)
=
\sum_{S\in\binom{O\cup\{a\}}{k}}W_k(S)
$$

be the $k$-facility normalization, and let

$$
N_k(a)
=
\sum_{S\in\binom{O\cup\{a\}}{k}}
W_k(S)d(y,S)
$$

be the corresponding cost numerator.

Every selected $(k+1)$-set containing the remote anchor is uniquely
of the form $S\cup\{L\}$, where
$S\in\binom{O\cup\{a\}}{k}$. Its weight is

$$
W_{k+1}(S\cup\{L\})
=
W_k(S)\bigl(L-\max S\bigr).
$$

The total contribution of selected sets not containing $L$ is
independent of $L$. Moreover, for all sufficiently large $L$, the
remote anchor is never the closest selected facility to $y$ whenever
another facility is present. Therefore, the normalization
$D_{k+1}^{L}(a)$ and cost numerator $N_{k+1}^{L}(a)$ on the augmented
profile satisfy

$$
D_{k+1}^{L}(a)
=
L D_k(a)+O(1)
$$

and

$$
N_{k+1}^{L}(a)
=
L N_k(a)+O(1).
$$

It follows that

$$
\lim_{L\to\infty}
C_{k+1}(y;a\mid O\cup\{L\})
=
\frac{N_k(a)}{D_k(a)}
=
C_k(y;a\mid O).
$$

By strategyproofness of the $(k+1)$-facility mechanism, for every
sufficiently large $L$,

\[
C_{k+1}(y;r\mid O\cup\{L\})
\geq
C_{k+1}(y;y\mid O\cup\{L\}).
\]

Taking $L\to\infty$ proves

$$
C_k(y;r\mid O)\geq C_k(y;y\mid O).
$$

The zero-normalization cases are handled separately. If the truthful
normalization is zero, the truthful cost is zero. If the deviating
normalization is zero while the truthful normalization is positive,
the ordinary reports occupy exactly $k-1$ distinct locations, the
deviating report is already occupied, and the true location is not.
Every positive-weight truthful $k$-set must therefore contain the
truthful strategic report, so the truthful cost is again zero.
\end{proof}

\begin{corollary}[Two-facility strategyproofness]
\label{cor:sp-k2}
The Product-Gap mechanism is strategyproof in expectation for $k=2$.
\end{corollary}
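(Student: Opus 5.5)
The plan is to obtain the corollary directly from the two preceding results, with no new computation. Theorem~\ref{thm:sp-k3} states that the three-facility Product-Gap mechanism is strategyproof in expectation. Since it is proved for an arbitrary strategic agent, true location $y$, alternative report $r$, and arbitrary finite collection $O$ of other reports, it holds on every finite profile. This is exactly the hypothesis Lemma~\ref{lem:far-anchor-descent} requires with $k=2$. Applying the lemma with $k=2$, so that $k+1=3$, then yields that the two-facility Product-Gap mechanism is strategyproof.

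The one point I will check explicitly is that the descent lemma only applies the three-facility result to augmented profiles $O\cup\{L\}$ that really are admissible profiles for that result. Adding an ordinary agent at $L$ gives $n+1$ agents. Whenever the two-facility normalization is positive, there are at least $2$ agents besides the remote anchor, so the augmented profile has at least $3$ agents and satisfies the requirement $n\geq k$ for $k=3$. The three-facility strategyproofness statement is unconditional in the profile: its own proof already covers the degenerate zero-normalization cases and the nongeneric profiles, the latter by continuity. So no genericity or positivity assumption leaks into the application.

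I therefore expect no real obstacle here. All the substantive work sits in Lemma~\ref{lem:k3-closest-point-deletion} behind Theorem~\ref{thm:sp-k3}, and in the limit $L\to\infty$ inside Lemma~\ref{lem:far-anchor-descent}; both may be assumed. The zero-normalization cases for $k=2$ are handled inside the descent lemma itself. If the truthful pair-normalization is zero, all reports coincide with $y$ and the agent's cost is zero. If only the deviating normalization is zero, the other reports sit at a single point that is distinct from $y$, and every positive-weight truthful pair contains $y$, so the truthful cost is again zero. Either way the strategic agent gains nothing, which completes the argument.
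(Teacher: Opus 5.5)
Your proposal is correct and follows the paper's own proof: apply Lemma~\ref{lem:far-anchor-descent} with $k=2$, taking Theorem~\ref{thm:sp-k3} as the three-facility hypothesis. Your extra checks are accurate and already handled inside the descent lemma, so they are harmless but not needed; these are that the augmented profile $O\cup\{L\}$ has at least three agents, and the two zero-normalization cases.
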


\begin{proof}
Apply Lemma~\ref{lem:far-anchor-descent} with $k=2$ and use
Theorem~\ref{thm:sp-k3}.
\end{proof}

\subsection{A Profitable Deviation for Four Facilities}
\label{subsec:not-sp-k4}

We next show that strategyproofness already fails for four facilities.
Although the manipulation is small, its sign is certified by exact
rational arithmetic.

\begin{proposition}[Exact four-facility manipulation]
\label{prop:k4-manipulation}
Consider a strategic agent with true location $0$ and alternative
report
$
h=\frac{1}{1000}$.
The other reports consist of:
\begin{itemize}
    \item $100$ agents at $0$;
    \item one agent at $-1$ and one agent at $1$;
    \item for every $j\in\{1,\ldots,100\}$, exactly $2^j$ agents at
    the location
    $-1-\frac{1}{100\cdot2^j}$.
\end{itemize}

On this profile, reporting $h$ gives the strategic agent strictly
smaller expected cost than reporting $0$.
\end{proposition}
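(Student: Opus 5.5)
We will compare the two expected costs directly through the cost identity from the three-facility analysis, adapted to $k=4$. Let $O$ be the multiset of other reports. For $a\in\{0,h\}$ define the normalization $D(a)=Z_4(O)+S_a(O)$, where $S_a(O)=\sum_{A\in\binom{O}{3}}W_4(A\cup\{a\})$. Define also $B(O)=\sum_{F\in\binom{O}{4}}W_4(F)\,d(0,F)$ and $T_h(O)=\sum_{A\in\binom{O}{3}}W_4(A\cup\{h\})\,d(0,A\cup\{h\})$. Truthful sets containing the strategic report give cost zero. Hence the truthful cost is $B/(Z+S_0)$ and the deviating cost is $(B+T_h)/(Z+S_h)$. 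Cross-multiplying, the deviation is profitable if and only if
$$
B\,(S_h-S_0) > (Z+S_0)\,T_h .
$$
So the whole task is to evaluate these four aggregates exactly on this highly structured profile.

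\textbf{Grouping by location classes.} There are $103$ distinct locations: $0$ (multiplicity $100$), $\pm1$ (multiplicity $1$ each), and $x_j=-1-\frac{1}{100\cdot 2^j}$ (multiplicity $2^j$). A positive-weight $4$-set uses four distinct locations, so every sum reduces to a sum over $4$-subsets of distinct locations, weighted by the product of multiplicities.

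Several observations cut the work down.
\begin{itemize}
\item Any set containing a point at $0$ has cost $0$. Thus $B$ only involves sets avoiding $0$, which must consist of $\pm1$ together with some $x_j$'s, or of $x_j$'s only together with $1$ or $-1$.
\item $T_h$ is nonzero only on sets with no ordinary point at $0$. There $d=\min(h,\dots)=h$, since $h$ is the selected point closest to $0$. Hence $T_h=h\cdot S_h'$, where $S_h'$ is the part of $S_h$ whose three ordinary reports avoid $0$.
\item For $S_h-S_0$, moving the report from $0$ to $h$ changes gap products only in sets whose neighbors of the report are known. A $3$-set containing the location $0$ gains a new tiny gap $h$, multiplied by $100$ copies. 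A set with $-1$ and $1$ as neighbors sees its gaps change from $1\cdot1$ to $(1+h)(1-h)$. A set with only left neighbors sees its last gap grow by $h$.
\end{itemize}

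Each aggregate thus becomes an explicit finite sum over the $x_j$ with geometric weights $2^j$ and gaps $\frac{1}{100\cdot2^j}$. These collapse to closed forms in finite geometric series, or at worst to sums over pairs and triples of indices $j<j'<j''$. The products $2^j\cdot(x_{j'}-x_j)$ behave like constants, which is the point of the construction: the cluster near $-1$ supplies a large mass of sets whose cost is about $1$ (the distance from $0$ to $-1$ or $1$) and which make $B$ large relative to $Z$.

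\textbf{Sign of the gain.} Meanwhile $S_h-S_0$ is positive and of order $h$ times a large count. It is dominated by sets with $-1,1$ excluded and three cluster points to the left, whose last gap lengthens by $h$. The manipulation therefore increases the agent's selection mass by reporting away from the crowd at $0$. The $100$ agents at $0$ are what make $Z+S_0$ large, while $T_h$ stays proportional to $h$ times the mass of sets avoiding $0$.

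\textbf{Main obstacle.} The hard step is certifying the strict inequality: the margin is small and both sides are of comparable order in $h$. I would compute $Z$, $S_0$, $S_h$, $B$ and $T_h$ exactly as rational numbers, via exact rational arithmetic on the grouped sums over $O(100^3)$ index triples, and check the inequality directly. As a sanity check I would first do a first-order expansion in $h$. Its leading coefficient, $B\cdot\partial_h S_h-(Z+S_0)\,S_h'$, should be positive with a visible margin. The $O(h^2)$ corrections from $h=10^{-3}$ should then be bounded crudely, e.g.\ by $h^2$ times the total mass, to confirm that they do not flip the sign.
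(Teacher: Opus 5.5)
\textbf{Verdict: genuine gap.} Your reduction is correct and matches the paper's. The decisive step, the exact evaluation, is only announced. The way you organize it would give the wrong sign, and your heuristics misstate where the effect comes from and how large the margin is.

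\textbf{The reduction.} Take $O$ to include the $100$ agents at $0$. Then your normalizations $Z_4(O)+S_0(O)$ and $Z_4(O)+S_h(O)$ are the appendix's $U(0)=Z+101S_0$ and $U(h)=Z+S_h+100(S_0+J)$, where $Z,S_0,S_h,J$ are computed over the reports $P$ away from $0$. Your $T_h$ equals $hS_h$. Your inequality $B(S_h-S_0)>(Z+S_0)T_h$ is exactly $\Phi_4(O;h)<0$. For this proposition the exact evaluation \emph{is} the proof, and the margin is very thin.

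\textbf{First: four-cluster sets are missing.} Your description of the sets in $B$, and your claim that everything reduces to sums over pairs and triples of indices, omits the four-sets made of four cluster supports $x_{j_1}<\cdots<x_{j_4}$.
\begin{itemize}
\item These sets enter both $Z$ and $B$, with cost $1+\frac{1}{100\cdot 2^{j_4}}$.
\item Their total weight is about $0.14\cdot2^{101}$, roughly $13\%$ of $B\approx1.05\cdot2^{101}$.
\item Since $U(0)$ is dominated by $101S_0\approx144\cdot2^{101}$, dropping them lowers the truthful cost from about $0.00725$ to about $0.0063$. That reverses the comparison.
\end{itemize}
The paper handles this with the chain recurrence $G_\ell(j)=\mu_j\sum_{i<j}G_{\ell-1}(i)(x_j-x_i)$ over all $103$ supports. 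Its formula $B=Z+\sum_j G_4(j)/(100\cdot2^j)$ is precisely the all-cluster correction.

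\textbf{Second: the wrong term dominates $S_h-S_0$.} It is not dominated by three-cluster sets. The term $100J$, from sets containing one ordinary agent at $0$ together with $h$, is about $197h\cdot2^{101}$. The three-left-point sets contribute only about $0.46h\cdot2^{101}$. This is the real mechanism: splitting off from the crowd creates zero-cost mass of order $mh$ in the denominator. That gain must beat the added cost $hS_h$ in the numerator, and the two nearly cancel.

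\textbf{Third: there is no visible first-order margin.}
\begin{itemize}
\item The leading coefficient wins by only about $0.3\%$ of either competing term.
\item At $h=10^{-3}$, the $O(h)$ corrections (the factor $1-h$ in $J$ and the growth of $S_h$ over $S_0$) consume about a quarter of that margin.
\item The final relative cost difference is about $3\cdot10^{-6}$.
\end{itemize}
A crude second-order bound such as $h^2U(0)^2$ in the cross-multiplied form exceeds this margin by more than an order of magnitude. So the perturbative check cannot replace an exact evaluation that includes every four-set, which is what the paper's certificate $-N_\Phi/D_\Phi$ provides.
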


\begin{proof}
For $a\in\{0,h\}$, define the four-facility normalization and cost
numerator by

$$
U_4(a)
=
\sum_{S\in\binom{O\cup\{a\}}{4}}W_4(S)
$$

and

$$
M_4(a)
=
\sum_{S\in\binom{O\cup\{a\}}{4}}
W_4(S)d(0,S).
$$

Both normalizations are positive. Truthful reporting is weakly better
than reporting $h$ if and only if

$$
\Phi_4(O;h)
\eqdef
U_4(0)M_4(h)-U_4(h)M_4(0)
\geq0.
$$

Appendix~\ref{app:k4-certificate} evaluates the four quantities by an
exact support recurrence and obtains

$$
\Phi_4(O;h)
=
-\frac{N_\Phi}{D_\Phi}
<0,
$$

where $N_\Phi$ and $D_\Phi$ are explicit positive integers listed
there. Thus,

$$
\frac{M_4(h)}{U_4(h)}
<
\frac{M_4(0)}{U_4(0)}.
$$

For orientation, the two expected costs are approximately

$$
\frac{M_4(0)}{U_4(0)}
\approx
0.00724556377223489236
$$

and

$$
\frac{M_4(h)}{U_4(h)}
\approx
0.00724554178194980469.
$$

The exact negative determinant, rather than these decimal
approximations, certifies the strict manipulation.
\end{proof}

\subsection{Failure for Every $k\geq4$}
\label{subsec:not-sp-kge4}

The four-facility counterexample can be lifted to any larger number of
facilities by adding increasingly remote anchors.

\begin{lemma}[Far-anchor padding]
\label{lem:far-anchor-padding}
Suppose a finite profile admits a strict profitable deviation under
the $k$-facility Product-Gap mechanism. Then, for every integer
$s\geq1$, some finite profile admits a strict profitable deviation
under the $(k+s)$-facility Product-Gap mechanism.
\end{lemma}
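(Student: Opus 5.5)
By induction on $s$, it suffices to treat $s=1$. The argument mirrors the far-anchor limit in the proof of Lemma~\ref{lem:far-anchor-descent}, run in the opposite direction. That lemma turns strategyproofness at $k+1$ into strategyproofness at $k$. Here we turn a strict manipulation at $k$ into one at $k+1$, using the fact that a strict inequality survives a limit argument at large finite $L$.

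Let the $k$-facility profile have other reports $O$, true location $y$, and profitable report $r$, so that $C_k(y;r\mid O)<C_k(y;y\mid O)$.

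\textbf{Step 1: positive normalizations.} First check that both $k$-facility normalizations are positive. If the truthful one were zero, the truthful cost would be zero, contradicting strictness. If only the deviating one were zero, the case analysis in the proof of Lemma~\ref{lem:far-anchor-descent} shows the truthful cost is again zero, contradicting strictness once more. So $D_k(y),D_k(r)>0$.

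\textbf{Step 2: add a remote anchor.} Add one ordinary agent at $L$, with $L$ larger than every point of $O\cup\{y,r\}$. Sets not containing $L$ contribute $O(1)$ to both the normalization and the cost numerator. Sets $S\cup\{L\}$ have weight $W_k(S)(L-\max S)$. For large $L$, the facility at $L$ is never the nearest to $y$, so the cost of such a set is $d(y,S)$. This gives
$$D^{L}_{k+1}(a)=L\,D_k(a)+O(1),\qquad N^{L}_{k+1}(a)=L\,N_k(a)+O(1),$$
and hence $C_{k+1}(y;a\mid O\cup\{L\})\to C_k(y;a\mid O)$ as $L\to\infty$, for both $a\in\{y,r\}$.

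\textbf{Step 3: pass the strict gap through the limit.} The limiting costs differ by a strict gap $\delta>0$. Therefore, for all sufficiently large $L$, the $(k+1)$-facility costs differ by more than $\delta/2$ in the same direction. This is a strict profitable deviation on the finite profile $O\cup\{L\}$. The profile is finite, and $D^{L}_{k+1}(a)>0$ for large $L$, so the mechanism's zero-normalization branch plays no role.

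\textbf{Step 4: iterate.} Apply Steps 2--3 with the new profile as input. Repeating $s$ times yields the claim for $k+s$.

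\textbf{Main obstacle.} The only delicate point is the $O(1)$ error bookkeeping, which must be uniform in the two reports. Both $a\in\{y,r\}$ involve the same finite set of points, so the error terms are bounded polynomials in $L$ of degree at most $0$ in this sense. Concretely, every $(k+1)$-set avoiding $L$ has a weight independent of $L$. The cost terms with $L$ in the set equal $d(y,S)$ exactly once $L$ exceeds $y+\max_{p}\abs{y-p}$, with the maximum taken over the finite set $O\cup\{r\}$. The one exception is a set whose only other points coincide with nothing nearer; but since $|S|=k\ge 1$ there is always another facility present. Dividing numerator and denominator by $L$ then gives the limit cleanly.
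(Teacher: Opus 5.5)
Your proof is correct. It differs from the paper's mainly in how the argument is organized.

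The paper adds all $s$ anchors at once, at $R,R^2,\ldots,R^s$, and normalizes by $R^{\sigma_s}$ with $\sigma_s=s(s+1)/2$. It then has to argue that every $(k+s)$-set omitting at least one anchor has total anchor degree at most $\sigma_s-1$. This ensures that only the sets containing all the anchors survive in the limit.

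You instead reduce to $s=1$ by induction. You then reuse the single-anchor expansion $D^L_{k+1}(a)=L\,D_k(a)+O(1)$ and $N^L_{k+1}(a)=L\,N_k(a)+O(1)$ from the far-anchor descent lemma. This avoids the multi-anchor degree bookkeeping and the geometric spacing of anchors entirely. The cost is $s$ nested limits: each new anchor is chosen ``sufficiently large'' relative to the previously padded profile, instead of the single explicit one-parameter family the paper uses.

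Your Step~1 is a genuine gain in rigor. The paper's padding proof passes to the limit of the ratio $N/D$ without checking $D_k(a)>0$, and when that normalization vanishes the cost is given by the fixed zero-cost rule rather than by $N_k/D_k$. You rule out both zero-normalization cases explicitly, using the strictness of the deviation.

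One sentence in your last paragraph, about ``a set whose only other points coincide with nothing nearer,'' is muddled. The fact you actually need is stated correctly just before it. Every set $S$ is nonempty, and once $L-y\geq\abs{y-p}$ for all $p\in O\cup\{r\}$, adding $L$ does not change $d(y,S)$.
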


\begin{proof}
Let $O$, $y$, and $r$ describe a strict manipulation:

$$
C_k(y;r\mid O)
<
C_k(y;y\mid O).
$$

Choose a positive integer $R$ larger than every absolute coordinate in
the original profile and add one ordinary agent at each of

$$
R,R^2,\ldots,R^s.
$$

Let

$$
\sigma_s
=
1+2+\cdots+s
=
\frac{s(s+1)}{2}.
$$

For $a\in\{y,r\}$, consider a selected $(k+s)$-set containing all
$s$ remote anchors. It is uniquely of the form

$$
F\cup\{R,R^2,\ldots,R^s\},
$$

where $F\in\binom{O\cup\{a\}}{k}$, and its weight is

\[
W_{k+s}\bigl(F\cup\{R,R^2,\ldots,R^s\}\bigr)
=
W_k(F)
\bigl(R-\max F\bigr)
\prod_{j=2}^s
\bigl(R^j-R^{j-1}\bigr).
\]

After division by $R^{\sigma_s}$, the factor multiplying $W_k(F)$
converges to $1$, uniformly over the finitely many original sets $F$.
For sufficiently large $R$, none of the remote anchors is closest to
$y$, so the selected-set cost remains $d(y,F)$.

A selected $(k+s)$-set that omits at least one remote anchor has total
anchor degree at most $\sigma_s-1$. Since all original coordinates
remain bounded, the total normalization and cost-numerator
contributions of all such sets are $O(R^{\sigma_s-1})$. Consequently,
for each $a\in\{y,r\}$,

$$
R^{-\sigma_s}D_{k+s}^{R}(a)
\longrightarrow
D_k(a)
$$

and

$$
R^{-\sigma_s}N_{k+s}^{R}(a)
\longrightarrow
N_k(a).
$$

Therefore,

$$
C_{k+s}^{R}(y;a)
\longrightarrow
C_k(y;a\mid O).
$$

The limiting cost under report $r$ is strictly smaller than the
limiting truthful cost. Hence the same strict inequality holds for
every sufficiently large finite integer $R$.
\end{proof}

\begin{corollary}
\label{cor:not-sp-kge4}
For every $k\geq4$, the Product-Gap mechanism is not strategyproof.
\end{corollary}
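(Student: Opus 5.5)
The corollary follows by combining the two preceding results, and the plan is a two-case split on $k$. For $k=4$, I would invoke Proposition~\ref{prop:k4-manipulation} directly. It exhibits a finite profile together with a strategic agent at true location $0$ whose report $h=1/1000$ yields strictly smaller expected cost than truthful reporting. This is precisely a violation of the strategyproofness inequality in the model's definition, taking $\ell_i=0$ and $p_i=h$. So the $4$-facility Product-Gap mechanism is not strategyproof.

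For $k\geq5$, I would write $k=4+s$ with $s=k-4\geq1$. The profile of Proposition~\ref{prop:k4-manipulation} is a finite profile with a strict profitable deviation under $\mech_{\mathrm{PG},4}$, so it meets the hypothesis of Lemma~\ref{lem:far-anchor-padding} with base number of facilities $4$. That lemma then produces a finite profile with a strict profitable deviation under $\mech_{\mathrm{PG},4+s}=\mech_{\mathrm{PG},k}$. One side condition needs checking: the model requires $n\geq k$. The padded profile has the original $n$ agents, which already exceeds $4$ by a wide margin (more than $100$ agents at $0$ alone), plus $s$ anchors. Hence $n+s\geq k$, and the instance is a legitimate $k$-facility instance.

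The only place I expect any care is confirming that the deviation from Lemma~\ref{lem:far-anchor-padding} is a genuine strict violation rather than just a limiting one. The lemma already handles this: it shows $C_{k+s}^{R}(y;a)\to C_k(y;a\mid O)$ for both $a\in\{y,r\}$, and the two limits are strictly ordered, so the strict inequality holds at some finite $R$. I would also note that the normalizations are positive there, so the generic Product-Gap sampling rule applies and not the zero-normalization fallback. This holds because the dominant term $R^{\sigma_s}D_4(a)$ is positive, given that both $4$-facility normalizations in Proposition~\ref{prop:k4-manipulation} are positive. With both cases covered, the mechanism fails strategyproofness for every $k\geq4$.
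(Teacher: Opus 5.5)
Your proposal is correct and follows the paper's own proof: Proposition~\ref{prop:k4-manipulation} settles $k=4$, and Lemma~\ref{lem:far-anchor-padding} with $s=k-4\geq1$ lifts the strict manipulation to every $k\geq5$. Your extra checks are correct details that the paper leaves implicit: the padded instance has at least $k$ agents, and its normalizations are positive, so the generic sampling rule applies.
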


\begin{proof}
Proposition~\ref{prop:k4-manipulation} establishes the result for
$k=4$. For $k=4+s$, apply
Lemma~\ref{lem:far-anchor-padding} with any $s\geq1$.
\end{proof}

\begin{proof}[Proof of Theorem~\ref{thm:sp-classification}]
Theorem~\ref{thm:sp-k3} proves strategyproofness for $k=3$, and
Corollary~\ref{cor:sp-k2} proves it for $k=2$.
Corollary~\ref{cor:not-sp-kge4} proves failure for every $k\geq4$.
\end{proof}

\section{An Improved Approximation for Two Facilities}
\label{sec:two-facility-mixture}

We now specialize the Product-Gap mechanism to two facilities and
combine it with the Proportional mechanism of Lu et al.\
\cite{LuSWZ10}.

For $k=2$, the Product-Gap mechanism chooses an unordered pair of
agents $\{i,j\}$ with probability proportional to
$\abs{p_i-p_j}$ and opens facilities at $p_i$ and $p_j$. Following
\cite{ma2026}, we refer to this specialization as the
\emph{Global Pair} mechanism and denote it by
$\mech_{\mathrm{G}}$.

The \emph{Proportional mechanism}, denoted by
$\mech_{\mathrm{P}}$, first chooses an anchor $i$ uniformly from
$N$. It then chooses a second agent $j$ with probability
$
\frac{\abs{p_i-p_j}}
{\sum_{v\in N}\abs{p_i-p_v}},
$
and opens facilities at $p_i$ and $p_j$.

For $\lambda\in[0,1]$, let $\mech_\lambda$ be the
report-independent mixture that runs $\mech_{\mathrm{P}}$ with
probability $\lambda$ and $\mech_{\mathrm{G}}$ with probability
$1-\lambda$. Both component mechanisms are strategyproof on the line,
and hence so is every fixed mixture
\cite{LuSWZ10,ma2026}.

For a randomized mechanism $\mech$, write
$\SC_{\mech}(\loc)$ for its expected social cost on the truthful
profile $\loc$. Define $
\kappa\eqdef 4\sqrt{3}-6.
$

\begin{theorem}
\label{thm:two-facility-exact-mixture}
For every $\lambda\in[0,1]$, the worst-case approximation ratio of
$\mech_\lambda$ on the real line is
$
\max\{3+\lambda,\,4-\kappa\lambda\}.
$
Consequently, the optimal fixed mixture uses
$
\lambda^\star
=
\frac{1}{1+\kappa}
=
\frac{5+4\sqrt{3}}{23}
$
and has approximation ratio
$
3+\lambda^\star
=
\frac{74+4\sqrt{3}}{23}
=
3.5186175318\ldots.
$
\end{theorem}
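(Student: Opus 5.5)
Since $\SC_{\mech_\lambda}(\loc)=\lambda\,\SC_{\mech_{\mathrm P}}(\loc)+(1-\lambda)\,\SC_{\mech_{\mathrm G}}(\loc)$ is linear in $\lambda$, the theorem reduces to two tasks. The first is to show that every profile satisfies an upper bound that is a convex combination of two linear certificates. The second is to exhibit two families of instances whose ratios tend to $3+\lambda$ and to $4-\kappa\lambda$, respectively.

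\textbf{Upper bound.} The plan is to establish two pointwise inequalities on the line. The first is the dispersion bound for Global Pair, proved with the triple-decomposition argument of Ma and Peng (self-contained version in the appendix):
\[
\SC_{\mech_{\mathrm G}}(\loc)\le 3\,\mathrm{OPT}_2(\loc)+\beta(\loc),
\]
where $\beta$ is a nonnegative ``excess'' term that is large exactly when Global Pair performs badly. The second is a complementary bound for Proportional of the form
\[
\SC_{\mech_{\mathrm P}}(\loc)\le 4\,\mathrm{OPT}_2(\loc)-\gamma\,\beta(\loc)
\]
for a suitable $\gamma$, so that Proportional is good precisely on the instances where Global Pair is bad.

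To make both bounds concrete, I would use the gap representation of Section~\ref{sec:approximation}. Take the optimal split into a left cluster $\{1,\ldots,m\}$ and a right cluster $\{m+1,\ldots,n\}$, and write both expected costs and $\mathrm{OPT}_2=\sum_q\mu_qg_q$ as ratios of polynomials in the gaps $g_q$. The bound $3+\lambda$ should be attained when Global Pair is near $3$ and Proportional near $4$. The bound $4-\kappa\lambda$ should be attained when Global Pair is near $4$ and Proportional near $4-\kappa$.

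\textbf{Reduction to extremal profiles.} The key step is to reduce the worst case to a low-dimensional family. The ratio is a quotient of polynomials in the positions, so I would argue that it is maximized, in the limit, by configurations that concentrate agents at a bounded number of points. Following the tightness construction of Theorem~\ref{thm:2k-tightness}, I expect roughly three or four clusters: a single outlier, or a small group, plus two heavy clusters.

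I would then parametrize these configurations by the cluster masses (fractions of $n$ as $n\to\infty$) and the distances between clusters. On this family I would compute $\SC_{\mech_{\mathrm P}}/\mathrm{OPT}$ and $\SC_{\mech_{\mathrm G}}/\mathrm{OPT}$ in closed form and maximize $\lambda R_{\mathrm P}+(1-\lambda)R_{\mathrm G}$. The constant $\kappa=4\sqrt3-6$ should come out of a one-variable optimization: a quadratic in a mass ratio or a distance ratio whose optimum involves $\sqrt3$. Two candidate worst cases should emerge:
\begin{itemize}
\item the Global Pair extremal instance with ratio $4$, on which Proportional achieves $4-\kappa$;
\item the Proportional extremal instance with ratio $4$, on which Global Pair achieves $3$.
\end{itemize}
On each, the mixture value is linear in $\lambda$ and equals one of the two lines in the statement.

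\textbf{Lower bound and main obstacle.} For the lower bound I would simply evaluate the mixture on these two limiting families, which gives the ratios $3+\lambda$ and $4-\kappa\lambda$ up to $\eps$.

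For the optimal mixing weight, set $3+\lambda=4-\kappa\lambda$. This gives $\lambda^\star=1/(1+\kappa)$. Rationalizing $1/(4\sqrt3-5)$ yields $(5+4\sqrt3)/23$, and substituting back gives $3+\lambda^\star=(74+4\sqrt3)/23$.

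The main obstacle is the reduction step: proving that no profile beats $\max\{3+\lambda,\,4-\kappa\lambda\}$. I would first try to show, by an exchange or convexity argument in each block, that moving agents inside an optimal cluster toward its endpoints or median can only increase the mixed ratio. If that fails, the fallback is a gap-by-gap charging inequality in the style of Lemma~\ref{lem:charging}, with $\lambda$-dependent coefficients. Its validity would then reduce to nonnegativity of finitely many polynomial inequalities in the block sizes, which I would certify directly.
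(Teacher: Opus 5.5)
\textbf{The gap is the entire upper bound.} Your two-certificate skeleton has the right shape, and it is exactly how the paper argues. But you never say what $\beta$ is, and you give no argument for the Proportional inequality, which is the substance of the theorem.

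Since Global Pair and Proportional are each already $4$-approximations, the upper bound for all $\lambda$ is equivalent to the single pointwise inequality
\[
\SC_{\mech_{\mathrm P}}(\loc)+\kappa\,\SC_{\mech_{\mathrm G}}(\loc)\le(4+3\kappa)\,\mathrm{OPT}_2(\loc),
\]
and nothing in your plan proves it.

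If you import the Global Pair half from Lemma~\ref{lem:global-pair-dispersion}, your $\beta$ becomes the truncated spread $(T_L+T_R)/\delta=h(\loc)\,C$, which satisfies $\beta\le C$. You must then show that Proportional loses $\kappa$ times this same quantity. The paper does this anchor by anchor:
\begin{itemize}
\item a weighted one-cluster service bound with factor $3$;
\item the fixed-anchor estimate $F_i\le 2D_i+3C_L-(\operatorname{Base}(i)+\operatorname{Route}(i))/S_i$;
\item the squared routing gain $\operatorname{Route}(i)\ge \ell r_i(D_i-C_R)$ for outward anchors, which is where the order of the line is used;
\item a pairing of outward with nonoutward anchors through the median property, where $\kappa$ emerges from explicit scalar inequalities.
\end{itemize}

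\textbf{Your substitute, the reduction to few-point extremal profiles, is asserted rather than proved, and there is no structural reason for it.}
\begin{itemize}
\item $\mathrm{OPT}_2$ is only piecewise linear in the positions, and its optimal clustering changes as agents move.
\item The Proportional cost $\frac1n\sum_iF_i$ is an average of $n$ quotients with anchor-dependent denominators $S_i=\sum_v\abs{p_i-p_v}$, not one quotient of polynomials.
\item Moving an agent toward a median changes the numerator and the denominator of the ratio simultaneously, so no monotonicity of the mixed ratio is evident.
\item One of the two worst cases is approached only in a joint mass--distance limit, $\langle 0^m,m^{-2},1\rangle$, which any exchange argument would have to reproduce.
\end{itemize}
The charging fallback does not transfer either. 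Lemma~\ref{lem:charging} relies on the common normalization $Z_k$ and the random-cut representation, which Proportional lacks. Clearing its $n$ different denominators yields inequalities whose degree grows with $n$, not a finite list of checks in the block sizes.

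\textbf{Lower bound.} You should name the instances: $\langle 0^m,m^{-2},1\rangle$ (Proportional $\to4$, Global Pair $\to3$) and $\langle 0^A,1^B,2\rangle$ with $A/B\to1+\sqrt3$ (Global Pair $\to4$, Proportional $\to4-\kappa$). The equal-mass instance of Theorem~\ref{thm:2k-tightness} only gives $4-\lambda$, because Proportional achieves $3$ there. So the mass ratio genuinely has to be optimized.

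Your computation of $\lambda^\star$ and $3+\lambda^\star$ is correct.
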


\subsection{Two Complementary Bounds}
\label{subsec:two-complementary-bounds}

Fix a profile $\loc$ with positive optimal social cost. Let
$o_L<o_R$ be an optimal pair of facilities. Assign each agent to a
nearest optimal facility, breaking midpoint ties arbitrarily, and let
$L$ and $R$ be the resulting left and right clusters.

Let $\delta=o_R-o_L$. For an agent $i$ in cluster $X\in\{L,R\}$,
write $
r_i=\abs{p_i-o_X}.
$
Thus, $r_i$ is agent $i$'s contribution to the optimal social cost.
Let
$
C_X=\sum_{i\in X}r_i
$
be the optimal cost contributed by cluster $X$, and put
$
C=C_L+C_R=\mathrm{OPT}_2(\loc).
$
We also associate with each cluster its \emph{truncated spread}
$
T_X
=
\sum_{i\in X}r_i\min\{r_i,\delta\}.
$
The dispersion parameter of the profile is
$
h(\loc)
=
\frac{T_L+T_R}{\delta C}.
$
Since $r_i\min\{r_i,\delta\}\leq\delta r_i$, we have
$0\leq h(\loc)\leq1$.

The proof of the following bound uses the same basic
triple-decomposition argument as Ma and Peng~\cite{ma2026}. For
completeness, we defer our self-contained specialization to the line
to Appendix.
\begin{lemma}
\label{lem:global-pair-dispersion}
For every profile on the line, $
\SC_{\mech_{\mathrm{G}}}(\loc)
\leq
\bigl(3+h(\loc)\bigr)C.
$
\end{lemma}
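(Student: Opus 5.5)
The plan is to work with unnormalized masses. Write $d_{ij}=\abs{p_i-p_j}$ and $Z=\sum_{\{i,j\}}d_{ij}$. Then
$$
Z\cdot\SC_{\mech_{\mathrm{G}}}(\loc)=\sum_{\{i,j\}}d_{ij}\sum_{v\notin\{i,j\}}\min\{\abs{p_v-p_i},\abs{p_v-p_j}\}
$$
and
$$
Z\cdot C=\sum_{\{i,j\}}d_{ij}\sum_{v\notin\{i,j\}}r_v+\sum_{\{i,j\}}d_{ij}(r_i+r_j).
$$
Agents inside the selected pair cost nothing, so the left-hand side is a sum over unordered triples $\tau=\{a,b,c\}$ of
$$
\Phi(\tau)=\sum_{v\in\tau}d_{\tau\setminus v}\,\min_{u\in\tau\setminus v}\abs{p_v-p_u}.
$$
On the line, with $x\le y\le z$, $\alpha=y-x$ and $\beta=z-y$, this becomes
$$
\Phi(\tau)=2\alpha\beta+(\alpha+\beta)\min\{\alpha,\beta\}\le 3\alpha\beta+\min\{\alpha,\beta\}^2 .
$$
The first term on the right of $Z\cdot C$ also splits over triples as $\Psi(\tau)=\sum_{v\in\tau}d_{\tau\setminus v}r_v$. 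The second term, the ``pair-internal'' mass, is what the dispersion parameter must pay for.

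The core step is a per-triple inequality of the form
$$
\Phi(\tau)\le 3\,\Psi(\tau)+E(\tau),
$$
where the excess $E(\tau)$ is controlled by truncated quantities $r_v\min\{r_v,\delta\}$. I will verify this by case analysis on how the three points split between $L$ and $R$. In the $3$--$0$ case, every gap is at most a sum of two radii. In the $2$--$1$ case, the gap across the clusters is at most $r_u+\delta+r_w$, while the within-cluster gap is at most $r_u+r_{u'}$. Using $\min\{\alpha,\beta\}\le\alpha,\beta$ and the triangle inequality $d_{uw}\le r_u+r_w+\delta$ (or $r_u+r_w$ within a cluster), the cross terms $3\alpha\beta$ are dominated by $3\Psi$. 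The leftover $\min\{\alpha,\beta\}^2$ is bounded by a product of a small gap with itself. Such a gap is at most a sum of radii, and when it straddles the clusters it is effectively truncated at $\delta$ because the other gap is then at least about $\delta$. This truncation is exactly where $\min\{r_v,\delta\}$ enters.

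Finally, I sum the excesses $E(\tau)$ over all triples and compare the total with
$$
h\,\delta C\cdot(\text{appropriate pair mass})+3\sum_{\{i,j\}}d_{ij}(r_i+r_j).
$$
Concretely, I aim to show
$$
\sum_\tau E(\tau)\le (T_L+T_R)\sum_{v}\bigl(\text{mass of pairs not involving }v\bigr)\cdot\frac{1}{\delta}\cdot(\cdots)+\ldots,
$$
and then close the argument using $Z\ge\delta\cdot(\#\text{cross pairs})$ together with the nonnegative pair-internal term, which the $3$ factor leaves unused. The bookkeeping is that each $r_v\min\{r_v,\delta\}$ produced by a triple through $v$ is charged to $d_{ij}$ for the other two points, and $d_{ij}$ is in turn compared with $\delta$ or with $r_i+r_j$.

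I expect the main obstacle to be the $2$--$1$ triples in which the lone point lies far from its optimal facility, with $r_v>\delta$. There the naive bound $\min\{\alpha,\beta\}^2\le r_v^2$ is too large, and I must show that the truncation to $r_v\delta$ suffices, using that the other gap is then comparable to $r_v$. I will first check the extreme configurations, namely a singleton far cluster and a tight cluster plus one outlier. These are exactly the ones making Lemma~\ref{lem:global-pair-dispersion} tight at $h=1$ (ratio $4$) and $h=0$ (ratio $3$), and checking them should pin down the correct form of $E(\tau)$.
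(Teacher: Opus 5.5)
You have the same skeleton as the paper. You expand $Z$ times the expected cost as a sum over triples of $\Phi(\tau)=3\alpha\beta+\min\{\alpha,\beta\}^2$ (your inequality is in fact an identity), compare each triple with $3\Psi(\tau)$ plus a truncated excess, and split into 3--0 and 2--1 triples. Your 3--0 observation gives $\Psi\ge 2\alpha\beta$, hence $\Phi\le 4\alpha\beta\le 3\Psi$, exactly as in the paper. But the proposal stops before the actual content of the lemma: you never fix $E(\tau)$ and never prove the 2--1 case. The paper's local inequality is
\[
\Phi(\tau)\le\sum_{v\in\tau}d_{\tau\setminus v}\Bigl(3r_v+\frac{r_v\min\{r_v,\delta\}}{\delta}\Bigr),
\]
that is, $E(\tau)=\frac1\delta\sum_{v\in\tau}d_{\tau\setminus v}\,r_v\min\{r_v,\delta\}$, which is what your charging description points to. In this form the summation is immediate. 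After summing over triples, the coefficient of $r_v\min\{r_v,\delta\}/\delta$ is the mass of pairs avoiding $v$, which is at most $Z$. Hence $\sum_\tau E(\tau)\le ZhC$. You do not need to compare $d_{ij}$ with $\delta$ or $r_i+r_j$, count cross pairs, or use the pair-internal mass. That mass is $o(ZC)$ on the extremal families $\langle 0^m,\varepsilon,1\rangle$ and $\langle 0^A,1^B,2\rangle$, so it could not absorb any loss there anyway.

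More seriously, the mechanism you sketch for the 2--1 case fails. Let $x,y$ be served by $o_L$ and $z$ by $o_R$. Suppose you spend $3\Psi$ on $3\alpha\beta$ and then pay $\min\{\alpha,\beta\}^2$ from the truncated terms alone. This already breaks for $o_L=0$, $o_R=1$ and the triple $-\frac{1}{10},\frac{1}{10},1$, with the profile padded by many agents at $0$ and $1$: there $\min\{\alpha,\beta\}^2=\frac{1}{25}$ while $E(\tau)=\frac{1}{50}$. You must keep the slack
\[
3\Psi-3\alpha\beta=3\beta(r_x+r_y-\alpha)+3\alpha(r_y+r_z),
\]
and you must use the \emph{lower} bound $\beta\ge\delta-r_y-r_z$ on the cross gap; your plan only records the upper bound $r_u+\delta+r_w$.

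You have also misplaced the difficulty. When $r_v\ge\delta$, the truncated term is the full $d_{\tau\setminus v}r_v$, and those cases are easy; for instance, if $\alpha\le\beta$, then $\beta r_x\ge\alpha r_x\ge\alpha^2-\alpha r_y$. The delicate regime is the outer point $x$ of the same-cluster pair with $r_x<\delta$, where the excess is only $\beta r_x^2/\delta$. Indeed, $x=o_L-r$, $y=o_L$, $z=o_R$ with $r<\delta$ gives equality, $\Phi=3r\delta+r^2$. The paper closes this case by splitting on $\alpha\le\beta$ versus $\beta\le\alpha$ and on $r_x\ge\delta$ versus $r_x<\delta$. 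Each subcase uses only $\alpha\le r_x+r_y$ and $\beta\ge\delta-r_y-r_z$.
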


Our main task is to prove the complementary inequality for the
Proportional mechanism.

\begin{lemma}[Proportional dispersion bound]
\label{lem:proportional-dispersion}
For every profile on the line,

$$
\SC_{\mech_{\mathrm{P}}}(\loc)
\leq
\bigl(4-\kappa h(\loc)\bigr)C.
$$
\end{lemma}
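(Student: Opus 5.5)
We will prove the bound by an anchor-by-anchor analysis of $\mech_{\mathrm{P}}$, charging everything to the optimal radii $r_i$, the separation $\delta$, and the truncated spreads $T_L,T_R$. Write $D_i=\sum_{v\in N}\abs{p_i-p_v}$. Conditioned on anchor $i$, the expected cost is
\[
\frac{1}{D_i}\sum_{j}\abs{p_i-p_j}\,\SC(\loc,\{p_i,p_j\}).
\]
So it suffices to bound $\sum_i \frac{1}{n}\frac{1}{D_i}\sum_j \abs{p_i-p_j}\SC(\{p_i,p_j\})$.

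First we bound the pair cost. For $i\in X$ and $j\in Y$ with $X\neq Y$, serving each cluster from its own selected point gives
\[
\SC(\{p_i,p_j\})\le C+\abs{X}r_i+\abs{Y}r_j.
\]
For $i,j$ in the same cluster, we serve the other cluster from the nearer of the two points, at cost at most $\abs{Y}(\delta+\min\{r_i,r_j\})$ plus the cluster terms. Following Lu et al., the crude bounds $\abs{p_i-p_j}\le r_i+r_j+\delta$ and $D_i\ge \abs{Y}\max\{\delta-r_i,\cdot\}$ already give the factor $4$. The task is to recover a saving proportional to $\kappa\sum_i r_i\min\{r_i,\delta\}/\delta$.

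The key refinement uses the line geometry. For $i\in L$ on the far side of $o_L$ (away from $o_R$), distances to $R$-agents are exactly $\delta+r_i+(\text{signed } r_j)$. This makes $D_i$ larger by about $\abs{R}r_i$, and it is precisely this enlargement that is lost in the metric-space worst case. Symmetrically, agents between $o_L$ and the midpoint get a shorter cross distance $\delta-r_i$ but a correspondingly smaller pair cost. We therefore plan to write the conditional cost ratio as $4$ minus a deficit term. We will then show, by a one-variable optimization in $t=\min\{r_i,\delta\}/\delta$ with the cluster sizes eliminated by a convexity or Cauchy--Schwarz step, that the deficit is at least $\kappa\, r_i\min\{r_i,\delta\}/\delta$ per unit of optimal cost attributable to anchor $i$. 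Summing over anchors then gives
\[
\SC_{\mech_{\mathrm{P}}}\le 4C-\kappa(T_L+T_R)/\delta=(4-\kappa h)C.
\]
The constant $\kappa=4\sqrt3-6$ should arise as the minimum over $t\in[0,1]$ (and the split of mass between the two sides of $o_X$) of an expression like $\frac{(1+t)(3-t)-\ldots}{t(\ldots)}$. Its value at the optimizer $t=\sqrt3-1$ is suggested by $(\sqrt3-1)^2=4-2\sqrt3$.

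The main obstacle is this per-anchor deficit inequality. The contribution of anchor $i$ depends on all other agents through $D_i$, so the deficit is not naturally local to $i$. To handle this, we will first reduce, by a monotonicity or averaging argument, to profiles where each cluster consists of points at $o_X$ together with points at a common distance $r$ on one side (two-point clusters). We expect the extremal configurations to have this form. On such reduced profiles the expected cost is an explicit rational function of $\abs{L},\abs{R},r/\delta$, and the inequality becomes a polynomial inequality. We will verify it by taking $\abs{L},\abs{R}\to\infty$ in the worst ratio and checking that the remaining function of $t$ has minimum $\kappa$. Agents with $r_i\ge\delta$ (where truncation kicks in) should be handled separately, by noting that the saving then saturates at $\kappa r_i$.
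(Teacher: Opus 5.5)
Your proposal has a genuine gap. The step that carries all of the content, the per-anchor deficit inequality, is never proved, and in the local form you describe it is false.

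Take the paper's tight family $\langle 0^A,1^B,2\rangle$ with $A/B\to1+\sqrt3$. There $\delta=C=1$, and all of $T_R$ comes from the single agent at $2$. Yet conditioned on that agent being the anchor, the expected cost is $3AB/(2A+B)$, which grows linearly in $B$ and far exceeds $4C$. So the anchor to which the spread is attributable has a negative deficit, and the saving must be collected from other anchors.

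Even the baseline factor $4$ is only an aggregate statement. The paper's fixed-anchor bound $F_i\le 2D_i+3C_L-(\mathrm{Base}(i)+\mathrm{Route}(i))/S_i$, with $D_i$ the distance from $i$ to its own cluster, is large for outlying anchors; only the sum $\sum_{i\in R}D_i\le 2mC_R$ is controlled.

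The paper handles this non-locality explicitly:
\begin{itemize}
\item it uses the squared routing gain $\mathrm{Route}(i)\geq\ell r_i(D_i-C_R)$ for outward anchors;
\item it replaces $D_i$ by its upper bound $(1+t_i)C_R$, using monotonicity in $u$;
\item it lets the linear terms $2-2t_i$ cancel via $\sum_{i\in R}t_i=m$, and spreads $\ell C_R$ as $\beta t_i$ over the anchors;
\item it pairs each outward anchor, whose balance may be as low as $-c_0$, with a distinct nonoutward anchor, whose balance is at least $c_0$.
\end{itemize}
That pairing exists only because $o_R$ is a median, and nothing in your plan plays its role.

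Your proposed remedy, reducing by monotonicity or averaging to clusters with two support points, is only an expectation. The Proportional cost is a sum of ratios with anchor-dependent normalizers. Moreover, $T_X=\sum_{i\in X}r_i\min\{r_i,\delta\}$ is convex in the radii, so, for instance, averaging radii at fixed $C_X$ lowers $T_X$ and relaxes the target. You would need to show that the slack $(4-\kappa h(\loc))C-\SC_{\mech_{\mathrm{P}}}(\loc)$ does not increase under the reduction, and you give no argument for this. Verifying the reduced inequality only as $\abs{L},\abs{R}\to\infty$ would also leave finite profiles uncovered.

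Finally, your heuristic for the origin of $\kappa$ points the wrong way. In the extremal family the radius is saturated ($r=\delta$, i.e.\ your $t=1$), not at an interior value $\sqrt3-1$. The constant $\kappa=\min_{\gamma>0}g(\gamma)$ is attained at the cluster-size ratio $\gamma=\abs{L}/\abs{R}=1+\sqrt3$; this is the factor $(\beta-1-\sqrt3)^2$ in the paper's certificate for $H(1)\geq0$. Eliminating the cluster sizes by a convexity or Cauchy--Schwarz step would discard exactly the parameter that determines the constant.
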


\subsection{A Weighted Service Bound}
\label{subsec:weighted-service-bound}

For points $x,y\in\R$, write $d(x,y)=\abs{x-y}$, and write
$d(x,\{y,z\})=\min\{d(x,y),d(x,z)\}$.

We first record a simple bound for serving one cluster when one
facility is already fixed.

\begin{lemma}[Weighted service of one cluster]
\label{lem:weighted-cluster-service}
Let $K$ be a nonempty set of agents, let $o$ be a median of their
locations, and let
$
C_K=\sum_{i\in K}d(p_i,o).
$
For every point $z\in\R$,

\[
\sum_{j\in K}d(z,p_j)
\sum_{i\in K}d(p_i,\{z,p_j\})
\leq
3C_K\sum_{j\in K}d(z,p_j).
\]
\end{lemma}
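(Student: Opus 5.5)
The plan is to bound the inner sum for each fixed $j$ by the triangle inequality through the median $o$, and then reduce the claim to a purely one-dimensional inequality about the weights $w_j\eqdef d(z,p_j)$.

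\textbf{Step 1: bound the inner sum.} Write $t_j=d(o,p_j)$, so $C_K=\sum_j t_j$, and put $D=d(o,z)$ and $m=\abs{K}$. For every $i$ we have
\[
d(p_i,\{z,p_j\})\leq \min\{d(p_i,p_j),d(p_i,z)\}\leq t_i+\min\{t_j,D\}.
\]
Summing over $i$ gives
\[
\sum_{i\in K}d(p_i,\{z,p_j\})\leq C_K+m\min\{t_j,D\}.
\]
The term $C_K$ contributes $C_K\sum_j w_j$. It therefore suffices to prove
\[
m\sum_{j\in K}w_j\min\{t_j,D\}\leq 2C_K\sum_{j\in K}w_j. \qquad (\ast)
\]

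\textbf{Step 2: use the median and the line.} Assume without loss of generality that $z\geq o$. Split $K$ into the \emph{far} agents $F=\{j:p_j\leq o\}$ and the \emph{near} agents $N=K\setminus F$. Because $o$ is a median, $\abs{F}\geq m/2$.

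For $j\in F$ we have $w_j=D+t_j$. For $j\in N$ we have $w_j=\abs{D-t_j}$. Moreover $\min\{t_j,D\}\,\abs{D-t_j}\leq D\min\{t_j,D\}$ and $\min\{t_j,D\}\leq t_j$.

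The right-hand side of $(\ast)$ is at least $2C_K\sum_{j\in F}(D+t_j)\geq 2C_K\bigl(\abs{F}D+C_F\bigr)$, where $C_F=\sum_{j\in F}t_j$. I would also use the cruder fact $\sum_j w_j\geq C_K$, which holds because $o$ minimizes $\sum_j d(\cdot,p_j)$.

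I would bound the two parts of the left-hand side separately:
\[
m\sum_{j\in N}w_j\min\{t_j,D\}\leq mD\,C_N,
\qquad
m\sum_{j\in F}(D+t_j)\min\{t_j,D\}\leq m\sum_{j\in F}\bigl(Dt_j+\min\{t_j,D\}^2\bigr).
\]
Then I would compare these with $2\abs{F}D\,C_K\geq mD\,C_K$, which covers the $D$-linear terms, and with $2C_KC_F$ together with the remaining slack, which should cover the quadratic terms $\sum_{j\in F}\min\{t_j,D\}^2$.

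\textbf{Main obstacle.} I expect Step 2 to be the hard step: the factor $2$ in $(\ast)$ is tight enough that the crude estimates above double-count. The quadratic far-side terms $m\sum_{j\in F}\min\{t_j,D\}^2$ must be absorbed using $2C_K\sum_{j\in F}t_j$ together with $2C_K\abs{F}D\geq mDC_K$. Since $\min\{t_j,D\}^2\leq D\,t_j$, these terms are at most $mD\,C_F$. That forces a budget of the form $mD(C_N+2C_F)\leq 2\abs{F}DC_K+2C_KC_F$, which does not follow termwise.

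If the split estimate is too lossy, my fallback is different. I would note that $(\ast)$ is linear in each $t_j$ on each side of the median once $D$ is fixed. I would then reduce to extreme configurations: all near agents at $o$ or at $z$, and the far agents at a common distance. On these configurations the inequality becomes a one-variable check in $t/D$, which I would finish by direct computation. Equality-type configurations such as half the agents at $o$ and half at distance $D$ on the far side would guide which terms must be paired.
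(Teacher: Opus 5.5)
\textbf{Verdict: genuine gap.} Your Step~1 is correct, and the reduced inequality $(\ast)$ is in fact true. However, the proposal never proves $(\ast)$, so the argument stops at the step that carries all the content.

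The budget you arrive at,
\[
mD(C_N+2C_F)\leq 2\abs{F}DC_K+2C_KC_F,
\]
does not merely fail termwise; it is false. Take two agents, at $-\tfrac12$ and at $\varepsilon>0$, with $o=0$ and $z=1$. Then the budget reads $2+2\varepsilon\leq\tfrac32+3\varepsilon$, which fails for small $\varepsilon$. The loss comes from lower-bounding $\sum_j w_j$ by $\sum_{j\in F}w_j$, which discards the near agents' weights.

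Your fallback does not repair this, because $(\ast)$ is not linear in $t_j$. For $j\in F$ with $t_j<D$, the left side contains $m(D+t_j)t_j$, and the right side contains the product $C_K\sum_j w_j$, both quadratic in $t_j$. So the reduction to extreme configurations is unjustified. A smaller slip: $\min\{t_j,D\}\abs{D-t_j}\leq D\min\{t_j,D\}$ fails when $t_j>2D$. The bound you actually need, $w_j\min\{t_j,D\}\leq Dt_j$ for $j\in N$, is still true.

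\textbf{How to close $(\ast)$.} Use the pointwise bounds
\[
(D+t_j)\min\{t_j,D\}\leq 2Dt_j \quad (j\in F),
\qquad
\abs{D-t_j}\min\{t_j,D\}\leq Dt_j \quad (j\in N).
\]
These give that the left side of $(\ast)$ is at most $mD(2C_F+C_N)$. Now keep the exact weight $\sum_j w_j=\abs{F}D+C_F+\Gamma_N$, where $\Gamma_N=\sum_{j\in N}\abs{D-t_j}$. With $m=\abs{F}+\abs{N}$, expanding $2(C_F+C_N)(\abs{F}D+C_F+\Gamma_N)-mD(2C_F+C_N)$ gives
\[
2C_F\bigl(C_F+C_N+\Gamma_N-\abs{N}D\bigr)+(2\abs{F}-m)DC_N+2C_N\Gamma_N .
\]
Each term is nonnegative:
\begin{itemize}
\item $t_j+\abs{D-t_j}\geq D$ for every $j\in N$, so $C_N+\Gamma_N\geq\abs{N}D$;
\item $2\abs{F}\geq m$ by the median property;
\item $C_N,\Gamma_N\geq0$.
\end{itemize}
The near agents' weights you dropped are exactly what pays for the extra $\abs{N}$ copies of $2DC_F$.

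\textbf{Comparison with the paper.} The paper avoids the split entirely with a pairwise argument. For a pair with $x=d(z,p_i)\leq y=d(z,p_j)$ and $u=d(p_i,p_j)$, the pair contributes
\[
x\min\{y,u\}+y\min\{x,u\}\leq 2xu+\min\{x,u\}^2\leq 3xu,
\]
and $xu\leq yr_i+xr_j$ by the triangle inequality through $o$. Summing over pairs gives the bound $3C_K\sum_j d(z,p_j)$. That route uses neither the median property nor the line, whereas your completed route needs both.
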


\begin{proof}
Consider an unordered pair $\{i,j\}\subseteq K$. Put
$x=d(z,p_i)$, $y=d(z,p_j)$, and
$u=d(p_i,p_j)$, and assume $x\leq y$. The contribution of this pair
to the left-hand side is
$
x\min\{y,u\}+y\min\{x,u\}.
$
A direct check of the three cases $u\leq x$, $x<u\leq y$, and
$y<u$ gives
$
x\min\{y,u\}+y\min\{x,u\}
\leq
2xu+\min\{x,u\}^2.
$
Let $r_i=d(p_i,o)$ and $r_j=d(p_j,o)$. Since
$u\leq r_i+r_j$ and $x\leq y$,
$
xu\leq yr_i+xr_j.
$
Summing this inequality over all unordered pairs gives at most
$
C_K\sum_{j\in K}d(z,p_j).
$
Moreover, $\min\{x,u\}^2\leq xu$. Thus, the first term contributes
at most two copies of this quantity and the second term contributes at
most one further copy.
\end{proof}

\subsection{The Gain from a Fixed Anchor}
\label{subsec:fixed-anchor-gain}

We analyze anchors in the right optimal cluster $R$; the argument for
$L$ is obtained by reflection. Let $m=\abs{R}$ and
$\ell=\abs{L}$.

For an anchor $i\in R$, define
$
D_i
=
\sum_{j\in R}d(p_i,p_j)
$
and
$
E_i
=
\sum_{h\in L}d(p_i,p_h).
$
Thus, $D_i$ is the total distance from the anchor to its own optimal
cluster, $E_i$ is its total distance to the opposite cluster, and
$
S_i=D_i+E_i
$
is the normalization used by the Proportional mechanism after choosing
$i$ as the anchor. Let $F_i$ denote the expected social cost
conditioned on choosing $i$ first.

When the second facility is also chosen in $R$, moving that facility
away from the anchor may reduce the service cost of the agents in
$L$. We call this reduction the \emph{routing gain}:

\[
\operatorname{Route}(i)
=
\sum_{j\in R}d(p_i,p_j)
\sum_{h\in L}
\bigl[d(p_i,p_h)-d(p_j,p_h)\bigr]_+,
\]

where $[x]_+=\max\{x,0\}$.

We also define the following guaranteed gain from serving the anchor's
own cluster:

\[
\operatorname{Base}(i)
=
\max\left\{
D_i(D_i+3C_L),
\;
D_i\bigl[2D_i-3(C_R-C_L)\bigr]_+
\right\}.
\]

The first expression comes from the fact that the anchor alone can
serve its own cluster at cost $D_i$. The second comes from applying
Lemma~\ref{lem:weighted-cluster-service} to that cluster.

\begin{lemma}[Fixed-anchor bound]
\label{lem:fixed-anchor-bound}
For every anchor $i\in R$,

$$
F_i
\leq
2D_i+3C_L
-
\frac{\operatorname{Base}(i)+\operatorname{Route}(i)}
{S_i}.
$$
\end{lemma}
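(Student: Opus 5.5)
The plan is to multiply through by $S_i$ and compare term by term. Conditioned on anchor $i$, the second agent $j$ is chosen with probability $d(p_i,p_j)/S_i$. Writing $\SC_{ij}$ for the social cost of opening facilities at $p_i$ and $p_j$, we get
\[
S_iF_i=\sum_{j\in R}d(p_i,p_j)\,\SC_{ij}+\sum_{h\in L}d(p_i,p_h)\,\SC_{ih}.
\]
The claim is therefore equivalent to
\[
S_iF_i\leq(2D_i+3C_L)(D_i+E_i)-\operatorname{Base}(i)-\operatorname{Route}(i).
\]
I would bound the two sums separately. In each $\SC$, I would split the agents into those of $R$ and those of $L$.

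\emph{Second agent $j\in R$.} The $L$-agents pay exactly
\[
\sum_{h\in L}d(p_h,\{p_i,p_j\})=E_i-\sum_{h\in L}\bigl[d(p_i,p_h)-d(p_j,p_h)\bigr]_+ .
\]
Weighting by $d(p_i,p_j)$ and summing over $j\in R$ gives exactly $D_iE_i-\operatorname{Route}(i)$, so the routing gain appears with no loss.

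For the $R$-agents, I would keep the better of two bounds:
\begin{itemize}
\item The anchor alone serves them at cost $D_i$, which gives a weighted total of at most $D_i^2$.
\item Lemma~\ref{lem:weighted-cluster-service} applied to $K=R$ and $z=p_i$ gives a weighted total of at most $3C_RD_i$.
\end{itemize}
Hence this part contributes at most
\[
D_iE_i-\operatorname{Route}(i)+\min\{D_i^2,3C_RD_i\}.
\]

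\emph{Second agent $h\in L$.} I would serve every $R$-agent by $p_i$, at total cost $D_i$. Each $L$-agent $g$ is served by the nearer of $p_i$ and $p_h$, at cost $d(p_g,\{p_i,p_h\})$. The first piece contributes $D_iE_i$ after weighting. For the second piece I would apply Lemma~\ref{lem:weighted-cluster-service} again, now with $K=L$, $z=p_i$, and the median $o_L$. This gives
\[
\sum_{h\in L}d(p_i,p_h)\sum_{g\in L}d(p_g,\{p_i,p_h\})\leq 3C_LE_i .
\]
So this part contributes at most $D_iE_i+3C_LE_i$.

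Adding the two parts and comparing with the target, the terms $2D_iE_i+3C_LE_i-\operatorname{Route}(i)$ match. It remains to verify
\[
\min\{D_i^2,3C_RD_i\}\leq 2D_i^2+3C_LD_i-\operatorname{Base}(i).
\]
I would check this separately for each expression that can attain the maximum in $\operatorname{Base}(i)$:
\begin{itemize}
\item If $\operatorname{Base}(i)=D_i(D_i+3C_L)$, the right side equals $D_i^2$.
\item If $\operatorname{Base}(i)=D_i\bigl(2D_i-3(C_R-C_L)\bigr)>0$, the right side equals $3C_RD_i$.
\item If $\operatorname{Base}(i)=0$, the inequality is trivial.
\end{itemize}
In every case the right side is at least the minimum, which proves the lemma.

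The main obstacle is the $L$-side second-agent term. The naive bound $\SC_{ih}\leq D_i+C_L$ fails, because facility $p_h$ need not sit at the median $o_L$. This is exactly why Lemma~\ref{lem:weighted-cluster-service} is stated for an arbitrary fixed point $z$. I need to make sure its use with $z=p_i$ outside $L$ is legitimate. It is, since the lemma holds for every $z\in\R$, and $d(p_g,\{p_i,p_h\})$ is the true service cost of $g$ when facilities sit at $p_i$ and $p_h$.
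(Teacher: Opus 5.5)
Your proof is correct and follows essentially the same route as the paper. You split $S_iF_i$ by whether the second agent lies in $R$ or $L$. In the first case you use the exact identity $D_iE_i-\operatorname{Route}(i)$ for the $L$-agents and $\min\{D_i^2,3C_RD_i\}$ for the $R$-agents; in the second case you get $D_iE_i+3C_LE_i$ via Lemma~\ref{lem:weighted-cluster-service}. You then compare with $S_i(2D_i+3C_L)$ according to which expression attains $\operatorname{Base}(i)$. Your explicit treatment of the case where the positive part vanishes is a harmless detail that the paper leaves implicit.
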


\begin{proof}
Suppose first that the second selected agent $j$ also belongs to
$R$. By Lemma~\ref{lem:weighted-cluster-service}, the weighted
service cost of $R$ is at most $3C_RD_i$. It is also at most
$D_i^2$, since for every choice of $j$, all agents in $R$ can be
served from the anchor $p_i$ alone.

The weighted service cost of $L$ is exactly
$
D_iE_i-\operatorname{Route}(i).
$
If the second selected agent belongs to $L$, the weighted service cost
of $L$ is at most $3C_LE_i$ by
Lemma~\ref{lem:weighted-cluster-service}. The weighted service cost of
$R$ is at most $D_iE_i$, because the agents in $R$ may all be served
from the anchor.
Consequently,

\[
S_iF_i
\leq
\min\{3C_RD_i,D_i^2\}
+3C_LE_i
+2D_iE_i
-\operatorname{Route}(i).
\]

Compare this bound with $S_i(2D_i+3C_L)$. If we use the bound
$3C_RD_i$ for the first term, the difference is
$
D_i\bigl(2D_i-3(C_R-C_L)\bigr)
+\operatorname{Route}(i).
$
If we instead use $D_i^2$, the difference is
$
D_i(D_i+3C_L)+\operatorname{Route}(i).
$
Taking the better of these two estimates proves the lemma.
\end{proof}

\subsection{Why Outward Anchors Are Better}
\label{subsec:outward-anchors}

Because $o_R$ is a median of the agents in $R$, at most half of the
agents in $R$ lie strictly to its right. We call such agents
\emph{outward}, since they lie on the side of $o_R$ away from the
other optimal cluster. The remaining agents are called
\emph{nonoutward}.

The next lemma is the point where the order structure of the line is
used.

\begin{lemma}[Squared routing gain]
\label{lem:squared-routing}
If $i\in R$ is outward, then
$
\operatorname{Route}(i)
\geq
\ell r_i(D_i-C_R).
$
\end{lemma}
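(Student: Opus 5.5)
The plan is to compute $\operatorname{Route}(i)$ exactly, using the order structure of the line, and then prove a purely one-cluster inequality in the offsets $r_j$. The first step is to note that every agent of $L$ lies weakly left of every agent of $R$. Agents are assigned to a nearest optimal facility, so $L$ lies weakly left of the midpoint $(o_L+o_R)/2$ and $R$ lies weakly right of it. Hence for $h\in L$ and $j\in R$ we have $p_h\leq p_j$, and likewise $p_h\leq p_i$. Therefore
\[
d(p_i,p_h)-d(p_j,p_h)=(p_i-p_h)-(p_j-p_h)=p_i-p_j ,
\]
and this is independent of $h$. Summing over $h$ gives the exact identity
\[
\operatorname{Route}(i)=\ell\sum_{j\in R,\;p_j<p_i}(p_i-p_j)^2 .
\]
The claim thus reduces to showing $\sum_{j\in R,\,p_j<p_i}(p_i-p_j)^2\geq r_i(D_i-C_R)$.

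Next I would split $R$ according to the position of $p_j$ relative to $o_R$ and $p_i$, using $p_i=o_R+r_i$ since $i$ is outward. Write $\mathcal N$ for the nonoutward agents ($p_j\leq o_R$), $\mathcal O_{<}$ for outward agents with $p_j<p_i$, and $\mathcal O_{\geq}$ for outward agents with $p_j\geq p_i$. The distances in $D_i$ are then:
\begin{itemize}
\item $p_i-p_j=r_i+r_j$ for $j\in\mathcal N$;
\item $r_i-r_j$ for $j\in\mathcal O_{<}$;
\item $r_j-r_i$ for $j\in\mathcal O_{\geq}$.
\end{itemize}
Subtracting $C_R=\sum_j r_j$ term by term gives
\[
D_i-C_R=\abs{\mathcal N}\,r_i+\sum_{j\in\mathcal O_{<}}(r_i-2r_j)-\abs{\mathcal O_{\geq}}\,r_i .
\]
This is at most $\abs{\mathcal N}r_i+\sum_{j\in\mathcal O_<}(r_i-2r_j)$, so the terms in $\mathcal O_{\geq}$ can simply be dropped.

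Finally I would compare term by term. The left-hand side is
\[
\sum_{j\in\mathcal N}(r_i+r_j)^2+\sum_{j\in\mathcal O_<}(r_i-r_j)^2 .
\]
For $j\in\mathcal N$ we have $(r_i+r_j)^2\geq r_i^2$. For $j\in\mathcal O_<$ we have $(r_i-r_j)^2-r_i(r_i-2r_j)=r_j^2\geq0$. Summing these two facts yields $\sum_{p_j<p_i}(p_i-p_j)^2\geq r_i(D_i-C_R)$. Multiplying by $\ell$ gives the lemma.

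The main obstacle is the first step: justifying that the positive part $[\cdot]_+$ collapses to $[p_i-p_j]_+$. This needs the separation of the two clusters, including the tie-breaking at the midpoint, and it is exactly where the line structure enters. Once this identity is in hand, the remaining inequality is elementary bookkeeping. I expect the median property of $o_R$ is not even needed for this lemma; only $p_i>o_R$ is used, to fix the signs in the decomposition.
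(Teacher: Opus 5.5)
Your proof is correct and is essentially the paper's argument. The paper also uses the separation of $L$ and $R$ to get $\operatorname{Route}(i)=\ell\sum_{j\in R,\,p_j<p_i}(p_i-p_j)^2$, and then checks the per-agent inequality $r_i\bigl(d(p_i,p_j)-r_j\bigr)\leq \ind{p_j<p_i}\,d(p_i,p_j)^2$ in the same three cases ($p_j\geq p_i$, $o_R\leq p_j<p_i$, $p_j<o_R$) that your split into $\mathcal O_{\geq}$, $\mathcal O_<$, $\mathcal N$ handles. Your remark that the median property of $o_R$ is not needed here also matches the paper's proof.
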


\begin{proof}
Translate the line so that $o_R=0$. Since $i$ is outward, we may
write $p_i=r_i>0$. Every agent of $L$ lies to the left of every
agent of $R$. Therefore,
$
\operatorname{Route}(i)
=
\ell
\sum_{\substack{j\in R\\p_j<p_i}}
(p_i-p_j)^2.
$
For every $j\in R$, let $r_j=\abs{p_j}$. We claim that
$
r_i\bigl(d(p_i,p_j)-r_j\bigr)
\leq
\ind{p_j<p_i}\,d(p_i,p_j)^2.
$

There are three cases. If $p_j\geq p_i$, the left-hand side is
$-r_i^2\leq0$. If $0\leq p_j<p_i$, the difference between the
right-hand side and the left-hand side is $p_j^2$. Finally, if
$p_j<0$, the left-hand side is $r_i^2$, while the right-hand side is
$(r_i+r_j)^2$.
Summing over $j\in R$ gives

$$
\frac{\operatorname{Route}(i)}{\ell}
\geq
r_i\sum_{j\in R}\bigl(d(p_i,p_j)-r_j\bigr)
=
r_i(D_i-C_R).
$$
\end{proof}

\subsection{From Anchor Bounds to a Cluster Bound}
\label{subsec:cluster-bound}

Define the total gain credited to anchors in $R$ by

\[
\operatorname{Gain}(R)
=
\sum_{i\in R}
\left(
4C_R-2D_i+
\frac{\operatorname{Base}(i)}{S_i}
\right)
+
\sum_{\substack{i\in R\\i\text{ outward}}}
\frac{\operatorname{Route}(i)}{S_i}.
\]

The analogous quantity $\operatorname{Gain}(L)$ is defined after
reflecting the line.

\begin{lemma}[Global decomposition]
\label{lem:global-proportional-decomposition}
The expected cost of the Proportional mechanism satisfies

\[
n\bigl(4C-\SC_{\mech_{\mathrm{P}}}(\loc)\bigr)
\geq
\ell C_R+\operatorname{Gain}(R)
+
mC_L+\operatorname{Gain}(L).
\]
\end{lemma}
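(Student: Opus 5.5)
Since every anchor is chosen with probability $1/n$, the Proportional mechanism satisfies
\[
n\,\SC_{\mech_{\mathrm{P}}}(\loc)=\sum_{i\in R}F_i+\sum_{i\in L}F_i .
\]
Using $nC=(\ell+m)C$, we can rewrite the target. It is enough to prove the cluster-wise inequality
\[
\sum_{i\in R}(4C-F_i)\;\geq\;\ell C_R+\operatorname{Gain}(R)
\]
together with its reflection for $L$, and then add the two.

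The plan is to substitute Lemma~\ref{lem:fixed-anchor-bound} for each $F_i$ with $i\in R$. This gives
\[
4C-F_i\;\geq\;4C_R+C_L-2D_i+\frac{\operatorname{Base}(i)+\operatorname{Route}(i)}{S_i}.
\]
Two bookkeeping steps then finish the argument.

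First, $\operatorname{Route}(i)\geq 0$ for every anchor, because it is a sum of nonnegative terms $d(p_i,p_j)[\cdot]_+$. So we may drop the routing terms of the nonoutward anchors and keep only those of the outward ones, exactly as in the definition of $\operatorname{Gain}(R)$.

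Second, summing over $i\in R$ turns the remaining terms into
\[
\sum_{i\in R}\Bigl(4C_R-2D_i+\tfrac{\operatorname{Base}(i)}{S_i}\Bigr)
\]
plus an extra $mC_L$. The decomposition is stated with $\ell C_R$ attached to $R$ and $mC_L$ attached to $L$. So after adding the $L$-side inequality, the total extra is $mC_L+\ell C_R$, which matches the right-hand side once the $R$- and $L$-side extras are paired crosswise.

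For this reason I will do the bookkeeping directly on the total:
\[
n\bigl(4C-\SC_{\mech_{\mathrm{P}}}\bigr)=\sum_{i\in R}(4C-F_i)+\sum_{i\in L}(4C-F_i).
\]
Each $R$-anchor contributes $4C_R+4C_L$ minus $F_i$. Lemma~\ref{lem:fixed-anchor-bound} uses up $3C_L$ of the $4C_L$, so each of the $m$ anchors in $R$ leaves a slack of $C_L$, for a total of $mC_L$. Symmetrically, the $\ell$ anchors in $L$ leave a total slack of $\ell C_R$. These are exactly the two extra terms in the statement.

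The main point to check is that the reflected version of Lemma~\ref{lem:fixed-anchor-bound} and the definition of outwardness transfer correctly to $L$. Under reflection, $L$ becomes a right cluster whose opposite cluster has size $m$ and cost $C_R$, so its bound reads $F_i\leq 2D_i+3C_R-\cdots$. This is the step that needs care, but it is just symmetric relabelling. The zero-denominator case $S_i=0$ cannot occur, since we assume positive optimal cost, which forces two distinct locations.
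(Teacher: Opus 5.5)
Your proposal is correct and follows the paper's own proof. You sum the fixed-anchor bound over each cluster, discard the nonnegative routing gains of nonoutward anchors, and read off the slack: $4C_L-3C_L$ from each of the $m$ anchors in $R$ gives $mC_L$, and the reflected bound for $L$ gives $\ell C_R$.

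One small inaccuracy: your opening reduction to $\sum_{i\in R}(4C-F_i)\geq \ell C_R+\operatorname{Gain}(R)$ is not what the argument yields. It yields $mC_L+\operatorname{Gain}(R)$, and $mC_L\geq \ell C_R$ need not hold. You fix this yourself by doing the bookkeeping on the total, which is exactly the paper's route.
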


\begin{proof}
Summing Lemma~\ref{lem:fixed-anchor-bound} over anchors in $R$ and
retaining the routing gain only for outward anchors gives

\[
\sum_{i\in R}F_i
\leq
2\sum_{i\in R}D_i
+3mC_L
-
\sum_{i\in R}\frac{\operatorname{Base}(i)}{S_i}
-
\sum_{\substack{i\in R\\i\text{ outward}}}
\frac{\operatorname{Route}(i)}{S_i}.
\]

By the definition of $\operatorname{Gain}(R)$, the right-hand side
equals
$
4mC_R+3mC_L-\operatorname{Gain}(R).
$

The reflected argument for anchors in $L$ gives
$
\sum_{i\in L}F_i
\leq
4\ell C_L+3\ell C_R-\operatorname{Gain}(L).
$

Adding these two inequalities and using
$n\SC_{\mech_{\mathrm{P}}}(\loc)=\sum_{i\in N}F_i$ proves the
claim.
\end{proof}

It therefore remains to show that the gain associated with each
optimal cluster pays for its truncated spread.

\begin{lemma}[Cluster improvement]
\label{lem:cluster-improvement}
For the right cluster,
$
\ell C_R+\operatorname{Gain}(R)
\geq
n\kappa\frac{T_R}{\delta}.
$
The analogous inequality holds for $L$.
\end{lemma}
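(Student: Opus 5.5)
The inequality has to be proved anchor by anchor. Fix $i\in R$ and write $\ell C_R=\sum_{i\in R}\ell C_R/m$. The target is then
\[
\frac{\ell C_R}{m}+4C_R-2D_i+\frac{\operatorname{Base}(i)}{S_i}+\ind{i\text{ outward}}\frac{\operatorname{Route}(i)}{S_i}
\;\geq\;\text{(a share of) } n\kappa\frac{T_R}{\delta}.
\]
The natural share is the per-agent term $n\kappa\, r_i\min\{r_i,\delta\}/\delta$, since $T_R$ is a sum over $i$ of exactly these quantities. A pure per-anchor inequality is probably false, because nonoutward anchors receive no routing gain. So I would first redistribute mass inside $R$. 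The identity $\sum_{i\in R}D_i=\sum_{i,j}|p_i-p_j|$, together with the median property $\sum_i D_i\leq 2(m-1)\cdots$, gives $\sum_i(4C_R-2D_i)\geq$ something nonnegative of order $mC_R$. More precisely, $D_i\leq mr_i+C_R$, so $4C_R-2D_i\geq 2C_R-2mr_i$. Summing, $\sum_i(4C_R-2D_i)\geq 2mC_R-2mC_R=0$. Hence the slack must come from $\operatorname{Base}$, from $\operatorname{Route}$, and from the $\ell C_R$ term.

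\textbf{Step 1: scalar lower bounds in a few parameters.} For each anchor I would use $S_i=D_i+E_i$ and bound $E_i\leq \ell(\delta+r_i)+C_L$, with the reverse inequality $E_i\geq \ell\delta-\ell r_i-C_L$ available when needed. For outward $i$, Lemma~\ref{lem:squared-routing} gives $\operatorname{Route}(i)\geq \ell r_i(D_i-C_R)$, and the median property gives $D_i-C_R\geq (m-\text{number of agents right of }o_R)\,r_i-\dots\geq$ roughly $\tfrac{m}{2}r_i$ up to corrections. So the routing term behaves like $\ell m r_i^2/S_i$, and since $S_i\lesssim m(\cdots)+\ell\delta$ this is about $r_i\min\{r_i,\delta\}$ times a constant. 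That is exactly the truncated-spread shape. For nonoutward anchors I would instead use the $\operatorname{Base}$ term. Using $D_i(D_i+3C_L)/S_i$ and $D_i\geq m r_i - C_R$-type bounds gives a gain of the same shape. The first branch of $\operatorname{Base}$ handles clusters where $D_i$ is large relative to $E_i$. The second branch, $D_i[2D_i-3(C_R-C_L)]_+$, handles the case where $C_L$ is small.

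\textbf{Step 2: sum and reduce to a two-variable optimization.} After summing, I expect the inequality to reduce to the following: for every $t=r_i/\delta$ and every ratio $\rho=\ell/m$ (or $\ell/n$), a rational function in $(t,\rho)$ is at least $\kappa\min\{t,1\}\cdot t$-type expressions. The constant $\kappa=4\sqrt3-6$ should appear as the minimum of such a function. For example, the minimum of something like $\frac{x^2+\dots}{x+1}$ has its extremal point at $x=\sqrt3-1$ or similar, giving $2(2\sqrt3-3)=4\sqrt3-6$. So the plan is to identify that extremal one-variable function, verify its minimum is $\kappa$ by calculus, and check the boundary cases $r_i\geq\delta$ (truncation active) and $\ell$ small, where the $\ell C_R$ term supplies the slack. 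The statement for $L$ then follows by reflection, since all definitions are symmetric.

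\textbf{Main obstacle.} I expect the hardest step to be the averaging over $R$. Routing gain exists only for outward anchors, which are at most half of $R$, while $T_R$ involves all agents, including those on the inner side of $o_R$. I would try to pair each inner agent with the mass on the outer side using the median balance $\sum_{\text{outward}}r_i$ versus $\sum_{\text{inner}}r_i$. Failing that, I would charge inner agents' spread to $\operatorname{Base}$ via Lemma~\ref{lem:weighted-cluster-service}, using the case split between the two branches of $\operatorname{Base}$ according to whether $2D_i\geq D_i+3C_R$.
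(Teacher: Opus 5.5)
\textbf{The gap: Step~1 decouples terms that must be traded against each other.} You spend the worst case $D_i\le mr_i+C_R$ on the term $4C_R-2D_i$. That leaves $\sum_{i\in R}(4C_R-2D_i)\ge 0$ and no slack. You then ask $\operatorname{Route}(i)$ to pay, using $D_i-C_R\gtrsim \tfrac m2 r_i$. That estimate is false. For an outward anchor the median property only gives $D_i-C_R\ge (m_{\le}-m_{>})r_i\ge 0$, where $m_{\le}$ and $m_{>}$ count the agents of $R$ weakly left and strictly right of $o_R$.

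Worse, the inequality you would be left with is itself false:
$\ell C_R+\sum_i\operatorname{Base}(i)/S_i+\sum_{i\text{ outward}}\operatorname{Route}(i)/S_i\ge n\kappa T_R/\delta$.
Take $\delta=1$, $2k$ agents at $0$, $k+1$ agents at $1$, and $k$ agents at $2$. Then $o_L=0$, $o_R=1$, $C_L=0$, $C_R=T_R=k$, and $n=4k+1$.
\begin{itemize}
\item An inner anchor has $D_i=k$, $S_i=3k$ and $\operatorname{Base}(i)=k^2$.
\item An outward anchor has $D_i=k+1$, so $D_i-C_R=r_i=1$. It has $S_i=5k+1$, $\operatorname{Base}(i)=(k+1)^2$ and $\operatorname{Route}(i)=2k(k+1)$.
\end{itemize}
The left side is therefore about $(2+\frac13+\frac35)k^2=\frac{44}{15}k^2$, while $n\kappa T_R\approx 4\kappa k^2\approx 3.71k^2$. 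The lemma survives here only because $\sum_i(4C_R-2D_i)=4k^2$, which is exactly the slack your bound discarded. Small $D_i$ is paid for by the $-2D_i$ term, and large $D_i$ by $\operatorname{Base}$ and $\operatorname{Route}$. So these terms must be balanced against each other anchor by anchor, not bounded separately.

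\textbf{The missing idea, as the paper does it.} After scaling to $\delta=1$, write $t_i=r_i/\bar r$, $u_i=D_i/C_R$ and $\beta=\ell/m$.
\begin{itemize}
\item First show that $C_L=0$ is the worst case.
\item Then show that, for fixed $t_i$, the whole per-anchor quantity $4-2u+\frac{b(u)+b_0(u-1)}{u+w}$ is nonincreasing in $u$.
\item Hence you may set $u_i=1+t_i$ in every term at once. The pieces $2-2t_i$ then cancel when summed over $R$, and the routing bound automatically becomes $\beta t_i^2$.
\end{itemize}
The rest of your plan also needs reorienting.
\begin{itemize}
\item The term $\ell C_R$ is split as $\ell r_i$ per anchor, not uniformly.
\item The averaging runs opposite to what you expect. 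Each nonoutward anchor (where $r_i\le\delta/2$) has a surplus of at least $c_0=1/(1+\beta/\bar r)$, already present at $t=0$ through $\operatorname{Base}$. An outward anchor is short by at most $c_0$.
\item The median property is used only in counting form, that there are at most as many outward as nonoutward agents, to pair them off.
\item The constant $\kappa$ comes from the outward inequality at $r_i=\delta$, which is tight at $\beta=1+\sqrt3$.
\end{itemize}
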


\begin{proof}
By scaling, assume $\delta=1$. If $C_R=0$, then $T_R=0$ and the
claim is immediate. We therefore assume $C_R>0$.
Let
$
\bar r=\frac{C_R}{m}
$
be the average distance of an agent in $R$ from $o_R$, and let
$
\beta=\frac{\ell}{m}.
$
For every $i\in R$, define
$
t_i=\frac{r_i}{\bar r}$,
and
$u_i=\frac{D_i}{C_R}.
$

Thus, $t_i$ is the radius of agent $i$ measured in units of the
average radius, while $u_i$ is the total distance from $i$ to its
cluster measured in units of the cluster's optimal cost. We have
$
\sum_{i\in R}t_i=m.
$
Since $o_R$ is a median of $R$, placing a facility at $p_i$ cannot
reduce the one-facility cost of the cluster, and hence $D_i\geq C_R$.
The triangle inequality gives
$
D_i
\leq
\sum_{j\in R}(r_i+r_j)
=
mr_i+C_R.
$
Therefore,
$
1\leq u_i\leq1+t_i.
$
Translate and scale so that $o_L=0$ and $o_R=1$. If $i$ is
outward, then $p_i=1+r_i$; if it is nonoutward, then
$p_i=1-r_i$. Since every agent in $L$ lies to the left of every
agent in $R$,

$$
E_i\leq\ell(1+r_i)+C_L
\qquad
(i\text{ outward}),
$$

and
$$
E_i\leq\ell(1-r_i)+C_L
\qquad
(i\text{ nonoutward}).
$$

For a nonoutward point, the nearest-center assignment also gives
$r_i\leq1/2$.

The following piecewise function records which of the two terms in
$\operatorname{Base}(i)$ is useful after normalization:

$$
b(u)
=
\begin{cases}
u^2, & 1\leq u\leq3,\\
u(2u-3), & u\geq3.
\end{cases}
$$

We claim that an outward anchor satisfies

\[
\frac{\operatorname{Base}(i)+\operatorname{Route}(i)}
{S_iC_R}
\geq
\frac{
b(u_i)+\beta t_i(u_i-1)
}{
u_i+\beta(t_i+1/\bar r)
},
\]

whereas a nonoutward anchor satisfies

$$
\frac{\operatorname{Base}(i)}{S_iC_R}
\geq
\frac{
b(u_i)
}{
u_i+\beta(1/\bar r-t_i)
}.
$$

To verify these inequalities, temporarily write
$\eta=C_L/C_R$. For an outward point, let
$w=\beta(t+1/\bar r)$ and $b_0=\beta t$. For a nonoutward point,
let $w=\beta(1/\bar r-t)$ and $b_0=0$. The bounds on $E_i$ imply
$
\frac{S_i}{C_R}\leq u+w+\eta.
$
Lemma~\ref{lem:squared-routing} contributes
$b_0(u-1)$ in the outward case. If $1\leq u\leq3$, then

\[
\frac{u(u+3\eta)+b_0(u-1)}{u+w+\eta}
-
\frac{u^2+b_0(u-1)}{u+w}
=
\frac{
\eta\bigl(2u^2+3uw-b_0(u-1)\bigr)
}{
(u+w)(u+w+\eta)
}
\geq0.
\]

If $u\geq3$, then

\[
\frac{u(2u-3+3\eta)+b_0(u-1)}{u+w+\eta}
-
\frac{u(2u-3)+b_0(u-1)}{u+w}
=
\frac{
\eta\bigl(u^2+3u+3uw-b_0(u-1)\bigr)
}{
(u+w)(u+w+\eta)
}
\geq0.
\]

In both cases $0\leq b_0\leq w$, so the numerators are
nonnegative. This proves the two normalized anchor bounds.

For fixed $t$, the expression
$
4-2u+\frac{b(u)+b_0(u-1)}{u+w}
$
is nonincreasing in $u$. Indeed, for $1\leq u\leq3$, its
derivative is
$
-
\frac{
u^2+2uw+2w^2-b_0(w+1)
}{
(u+w)^2
}
\leq0,
$
and for $u\geq3$, its derivative is
$
-
\frac{
2w^2+3w-b_0(w+1)
}{
(u+w)^2
}
\leq0.
$

We may therefore use the upper bound $u_i\leq1+t_i$.
Define
$$
A(t)
=
\begin{cases}
(1+t)^2, & 0\leq t\leq2,\\
(1+t)(2t-1), & t\geq2.
\end{cases}
$$

The contribution retained from an outward anchor of normalized radius
$t$ is

$$
\operatorname{Out}(t)
=
\frac{
A(t)+\beta t^2
}{
1+(1+\beta)t+\beta/\bar r
},
$$

whereas the contribution retained from a nonoutward anchor is

$$
\operatorname{In}(t)
=
\frac{
A(t)
}{
1+(1-\beta)t+\beta/\bar r
}.
$$

Using $\sum_{i\in R}t_i=m$, the terms
$2-2t_i$ cancel when summed over the cluster. Hence

\[
\frac{\operatorname{Gain}(R)}{C_R}
\geq
\sum_{\substack{i\in R\\i\text{ outward}}}
\operatorname{Out}(t_i)
+
\sum_{\substack{i\in R\\i\text{ nonoutward}}}
\operatorname{In}(t_i).
\]

It remains to use the median property. Define the common reserve
$
c_0=\frac{1}{1+\beta/\bar r}.
$

We need two scalar inequalities.
First, every nonoutward value $t$ satisfies

\[
\beta t+\operatorname{In}(t)
-(1+\beta)\kappa\bar r t^2
\geq
c_0.
\]

To prove this, put $z=\bar r t$. Since $t$ is nonoutward,
$0\leq z\leq1/2$. Also, $\kappa<1$ and
$A(t)\geq(1+t)^2$. Thus, it is enough to replace $\kappa$ by $1$
and $A(t)$ by $(1+t)^2$. After substituting
$\bar r=z/t$ and clearing positive denominators, the difference
between the resulting left-hand side and $c_0$ is

\[
\frac{t}{
(z+\beta t)\bigl(z+t(z+\beta(1-z))\bigr)
}
\left[
\begin{aligned}
&\beta^3t^2(1-z)^2
+\beta^2tz(1-z)(2-z)\\
&+\beta\bigl(
t^2z(1-z)+tz(2-z)+z^2(2-z)
\bigr)\\
&+z^2(1-z)(t+1)
\end{aligned}
\right].
\]

Every term is nonnegative.

Second, every outward value $t>0$ satisfies
$\beta t+\operatorname{Out}(t)
-(1+\beta)\kappa t\min\{\bar r t,1\}
\geq
-c_0$.
Put $z=\bar r t$ and move $c_0$ to the left. The resulting
expression is

\[
H(z)
=
\beta t
+
\frac{
\bigl(A(t)+\beta t^2\bigr)z
}{
\bigl(1+(1+\beta)t\bigr)z+\beta t
}
+
\frac{z}{z+\beta t}
-
(1+\beta)\kappa t\min\{z,1\}.
\]

On $0<z\leq1$, the two rational terms are concave in $z$, while
the last term is linear. Hence $H$ is concave and its minimum occurs
at an endpoint. Moreover,
$
\lim_{z\downarrow0}H(z)=\beta t>0.
$
For $z\geq1$, the final term is constant and the two rational terms
are nondecreasing. It therefore suffices to prove $H(1)\geq0$.

First use the expression $(1+t)(2t-1)$ for $A(t)$. After
multiplying $H(1)$ by the positive denominator
$
(1+\beta t)\bigl(1+(1+2\beta)t\bigr),
$
the resulting numerator is

\[
\begin{aligned}
&(14-8\sqrt{3})\beta(\beta-1-\sqrt{3})^2t^3\\
&+\bigl(
(21-12\sqrt{3})\beta^2
+(27-16\sqrt{3})\beta
+(8-4\sqrt{3})
\bigr)t^2\\
&+(8-4\sqrt{3})(\beta+1)t.
\end{aligned}
\]

The coefficient of $t^3$ and the coefficient of $t$ are
nonnegative. The quadratic polynomial in $\beta$ multiplying $t^2$
has positive leading coefficient and discriminant
$
249-144\sqrt{3}<0.
$
It is therefore positive for every $\beta\geq0$. When $t\leq2$,
the actual value of $A(t)$ is $(1+t)^2$, which adds the
nonnegative term
$
(2-t)(1+t)(1+\beta t)
$
to the cleared numerator. Thus, $H(1)\geq0$.

Because $o_R$ is a median of $R$, the number of outward agents is at
most the number of nonoutward agents. Pair each outward agent with a
different nonoutward agent. The outward agent can lose at most $c_0$
in the scalar inequality above, while the paired nonoutward agent
contributes at least $c_0$. Any unpaired nonoutward agent contributes
a nonnegative surplus. Therefore,

\[
\beta m
+
\sum_{i\text{ outward}}\operatorname{Out}(t_i)
+
\sum_{i\text{ nonoutward}}\operatorname{In}(t_i)
\geq
(1+\beta)\kappa
\sum_{i\in R}
t_i\min\{\bar r t_i,1\}.
\]

Since $\beta m=\ell$ and
$
T_R
=
\bar r
\sum_{i\in R}
t_i\min\{\bar r t_i,1\},
$
we obtain
$
\frac{\ell C_R+\operatorname{Gain}(R)}{C_R}
\geq
(1+\beta)\kappa\frac{T_R}{\bar r}$.
Finally, $C_R=m\bar r$ and
$m(1+\beta)=m+\ell=n$, so
$
\ell C_R+\operatorname{Gain}(R)
\geq
n\kappa T_R$.
This proves the result under the normalization $\delta=1$. Rescaling
the line gives
$
\ell C_R+\operatorname{Gain}(R)
\geq
n\kappa\frac{T_R}{\delta}.
$
\end{proof}

\begin{proof}[Proof of Lemma~\ref{lem:proportional-dispersion}]
Apply Lemma~\ref{lem:cluster-improvement} to both optimal clusters
and substitute the resulting inequalities into
Lemma~\ref{lem:global-proportional-decomposition}. We obtain
$
n\bigl(4C-\SC_{\mech_{\mathrm{P}}}(\loc)\bigr)
\geq
n\kappa\frac{T_L+T_R}{\delta}.
$
Dividing by $n$ and using the definition of $h(\loc)$ gives
$
\SC_{\mech_{\mathrm{P}}}(\loc)
\leq
\bigl(4-\kappa h(\loc)\bigr)C.
$
\end{proof}

\subsection{The Mixture Upper Bound}
\label{subsec:mixture-upper-bound}

By linearity of expectation and
Lemmas~\ref{lem:global-pair-dispersion} and
\ref{lem:proportional-dispersion},

\[
\begin{aligned}
\SC_{\mech_\lambda}(\loc)
&=
\lambda\SC_{\mech_{\mathrm{P}}}(\loc)
+(1-\lambda)\SC_{\mech_{\mathrm{G}}}(\loc)\\
&\leq
\lambda(4-\kappa h(\loc))C
+(1-\lambda)(3+h(\loc))C\\
&=
\bigl(
3+\lambda+
(1-\lambda-\kappa\lambda)h(\loc)
\bigr)C.
\end{aligned}
\]

Since $0\leq h(\loc)\leq1$, this is at most
$
\max\{3+\lambda,\,4-\kappa\lambda\}C$. At $\lambda=\lambda^\star=1/(1+\kappa)$, the coefficient of
$h(\loc)$ vanishes, and the upper bound becomes 
$
3+\lambda^\star
=
\frac{74+4\sqrt{3}}{23}.
$

\subsection{Matching Lower Bounds}
\label{subsec:mixture-lower-bounds}

We finish by showing that both affine branches of the upper bound are
necessary.

\paragraph{The branch $3+\lambda$.}

For an integer $m\geq2$, let $\eps_m=m^{-2}$ and consider the
profile
$
\loc_m=\langle 0^m,\eps_m,1\rangle,
$
where $0^m$ denotes $m$ agents at $0$. The optimal two-facility cost
is $\eps_m$.
Let $F_0,F_{\eps_m},F_1$ be the expected costs of Proportional
conditioned on choosing an anchor at $0,\eps_m,1$, respectively.
A direct calculation gives
$
F_0
=
\frac{\eps_m(2-\eps_m)}{1+\eps_m},
$,
$
F_{\eps_m}
=
\frac{2m\eps_m(1-\eps_m)}
{m\eps_m+1-\eps_m},
$
and
$
F_1
=
\frac{m\eps_m(2-\eps_m)}
{m+1-\eps_m}.
$
Therefore,

\[
\frac{\SC_{\mech_{\mathrm{P}}}(\loc_m)}
{\mathrm{OPT}_2(\loc_m)}
=
\frac{m}{m+2}
\left(
\frac{2-\eps_m}{1+\eps_m}
+
\frac{2(1-\eps_m)}{m\eps_m+1-\eps_m}
+
\frac{2-\eps_m}{m+1-\eps_m}
\right)
\longrightarrow4.
\]

For Global Pair, the total pair weight is
$m(1+\eps_m)+1-\eps_m$, and the weighted social-cost numerator is
$m\eps_m(3-2\eps_m)$. Hence,

$$
\frac{\SC_{\mech_{\mathrm{G}}}(\loc_m)}
{\mathrm{OPT}_2(\loc_m)}
=
\frac{m(3-2\eps_m)}
{m(1+\eps_m)+1-\eps_m}
\longrightarrow3.
$$

It follows that
$
\lim_{m\to\infty}
\frac{\SC_{\mech_\lambda}(\loc_m)}
{\mathrm{OPT}_2(\loc_m)}
=
3+\lambda.
$

\paragraph{The branch $4-\kappa\lambda$.}

For integers $A,B\geq1$, consider
$
\loc_{A,B}=\langle0^A,1^B,2\rangle.
$
The optimal social cost is $1$. For the Proportional mechanism, the
conditional costs for anchors at $0,1,2$ are
$
F_0=\frac{3B}{B+2}$,
$F_1=\frac{2A}{A+1}$,
and
$F_2=\frac{3AB}{2A+B}$.
Thus,

$
\SC_{\mech_{\mathrm{P}}}(\loc_{A,B})
=
\frac{AF_0+BF_1+F_2}{A+B+1}.
$
If $A/B\to\gamma>0$, then
$
\SC_{\mech_{\mathrm{P}}}(\loc_{A,B})
\longrightarrow
\frac{2(3\gamma^2+5\gamma+1)}
{(\gamma+1)(2\gamma+1)}.
$
For Global Pair,
$
\SC_{\mech_{\mathrm{G}}}(\loc_{A,B})
=
\frac{4AB}{AB+2A+B}
\longrightarrow4.
$
The amount by which the Proportional limit is below $4$ is
$
g(\gamma)
=
\frac{2(\gamma^2+\gamma+1)}
{(\gamma+1)(2\gamma+1)}.
$
The exact factorization
$
g(\gamma)-\kappa
=
\frac{
(7-4\sqrt{3})(\gamma-1-\sqrt{3})^2
}{
(\gamma+1/2)(\gamma+1)
}
\geq0
$
shows that $g$ is minimized at $\gamma=1+\sqrt{3}$, where its value
is $\kappa$. Choosing $A/B\to1+\sqrt{3}$ gives
$
\lim
\frac{\SC_{\mech_\lambda}(\loc_{A,B})}
{\mathrm{OPT}_2(\loc_{A,B})}
=
4-\kappa\lambda.
$
The two profile families prove that the worst-case ratio of
$\mech_\lambda$ is at least
$
\max\{3+\lambda,\,4-\kappa\lambda\}.
$
Together with the upper bound, this proves
Theorem~\ref{thm:two-facility-exact-mixture}.

% \subsection{Lower Bounds}
% \label{sec:lower}
\section{Discussion and open questions.}
Our results leave several natural directions for future work. The most
immediate question is whether, for every fixed $k\geq 4$, there exists
a randomized strategyproof mechanism with an approximation ratio
independent of the number of agents on the line. The Product-Gap
mechanism gives a tight $2k$ approximation for every $k$, but its
strategyproofness fails from $k=4$ onward. It would therefore be
interesting either to modify Product-Gap so as to restore
strategyproofness, or to identify a fundamentally different mechanism
for larger $k$. Another question is whether the factors $6$ 
obtained for $k=3$ is optimal. The improvement for two
facilities suggests that mixtures of mechanisms with complementary
worst-case instances may also yield better guarantees for larger
values of $k$.

More broadly, one may ask what is possible in higher-dimensional
Euclidean spaces. The definition and analysis of Product-Gap rely
strongly on the order structure of the line, and there is no immediate
analogue of consecutive gaps in $\R^d$. A natural direction is to
investigate geometric weights based on distances, simplex volumes, or
other measures of dispersion, and to determine whether they can
simultaneously provide strategyproofness and approximation guarantees.
In particular, it remains open to understand whether, for fixed $k$
and $d$, there are randomized strategyproof mechanisms for
$k$-facility location in $\R^d$ whose approximation ratios depend only
on $k$ and $d$, rather than on the number of agents.

\bibliographystyle{ACM-Reference-Format}
\bibliography{facility}

\appendix

\label{sec:appendix}

\section{Proof of Lemma~\ref{lem:global-pair-dispersion}}
\label{app:global-pair-dispersion}

\begin{proof}
Recall that $o_L<o_R$ is an optimal pair of facilities,
$\delta=o_R-o_L$, and
$
C=\mathrm{OPT}_2(\loc)
$
is the optimal social cost. For every agent $i$, let
$
r_i
=
\min\{\abs{p_i-o_L},\abs{p_i-o_R}\}.
$
Thus, $C=\sum_{i\in N}r_i$. We must prove
$
\SC_{\mech_{\mathrm{G}}}(\loc)
\leq
3C+
\frac{1}{\delta}
\sum_{i\in N}r_i\min\{r_i,\delta\}.
$
If $C=0$, every agent is located at one of the two optimal facility
locations. Every pair with positive Global Pair weight then opens both
occupied locations, and hence the mechanism has social cost zero.
Therefore, suppose that $C>0$.

Translate and scale the profile so that $o_L=0$ and $o_R=1$. We first
prove the result under this normalization, for which $\delta=1$. We
continue to write $p_i$ for the normalized locations and define
$
H(t)
=
3t+t\min\{t,1\}.
$
For agents $i$ and $j$, let $d_{ij}=\abs{p_i-p_j}$, and let
$\SC_{ij}$ be the social cost obtained by opening facilities at
$p_i$ and $p_j$. The normalizing weight of the Global Pair mechanism
is
$
W
=
\sum_{i<j}d_{ij}.
$
Therefore,
$
W\SC_{\mech_{\mathrm{G}}}(\loc)
=
\sum_{i<j}d_{ij}\SC_{ij}.
$
The two selected agents incur zero cost. Consequently, every term on
the right-hand side is determined by a selected pair and one third
agent. For three distinct agents $i,j,k$, define

\[
\begin{aligned}
\Phi_{ijk}
={}&
d_{ij}\min\{d_{ik},d_{jk}\}
+d_{ik}\min\{d_{ij},d_{jk}\}\\
&+
d_{jk}\min\{d_{ij},d_{ik}\}.
\end{aligned}
\]

Rearranging the sum by unordered triples gives the exact identity
$
W\SC_{\mech_{\mathrm{G}}}(\loc)
=
\sum_{i<j<k}\Phi_{ijk}.
$
We prove the following local inequality for every triple:

$$
\Phi_{ijk}
\leq
d_{jk}H(r_i)+d_{ik}H(r_j)+d_{ij}H(r_k).
$$

Order the three reported locations as $x\leq y\leq z$, and write

$
s=y-x$
and 
$u=z-y$.
Their pairwise distances are $s$, $u$, and $s+u$. If the selected pair
is $\{x,y\}$, the remaining point contributes $su$. The pair
$\{y,z\}$ contributes another $su$, while the pair $\{x,z\}$
contributes $(s+u)\min\{s,u\}$. Hence
$
\Phi
=
2su+(s+u)\min\{s,u\}.
$
Assign each of the three points to a nearest optimal facility,
breaking midpoint ties consistently. Since the optimal facilities are
at $0$ and $1$, these assignments are monotone from left to right.
Up to reflection, there are only two cases.

\paragraph{All three points have the same optimal facility.}

Let $a$, $b$, and $c$ be the distances of $x$, $y$, and $z$,
respectively, from their common optimal facility. By the triangle
inequality,

$$
s\leq a+b
\qquad\text{and}\qquad
u\leq b+c.
$$

The part of the desired right-hand side arising from the linear term
$3t$ in $H(t)$ is

\[
3ua+3(s+u)b+3sc
=
3u(a+b)+3s(b+c)\geq
6su.
\]

On the other hand,
$(s+u)\min\{s,u\}\leq2su$,
and therefore $\Phi\leq4su\leq6su$. Thus, the linear part of
$H$ alone proves the local inequality in this case.

\paragraph{The points $x,y$ are assigned to $0$, and $z$ is assigned
to $1$.}

Write

$$
a=r_y,
\qquad
b=r_x,
\qquad
c=r_z,
$$

and let $q(t)=t\min\{t,1\}$, so that $H(t)=3t+q(t)$. The geometry
gives
$
s\leq a+b$
and 
$u\geq1-a-c$.
The first inequality follows by routing the distance between $x$ and
$y$ through $0$. The second follows from
$
1
\leq
\abs{y-0}+\abs{z-y}+\abs{1-z}
=
a+u+c.
$

We distinguish the relative sizes of the two consecutive gaps.

\emph{First suppose that $s\leq u$.} In this case,
$
\Phi=s^2+3su.
$
Let $D$ be the desired right-hand side minus $\Phi$. Expanding gives

\[
D
=
3u(a+b-s)+3s(a+c)-s^2+
uq(b)+(s+u)q(a)+sq(c).
\]

Since $a+b-s\geq0$ and all $q$-terms are nonnegative,
$
D
\geq
uq(b)+3s(a+c)-s^2.
$

If $b\geq1$, then $q(b)=b$. Since $u\geq s$ and
$s\leq a+b$,

\[
\begin{aligned}
D
&\geq
sb+3s(a+c)-s^2\\
&=
s\bigl(b+3a+3c-s\bigr)\\
&\geq
s(2a+3c)
\geq0.
\end{aligned}
\]

Now suppose that $b<1$. If $2a+3c\geq b$, then

\[
\begin{aligned}
D
&\geq
3s(a+c)-s^2\\
&\geq
s\bigl(3a+3c-a-b\bigr)\\
&=
s(2a+3c-b)
\geq0.
\end{aligned}
\]

It remains to consider $b<1$ and $2a+3c<b$. Since
$q(b)=b^2$ and $s\leq a+b$,

$$
D
\geq
ub^2+s(2a+3c-b).
$$

The coefficient $2a+3c-b$ is negative. Using
$u\geq1-a-c$ and $s\leq a+b$, we obtain

\[
\begin{aligned}
D
&\geq
(1-a-c)b^2
+(a+b)(2a+3c-b)\\
&=
a(2a+b-b^2)
+c(3a+3b-b^2).
\end{aligned}
\]

Because $0\leq b<1$,

$$
2a+b-b^2
\geq
b(1-b)
\geq0
$$

and

$$
3a+3b-b^2
\geq
b(3-b)
\geq0.
$$

Thus, $D\geq0$ whenever $s\leq u$.

\emph{Now suppose that $u\leq s$.} In this case,

$$
\Phi=u^2+3su.
$$

The same expansion gives

\[
\begin{aligned}
D
={}&
3u(a+b-s)+3s(a+c)-u^2\\
&+
uq(b)+(s+u)q(a)+sq(c).
\end{aligned}
\]

Using $s\leq a+b$, $s\geq u$, and the nonnegativity of the omitted
terms,

$$
D
\geq
u\bigl(q(b)+3(a+c)-u\bigr).
$$

If $b\geq1$, then $q(b)=b$, and $u\leq s\leq a+b$ implies

$$
q(b)+3(a+c)-u
\geq
b+3a+3c-(a+b)
=
2a+3c
\geq0.
$$

Finally, suppose that $b<1$. From

$$
1-a-c
\leq
u
\leq
s
\leq
a+b
$$

we obtain $1-b\leq2a+c$. Hence,

$$
b-b^2
=
b(1-b)
\leq
1-b
\leq
2a+c
\leq
2a+3c.
$$

Using again $u\leq a+b$,

\[
\begin{aligned}
q(b)+3(a+c)-u
&\geq
b^2+3a+3c-(a+b)\\
&=
2a+3c-(b-b^2)
\geq0.
\end{aligned}
\]

Thus, $D\geq0$ also when $u\leq s$. The remaining split, in which
$x$ is assigned to $0$ and $y,z$ are assigned to $1$, follows by
reflection. This completes the proof of the local triple inequality.

Summing the local inequality over all unordered triples gives

\[
\begin{aligned}
W\SC_{\mech_{\mathrm{G}}}(\loc)
&\leq
\sum_{i\in N}
H(r_i)
\sum_{\substack{\{j,k\}\subseteq N\setminus\{i\}}}
d_{jk}\\
&\leq
W\sum_{i\in N}H(r_i).
\end{aligned}
\]

The second inequality holds because the total distance weight of all
pairs not containing $i$ is at most $W$. Dividing by $W>0$ yields,
under the normalization $\delta=1$,

$$
\SC_{\mech_{\mathrm{G}}}(\loc)
\leq
3C+
\sum_{i\in N}r_i\min\{r_i,1\}.
$$

We now restore the original scale. Apply the normalized inequality to
the locations $(p_i-o_L)/\delta$. Distances, social costs, and optimal
costs are divided by $\delta$, so

$$
\frac{\SC_{\mech_{\mathrm{G}}}(\loc)}{\delta}
\leq
3\frac{C}{\delta}
+
\sum_{i\in N}
\frac{r_i}{\delta}
\min\left\{\frac{r_i}{\delta},1\right\}.
$$

Multiplying by $\delta$ gives

$$
\SC_{\mech_{\mathrm{G}}}(\loc)
\leq
3C+
\frac{1}{\delta}
\sum_{i\in N}r_i\min\{r_i,\delta\}.
$$

Since

$$
T_L+T_R
=
\sum_{i\in N}r_i\min\{r_i,\delta\}
$$

and $h(\loc)=(T_L+T_R)/(\delta C)$, we conclude that

$$
\SC_{\mech_{\mathrm{G}}}(\loc)
\leq
\bigl(3+h(\loc)\bigr)C.
$$
\end{proof}

\section{Complete Proof of Strategyproofness for Three Facilities}
\label{app:k3-proof}

Fix a strategic agent with true location $y$, an alternative report
$r$, and the collection $O$ of reports of the other agents. For a
selected set $S$, write

$$
d(y,S)=\min_{p\in S}\abs{y-p}.
$$

\subsection{Expected-cost potential}

We first handle zero normalizations. If the truthful three-set
normalization is zero, the mechanism opens at every occupied truthful
location, including $y$, and the strategic agent has cost zero. If the
truthful normalization is positive but the normalization after
reporting $r$ is zero, then $O$ occupies exactly two distinct
locations, $r$ is already occupied, and $y$ is a third location. Every
positive-weight truthful triple therefore contains the strategic report
at $y$, so the truthful cost is again zero.

Assume from now on that both normalizations are positive. For
$A\in\binom{O}{2}$ and $a\in\{y,r\}$, define
$W_a(A)=W_3(A\cup\{a\})$. Let
$c_A=d(y,A\cup\{r\})$. For $F\in\binom{O}{3}$, let
$d_F=d(y,F)$. Define

\[
\begin{aligned}
Z(O)
&=\sum_{F\in\binom{O}{3}}W_3(F),
&
B(O)
&=\sum_{F\in\binom{O}{3}}W_3(F)d_F,\\
S_a(O)
&=\sum_{A\in\binom{O}{2}}W_a(A),
&
T_r(O)
&=\sum_{A\in\binom{O}{2}}W_r(A)c_A.
\end{aligned}
\]

Here, $Z(O)$ is the mass of triples containing only ordinary reports,
$B(O)$ is their cost-weighted mass, $S_a(O)$ is the mass of triples
containing the strategic report $a$, and $T_r(O)$ is the corresponding
cost-weighted mass under the deviation.

A truthful triple containing $y$ gives the strategic agent cost zero.
Thus,

$$
C_y(y\mid O)=\frac{B(O)}{Z(O)+S_y(O)}
$$

and

$$
C_y(r\mid O)=\frac{B(O)+T_r(O)}{Z(O)+S_r(O)}.
$$

Cross-multiplication gives

\[
C_y(r\mid O)-C_y(y\mid O)
=
\frac{
Z(O)T_r(O)+S_y(O)T_r(O)
-B(O)\bigl(S_r(O)-S_y(O)\bigr)
}{
\bigl(Z(O)+S_r(O)\bigr)
\bigl(Z(O)+S_y(O)\bigr)
}.
\]

Since $S_y(O)T_r(O)\geq0$, it is enough to prove

$$
\Psi(O)
\eqdef
Z(O)T_r(O)-B(O)\bigl(S_r(O)-S_y(O)\bigr)
\geq0.
$$

\subsection{A local pair inequality}

We first work in generic position: no ordinary report equals $y$, the
closest ordinary report is unique, and all coordinate and radial
comparisons are strict. The general case follows by continuity at the
end.

For $A\in\binom{O}{2}$, let $\delta_A=d(y,A)$.

\begin{lemma}[Local pair inequality]
\label{lem:app-k3-local-pair}
For every pair $A\in\binom{O}{2}$,

$$
W_r(A)-W_y(A)
\leq
W_r(A)\frac{c_A}{\delta_A}.
$$
\end{lemma}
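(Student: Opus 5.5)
The plan is to reduce the inequality to a statement about a single strategic point relative to the fixed pair, and then split by where $y$ lies relative to that pair. Write $A=\{a_1,a_2\}$ with $a_1<a_2$ and $g=a_2-a_1>0$ (generic position). Put $t=\abs{y-r}$, so that $c_A=\min\{\delta_A,t\}$. For a point $x$, the weight $W_3(A\cup\{x\})$ has three explicit forms:
\[
W_3(A\cup\{x\})=
\begin{cases}
g\,(a_1-x), & x<a_1,\\
(x-a_1)(a_2-x), & a_1\le x\le a_2,\\
g\,(x-a_2), & x>a_2.
\end{cases}
\]

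\textbf{Trivial regime.} First dispose of $t\geq\delta_A$. Then $c_A=\delta_A$ and the right-hand side equals $W_r(A)$. The claim $W_r(A)-W_y(A)\le W_r(A)$ holds simply because $W_y(A)\ge0$.

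\textbf{Reformulation when $t<\delta_A$.} Here $c_A=t$, and the claim is equivalent to
\[
W_y(A)\;\ge\;W_r(A)\Bigl(1-\frac{t}{\delta_A}\Bigr).
\]
The key observation is that the segment between $y$ and $r$ has length $t<\delta_A=d(y,A)$. It therefore contains no point of $A$, so $y$ and $r$ lie in the same one of the three regions $(-\infty,a_1)$, $(a_1,a_2)$, $(a_2,\infty)$. Thus one formula for the weight governs both $W_y(A)$ and $W_r(A)$, and only two cases remain.

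\textbf{Exterior case.} By reflection, assume $y,r<a_1$. Then $W_y(A)=g\delta_A$ and $W_r(A)=g(a_1-r)\le g(\delta_A+t)$. Hence
\[
W_r(A)\Bigl(1-\frac{t}{\delta_A}\Bigr)\le \frac{g(\delta_A+t)(\delta_A-t)}{\delta_A}\le g\delta_A=W_y(A),
\]
which is what we need.

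\textbf{Interior case.} Let $u=y-a_1$ and $v=a_2-y$. By reflection, assume $u\le v$, so that $\delta_A=u$. Write $r=y+s$ with $\abs{s}=t<u$. Then $W_y(A)=uv$ and $W_r(A)=(u+s)(v-s)$, and the target becomes
\[
(u+s)(v-s)(u-t)\le u^2v.
\]
\begin{itemize}
\item If $s=t\ge0$, the left side is $(u^2-t^2)(v-t)\le u^2v$, which is immediate.
\item If $s=-t<0$, normalize $u=1$ and $\tau=t\in[0,1)$; the target is $(1-\tau)^2(v+\tau)\le v$ for $v\ge1$. The difference $(1-\tau)^2(v+\tau)-v$ is decreasing in $v$, so it is maximal at $v=1$. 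There it equals $\tau(\tau^2-\tau-1)$, which is negative for $\tau\in[0,1)$.
\end{itemize}

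\textbf{Expected obstacle.} The only delicate point is the interior subcase with $r$ moving toward the nearer endpoint $a_1$. In that subcase the gain factor $(v+t)$ competes with the squared loss $(u-t)^2$. The one-variable monotonicity in $v$ followed by the explicit cubic sign check is the step to verify carefully; everything else is routine algebra.
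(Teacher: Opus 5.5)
Your proof is correct and follows essentially the same route as the paper: dispose of $|y-r|\geq\delta_A$ using $W_y(A)\geq0$, observe that otherwise $y$ and $r$ lie in the same component of $\R\setminus A$, and then verify the exterior and straddling configurations by direct computation. Your subcase $s=t$ yields exactly the paper's surplus $t\bigl(u^2+t(v-t)\bigr)$, and your explicit check of the straddling subcase in which $r$ moves toward the nearer endpoint fills in what the paper dismisses as ``symmetric.'' One small correction: the cubic $\tau(\tau^2-\tau-1)$ is only nonpositive (it vanishes at $\tau=0$), but nonpositivity is all you need.
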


\begin{proof}
Translate so that $y=0$ and reflect if necessary so that $r=h>0$.
If $h\geq\delta_A$, then $c_A=\delta_A$, and the claim follows from
$W_y(A)\geq0$. Suppose that $0<h<\delta_A$. Then $0$ and $h$ lie in
the same component of $\R\setminus A$.

If both points of $A$ lie to the left of $0$, and the nearer point is
$-a$, all gap factors not adjacent to the moving report form a
nonnegative factor $Q$, and $W_t(A)=Q(a+t)$ for $0\leq t\leq h$.
Hence

$$
W_h(A)-W_0(A)
=Qh
\leq
Q(a+h)\frac{h}{a}
=W_h(A)\frac{c_A}{\delta_A}.
$$

If both points lie to the right of $h$, then $W_t(A)$ is nonincreasing
on $[0,h]$, so the left-hand side is nonpositive.

The remaining case is $A=\{-a,b\}$ with $a,b>h$. Then
$W_0(A)=ab$, $W_h(A)=(a+h)(b-h)$, $c_A=h$, and
$\delta_A=\min\{a,b\}$. If $a\leq b$, multiplying the desired
inequality by $a$ leaves the nonnegative difference

$$
h\bigl(a^2+h(b-h)\bigr).
$$

The case $b\leq a$ is symmetric.
\end{proof}

\subsection{Closest-point deletion and reduced atoms}

Let $p\in O$ be the ordinary report closest to $y$, let
$Q=O\setminus\{p\}$, and put $\rho=\abs{p-y}$. For the smaller
profile $Q$, abbreviate $Z=Z(Q)$, $B=B(Q)$,
$S_r=S_r(Q)$, $S_y=S_y(Q)$, and $T=T_r(Q)$. Define

\[
\begin{aligned}
Z_p
&=\sum_{E\in\binom{Q}{2}}W_3(E\cup\{p\}),\\
\Phi_r
&=\sum_{q\in Q}W_3(\{p,q,r\}),
&
\Phi_y
&=\sum_{q\in Q}W_3(\{p,q,y\}),\\
c_p
&=d(y,\{p,r\}).
\end{aligned}
\]

Every ordinary triple containing $p$ is at distance exactly $\rho$
from $y$. Therefore,

\[
\begin{aligned}
Z(O)&=Z+Z_p,
&
B(O)&=B+\rho Z_p,\\
S_r(O)&=S_r+\Phi_r,
&
S_y(O)&=S_y+\Phi_y,\\
T_r(O)&=T+c_p\Phi_r.
\end{aligned}
\]

For $A\in\binom{O}{2}$, define the local surplus

$$
H_A
\eqdef
W_r(A)c_A-\rho\bigl(W_r(A)-W_y(A)\bigr).
$$

Because $p$ is closest to $y$, $\rho\leq\delta_A$. The local pair
inequality implies $H_A\geq0$. Substitution into the definition of
$\Psi$ gives the exact deletion identity

$$
\Psi(O)-\Psi(Q)
=
Z_p\sum_{A\in\binom{O}{2}}H_A
+Zc_p\Phi_r-B(\Phi_r-\Phi_y).
$$

The last two terms need not be nonnegative separately. To expose their
cancellation, group the expansion by the exact set of ordinary witness
labels used by a term. For $V\subseteq Q$ with
$2\leq\abs{V}\leq4$, define

\[
\begin{aligned}
\Theta_V
={}&
\sum_{\substack{
E\in\binom{V}{2},\;
A\in\binom{V\cup\{p\}}{2}\\
E\cup(A\setminus\{p\})=V
}}
W_3(E\cup\{p\})H_A\\
&+
\sum_{\substack{
F\in\binom{V}{3},\;q\in V\\
F\cup\{q\}=V
}}
W_3(F)
\Bigl[
 c_pW_3(\{p,q,r\})
-d_F\bigl(W_3(\{p,q,r\})-W_3(\{p,q,y\})\bigr)
\Bigr].
\end{aligned}
\]

The union conditions make $V$ the exact witness support of the
corresponding product term. Expanding the deletion identity and grouping
by this support gives

$$
\Psi(O)-\Psi(Q)
=
\sum_{\substack{V\subseteq Q\\2\leq\abs{V}\leq4}}\Theta_V.
$$

It remains to prove $\Theta_V\geq0$ for every reduced atom.

\subsection{Same-side interior deviations}

Translate $y$ to $0$ and reflect so that $p=\rho>0$. Every witness in
a generic reduced atom lies outside $[-\rho,\rho]$. Suppose first that
$0\leq r\leq\rho$, and write $r=t$. The first sum in $\Theta_V$ is
nonnegative because $H_A\geq0$.

For a witness $q=\rho+x$ with $x>0$,

$$
W_3(\{p,q,t\})-W_3(\{p,q,0\})=-tx\leq0.
$$

For a witness $q=-\rho-x$ with $x>0$,

$$
W_3(\{p,q,t\})-W_3(\{p,q,0\})=-t(x+t)\leq0.
$$

Since $c_p=t\geq0$, every bracket in the second sum defining
$\Theta_V$ is nonnegative. Hence $\Theta_V\geq0$ for
$0\leq r\leq\rho$.

\subsection{Opposite-side interior deviations}
\label{app:k3-interior}

Assume now that $-\rho\leq r\leq0$. The endpoints follow from the
same-side chamber and continuity, so suppose

$$
r=-h<0<p=\rho,
\qquad
\rho=h+\eta,
\qquad
h,\eta>0.
$$

Every witness lies outside $[-\rho,\rho]$.

\subsubsection{Local surplus and two witnesses}

Since $r=-h$ is closer to $0$ than every report in a pair $A$,
$c_A=h$. Hence

$$
H_A
=
\rho W_3(A\cup\{0\})
-(\rho-h)W_3(A\cup\{-h\}).
$$

There are three side patterns. If $A=\{u,v\}$ with
$\rho<u<v$, then

$$
H_A=h(u-\rho+h)(v-u)\geq0.
$$

If $A=\{-v,-u\}$ with $\rho<u<v$, then

$$
H_A=h(\rho+u-h)(v-u)\geq0.
$$

Finally, if $A=\{-u,v\}$ with $u,v>\rho$, write
$u=\rho+x$ and $v=\rho+z$. Then

\[
H_A
=
h\bigl(
 h^2+3h\eta+hx+hz+\eta^2+2\eta z+xz
\bigr)
\geq0.
\]

Thus every two-witness atom is nonnegative.

\subsubsection{Three witnesses}

Let $V=\{v_1,v_2,v_3\}$, ordered by increasing distance from $0$, and
write

$$
\abs{v_1}=\rho+a,
\qquad
\abs{v_2}=\rho+a+b,
\qquad
\abs{v_3}=\rho+a+b+c,
$$

where $a,b,c>0$. A sign word $\sigma\in\{-,+\}^3$ records the side
of each witness in this radial order. Define the common core

\[
\begin{aligned}
\mathcal C_3
=
h\Bigl(
&2h\bigl[
 a^2(b+c)+a(2b^2+6bc+c^2)+bc(2b+c)
\bigr]\\
&+ab(2ab+3ac+2bc+c^2)
\Bigr).
\end{aligned}
\]

For each side pattern, direct substitution into the reduced atom gives
an exact identity

$$
\Theta_\sigma
=
\mathcal C_3+h\mathcal R^{(3)}_\sigma,
$$

where $\mathcal R^{(3)}_\sigma$ is a polynomial in
$h,\eta,a,b,c$ with nonnegative coefficients. The eight exact Horner
forms are listed in the table file included below. Hence every
three-witness atom is nonnegative.

\subsubsection{Four witnesses}

Let $V=\{v_1,v_2,v_3,v_4\}$, ordered radially, and write

\[
\begin{aligned}
\abs{v_1}&=\rho+a,\\
\abs{v_2}&=\rho+a+b,\\
\abs{v_3}&=\rho+a+b+c,\\
\abs{v_4}&=\rho+a+b+c+d,
\end{aligned}
\qquad
a,b,c,d>0.
\]

For a sign word $\sigma\in\{-,+\}^4$, let $\Theta_\sigma(d)$ be the
corresponding reduced atom. Within a fixed side pattern, this atom is
affine in the outer radial gap $d$. Define

$$
\mathcal D_4
=
h\bigl(
6hab+6hac+4hb^2+6hbc+ab^2+2abc
\bigr)
$$

and

$$
\mathcal B_4
=
2hc\bigl(
3hab+2hac+2hb^2+2hbc+a^2b+ab^2+abc
\bigr).
$$

For every side pattern, the exact identities in the table file have
the form

$$
\frac{\partial\Theta_\sigma}{\partial d}
=
\mathcal D_4+\mathcal R^{(4,\partial)}_\sigma
$$

and

$$
\Theta_\sigma(0)
=
\mathcal B_4+\mathcal R^{(4,0)}_\sigma,
$$

where both remainders have nonnegative coefficients. Therefore,

$$
\frac{\partial\Theta_\sigma}{\partial d}\geq0
\qquad\text{and}\qquad
\Theta_\sigma(0)\geq0.
$$

It follows that

$$
\Theta_\sigma(d)
=
\Theta_\sigma(0)
+
\int_0^d
\frac{\partial\Theta_\sigma(s)}{\partial s}\,\dd s
\geq0.
$$

Thus every reduced atom is nonnegative for $-\rho\leq r\leq0$.

\begingroup
\small
\setlength{\jot}{3pt}
\allowdisplaybreaks
\subsubsection{Exact Horner-Form Coefficient Identities}
\label{app:k3-radial-tables}

All variables in this subsection are nonnegative.  The displayed
identities are exact; their nested form makes coefficientwise
nonnegativity visible without printing an unstructured monomial list.

\paragraph{Three-witness residuals.}

\[
\begin{aligned}
R^{(3)}_{---}={}&\left(\begin{aligned}&2 a^{2} c^{2}\\ &+h \left(\begin{aligned}&a \left(b \left(4 b + 15 c\right) + 4 c^{2}\right) + b \left(4 b c + 2 c^{2}\right)\\ &+h \left(a \left(6 b + 6 c\right) + b \left(8 b + 20 c\right) + 4 c^{2} + h \left(4 b + 4 c\right)\right)\end{aligned}\right)\end{aligned}\right)\\ &+\eta\left(\begin{aligned}&a \left(b \left(8 b + 24 c\right) + 8 c^{2}\right) + b \left(8 b c + 4 c^{2}\right)\\ &+h \left(a \left(10 b + 10 c\right) + b \left(16 b + 52 c\right) + 12 c^{2} + h \left(16 b + 16 c\right)\right)\end{aligned}\right)\\ &+\eta^{2}\left(\begin{aligned}&a \left(2 b + 2 c\right) + b \left(8 b + 28 c\right) + 8 c^{2} + h \left(16 b + 16 c\right)\end{aligned}\right)\\ &+\eta^{3}\left(\begin{aligned}&4 b + 4 c\end{aligned}\right).
\end{aligned}
\]

\[
\begin{aligned}
R^{(3)}_{--+}={}&\left(\begin{aligned}&a \left(\begin{aligned}&a \left(a \left(2 a + 4 b + 4 c\right) + b \left(3 b + 4 c\right) + 2 c^{2}\right)\\ &+b^{2} \left(2 b + c\right)\end{aligned}\right)\\ &+h \left(\begin{aligned}&a \left(a \left(10 a + 40 b + 14 c\right) + b \left(34 b + 28 c\right) + 4 c^{2}\right)\\ &+b \left(b \left(8 b + 8 c\right) + 2 c^{2}\right)\\ &+h \left(\begin{aligned}&a \left(16 a + 56 b + 20 c\right) + b \left(28 b + 28 c\right) + 4 c^{2}\\ &+h \left(8 a + 16 b + 8 c\right)\end{aligned}\right)\end{aligned}\right)\end{aligned}\right)\\ &+\eta\left(\begin{aligned}&a \left(a \left(8 a + 30 b + 16 c\right) + b \left(32 b + 38 c\right) + 8 c^{2}\right)\\ &+b \left(b \left(8 b + 12 c\right) + 4 c^{2}\right)\\ &+h \left(\begin{aligned}&a \left(32 a + 112 b + 44 c\right) + b \left(60 b + 68 c\right) + 12 c^{2}\\ &+h \left(32 a + 64 b + 32 c\right)\end{aligned}\right)\end{aligned}\right)\\ &+\eta^{2}\left(\begin{aligned}&a \left(12 a + 44 b + 20 c\right) + b \left(28 b + 36 c\right) + 8 c^{2}\\ &+h \left(32 a + 64 b + 32 c\right)\end{aligned}\right)\\ &+\eta^{3}\left(\begin{aligned}&8 a + 16 b + 8 c\end{aligned}\right).
\end{aligned}
\]

\[
\begin{aligned}
R^{(3)}_{-+-}={}&\left(\begin{aligned}&a \left(a \left(a \left(2 a + 4 b\right) + 3 b^{2}\right) + b^{2} \left(2 b + c\right)\right)\\ &+h \left(\begin{aligned}&a \left(a \left(10 a + 40 b + 24 c\right) + b \left(34 b + 24 c\right) + 2 c^{2}\right)\\ &+b \left(b \left(8 b + 8 c\right) + 2 c^{2}\right)\\ &+h \left(\begin{aligned}&a \left(16 a + 56 b + 36 c\right) + b \left(28 b + 28 c\right) + 4 c^{2}\\ &+h \left(8 a + 16 b + 8 c\right)\end{aligned}\right)\end{aligned}\right)\end{aligned}\right)\\ &+\eta\left(\begin{aligned}&a \left(a \left(8 a + 30 b + 14 c\right) + b \left(32 b + 26 c\right) + 2 c^{2}\right)\\ &+b \left(b \left(8 b + 12 c\right) + 4 c^{2}\right)\\ &+h \left(\begin{aligned}&a \left(32 a + 112 b + 68 c\right) + b \left(60 b + 52 c\right) + 4 c^{2}\\ &+h \left(32 a + 64 b + 32 c\right)\end{aligned}\right)\end{aligned}\right)\\ &+\eta^{2}\left(\begin{aligned}&a \left(12 a + 44 b + 24 c\right) + b \left(28 b + 20 c\right)\\ &+h \left(32 a + 64 b + 32 c\right)\end{aligned}\right)\\ &+\eta^{3}\left(\begin{aligned}&8 a + 16 b + 8 c\end{aligned}\right).
\end{aligned}
\]

\[
\begin{aligned}
R^{(3)}_{-++}={}&\left(\begin{aligned}&a^{3} \left(2 a + 4 b\right)\\ &+h \left(\begin{aligned}&a \left(a \left(10 a + 16 b + 20 c\right) + b \left(4 b + 17 c\right) + 2 c^{2}\right)\\ &+b \left(4 b c + 2 c^{2}\right)\\ &+h \left(\begin{aligned}&a \left(14 a + 22 b + 28 c\right) + b \left(8 b + 20 c\right) + 4 c^{2}\\ &+h \left(4 a + 4 b + 4 c\right)\end{aligned}\right)\end{aligned}\right)\end{aligned}\right)\\ &+\eta\left(\begin{aligned}&a \left(a \left(8 a + 16 b + 10 c\right) + b \left(8 b + 20 c\right) + 2 c^{2}\right)\\ &+b \left(8 b c + 4 c^{2}\right)\\ &+h \left(\begin{aligned}&a \left(26 a + 42 b + 48 c\right) + b \left(16 b + 36 c\right) + 4 c^{2}\\ &+h \left(16 a + 16 b + 16 c\right)\end{aligned}\right)\end{aligned}\right)\\ &+\eta^{2}\left(\begin{aligned}&a \left(10 a + 18 b + 14 c\right) + b \left(8 b + 12 c\right)\\ &+h \left(16 a + 16 b + 16 c\right)\end{aligned}\right)\\ &+\eta^{3}\left(\begin{aligned}&4 a + 4 b + 4 c\end{aligned}\right).
\end{aligned}
\]

\[
\begin{aligned}
R^{(3)}_{+--}={}&\left(\begin{aligned}&a^{3} \left(2 a + 4 b\right)\\ &+h \left(\begin{aligned}&a \left(a \left(10 a + 12 b + 24 c\right) + 14 b c + 2 c^{2}\right)\\ &+h \left(a \left(16 a + 12 b + 36 c\right) + 12 b c + 4 c^{2} + h \left(8 a + 8 c\right)\right)\end{aligned}\right)\end{aligned}\right)\\ &+\eta\left(\begin{aligned}&a \left(a \left(8 a + 8 b + 14 c\right) + 12 b c + 2 c^{2}\right)\\ &+h \left(a \left(32 a + 20 b + 68 c\right) + 20 b c + 4 c^{2} + h \left(32 a + 32 c\right)\right)\end{aligned}\right)\\ &+\eta^{2}\left(\begin{aligned}&a \left(12 a + 4 b + 24 c\right) + 4 b c + h \left(32 a + 32 c\right)\end{aligned}\right)\\ &+\eta^{3}\left(\begin{aligned}&8 a + 8 c\end{aligned}\right).
\end{aligned}
\]

\[
\begin{aligned}
R^{(3)}_{+-+}={}&\left(\begin{aligned}&a \left(a \left(a \left(2 a + 4 b\right) + 3 b^{2}\right) + b^{2} \left(2 b + c\right)\right)\\ &+h \left(\begin{aligned}&a \left(a \left(10 a + 32 b + 20 c\right) + b \left(21 b + 15 c\right) + 2 c^{2}\right)\\ &+b^{2} \left(4 b + 2 c\right)\\ &+h \left(\begin{aligned}&a \left(14 a + 34 b + 28 c\right) + b \left(10 b + 14 c\right) + 4 c^{2}\\ &+h \left(4 a + 4 b + 4 c\right)\end{aligned}\right)\end{aligned}\right)\end{aligned}\right)\\ &+\eta\left(\begin{aligned}&a \left(a \left(8 a + 18 b + 10 c\right) + b \left(8 b + 10 c\right) + 2 c^{2}\right)\\ &+h \left(\begin{aligned}&a \left(26 a + 58 b + 48 c\right) + b \left(14 b + 18 c\right) + 4 c^{2}\\ &+h \left(16 a + 16 b + 16 c\right)\end{aligned}\right)\end{aligned}\right)\\ &+\eta^{2}\left(\begin{aligned}&a \left(10 a + 16 b + 14 c\right) + b \left(2 b + 2 c\right)\\ &+h \left(16 a + 16 b + 16 c\right)\end{aligned}\right)\\ &+\eta^{3}\left(\begin{aligned}&4 a + 4 b + 4 c\end{aligned}\right).
\end{aligned}
\]

\[
\begin{aligned}
R^{(3)}_{++-}={}&\left(\begin{aligned}&a \left(\begin{aligned}&a \left(a \left(2 a + 4 b + 4 c\right) + b \left(3 b + 4 c\right) + 2 c^{2}\right)\\ &+b^{2} \left(2 b + c\right)\end{aligned}\right)\\ &+h \left(\begin{aligned}&a \left(a \left(10 a + 32 b + 10 c\right) + b \left(21 b + 11 c\right)\right)\\ &+b^{2} \left(4 b + 2 c\right)\\ &+h \left(a \left(14 a + 34 b + 6 c\right) + b \left(10 b + 6 c\right) + h \left(4 a + 4 b\right)\right)\end{aligned}\right)\end{aligned}\right)\\ &+\eta\left(\begin{aligned}&a \left(a \left(8 a + 18 b + 8 c\right) + b \left(8 b + 6 c\right)\right)\\ &+h \left(a \left(26 a + 58 b + 10 c\right) + b \left(14 b + 10 c\right) + h \left(16 a + 16 b\right)\right)\end{aligned}\right)\\ &+\eta^{2}\left(\begin{aligned}&a \left(10 a + 16 b + 2 c\right) + b \left(2 b + 2 c\right) + h \left(16 a + 16 b\right)\end{aligned}\right)\\ &+\eta^{3}\left(\begin{aligned}&4 a + 4 b\end{aligned}\right).
\end{aligned}
\]

\[
\begin{aligned}
R^{(3)}_{+++}={}&\left(\begin{aligned}&2 a^{2} c^{2}\end{aligned}\right).
\end{aligned}
\]

\paragraph{Four-witness slopes.}

For $\varepsilon\in\{-,+\}$, the patterns $---\varepsilon$ satisfy

\[
\begin{aligned}
\frac{1}{h}\frac{\partial\Theta_{---\varepsilon}}{\partial d}={}&\left(\begin{aligned}&a \left(a \left(2 b + 2 c\right) + b \left(2 b + 2 c\right)\right)\\ &+h \left(a \left(16 b + 16 c\right) + b \left(8 b + 13 c\right) + h \left(10 b + 10 c\right)\right)\end{aligned}\right)\\ &+\eta\left(\begin{aligned}&a \left(16 b + 16 c\right) + b \left(8 b + 12 c\right) + h \left(30 b + 30 c\right)\end{aligned}\right)\\ &+\eta^{2}\left(\begin{aligned}&18 b + 18 c\end{aligned}\right).
\end{aligned}
\]

For $\varepsilon\in\{-,+\}$, the patterns $--+\varepsilon$ satisfy

\[
\begin{aligned}
\frac{1}{h}\frac{\partial\Theta_{--+\varepsilon}}{\partial d}={}&\left(\begin{aligned}&a \left(2 a c + b \left(b + 2 c\right)\right)\\ &+h \left(\begin{aligned}&a \left(18 a + 31 b + 16 c\right) + b \left(14 b + 13 c\right)\\ &+h \left(26 a + 22 b + 10 c + 8 h\right)\end{aligned}\right)\end{aligned}\right)\\ &+\eta\left(\begin{aligned}&a \left(12 a + 22 b + 16 c\right) + b \left(12 b + 12 c\right)\\ &+h \left(54 a + 46 b + 30 c + 32 h\right)\end{aligned}\right)\\ &+\eta^{2}\left(\begin{aligned}&22 a + 20 b + 18 c + 32 h\end{aligned}\right)\\ &+\eta^{3}\left(\begin{aligned}&8\end{aligned}\right).
\end{aligned}
\]

For $\varepsilon\in\{-,+\}$, the patterns $-+-\varepsilon$ satisfy

\[
\begin{aligned}
\frac{1}{h}\frac{\partial\Theta_{-+-\varepsilon}}{\partial d}={}&\left(\begin{aligned}&a b \left(b + 2 c\right)\\ &+h \left(\begin{aligned}&a \left(18 a + 31 b + 15 c\right) + b \left(14 b + 13 c\right)\\ &+h \left(26 a + 22 b + 10 c + 8 h\right)\end{aligned}\right)\end{aligned}\right)\\ &+\eta\left(\begin{aligned}&a \left(12 a + 22 b + 8 c\right) + b \left(12 b + 12 c\right)\\ &+h \left(54 a + 46 b + 14 c + 32 h\right)\end{aligned}\right)\\ &+\eta^{2}\left(\begin{aligned}&22 a + 20 b + 2 c + 32 h\end{aligned}\right)\\ &+\eta^{3}\left(\begin{aligned}&8\end{aligned}\right).
\end{aligned}
\]

For $\varepsilon\in\{-,+\}$, the patterns $-++\varepsilon$ satisfy

\[
\begin{aligned}
\frac{1}{h}\frac{\partial\Theta_{-++\varepsilon}}{\partial d}={}&\left(\begin{aligned}&a \left(2 a b + b \left(2 b + 2 c\right)\right)\\ &+h \left(\begin{aligned}&a \left(10 a + 18 b + 15 c\right) + b \left(8 b + 13 c\right)\\ &+h \left(10 a + 10 b + 10 c\right)\end{aligned}\right)\end{aligned}\right)\\ &+\eta\left(\begin{aligned}&a \left(4 a + 12 b + 8 c\right) + b \left(8 b + 12 c\right) + h \left(14 a + 14 b + 14 c\right)\end{aligned}\right)\\ &+\eta^{2}\left(\begin{aligned}&2 a + 2 b + 2 c\end{aligned}\right).
\end{aligned}
\]

For $\varepsilon\in\{-,+\}$, the patterns $+--\varepsilon$ satisfy

\[
\begin{aligned}
\frac{1}{h}\frac{\partial\Theta_{+--\varepsilon}}{\partial d}={}&\left(\begin{aligned}&a \left(2 a b + b \left(2 b + 2 c\right)\right)\\ &+h \left(\begin{aligned}&a \left(18 a + 20 b + 15 c\right) + b \left(4 b + 6 c\right)\\ &+h \left(26 a + 12 b + 10 c + 8 h\right)\end{aligned}\right)\end{aligned}\right)\\ &+\eta\left(\begin{aligned}&a \left(12 a + 12 b + 8 c\right) + h \left(54 a + 20 b + 14 c + 32 h\right)\end{aligned}\right)\\ &+\eta^{2}\left(\begin{aligned}&22 a + 4 b + 2 c + 32 h\end{aligned}\right)\\ &+\eta^{3}\left(\begin{aligned}&8\end{aligned}\right).
\end{aligned}
\]

For $\varepsilon\in\{-,+\}$, the patterns $+-+\varepsilon$ satisfy

\[
\begin{aligned}
\frac{1}{h}\frac{\partial\Theta_{+-+\varepsilon}}{\partial d}={}&\left(\begin{aligned}&a b \left(b + 2 c\right)\\ &+h \left(\begin{aligned}&a \left(10 a + 19 b + 15 c\right) + b \left(6 b + 6 c\right)\\ &+h \left(10 a + 10 b + 10 c\right)\end{aligned}\right)\end{aligned}\right)\\ &+\eta\left(\begin{aligned}&a \left(4 a + 6 b + 8 c\right) + h \left(14 a + 14 b + 14 c\right)\end{aligned}\right)\\ &+\eta^{2}\left(\begin{aligned}&2 a + 2 b + 2 c\end{aligned}\right).
\end{aligned}
\]

For $\varepsilon\in\{-,+\}$, the patterns $++-\varepsilon$ satisfy

\[
\begin{aligned}
\frac{1}{h}\frac{\partial\Theta_{++-\varepsilon}}{\partial d}={}&\left(\begin{aligned}&a \left(2 a c + b \left(b + 2 c\right)\right)\\ &+h \left(a \left(10 a + 19 b + 6 c\right) + b \left(6 b + 6 c\right) + h \left(10 a + 10 b\right)\right)\end{aligned}\right)\\ &+\eta\left(\begin{aligned}&a \left(4 a + 6 b\right) + h \left(14 a + 14 b\right)\end{aligned}\right)\\ &+\eta^{2}\left(\begin{aligned}&2 a + 2 b\end{aligned}\right).
\end{aligned}
\]

For $\varepsilon\in\{-,+\}$, the patterns $+++\varepsilon$ satisfy

\[
\begin{aligned}
\frac{1}{h}\frac{\partial\Theta_{+++\varepsilon}}{\partial d}={}&\left(\begin{aligned}&a \left(a \left(2 b + 2 c\right) + b \left(2 b + 2 c\right)\right)\\ &+h \left(a \left(6 b + 6 c\right) + b \left(4 b + 6 c\right)\right)\end{aligned}\right).
\end{aligned}
\]

\paragraph{Four-witness boundary values.}

For $\sigma=----$,

\[
\begin{aligned}
\frac{1}{h}\Theta_{\sigma}(0)={}&\left(\begin{aligned}&a \left(a \left(2 b c + 4 c^{2}\right) + b \left(2 b c + 2 c^{2}\right)\right)\\ &+h \left(\begin{aligned}&a \left(16 b c + 12 c^{2}\right) + b \left(8 b c + 8 c^{2}\right)\\ &+h \left(10 b c + 8 c^{2}\right)\end{aligned}\right)\end{aligned}\right)\\ &+\eta\left(\begin{aligned}&a \left(16 b c + 16 c^{2}\right) + b \left(8 b c + 8 c^{2}\right)\\ &+h \left(30 b c + 24 c^{2}\right)\end{aligned}\right)\\ &+\eta^{2}\left(\begin{aligned}&18 b c + 16 c^{2}\end{aligned}\right).
\end{aligned}
\]

For $\sigma\in\{---+, --+-\}$,

\[
\begin{aligned}
\frac{1}{h}\Theta_{\sigma}(0)={}&\left(\begin{aligned}&a \left(\begin{aligned}&a \left(b \left(2 b + 2 c\right) + 2 c^{2}\right)\\ &+b \left(b \left(2 b + 3 c\right) + 2 c^{2}\right)\end{aligned}\right)\\ &+h \left(\begin{aligned}&a \left(a \left(18 b + 18 c\right) + b \left(24 b + 47 c\right) + 16 c^{2}\right)\\ &+b \left(b \left(8 b + 22 c\right) + 13 c^{2}\right)\\ &+h \left(a \left(26 b + 26 c\right) + b \left(18 b + 32 c\right) + 10 c^{2} + h \left(8 b + 8 c\right)\right)\end{aligned}\right)\end{aligned}\right)\\ &+\eta\left(\begin{aligned}&a \left(a \left(12 b + 12 c\right) + b \left(20 b + 38 c\right) + 16 c^{2}\right)\\ &+b \left(b \left(8 b + 20 c\right) + 12 c^{2}\right)\\ &+h \left(a \left(54 b + 54 c\right) + b \left(38 b + 76 c\right) + 30 c^{2} + h \left(32 b + 32 c\right)\right)\end{aligned}\right)\\ &+\eta^{2}\left(\begin{aligned}&a \left(22 b + 22 c\right) + b \left(18 b + 38 c\right) + 18 c^{2} + h \left(32 b + 32 c\right)\end{aligned}\right)\\ &+\eta^{3}\left(\begin{aligned}&8 b + 8 c\end{aligned}\right).
\end{aligned}
\]

For $\sigma=--++$,

\[
\begin{aligned}
\frac{1}{h}\Theta_{\sigma}(0)={}&\left(\begin{aligned}&a \left(\begin{aligned}&a \left(a \left(4 a + 8 b + 8 c\right) + b \left(6 b + 10 c\right) + 4 c^{2}\right)\\ &+b \left(b \left(2 b + 4 c\right) + 2 c^{2}\right)\end{aligned}\right)\\ &+h \left(\begin{aligned}&a \left(a \left(16 a + 38 b + 28 c\right) + b \left(30 b + 42 c\right) + 12 c^{2}\right)\\ &+b \left(b \left(8 b + 16 c\right) + 8 c^{2}\right)\\ &+h \left(\begin{aligned}&a \left(20 a + 38 b + 28 c\right) + b \left(18 b + 26 c\right) + 8 c^{2}\\ &+h \left(8 a + 8 b + 8 c\right)\end{aligned}\right)\end{aligned}\right)\end{aligned}\right)\\ &+\eta\left(\begin{aligned}&a \left(a \left(16 a + 36 b + 32 c\right) + b \left(28 b + 44 c\right) + 16 c^{2}\right)\\ &+b \left(b \left(8 b + 16 c\right) + 8 c^{2}\right)\\ &+h \left(\begin{aligned}&a \left(44 a + 82 b + 68 c\right) + b \left(38 b + 62 c\right) + 24 c^{2}\\ &+h \left(32 a + 32 b + 32 c\right)\end{aligned}\right)\end{aligned}\right)\\ &+\eta^{2}\left(\begin{aligned}&a \left(20 a + 38 b + 36 c\right) + b \left(18 b + 34 c\right) + 16 c^{2}\\ &+h \left(32 a + 32 b + 32 c\right)\end{aligned}\right)\\ &+\eta^{3}\left(\begin{aligned}&8 a + 8 b + 8 c\end{aligned}\right).
\end{aligned}
\]

For $\sigma=-+--$,

\[
\begin{aligned}
\frac{1}{h}\Theta_{\sigma}(0)={}&\left(\begin{aligned}&a \left(a b \left(2 b + 2 c\right) + b \left(b \left(2 b + 4 c\right) + 2 c^{2}\right)\right)\\ &+h \left(\begin{aligned}&a \left(a \left(18 b + 18 c\right) + b \left(24 b + 32 c\right) + 8 c^{2}\right)\\ &+b \left(b \left(8 b + 16 c\right) + 8 c^{2}\right)\\ &+h \left(a \left(26 b + 26 c\right) + b \left(18 b + 26 c\right) + 8 c^{2} + h \left(8 b + 8 c\right)\right)\end{aligned}\right)\end{aligned}\right)\\ &+\eta\left(\begin{aligned}&a \left(a \left(12 b + 12 c\right) + b \left(20 b + 24 c\right) + 4 c^{2}\right)\\ &+b \left(b \left(8 b + 16 c\right) + 8 c^{2}\right)\\ &+h \left(a \left(54 b + 54 c\right) + b \left(38 b + 46 c\right) + 8 c^{2} + h \left(32 b + 32 c\right)\right)\end{aligned}\right)\\ &+\eta^{2}\left(\begin{aligned}&a \left(22 b + 22 c\right) + b \left(18 b + 18 c\right) + h \left(32 b + 32 c\right)\end{aligned}\right)\\ &+\eta^{3}\left(\begin{aligned}&8 b + 8 c\end{aligned}\right).
\end{aligned}
\]

For $\sigma\in\{-+-+, -++-\}$,

\[
\begin{aligned}
\frac{1}{h}\Theta_{\sigma}(0)={}&\left(\begin{aligned}&a \left(\begin{aligned}&a \left(a \left(4 a + 8 b\right) + b \left(6 b + 2 c\right)\right)\\ &+b \left(b \left(2 b + 3 c\right) + 2 c^{2}\right)\end{aligned}\right)\\ &+h \left(\begin{aligned}&a \left(a \left(16 a + 38 b + 28 c\right) + b \left(30 b + 49 c\right) + 15 c^{2}\right)\\ &+b \left(b \left(8 b + 22 c\right) + 13 c^{2}\right)\\ &+h \left(\begin{aligned}&a \left(20 a + 38 b + 36 c\right) + b \left(18 b + 32 c\right) + 10 c^{2}\\ &+h \left(8 a + 8 b + 8 c\right)\end{aligned}\right)\end{aligned}\right)\end{aligned}\right)\\ &+\eta\left(\begin{aligned}&a \left(a \left(16 a + 36 b + 16 c\right) + b \left(28 b + 34 c\right) + 8 c^{2}\right)\\ &+b \left(b \left(8 b + 20 c\right) + 12 c^{2}\right)\\ &+h \left(\begin{aligned}&a \left(44 a + 82 b + 68 c\right) + b \left(38 b + 60 c\right) + 14 c^{2}\\ &+h \left(32 a + 32 b + 32 c\right)\end{aligned}\right)\end{aligned}\right)\\ &+\eta^{2}\left(\begin{aligned}&a \left(20 a + 38 b + 24 c\right) + b \left(18 b + 22 c\right) + 2 c^{2}\\ &+h \left(32 a + 32 b + 32 c\right)\end{aligned}\right)\\ &+\eta^{3}\left(\begin{aligned}&8 a + 8 b + 8 c\end{aligned}\right).
\end{aligned}
\]

For $\sigma=-+++$,

\[
\begin{aligned}
\frac{1}{h}\Theta_{\sigma}(0)={}&\left(\begin{aligned}&a \left(2 a b c + b \left(2 b c + 2 c^{2}\right)\right)\\ &+h \left(\begin{aligned}&a \left(10 a c + 18 b c + 8 c^{2}\right) + b \left(8 b c + 8 c^{2}\right)\\ &+h \left(10 a c + 10 b c + 8 c^{2}\right)\end{aligned}\right)\end{aligned}\right)\\ &+\eta\left(\begin{aligned}&a \left(4 a c + 12 b c + 4 c^{2}\right) + b \left(8 b c + 8 c^{2}\right)\\ &+h \left(14 a c + 14 b c + 8 c^{2}\right)\end{aligned}\right)\\ &+\eta^{2}\left(\begin{aligned}&2 a c + 2 b c\end{aligned}\right).
\end{aligned}
\]

For $\sigma=+---$,

\[
\begin{aligned}
\frac{1}{h}\Theta_{\sigma}(0)={}&\left(\begin{aligned}&a \left(2 a b c + b \left(2 b c + 2 c^{2}\right)\right)\\ &+h \left(\begin{aligned}&a \left(18 a c + 20 b c + 8 c^{2}\right) + b \left(4 b c + 4 c^{2}\right)\\ &+h \left(26 a c + 12 b c + 8 c^{2} + 8 c h\right)\end{aligned}\right)\end{aligned}\right)\\ &+\eta\left(\begin{aligned}&a \left(12 a c + 12 b c + 4 c^{2}\right) + h \left(54 a c + 20 b c + 8 c^{2} + 32 c h\right)\end{aligned}\right)\\ &+\eta^{2}\left(\begin{aligned}&22 a c + 4 b c + 32 c h\end{aligned}\right)\\ &+\eta^{3}\left(\begin{aligned}&8 c\end{aligned}\right).
\end{aligned}
\]

For $\sigma\in\{+--+, +-+-\}$,

\[
\begin{aligned}
\frac{1}{h}\Theta_{\sigma}(0)={}&\left(\begin{aligned}&a \left(\begin{aligned}&a \left(a \left(4 a + 8 b\right) + b \left(6 b + 2 c\right)\right)\\ &+b \left(b \left(2 b + 3 c\right) + 2 c^{2}\right)\end{aligned}\right)\\ &+h \left(\begin{aligned}&a \left(a \left(16 a + 38 b + 28 c\right) + b \left(24 b + 39 c\right) + 15 c^{2}\right)\\ &+b \left(b \left(4 b + 10 c\right) + 6 c^{2}\right)\\ &+h \left(\begin{aligned}&a \left(20 a + 38 b + 36 c\right) + b \left(12 b + 22 c\right) + 10 c^{2}\\ &+h \left(8 a + 8 b + 8 c\right)\end{aligned}\right)\end{aligned}\right)\end{aligned}\right)\\ &+\eta\left(\begin{aligned}&a \left(a \left(16 a + 28 b + 16 c\right) + b \left(12 b + 18 c\right) + 8 c^{2}\right)\\ &+h \left(\begin{aligned}&a \left(44 a + 74 b + 68 c\right) + b \left(20 b + 34 c\right) + 14 c^{2}\\ &+h \left(32 a + 32 b + 32 c\right)\end{aligned}\right)\end{aligned}\right)\\ &+\eta^{2}\left(\begin{aligned}&a \left(20 a + 26 b + 24 c\right) + b \left(4 b + 6 c\right) + 2 c^{2}\\ &+h \left(32 a + 32 b + 32 c\right)\end{aligned}\right)\\ &+\eta^{3}\left(\begin{aligned}&8 a + 8 b + 8 c\end{aligned}\right).
\end{aligned}
\]

For $\sigma=+-++$,

\[
\begin{aligned}
\frac{1}{h}\Theta_{\sigma}(0)={}&\left(\begin{aligned}&a \left(a b \left(2 b + 2 c\right) + b \left(b \left(2 b + 4 c\right) + 2 c^{2}\right)\right)\\ &+h \left(\begin{aligned}&a \left(a \left(10 b + 10 c\right) + b \left(14 b + 22 c\right) + 8 c^{2}\right)\\ &+b \left(b \left(4 b + 8 c\right) + 4 c^{2}\right)\\ &+h \left(a \left(10 b + 10 c\right) + b \left(8 b + 16 c\right) + 8 c^{2}\right)\end{aligned}\right)\end{aligned}\right)\\ &+\eta\left(\begin{aligned}&a \left(a \left(4 b + 4 c\right) + b \left(4 b + 8 c\right) + 4 c^{2}\right)\\ &+h \left(a \left(14 b + 14 c\right) + b \left(8 b + 16 c\right) + 8 c^{2}\right)\end{aligned}\right)\\ &+\eta^{2}\left(\begin{aligned}&a \left(2 b + 2 c\right)\end{aligned}\right).
\end{aligned}
\]

For $\sigma=++--$,

\[
\begin{aligned}
\frac{1}{h}\Theta_{\sigma}(0)={}&\left(\begin{aligned}&a \left(\begin{aligned}&a \left(a \left(4 a + 8 b + 8 c\right) + b \left(6 b + 10 c\right) + 4 c^{2}\right)\\ &+b \left(b \left(2 b + 4 c\right) + 2 c^{2}\right)\end{aligned}\right)\\ &+h \left(\begin{aligned}&a \left(a \left(16 a + 38 b + 20 c\right) + b \left(24 b + 28 c\right) + 4 c^{2}\right)\\ &+b \left(b \left(4 b + 8 c\right) + 4 c^{2}\right)\\ &+h \left(a \left(20 a + 38 b + 12 c\right) + b \left(12 b + 12 c\right) + h \left(8 a + 8 b\right)\right)\end{aligned}\right)\end{aligned}\right)\\ &+\eta\left(\begin{aligned}&a \left(a \left(16 a + 28 b + 16 c\right) + b \left(12 b + 12 c\right)\right)\\ &+h \left(a \left(44 a + 74 b + 20 c\right) + b \left(20 b + 20 c\right) + h \left(32 a + 32 b\right)\right)\end{aligned}\right)\\ &+\eta^{2}\left(\begin{aligned}&a \left(20 a + 26 b + 4 c\right) + b \left(4 b + 4 c\right) + h \left(32 a + 32 b\right)\end{aligned}\right)\\ &+\eta^{3}\left(\begin{aligned}&8 a + 8 b\end{aligned}\right).
\end{aligned}
\]

For $\sigma\in\{++-+, +++-\}$,

\[
\begin{aligned}
\frac{1}{h}\Theta_{\sigma}(0)={}&\left(\begin{aligned}&a \left(\begin{aligned}&a \left(b \left(2 b + 2 c\right) + 2 c^{2}\right)\\ &+b \left(b \left(2 b + 3 c\right) + 2 c^{2}\right)\end{aligned}\right)\\ &+h \left(\begin{aligned}&a \left(a \left(10 b + 10 c\right) + b \left(14 b + 25 c\right) + 6 c^{2}\right)\\ &+b \left(b \left(4 b + 10 c\right) + 6 c^{2}\right)\\ &+h \left(a \left(10 b + 10 c\right) + b \left(8 b + 10 c\right)\right)\end{aligned}\right)\end{aligned}\right)\\ &+\eta\left(\begin{aligned}&a \left(a \left(4 b + 4 c\right) + b \left(4 b + 6 c\right)\right)\\ &+h \left(a \left(14 b + 14 c\right) + b \left(8 b + 14 c\right)\right)\end{aligned}\right)\\ &+\eta^{2}\left(\begin{aligned}&a \left(2 b + 2 c\right) + 2 b c\end{aligned}\right).
\end{aligned}
\]

For $\sigma=++++$,

\[
\begin{aligned}
\frac{1}{h}\Theta_{\sigma}(0)={}&\left(\begin{aligned}&a \left(a \left(2 b c + 4 c^{2}\right) + b \left(2 b c + 2 c^{2}\right)\right)\\ &+h \left(a \left(6 b c + 4 c^{2}\right) + b \left(4 b c + 4 c^{2}\right)\right)\end{aligned}\right).
\end{aligned}
\]
\endgroup

% VERSION: full-editable-v2, 2026-08-20
\subsection{Exterior deviations}
\label{app:k3-exterior}

Suppose $\abs{r}\geq\rho$, and let

$$
s=\sign(r)\rho
$$

be the nearest endpoint of the closest-point interval. We prove the
atomwise clipping inequality

$$
\Theta_V(r)\geq\Theta_V(s).
$$

By reflection, it is enough to treat an exterior report to the right.
Translate the clipped endpoint to $0$, so the clipped report is $s=0$
and the exterior report is $r=t>0$. The true location is the midpoint
of the closest-point interval. To cover the two possible endpoints at
which the deleted report may lie, introduce nonnegative parameters
$\alpha$ and $\beta$ and write

$$
I=[-(\alpha+\beta),0],
\qquad
p=-\beta.
$$

The two actual configurations are $(\alpha,\beta)=(2\rho,0)$ and
$(\alpha,\beta)=(0,2\rho)$. A witness to the left of $I$ with excess
$x\geq0$ is written as

$$
L(x)=-(\alpha+\beta)-x,
$$

and a witness to the right with excess $x\geq0$ is written as

$$
R(x)=x.
$$

For a witness set $S$, let $e(S)$ be its smallest excess.

\subsubsection{Exact clipping identity}

Let

$$
\Lambda_V
\eqdef
\Theta_V(r)-\Theta_V(s).
$$

For a pair $C\in\binom{V}{2}$, define
$\lambda_C=\min\{t,e(C)\}$, and for $q\in V$ define

$$
\Delta_q
=
W_3(\{p,q,r\})-W_3(\{p,q,s\}).
$$

In either actual endpoint configuration, the true location is the
midpoint of $I$, every witness is at distance $\rho$ plus its excess,
and the strategic report has the same distance $\rho$ from the true
location before and after clipping. Direct subtraction of the two
reduced atoms gives

\[
\begin{aligned}
\Lambda_V
={}&
\sum_{\substack{
E,C\in\binom{V}{2}\\
E\cup C=V
}}
W_3(E\cup\{p\})W_3(C\cup\{r\})\lambda_C\\
&-
\sum_{\substack{
F\in\binom{V}{3},\;q\in V\\
F\cup\{q\}=V
}}
W_3(F)e(F)\Delta_q.
\end{aligned}
\]

If $\abs{V}=2$, the second sum is empty, so $\Lambda_V\geq0$.

\subsubsection{Three shadow inequalities}

The remaining proof uses the following purely algebraic inequalities.
Their ordered-gap expansions are printed in the exterior table file.

\begin{lemma}[Shadow inequalities]
\label{lem:app-k3-shadow}
For a three-element witness set $V$ and arbitrary anchors $p,r$,

\[
\sum_{\substack{
E,C\in\binom{V}{2}\\
E\cup C=V
}}
W_3(E\cup\{p\})W_3(C\cup\{r\})
\geq
W_3(V)\sum_{q\in V}W_3(\{p,q,r\}).
\]

For a four-element witness set $V$,

\[
\sum_{E\in\binom{V}{2}}
W_3(E\cup\{p\})W_3((V\setminus E)\cup\{r\})
\geq
\sum_{q\in V}
W_3(V\setminus\{q\})W_3(\{p,q,r\}).
\]

If $c\in V$ is distinguished, then

\[
\sum_{\substack{E\in\binom{V}{2}\\c\in E}}
W_3(E\cup\{p\})W_3((V\setminus E)\cup\{r\})
\geq
W_3(V\setminus\{c\})W_3(\{p,c,r\}).
\]
\end{lemma}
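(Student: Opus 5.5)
The three shadow inequalities are purely algebraic statements about products of consecutive-gap weights on the line, with $p$ and $r$ arbitrary anchors. My plan is to prove each one by the random-cut expansion that underlies Lemma~\ref{lem:cut-normalization}.

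Every factor $W_3(X)$ with $|X|=3$ is a product of two consecutive gaps of $X$. Each gap is a sum of elementary gaps of the merged ordered point set $V\cup\{p,r\}$. Expanding both sides therefore gives polynomials with nonnegative coefficients in the elementary gaps $g_1,\dots,g_{m}$ of this configuration, where $m=|V|+1$. No gap is counted twice when $p$ and $r$ coincide with witnesses, and ties are handled by continuity.

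The combinatorial reading is as follows. A monomial $g_{c_1}g_{c_2}$ in $W_3(X)$ arises exactly when $c_1<c_2$ are cuts such that $X$ has one point in each of the three resulting blocks. Consequently, the coefficient of a product monomial $g_{c_1}g_{c_2}g_{c_3}g_{c_4}$ (cuts possibly repeated) on each side counts decompositions of the witness data into compatible triples. The plan is to compare coefficients monomial by monomial.

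\textbf{Step 1: fix the order type.} Fix the relative order type of $p$ and $r$ among the witnesses. There are $O(|V|^2)$ placements, and by reflection only about half need to be treated.

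\textbf{Step 2: choose an injection.} For each placement, and for each term on the right-hand side, I construct an injection into the terms on the left. Take the three-witness inequality as the model. A right-hand term is $W_3(V)W_3(\{p,q,r\})$. I will try to match it with the left term $W_3(E\cup\{p\})W_3(C\cup\{r\})$ that splits $V$ as $E=V\setminus\{q'\}$ and $C\ni q'$, choosing $q'$ so that the cut pattern is preserved. The natural choice is to exchange the role of $q$ with the witness adjacent to it. This is a Plücker/Ptolemy-type exchange. In the $k=2$ analogue it reduces to $|ab||cd|\le|ac||bd|+|ad||bc|$, which holds with equality on the line for the appropriate ordering.

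\textbf{Step 3: the four-witness inequalities.} For these I use the same exchange with $E$ ranging over pairs. The distinguished-$c$ version is the restriction to terms in which $c$ is paired with $p$. I therefore expect it to follow from the same injection once I check that the injection never moves $c$ out of the $p$-side factor.

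\textbf{Fallback.} If a clean injection fails for some placements, I fall back on a direct, computer-checkable verification of the kind used elsewhere in the appendix. Parametrize the ordered gaps of $V\cup\{p,r\}$ by nonnegative variables, subtract the two sides, and exhibit the difference in Horner form with nonnegative coefficients for each placement. This is a finite computation: at most $\binom{6}{2}$ placements for $|V|=4$, each producing a polynomial of degree $4$ in at most $5$ gap variables.

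\textbf{Main obstacle.} I expect the main obstacle to be the placements in which $p$ and $r$ lie in the same gap of $V$, or on opposite extremes. There the left-hand pairs $E$ and $C$ can have weights that vanish simultaneously with some right-hand factor, and coefficient domination is tight. In those cases I would first try grouping the left terms in pairs, $(E,C)$ with $(C,E)$, so that the symmetric combination dominates before comparing monomials.
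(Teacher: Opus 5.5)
Your fallback is exactly the paper's proof. You fix the left-to-right rank pattern, write every coordinate as a cumulative sum of consecutive gaps, and show that the difference of the two sides is a polynomial with nonnegative coefficients, using reflection to reduce the cases. That route is sound, and it is the only part of your proposal that actually proves the lemma. The injection route of Steps 2 and 3 would fail as described.

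In Step 2, a right-hand product cannot in general be matched to a single left-hand product. Take three witnesses in the pattern $v_1<p<v_2<r<v_3$ with gaps $g_1,\dots,g_4$. The right-hand term for $q=v_1$ is $R_1=(g_1+g_2)(g_3+g_4)\,g_1(g_2+g_3)$.

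On the face $g_2=g_4=0$ with $g_1,g_3>0$, the only nonzero left product is $W_3(\{v_1,p,v_3\})W_3(\{v_1,v_2,r\})=g_1(g_2+g_3+g_4)(g_1+g_2)g_3$. This product vanishes at $g_3=0$, where $R_1>0$ for positive $g_1,g_2,g_4$. So no single left term dominates $R_1$. In particular, your exchange rule with $q'=v_2$ fails, because both choices of $C\ni v_2$ vanish at $g_3=0$.

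This pattern is also the tightest one: the paper's slack there is only $3g_1g_2g_3g_4$. By contrast, the same-gap and extreme placements you flagged as the main obstacle have large slack. Any workable injection must therefore act on individual monomials and depend on the cut pattern. Establishing that it exists is the same as the coefficient comparison in your fallback.

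Step 3 does not follow from Step 2. The distinguished inequality is not symmetric under $p\leftrightarrow r$: the swap turns its left side into the sum over pairs $E$ with $c\notin E$. It requires the single $q=c$ term to be dominated by the half of the left side in which $c$ is paired with $p$. An injection for the symmetric four-witness inequality gives no such guarantee. The distinguished inequality must therefore be checked on its own, over rank patterns of the triple $(p,c,r)$, not over the $\binom{6}{2}$ placements of $\{p,r\}$ that you counted. This is why the paper indexes these cases by $(p,c_1;r)$.
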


\begin{proof}
Fix the complete left-to-right order of the anchors and witnesses.
Write every coordinate as a cumulative sum of consecutive gaps. In
each possible rank pattern, the difference between the two sides is an
exact polynomial with nonnegative coefficients. Reflection and the
symmetries of the two anchors reduce the checks to the finite list
printed in the exterior table file.
\end{proof}

\subsubsection{Three witnesses}

Suppose $\abs{V}=3$, and let $e_1=e(V)$. Define

$$
P_3
=
\sum_{\substack{
E,C\in\binom{V}{2}\\
E\cup C=V
}}
W_3(E\cup\{p\})W_3(C\cup\{r\})
$$

and

$$
D_3=\sum_{q\in V}\Delta_q.
$$

The clipping identity gives

$$
\Lambda_V
\geq
\min\{t,e_1\}P_3-e_1W_3(V)D_3.
$$

If $t\geq e_1$, then
$\Delta_q\leq W_3(\{p,q,r\})$, and the three-witness shadow
inequality proves $\Lambda_V\geq0$.

Suppose $t\leq e_1$, and write $e_1=t+u$ with $u\geq0$. Choose a
closest witness $c$. Its side is either $L$ or $R$. Among the other
two witnesses, let $j\in\{0,1,2\}$ be the number lying on the left.
These choices give six exhaustive geometries. The remaining left
witnesses have successive excesses $e_1+\ell_1$ and
$e_1+\ell_1+\ell_2$, as needed, and the remaining right witnesses have
successive excesses $e_1+r_1$ and $e_1+r_1+r_2$.

For each of the six geometries, substitution into the clipping identity
gives an exact Horner polynomial in nonnegative variables. The six
identities are printed in the exterior table file, so
$\Lambda_V\geq0$ for every three-witness atom.

\subsubsection{Four witnesses}

Suppose $\abs{V}=4$. Let $c_1,c_2$ be witnesses with the two smallest
excesses $e_1\leq e_2$. Split the positive part of the clipping
identity according to whether $c_1$ belongs to the pair selected with
$p$:

$$
P_0
=
\sum_{\substack{E\in\binom{V}{2}\\c_1\notin E}}
W_3(E\cup\{p\})W_3((V\setminus E)\cup\{r\})
$$

and

$$
P_1
=
\sum_{\substack{E\in\binom{V}{2}\\c_1\in E}}
W_3(E\cup\{p\})W_3((V\setminus E)\cup\{r\}).
$$

For $q\in V$, put $C_q=W_3(V\setminus\{q\})$. Then

\[
\Lambda_V
\geq
\min\{t,e_1\}P_0
+
\min\{t,e_2\}P_1
-
e_1\sum_{q\neq c_1}C_q\Delta_q
-
e_2C_{c_1}\Delta_{c_1}.
\]

If $t\geq e_2$, write

$$
e_1P_0+e_2P_1
=
e_1(P_0+P_1)+(e_2-e_1)P_1.
$$

Apply the four-witness shadow inequality to the first term and the
distinguished shadow inequality with $c=c_1$ to the second. This gives
$\Lambda_V\geq0$.

Two radial regimes remain. In the first,

$$
t\leq e_1,
\qquad
e_1=t+u,
\qquad
e_2=e_1+v,
$$

with $u,v\geq0$. In the second,

$$
e_1=e,
\qquad
t=e+u,
\qquad
e_2=e+u+v,
$$

with $e,u,v\geq0$.

The sides of $c_1$ and $c_2$ give four possibilities. Let
$j\in\{0,1,2\}$ be the number of left witnesses among the remaining
pair. Thus there are twelve geometries in each radial regime. The
remaining left witnesses have successive excesses
$e_2+\ell_1$ and $e_2+\ell_1+\ell_2$, while the remaining right
witnesses have successive excesses $e_2+r_1$ and
$e_2+r_1+r_2$.

For every geometry and both radial regimes, substitution gives an exact
Horner polynomial with nonnegative coefficients. The resulting
twenty-four identities are printed in the exterior table file. Hence
$\Lambda_V\geq0$ for every four-witness atom.

Combining the two-, three-, and four-witness cases proves

$$
\Theta_V(r)\geq\Theta_V(s)
$$

for every exterior report. If $s=\rho$, the clipped report belongs to
the same-side interior chamber. If $s=-\rho$, it is the boundary of
the opposite-side interior chamber. Therefore every exterior reduced
atom is nonnegative.

\begingroup
\small
\setlength{\jot}{3pt}
\allowdisplaybreaks
% VERSION: compact-editable-v4, 2026-08-20
% This file contains all exterior certificate formulas as native LaTeX.
% It has no PDF inclusion, image inclusion, URL, or external-table dependency.
% Compact full-width exterior certificate for the k=3 closest-point deletion proof.
% Reflowed across the full local text width; formula content is unchanged.
% Local notation is consistent with Appendix~\ref{app:k3-exterior}:
%   source A,B       -> \alpha,\beta;
%   source case k    -> j (to avoid conflict with the facility count);
%   source l_1,l_2   -> \ell_1,\ell_2.

% Preamble requirement for the local full-width table block:
%   \usepackage{changepage}
% The matching certificate file applies \footnotesize and a local
% adjustwidth environment; the rest of the paper keeps its layout.

\subsubsection{Exact exterior coefficient identities}
\label{app:k3-exterior-tables}

All variables in this subsection are nonnegative. The formulas below
are exact identities, not numerical checks. We use $\alpha,\beta$ for
the two endpoint parameters introduced in
Appendix~\ref{app:k3-exterior}, and $j$ for the number of left
witnesses among the undistinguished reports. In Regime~$\mathrm{M}$,
we write $e_1=e$, $t=e+u$, and $e_2=e+u+v$.

\paragraph{Ordered-gap certificates for the shadow inequalities.}
For ordered points $z_0<\cdots<z_{n-1}$, let
$g_i=z_i-z_{i-1}$. Each displayed expression is exactly the left-hand side minus the
right-hand side of the indicated shadow inequality after substituting
the stated ranks. To save space, we do not repeat ``$\geq0$'' after
every row: each displayed right-hand side is a polynomial with
nonnegative coefficients.

\medskip\noindent\textbf{Three-witness shadow.}

\noindent\textit{Anchor ranks $(0,1)$.}

\[
\begin{aligned}
&g_{1} g_{2} \bigl(g_{3} \bigl(2 g_{3} + 3 g_{4}\bigr) + 2 g_{4}^{2}\bigr) + g_{2} \bigl(g_{2} \bigl(g_{3} \bigl(2 g_{3} + 6 g_{4}\bigr) + 2 g_{4}^{2}\bigr) + g_{3} \bigl(4 g_{3} g_{4} + 2 g_{4}^{2}\bigr)\bigr)
\end{aligned}
\]

\noindent\textit{Anchor ranks $(0,2)$.}

\[
\begin{aligned}
&g_{1} g_{2} \bigl(2 g_{2} g_{3} + g_{3} \bigl(2 g_{3} + 3 g_{4}\bigr)\bigr)
\end{aligned}
\]

\noindent\textit{Anchor ranks $(0,3)$.}

\[
\begin{aligned}
&g_{1} \bigl(3 g_{2} g_{3} g_{4} + g_{3} \bigl(2 g_{3} g_{4} + 2 g_{4}^{2}\bigr)\bigr)
\end{aligned}
\]

\noindent\textit{Anchor ranks $(0,4)$.}

\[
\begin{aligned}
&g_{1} \bigl(g_{2} \bigl(2 g_{2} g_{4} + 3 g_{3} g_{4}\bigr) + 2 g_{3}^{2} g_{4}\bigr)
\end{aligned}
\]

\noindent\textit{Anchor ranks $(1,2)$.}

\[
\begin{aligned}
&g_{1} \bigl(g_{1} \bigl(2 g_{2} g_{3} + g_{3} \bigl(2 g_{3} + 2 g_{4}\bigr)\bigr) + g_{2} \bigl(2 g_{2} g_{3} + g_{3} \bigl(2 g_{3} + 3 g_{4}\bigr)\bigr) + g_{3} \bigl(4 g_{3} g_{4} + 2 g_{4}^{2}\bigr)\bigr)
\end{aligned}
\]

\noindent\textit{Anchor ranks $(1,3)$.}

\[
\begin{aligned}
&3 g_{1} g_{2} g_{3} g_{4}
\end{aligned}
\]

\medskip\noindent\textbf{Four-witness shadow.}

\noindent\textit{Anchor ranks $(0,1)$.}

\[
\begin{aligned}
&g_{1} g_{2} \bigl(g_{3} \bigl(2 g_{4} + 2 g_{5}\bigr) + g_{4} \bigl(4 g_{4} + 2 g_{5}\bigr)\bigr) + g_{2} \bigl(g_{2} \bigl(g_{3} \bigl(4 g_{4} + 4 g_{5}\bigr) + g_{4} \bigl(4 g_{4} + 4 g_{5}\bigr)\bigr) + g_{3} \bigl(g_{3} \bigl(4 g_{4} + 4 g_{5}\bigr)
\\[-1pt]
&\quad{}+ g_{4} \bigl(4 g_{4} + 6 g_{5}\bigr)\bigr)\bigr)
\end{aligned}
\]

\noindent\textit{Anchor ranks $(0,2)$.}

\[
\begin{aligned}
&g_{1} g_{2} g_{3} \bigl(2 g_{4} + 2 g_{5}\bigr)
\end{aligned}
\]

\noindent\textit{Anchor ranks $(0,3)$.}

\[
\begin{aligned}
&g_{1} \bigl(2 g_{2} g_{3} g_{4} + g_{3} \bigl(4 g_{3} g_{4} + g_{4} \bigl(4 g_{4} + 2 g_{5}\bigr)\bigr)\bigr)
\end{aligned}
\]

\noindent\textit{Anchor ranks $(0,4)$.}

\[
\begin{aligned}
&g_{1} \bigl(2 g_{2} g_{4} g_{5} + 2 g_{3} g_{4} g_{5}\bigr)
\end{aligned}
\]

\noindent\textit{Anchor ranks $(0,5)$.}

\[
\begin{aligned}
&g_{1} \bigl(g_{2} \bigl(2 g_{3} g_{5} + 2 g_{4} g_{5}\bigr) + g_{3} \bigl(4 g_{3} g_{5} + 2 g_{4} g_{5}\bigr)\bigr)
\end{aligned}
\]

\noindent\textit{Anchor ranks $(1,2)$.}

\[
\begin{aligned}
&g_{1} \bigl(g_{2} g_{3} \bigl(2 g_{4} + 2 g_{5}\bigr) + g_{3} \bigl(g_{3} \bigl(4 g_{4} + 4 g_{5}\bigr) + g_{4} \bigl(4 g_{4} + 6 g_{5}\bigr)\bigr)\bigr)
\end{aligned}
\]

\noindent\textit{Anchor ranks $(1,3)$.}

\[
\begin{aligned}
&2 g_{1} g_{2} g_{3} g_{4}
\end{aligned}
\]

\noindent\textit{Anchor ranks $(1,4)$.}

\[
\begin{aligned}
&2 g_{1} g_{2} g_{4} g_{5}
\end{aligned}
\]

\noindent\textit{Anchor ranks $(2,3)$.}

\[
\begin{aligned}
&g_{1} g_{2} \bigl(2 g_{3} g_{4} + g_{4} \bigl(4 g_{4} + 6 g_{5}\bigr)\bigr) + g_{2} \bigl(g_{2} \bigl(4 g_{3} g_{4} + g_{4} \bigl(4 g_{4} + 4 g_{5}\bigr)\bigr) + g_{3} \bigl(4 g_{3} g_{4} + g_{4} \bigl(4 g_{4} + 2 g_{5}\bigr)\bigr)\bigr)
\end{aligned}
\]

\medskip\noindent\textbf{Distinguished four-witness shadow.}

\noindent\textit{Ranks $(p,c_1;r)=(0,1;2)$.}

\[
\begin{aligned}
&g_{1} \bigl(g_{2} g_{3} \bigl(2 g_{4} + 2 g_{5}\bigr) + g_{3} \bigl(g_{3} \bigl(2 g_{4} + 2 g_{5}\bigr) + g_{4} \bigl(2 g_{4} + 3 g_{5}\bigr)\bigr)\bigr)
\end{aligned}
\]

\noindent\textit{Ranks $(p,c_1;r)=(0,1;3)$.}

\[
\begin{aligned}
&g_{1} \bigl(2 g_{2} g_{3} g_{4} + g_{3} \bigl(2 g_{3} g_{4} + g_{4} \bigl(2 g_{4} + g_{5}\bigr)\bigr)\bigr)
\end{aligned}
\]

\noindent\textit{Ranks $(p,c_1;r)=(0,1;4)$.}

\[
\begin{aligned}
&g_{1} \bigl(2 g_{2} g_{4} g_{5} + g_{3} g_{4} g_{5}\bigr)
\end{aligned}
\]

\noindent\textit{Ranks $(p,c_1;r)=(0,1;5)$.}

\[
\begin{aligned}
&g_{1} \bigl(g_{2} \bigl(2 g_{3} g_{5} + 2 g_{4} g_{5}\bigr) + g_{3} \bigl(2 g_{3} g_{5} + g_{4} g_{5}\bigr)\bigr)
\end{aligned}
\]

\noindent\textit{Ranks $(p,c_1;r)=(0,2;1)$.}

\[
\begin{aligned}
&g_{1} \bigl(g_{2} \bigl(g_{3} \bigl(2 g_{4} + 2 g_{5}\bigr) + g_{4} \bigl(2 g_{4} + g_{5}\bigr)\bigr) + g_{3} \bigl(g_{3} \bigl(2 g_{4} + 2 g_{5}\bigr) + g_{4} \bigl(2 g_{4} + 3 g_{5}\bigr)\bigr)\bigr) + g_{2} \bigl(g_{2} \bigl(g_{3} \bigl(2 g_{4} + 2 g_{5}\bigr)
\\[-1pt]
&\quad{}+ g_{4} \bigl(2 g_{4} + 2 g_{5}\bigr)\bigr) + g_{3} \bigl(g_{3} \bigl(2 g_{4} + 2 g_{5}\bigr) + g_{4} \bigl(2 g_{4} + 3 g_{5}\bigr)\bigr)\bigr)
\end{aligned}
\]

\noindent\textit{Ranks $(p,c_1;r)=(0,2;3)$.}

\[
\begin{aligned}
&g_{1} \bigl(g_{2} \bigl(2 g_{3} g_{4} + g_{4} \bigl(2 g_{4} + 3 g_{5}\bigr)\bigr) + g_{3} \bigl(2 g_{3} g_{4} + g_{4} \bigl(2 g_{4} + g_{5}\bigr)\bigr)\bigr) + g_{2} \bigl(g_{2} \bigl(2 g_{3} g_{4} + g_{4} \bigl(2 g_{4} + 2 g_{5}\bigr)\bigr) + g_{3} \bigl(2 g_{3} g_{4}
\\[-1pt]
&\quad{}+ g_{4} \bigl(2 g_{4} + g_{5}\bigr)\bigr)\bigr)
\end{aligned}
\]

\noindent\textit{Ranks $(p,c_1;r)=(0,2;4)$.}

\[
\begin{aligned}
&g_{1} \bigl(g_{2} g_{4} g_{5} + g_{3} g_{4} g_{5}\bigr) + g_{2} g_{3} g_{4} g_{5}
\end{aligned}
\]

\noindent\textit{Ranks $(p,c_1;r)=(0,2;5)$.}

\[
\begin{aligned}
&g_{1} \bigl(g_{2} \bigl(2 g_{3} g_{5} + g_{4} g_{5}\bigr) + g_{3} \bigl(2 g_{3} g_{5} + g_{4} g_{5}\bigr)\bigr) + g_{2} \bigl(2 g_{2} g_{3} g_{5} + g_{3} \bigl(2 g_{3} g_{5} + g_{4} g_{5}\bigr)\bigr)
\end{aligned}
\]

\noindent\textit{Ranks $(p,c_1;r)=(0,3;1)$.}

\[
\begin{aligned}
&g_{1} g_{2} \bigl(g_{3} \bigl(2 g_{4} + g_{5}\bigr) + g_{4} \bigl(2 g_{4} + g_{5}\bigr)\bigr) + g_{2} \bigl(g_{2} \bigl(g_{3} \bigl(2 g_{4} + 2 g_{5}\bigr) + g_{4} \bigl(2 g_{4} + 2 g_{5}\bigr)\bigr) + g_{3} \bigl(g_{3} \bigl(2 g_{4} + 2 g_{5}\bigr) + g_{4} \bigl(2 g_{4}
\\[-1pt]
&\quad{}+ 3 g_{5}\bigr)\bigr)\bigr)
\end{aligned}
\]

\noindent\textit{Ranks $(p,c_1;r)=(0,3;2)$.}

\[
\begin{aligned}
&g_{1} g_{2} \bigl(g_{3} \bigl(2 g_{4} + g_{5}\bigr) + g_{4} \bigl(2 g_{4} + 3 g_{5}\bigr)\bigr) + g_{2} \bigl(g_{2} \bigl(2 g_{3} g_{4} + g_{4} \bigl(2 g_{4} + 2 g_{5}\bigr)\bigr) + g_{3} \bigl(2 g_{3} g_{4} + g_{4} \bigl(2 g_{4} + g_{5}\bigr)\bigr)\bigr)
\end{aligned}
\]

\noindent\textit{Ranks $(p,c_1;r)=(0,3;4)$.}

\[
\begin{aligned}
&g_{1} \bigl(g_{2} \bigl(3 g_{3} g_{5} + g_{4} g_{5}\bigr) + g_{3} \bigl(2 g_{3} g_{5} + 2 g_{4} g_{5}\bigr)\bigr) + g_{2} \bigl(2 g_{2} g_{3} g_{5} + g_{3} \bigl(2 g_{3} g_{5} + g_{4} g_{5}\bigr)\bigr)
\end{aligned}
\]

\noindent\textit{Ranks $(p,c_1;r)=(0,3;5)$.}

\[
\begin{aligned}
&g_{1} \bigl(g_{2} \bigl(g_{3} g_{5} + g_{4} g_{5}\bigr) + g_{3} \bigl(2 g_{3} g_{5} + 2 g_{4} g_{5}\bigr)\bigr) + g_{2} g_{3} g_{4} g_{5}
\end{aligned}
\]

\noindent\textit{Ranks $(p,c_1;r)=(0,4;1)$.}

\[
\begin{aligned}
&g_{1} g_{2} \bigl(g_{3} \bigl(g_{4} + g_{5}\bigr) + g_{4} \bigl(2 g_{4} + 2 g_{5}\bigr)\bigr) + g_{2} \bigl(g_{2} \bigl(g_{3} \bigl(2 g_{4} + 2 g_{5}\bigr) + g_{4} \bigl(2 g_{4} + 2 g_{5}\bigr)\bigr) + g_{3} \bigl(g_{3} \bigl(2 g_{4} + 2 g_{5}\bigr) + g_{4} \bigl(2 g_{4}
\\[-1pt]
&\quad{}+ 3 g_{5}\bigr)\bigr)\bigr)
\end{aligned}
\]

\noindent\textit{Ranks $(p,c_1;r)=(0,4;2)$.}

\[
\begin{aligned}
&g_{1} g_{2} g_{3} \bigl(g_{4} + g_{5}\bigr) + g_{2} g_{3} g_{4} g_{5}
\end{aligned}
\]

\noindent\textit{Ranks $(p,c_1;r)=(0,4;3)$.}

\[
\begin{aligned}
&g_{1} \bigl(g_{2} g_{3} \bigl(g_{4} + 3 g_{5}\bigr) + g_{3} \bigl(g_{3} \bigl(2 g_{4} + 2 g_{5}\bigr) + g_{4} \bigl(2 g_{4} + 2 g_{5}\bigr)\bigr)\bigr) + g_{2} \bigl(2 g_{2} g_{3} g_{5} + g_{3} \bigl(2 g_{3} g_{5} + g_{4} g_{5}\bigr)\bigr)
\end{aligned}
\]

\noindent\textit{Ranks $(p,c_1;r)=(0,4;5)$.}

\[
\begin{aligned}
&g_{1} \bigl(g_{2} \bigl(g_{3} \bigl(3 g_{4} + g_{5}\bigr) + g_{4} \bigl(2 g_{4} + 2 g_{5}\bigr)\bigr) + g_{3} \bigl(g_{3} \bigl(2 g_{4} + 2 g_{5}\bigr) + g_{4} \bigl(2 g_{4} + 2 g_{5}\bigr)\bigr)\bigr) + g_{2} \bigl(g_{2} \bigl(2 g_{3} g_{4} + g_{4} \bigl(2 g_{4} + 2 g_{5}\bigr)\bigr)
\\[-1pt]
&\quad{}+ g_{3} \bigl(2 g_{3} g_{4} + g_{4} \bigl(2 g_{4} + g_{5}\bigr)\bigr)\bigr)
\end{aligned}
\]

\noindent\textit{Ranks $(p,c_1;r)=(0,5;1)$.}

\[
\begin{aligned}
&g_{1} g_{2} \bigl(g_{3} \bigl(g_{4} + 2 g_{5}\bigr) + g_{4} \bigl(2 g_{4} + 2 g_{5}\bigr)\bigr) + g_{2} \bigl(g_{2} \bigl(g_{3} \bigl(2 g_{4} + 2 g_{5}\bigr) + g_{4} \bigl(2 g_{4} + 2 g_{5}\bigr)\bigr) + g_{3} \bigl(g_{3} \bigl(2 g_{4} + 2 g_{5}\bigr) + g_{4} \bigl(2 g_{4}
\\[-1pt]
&\quad{}+ 3 g_{5}\bigr)\bigr)\bigr)
\end{aligned}
\]

\noindent\textit{Ranks $(p,c_1;r)=(0,5;2)$.}

\[
\begin{aligned}
&g_{1} g_{2} g_{3} \bigl(g_{4} + 2 g_{5}\bigr) + g_{2} \bigl(2 g_{2} g_{3} g_{5} + g_{3} \bigl(2 g_{3} g_{5} + g_{4} g_{5}\bigr)\bigr)
\end{aligned}
\]

\noindent\textit{Ranks $(p,c_1;r)=(1,2;0)$.}

\[
\begin{aligned}
&g_{1} g_{2} \bigl(g_{3} \bigl(2 g_{4} + 2 g_{5}\bigr) + g_{4} \bigl(2 g_{4} + g_{5}\bigr)\bigr) + g_{2} \bigl(g_{2} \bigl(g_{3} \bigl(2 g_{4} + 2 g_{5}\bigr) + g_{4} \bigl(2 g_{4} + 2 g_{5}\bigr)\bigr) + g_{3} \bigl(g_{3} \bigl(2 g_{4} + 2 g_{5}\bigr) + g_{4} \bigl(2 g_{4}
\\[-1pt]
&\quad{}+ 3 g_{5}\bigr)\bigr)\bigr)
\end{aligned}
\]

\noindent\textit{Ranks $(p,c_1;r)=(1,2;3)$.}

\[
\begin{aligned}
&g_{1} g_{2} \bigl(2 g_{3} g_{4} + g_{4} \bigl(2 g_{4} + 3 g_{5}\bigr)\bigr) + g_{2} \bigl(g_{2} \bigl(2 g_{3} g_{4} + g_{4} \bigl(2 g_{4} + 2 g_{5}\bigr)\bigr) + g_{3} \bigl(2 g_{3} g_{4} + g_{4} \bigl(2 g_{4} + g_{5}\bigr)\bigr)\bigr)
\end{aligned}
\]

\noindent\textit{Ranks $(p,c_1;r)=(1,2;4)$.}

\[
\begin{aligned}
&g_{1} g_{2} g_{4} g_{5} + g_{2} g_{3} g_{4} g_{5}
\end{aligned}
\]

\noindent\textit{Ranks $(p,c_1;r)=(1,2;5)$.}

\[
\begin{aligned}
&g_{1} g_{2} \bigl(2 g_{3} g_{5} + g_{4} g_{5}\bigr) + g_{2} \bigl(2 g_{2} g_{3} g_{5} + g_{3} \bigl(2 g_{3} g_{5} + g_{4} g_{5}\bigr)\bigr)
\end{aligned}
\]

\noindent\textit{Ranks $(p,c_1;r)=(1,3;0)$.}

\[
\begin{aligned}
&g_{1} \bigl(g_{2} \bigl(g_{3} \bigl(2 g_{4} + g_{5}\bigr) + g_{4} \bigl(2 g_{4} + g_{5}\bigr)\bigr) + g_{3} \bigl(2 g_{3} g_{4} + g_{4} \bigl(2 g_{4} + g_{5}\bigr)\bigr)\bigr) + g_{2} \bigl(g_{2} \bigl(g_{3} \bigl(2 g_{4} + 2 g_{5}\bigr) + g_{4} \bigl(2 g_{4} + 2 g_{5}\bigr)\bigr)
\\[-1pt]
&\quad{}+ g_{3} \bigl(g_{3} \bigl(2 g_{4} + 2 g_{5}\bigr) + g_{4} \bigl(2 g_{4} + 3 g_{5}\bigr)\bigr)\bigr)
\end{aligned}
\]

\noindent\textit{Ranks $(p,c_1;r)=(1,3;2)$.}

\[
\begin{aligned}
&g_{1} \bigl(g_{2} \bigl(g_{3} \bigl(2 g_{4} + g_{5}\bigr) + g_{4} \bigl(2 g_{4} + 3 g_{5}\bigr)\bigr) + g_{3} \bigl(g_{3} \bigl(2 g_{4} + 2 g_{5}\bigr) + g_{4} \bigl(2 g_{4} + 3 g_{5}\bigr)\bigr)\bigr) + g_{2} \bigl(g_{2} \bigl(2 g_{3} g_{4} + g_{4} \bigl(2 g_{4} + 2 g_{5}\bigr)\bigr)
\\[-1pt]
&\quad{}+ g_{3} \bigl(2 g_{3} g_{4} + g_{4} \bigl(2 g_{4} + g_{5}\bigr)\bigr)\bigr)
\end{aligned}
\]

\noindent\textit{Ranks $(p,c_1;r)=(1,3;4)$.}

\[
\begin{aligned}
&g_{1} \bigl(g_{2} \bigl(3 g_{3} g_{5} + g_{4} g_{5}\bigr) + g_{3} \bigl(2 g_{3} g_{5} + g_{4} g_{5}\bigr)\bigr) + g_{2} \bigl(2 g_{2} g_{3} g_{5} + g_{3} \bigl(2 g_{3} g_{5} + g_{4} g_{5}\bigr)\bigr)
\end{aligned}
\]

\noindent\textit{Ranks $(p,c_1;r)=(1,3;5)$.}

\[
\begin{aligned}
&g_{1} \bigl(g_{2} \bigl(g_{3} g_{5} + g_{4} g_{5}\bigr) + g_{3} g_{4} g_{5}\bigr) + g_{2} g_{3} g_{4} g_{5}
\end{aligned}
\]

\noindent\textit{Ranks $(p,c_1;r)=(1,4;0)$.}

\[
\begin{aligned}
&g_{1} \bigl(g_{2} \bigl(g_{3} \bigl(g_{4} + g_{5}\bigr) + g_{4} \bigl(2 g_{4} + 2 g_{5}\bigr)\bigr) + g_{3} g_{4} g_{5}\bigr) + g_{2} \bigl(g_{2} \bigl(g_{3} \bigl(2 g_{4} + 2 g_{5}\bigr) + g_{4} \bigl(2 g_{4} + 2 g_{5}\bigr)\bigr) + g_{3} \bigl(g_{3} \bigl(2 g_{4} + 2 g_{5}\bigr)
\\[-1pt]
&\quad{}+ g_{4} \bigl(2 g_{4} + 3 g_{5}\bigr)\bigr)\bigr)
\end{aligned}
\]

\noindent\textit{Ranks $(p,c_1;r)=(1,4;2)$.}

\[
\begin{aligned}
&g_{1} \bigl(g_{2} g_{3} \bigl(g_{4} + g_{5}\bigr) + g_{3} \bigl(g_{3} \bigl(2 g_{4} + 2 g_{5}\bigr) + g_{4} \bigl(2 g_{4} + 3 g_{5}\bigr)\bigr)\bigr) + g_{2} g_{3} g_{4} g_{5}
\end{aligned}
\]

\noindent\textit{Ranks $(p,c_1;r)=(2,3;0)$.}

\[
\begin{aligned}
&g_{1} \bigl(g_{2} g_{3} \bigl(2 g_{4} + g_{5}\bigr) + g_{3} \bigl(2 g_{3} g_{4} + g_{4} \bigl(2 g_{4} + g_{5}\bigr)\bigr)\bigr)
\end{aligned}
\]

\noindent\textit{Ranks $(p,c_1;r)=(2,3;1)$.}

\[
\begin{aligned}
&g_{1} \bigl(g_{2} g_{3} \bigl(2 g_{4} + g_{5}\bigr) + g_{3} \bigl(g_{3} \bigl(2 g_{4} + 2 g_{5}\bigr) + g_{4} \bigl(2 g_{4} + 3 g_{5}\bigr)\bigr)\bigr)
\end{aligned}
\]

\paragraph{Six exact three-witness clipping identities.}
For a side $\varepsilon\in\{L,R\}$ and
$j\in\{0,1,2\}$, let
$\mathcal E^{(3)}_{\varepsilon,j}$ denote the right-hand side of the
exact clipping identity after the corresponding coordinate
substitution. Here $e_1=t+u$.

\noindent\textit{Case $(\varepsilon,j)=(L,0)$.}

\[
\begin{aligned}
&\mathcal{E}^{(3)}_{L,0}=\alpha \bigl(\alpha \bigl(\beta t \bigl(2 r_{1} + r_{2} + 2 u\bigr) + t \bigl(r_{1} \bigl(2 r_{1} + 2 r_{2}\bigr) + t \bigl(2 r_{1} + 2 u\bigr) + u \bigl(4 r_{1} + r_{2} + 2 u\bigr)\bigr)\bigr) + \beta \bigl(\beta t \bigl(2 r_{1} + r_{2}
\\[-1pt]
&\quad{}+ 2 u\bigr) + t \bigl(r_{1} \bigl(2 r_{1} + 6 r_{2}\bigr) + r_{2}^{2} + t \bigl(8 r_{1} + 5 r_{2} + 8 u\bigr) + u \bigl(8 r_{1} + 9 r_{2} + 6 u\bigr)\bigr)\bigr) + t \bigl(r_{1} \bigl(4 r_{1} r_{2} + 2 r_{2}^{2}\bigr)
\\[-1pt]
&\quad{}+ t \bigl(r_{1} \bigl(6 r_{1} + 7 r_{2}\bigr) + t \bigl(6 r_{1} + 6 u\bigr) + u \bigl(16 r_{1} + 4 r_{2} + 10 u\bigr)\bigr) + u \bigl(r_{1} \bigl(4 r_{1} + 9 r_{2}\bigr) + r_{2}^{2} + u \bigl(8 r_{1} + 3 r_{2} + 4 u\bigr)\bigr)\bigr)\bigr)
\\[-1pt]
&\quad{}+ \beta \bigl(\beta t \bigl(2 r_{1} r_{2} + r_{2}^{2} + t \bigl(2 r_{1} + 3 r_{2} + 2 u\bigr) + u \bigl(2 r_{1} + 5 r_{2} + 2 u\bigr)\bigr) + t \bigl(r_{1} \bigl(2 r_{1} r_{2} + r_{2}^{2}\bigr) + t \bigl(r_{1} \bigl(2 r_{1} + 10 r_{2}\bigr)
\\[-1pt]
&\quad{}+ 2 r_{2}^{2} + t \bigl(6 r_{1} + 6 r_{2} + 6 u\bigr) + u \bigl(12 r_{1} + 18 r_{2} + 10 u\bigr)\bigr) + u \bigl(r_{1} \bigl(2 r_{1} + 10 r_{2}\bigr) + r_{2}^{2} + u \bigl(6 r_{1} + 10 r_{2} + 4 u\bigr)\bigr)\bigr)\bigr)
\\[-1pt]
&\quad{}+ t \bigl(t \bigl(r_{1} \bigl(4 r_{1} r_{2} + 2 r_{2}^{2}\bigr) + t \bigl(r_{1} \bigl(4 r_{1} + 5 r_{2}\bigr) + t \bigl(4 r_{1} + 4 u\bigr) + u \bigl(14 r_{1} + 3 r_{2} + 10 u\bigr)\bigr) + u \bigl(r_{1} \bigl(6 r_{1} + 10 r_{2}\bigr)
\\[-1pt]
&\quad{}+ u \bigl(14 r_{1} + 3 r_{2} + 8 u\bigr)\bigr)\bigr) + u \bigl(r_{1} \bigl(2 r_{1} r_{2} + r_{2}^{2}\bigr) + u \bigl(r_{1} \bigl(2 r_{1} + 3 r_{2}\bigr) + u \bigl(4 r_{1} + 2 u\bigr)\bigr)\bigr)\bigr)
\end{aligned}
\]

\noindent\textit{Case $(\varepsilon,j)=(L,1)$.}

\[
\begin{aligned}
&\mathcal{E}^{(3)}_{L,1}=\alpha \bigl(\alpha \bigl(\beta t \bigl(2 \ell_{1} + 2 r_{1} + 2 u\bigr) + t \bigl(4 \ell_{1} r_{1} + 2 r_{1}^{2} + t \bigl(2 r_{1} + 2 u\bigr) + u \bigl(2 \ell_{1} + 4 r_{1} + 2 u\bigr)\bigr)\bigr) + \beta \bigl(\beta t \bigl(2 \ell_{1}
\\[-1pt]
&\quad{}+ 2 r_{1} + 2 u\bigr) + t \bigl(\ell_{1} \bigl(\ell_{1} + 6 r_{1}\bigr) + 2 r_{1}^{2} + t \bigl(7 \ell_{1} + 8 r_{1} + 8 u\bigr) + u \bigl(9 \ell_{1} + 8 r_{1} + 6 u\bigr)\bigr)\bigr) + t \bigl(\ell_{1} \bigl(2 \ell_{1} r_{1} + 2 r_{1}^{2}\bigr)
\\[-1pt]
&\quad{}+ t \bigl(11 \ell_{1} r_{1} + 6 r_{1}^{2} + t \bigl(6 r_{1} + 6 u\bigr) + u \bigl(5 \ell_{1} + 16 r_{1} + 10 u\bigr)\bigr) + u \bigl(\ell_{1} \bigl(\ell_{1} + 9 r_{1}\bigr) + 4 r_{1}^{2} + u \bigl(3 \ell_{1} + 8 r_{1} + 4 u\bigr)\bigr)\bigr)\bigr)
\\[-1pt]
&\quad{}+ \beta \bigl(\beta t \bigl(\ell_{1} \bigl(\ell_{1} + r_{1}\bigr) + t \bigl(3 \ell_{1} + 2 r_{1} + 2 u\bigr) + u \bigl(4 \ell_{1} + 2 r_{1} + 2 u\bigr)\bigr) + t \bigl(\ell_{1} \bigl(\ell_{1} r_{1} + r_{1}^{2}\bigr) + t \bigl(\ell_{1} \bigl(2 \ell_{1} + 8 r_{1}\bigr)
\\[-1pt]
&\quad{}+ 2 r_{1}^{2} + t \bigl(6 \ell_{1} + 6 r_{1} + 6 u\bigr) + u \bigl(15 \ell_{1} + 12 r_{1} + 10 u\bigr)\bigr) + u \bigl(\ell_{1} \bigl(\ell_{1} + 8 r_{1}\bigr) + 2 r_{1}^{2} + u \bigl(8 \ell_{1} + 6 r_{1} + 4 u\bigr)\bigr)\bigr)\bigr)
\\[-1pt]
&\quad{}+ t \bigl(t \bigl(\ell_{1} \bigl(2 \ell_{1} r_{1} + 2 r_{1}^{2}\bigr) + t \bigl(7 \ell_{1} r_{1} + 4 r_{1}^{2} + t \bigl(4 r_{1} + 4 u\bigr) + u \bigl(3 \ell_{1} + 14 r_{1} + 10 u\bigr)\bigr) + u \bigl(11 \ell_{1} r_{1} + 6 r_{1}^{2}
\\[-1pt]
&\quad{}+ u \bigl(3 \ell_{1} + 14 r_{1} + 8 u\bigr)\bigr)\bigr) + u \bigl(\ell_{1} \bigl(\ell_{1} r_{1} + r_{1}^{2}\bigr) + u \bigl(3 \ell_{1} r_{1} + 2 r_{1}^{2} + u \bigl(4 r_{1} + 2 u\bigr)\bigr)\bigr)\bigr)
\end{aligned}
\]

\noindent\textit{Case $(\varepsilon,j)=(L,2)$.}

\[
\begin{aligned}
&\mathcal{E}^{(3)}_{L,2}=\alpha \bigl(\alpha t \bigl(\ell_{1} \bigl(2 \ell_{1} + 6 \ell_{2}\bigr) + 2 \ell_{2}^{2}\bigr) + \beta t \bigl(\ell_{1} \bigl(2 \ell_{1} + 6 \ell_{2}\bigr) + 2 \ell_{2}^{2}\bigr) + t \bigl(\ell_{1} \bigl(4 \ell_{1} \ell_{2} + 2 \ell_{2}^{2}\bigr) + t \bigl(\ell_{1} \bigl(6 \ell_{1} + 15 \ell_{2}\bigr)
\\[-1pt]
&\quad{}+ 6 \ell_{2}^{2}\bigr) + u \bigl(\ell_{1} \bigl(4 \ell_{1} + 9 \ell_{2}\bigr) + 4 \ell_{2}^{2}\bigr)\bigr)\bigr) + \beta t \bigl(\ell_{1} \bigl(2 \ell_{1} \ell_{2} + \ell_{2}^{2}\bigr) + t \bigl(\ell_{1} \bigl(2 \ell_{1} + 6 \ell_{2}\bigr) + 2 \ell_{2}^{2}\bigr) + u \bigl(\ell_{1} \bigl(2 \ell_{1} + 6 \ell_{2}\bigr)
\\[-1pt]
&\quad{}+ 2 \ell_{2}^{2}\bigr)\bigr) + t \bigl(t \bigl(\ell_{1} \bigl(4 \ell_{1} \ell_{2} + 2 \ell_{2}^{2}\bigr) + t \bigl(\ell_{1} \bigl(4 \ell_{1} + 9 \ell_{2}\bigr) + 4 \ell_{2}^{2}\bigr) + u \bigl(\ell_{1} \bigl(6 \ell_{1} + 12 \ell_{2}\bigr) + 6 \ell_{2}^{2}\bigr)\bigr) + u \bigl(\ell_{1} \bigl(2 \ell_{1} \ell_{2}
\\[-1pt]
&\quad{}+ \ell_{2}^{2}\bigr) + u \bigl(\ell_{1} \bigl(2 \ell_{1} + 3 \ell_{2}\bigr) + 2 \ell_{2}^{2}\bigr)\bigr)\bigr)
\end{aligned}
\]

\noindent\textit{Case $(\varepsilon,j)=(R,0)$.}

\[
\begin{aligned}
&\mathcal{E}^{(3)}_{R,0}=\beta t \bigl(3 r_{1} r_{2} t + r_{1} \bigl(2 r_{1} r_{2} + r_{2}^{2}\bigr) + u \bigl(r_{1} \bigl(2 r_{1} + 9 r_{2}\bigr) + 2 r_{2}^{2}\bigr)\bigr) + t \bigl(t u \bigl(r_{1} \bigl(2 r_{1} + 3 r_{2}\bigr) + 2 r_{2}^{2}\bigr)
\\[-1pt]
&\quad{}+ u \bigl(r_{1} \bigl(2 r_{1} r_{2} + r_{2}^{2}\bigr) + u \bigl(r_{1} \bigl(2 r_{1} + 3 r_{2}\bigr) + 2 r_{2}^{2}\bigr)\bigr)\bigr)
\end{aligned}
\]

\noindent\textit{Case $(\varepsilon,j)=(R,1)$.}

\[
\begin{aligned}
&\mathcal{E}^{(3)}_{R,1}=\alpha \bigl(\alpha \bigl(\beta t \bigl(r_{1} + 2 u\bigr) + t \bigl(2 t u + u \bigl(r_{1} + 2 u\bigr)\bigr)\bigr) + \beta \bigl(\beta t \bigl(r_{1} + 2 u\bigr) + t \bigl(2 \ell_{1} r_{1} + r_{1}^{2} + t \bigl(5 r_{1} + 8 u\bigr)
\\[-1pt]
&\quad{}+ u \bigl(4 \ell_{1} + 9 r_{1} + 6 u\bigr)\bigr)\bigr) + t \bigl(t \bigl(6 t u + u \bigl(4 \ell_{1} + 4 r_{1} + 10 u\bigr)\bigr) + u \bigl(2 \ell_{1} r_{1} + r_{1}^{2} + u \bigl(4 \ell_{1} + 3 r_{1} + 4 u\bigr)\bigr)\bigr)\bigr)
\\[-1pt]
&\quad{}+ \beta \bigl(\beta t \bigl(\ell_{1} r_{1} + r_{1}^{2} + t \bigl(3 r_{1} + 2 u\bigr) + u \bigl(2 \ell_{1} + 5 r_{1} + 2 u\bigr)\bigr) + t \bigl(\ell_{1} \bigl(\ell_{1} r_{1} + r_{1}^{2}\bigr) + t \bigl(5 \ell_{1} r_{1} + 2 r_{1}^{2} + t \bigl(6 r_{1} + 6 u\bigr)
\\[-1pt]
&\quad{}+ u \bigl(8 \ell_{1} + 18 r_{1} + 10 u\bigr)\bigr) + u \bigl(\ell_{1} \bigl(2 \ell_{1} + 9 r_{1}\bigr) + r_{1}^{2} + u \bigl(6 \ell_{1} + 10 r_{1} + 4 u\bigr)\bigr)\bigr)\bigr) + t \bigl(t \bigl(t \bigl(4 t u + u \bigl(6 \ell_{1} + 3 r_{1} + 10 u\bigr)\bigr)
\\[-1pt]
&\quad{}+ u \bigl(\ell_{1} \bigl(2 \ell_{1} + 4 r_{1}\bigr) + u \bigl(10 \ell_{1} + 3 r_{1} + 8 u\bigr)\bigr)\bigr) + u \bigl(\ell_{1} \bigl(\ell_{1} r_{1} + r_{1}^{2}\bigr) + u \bigl(\ell_{1} \bigl(2 \ell_{1} + 3 r_{1}\bigr) + u \bigl(4 \ell_{1} + 2 u\bigr)\bigr)\bigr)\bigr)
\end{aligned}
\]

\noindent\textit{Case $(\varepsilon,j)=(R,2)$.}

\[
\begin{aligned}
&\mathcal{E}^{(3)}_{R,2}=\alpha \bigl(\alpha \bigl(\beta t \bigl(2 \ell_{2} + 2 u\bigr) + t \bigl(2 t u + u \bigl(2 \ell_{2} + 2 u\bigr)\bigr)\bigr) + \beta \bigl(\beta t \bigl(2 \ell_{2} + 2 u\bigr) + t \bigl(4 \ell_{1} \ell_{2} + \ell_{2}^{2} + t \bigl(7 \ell_{2} + 8 u\bigr)
\\[-1pt]
&\quad{}+ u \bigl(4 \ell_{1} + 9 \ell_{2} + 6 u\bigr)\bigr)\bigr) + t \bigl(t \bigl(6 t u + u \bigl(4 \ell_{1} + 5 \ell_{2} + 10 u\bigr)\bigr) + u \bigl(4 \ell_{1} \ell_{2} + \ell_{2}^{2} + u \bigl(4 \ell_{1} + 3 \ell_{2} + 4 u\bigr)\bigr)\bigr)\bigr)
\\[-1pt]
&\quad{}+ \beta \bigl(\beta t \bigl(2 \ell_{1} \ell_{2} + \ell_{2}^{2} + t \bigl(3 \ell_{2} + 2 u\bigr) + u \bigl(2 \ell_{1} + 4 \ell_{2} + 2 u\bigr)\bigr) + t \bigl(\ell_{1} \bigl(2 \ell_{1} \ell_{2} + \ell_{2}^{2}\bigr) + t \bigl(7 \ell_{1} \ell_{2} + 2 \ell_{2}^{2} + t \bigl(6 \ell_{2}
\\[-1pt]
&\quad{}+ 6 u\bigr) + u \bigl(8 \ell_{1} + 15 \ell_{2} + 10 u\bigr)\bigr) + u \bigl(\ell_{1} \bigl(2 \ell_{1} + 9 \ell_{2}\bigr) + \ell_{2}^{2} + u \bigl(6 \ell_{1} + 8 \ell_{2} + 4 u\bigr)\bigr)\bigr)\bigr) + t \bigl(t \bigl(t \bigl(4 t u + u \bigl(6 \ell_{1} + 3 \ell_{2}
\\[-1pt]
&\quad{}+ 10 u\bigr)\bigr) + u \bigl(\ell_{1} \bigl(2 \ell_{1} + 5 \ell_{2}\bigr) + u \bigl(10 \ell_{1} + 3 \ell_{2} + 8 u\bigr)\bigr)\bigr) + u \bigl(\ell_{1} \bigl(2 \ell_{1} \ell_{2} + \ell_{2}^{2}\bigr) + u \bigl(\ell_{1} \bigl(2 \ell_{1} + 3 \ell_{2}\bigr) + u \bigl(4 \ell_{1}
\\[-1pt]
&\quad{}+ 2 u\bigr)\bigr)\bigr)\bigr)
\end{aligned}
\]

\paragraph{Four-witness clipping identities.}
For $\varepsilon_1,\varepsilon_2\in\{L,R\}$ and
$j\in\{0,1,2\}$, let
$\mathcal E^{\mathrm I}_{\varepsilon_1,\varepsilon_2,j}$ and
$\mathcal E^{\mathrm M}_{\varepsilon_1,\varepsilon_2,j}$ denote the
right-hand side of the clipping identity in Regimes~$\mathrm I$ and
$\mathrm M$, respectively. In Regime~$\mathrm I$,
$e_1=t+u$ and $e_2=t+u+v$. In Regime~$\mathrm M$,
$e_1=e$, $t=e+u$, and $e_2=e+u+v$.

\noindent\textit{Geometry $(\varepsilon_1,\varepsilon_2,j)=(L,L,0)$.}

\noindent Regime $\mathrm I$:

\[
\begin{aligned}
&\mathcal{E}^{\mathrm{I}}_{L,L,0}=\alpha \bigl(\alpha \bigl(\beta t \bigl(4 r_{1} + 2 r_{2} + 4 u + 4 v\bigr) + t \bigl(r_{1} \bigl(4 r_{1} + 4 r_{2}\bigr) + t \bigl(4 r_{1} + 4 u + 4 v\bigr) + u \bigl(8 r_{1} + 2 r_{2} + 4 u + 8 v\bigr)
\\[-1pt]
&\quad{}+ v \bigl(8 r_{1} + 3 r_{2} + 4 v\bigr)\bigr)\bigr) + \beta \bigl(\beta t \bigl(4 r_{1} + 2 r_{2} + 4 u + 4 v\bigr) + t \bigl(r_{1} \bigl(4 r_{1} + 4 r_{2}\bigr) + t \bigl(16 r_{1} + 6 r_{2} + 16 u + 18 v\bigr)
\\[-1pt]
&\quad{}+ u \bigl(16 r_{1} + 6 r_{2} + 12 u + 22 v\bigr) + v \bigl(12 r_{1} + 6 r_{2} + 8 v\bigr)\bigr)\bigr) + t \bigl(t \bigl(r_{1} \bigl(12 r_{1} + 10 r_{2}\bigr) + t \bigl(12 r_{1} + 12 u + 12 v\bigr)
\\[-1pt]
&\quad{}+ u \bigl(32 r_{1} + 4 r_{2} + 20 u + 34 v\bigr) + v \bigl(26 r_{1} + 7 r_{2} + 14 v\bigr)\bigr) + u \bigl(r_{1} \bigl(8 r_{1} + 6 r_{2}\bigr) + u \bigl(16 r_{1} + 2 r_{2} + 8 u + 18 v\bigr)
\\[-1pt]
&\quad{}+ v \bigl(22 r_{1} + 6 r_{2} + 14 v\bigr)\bigr) + v \bigl(r_{1} \bigl(4 r_{1} + 5 r_{2}\bigr) + v \bigl(8 r_{1} + 4 r_{2} + 4 v\bigr)\bigr)\bigr)\bigr) + \beta \bigl(\beta t \bigl(t \bigl(4 r_{1} + 2 r_{2} + 4 u + 6 v\bigr)
\\[-1pt]
&\quad{}+ u \bigl(4 r_{1} + 2 r_{2} + 4 u + 8 v\bigr) + v \bigl(2 r_{1} + 2 r_{2} + 2 v\bigr)\bigr) + t \bigl(t \bigl(r_{1} \bigl(4 r_{1} + 4 r_{2}\bigr) + t \bigl(12 r_{1} + 4 r_{2} + 12 u + 16 v\bigr)
\\[-1pt]
&\quad{}+ u \bigl(24 r_{1} + 8 r_{2} + 20 u + 38 v\bigr) + v \bigl(16 r_{1} + 9 r_{2} + 12 v\bigr)\bigr) + u \bigl(r_{1} \bigl(4 r_{1} + 4 r_{2}\bigr) + u \bigl(12 r_{1} + 4 r_{2} + 8 u + 20 v\bigr)
\\[-1pt]
&\quad{}+ v \bigl(16 r_{1} + 8 r_{2} + 12 v\bigr)\bigr) + v \bigl(r_{1} \bigl(2 r_{1} + 3 r_{2}\bigr) + v \bigl(4 r_{1} + 3 r_{2} + 2 v\bigr)\bigr)\bigr)\bigr) + t \bigl(t \bigl(t \bigl(r_{1} \bigl(8 r_{1} + 6 r_{2}\bigr) + t \bigl(8 r_{1} + 8 u
\\[-1pt]
&\quad{}+ 8 v\bigr) + u \bigl(28 r_{1} + 2 r_{2} + 20 u + 30 v\bigr) + v \bigl(18 r_{1} + 4 r_{2} + 10 v\bigr)\bigr) + u \bigl(r_{1} \bigl(12 r_{1} + 8 r_{2}\bigr) + u \bigl(28 r_{1} + 2 r_{2} + 16 u + 30 v\bigr)
\\[-1pt]
&\quad{}+ v \bigl(30 r_{1} + 5 r_{2} + 18 v\bigr)\bigr) + v \bigl(r_{1} \bigl(4 r_{1} + 5 r_{2}\bigr) + v \bigl(8 r_{1} + 4 r_{2} + 4 v\bigr)\bigr)\bigr) + u \bigl(u \bigl(r_{1} \bigl(4 r_{1} + 2 r_{2}\bigr) + u \bigl(8 r_{1} + 4 u + 8 v\bigr)
\\[-1pt]
&\quad{}+ v \bigl(10 r_{1} + 6 v\bigr)\bigr) + v \bigl(r_{1} \bigl(2 r_{1} + 2 r_{2}\bigr) + v \bigl(4 r_{1} + r_{2} + 2 v\bigr)\bigr)\bigr)\bigr)
\end{aligned}
\]

\noindent Regime $\mathrm M$:

\[
\begin{aligned}
&\mathcal{E}^{\mathrm{M}}_{L,L,0}=\alpha \bigl(\alpha \bigl(\beta \bigl(e \bigl(4 r_{1} + 2 r_{2} + 4 v\bigr) + u \bigl(2 r_{1} + r_{2} + 2 v\bigr)\bigr) + e \bigl(e \bigl(4 r_{1} + 4 v\bigr) + r_{1} \bigl(4 r_{1} + 4 r_{2}\bigr) + u \bigl(6 r_{1} + 6 v\bigr)
\\[-1pt]
&\quad{}+ v \bigl(8 r_{1} + 3 r_{2} + 4 v\bigr)\bigr) + u \bigl(r_{1} \bigl(2 r_{1} + 2 r_{2}\bigr) + u \bigl(2 r_{1} + 2 v\bigr) + v \bigl(4 r_{1} + r_{2} + 2 v\bigr)\bigr)\bigr) + \beta \bigl(\beta \bigl(e \bigl(4 r_{1} + 2 r_{2} + 4 v\bigr)
\\[-1pt]
&\quad{}+ u \bigl(2 r_{1} + r_{2} + 2 v\bigr)\bigr) + e \bigl(e \bigl(16 r_{1} + 6 r_{2} + 2 u + 18 v\bigr) + r_{1} \bigl(4 r_{1} + 4 r_{2}\bigr) + u \bigl(20 r_{1} + 8 r_{2} + 2 u + 22 v\bigr) + v \bigl(12 r_{1}
\\[-1pt]
&\quad{}+ 6 r_{2} + 8 v\bigr)\bigr) + u \bigl(r_{1} \bigl(2 r_{1} + 2 r_{2}\bigr) + u \bigl(6 r_{1} + 2 r_{2} + 6 v\bigr) + v \bigl(6 r_{1} + 2 r_{2} + 4 v\bigr)\bigr)\bigr) + e \bigl(e \bigl(e \bigl(12 r_{1} + 12 v\bigr)
\\[-1pt]
&\quad{}+ r_{1} \bigl(12 r_{1} + 10 r_{2}\bigr) + u \bigl(24 r_{1} + 24 v\bigr) + v \bigl(26 r_{1} + 7 r_{2} + 14 v\bigr)\bigr) + u \bigl(r_{1} \bigl(14 r_{1} + 13 r_{2}\bigr) + u \bigl(16 r_{1} + 16 v\bigr) + v \bigl(32 r_{1}
\\[-1pt]
&\quad{}+ 8 r_{2} + 18 v\bigr)\bigr) + v \bigl(r_{1} \bigl(4 r_{1} + 5 r_{2}\bigr) + v \bigl(8 r_{1} + 4 r_{2} + 4 v\bigr)\bigr)\bigr) + u \bigl(u \bigl(r_{1} \bigl(4 r_{1} + 4 r_{2}\bigr) + u \bigl(4 r_{1} + 4 v\bigr) + v \bigl(10 r_{1}
\\[-1pt]
&\quad{}+ 2 r_{2} + 6 v\bigr)\bigr) + v \bigl(r_{1} \bigl(2 r_{1} + 2 r_{2}\bigr) + v \bigl(4 r_{1} + r_{2} + 2 v\bigr)\bigr)\bigr)\bigr) + \beta \bigl(\beta e \bigl(e \bigl(4 r_{1} + 2 r_{2} + 2 u + 6 v\bigr) + u \bigl(4 r_{1} + 3 r_{2}
\\[-1pt]
&\quad{}+ 2 u + 6 v\bigr) + v \bigl(2 r_{1} + 2 r_{2} + 2 v\bigr)\bigr) + e \bigl(e \bigl(e \bigl(12 r_{1} + 4 r_{2} + 4 u + 16 v\bigr) + r_{1} \bigl(4 r_{1} + 4 r_{2}\bigr) + u \bigl(22 r_{1} + 10 r_{2} + 6 u
\\[-1pt]
&\quad{}+ 28 v\bigr) + v \bigl(16 r_{1} + 9 r_{2} + 12 v\bigr)\bigr) + u \bigl(r_{1} \bigl(4 r_{1} + 5 r_{2}\bigr) + u \bigl(10 r_{1} + 5 r_{2} + 2 u + 12 v\bigr) + v \bigl(14 r_{1} + 8 r_{2} + 10 v\bigr)\bigr)
\\[-1pt]
&\quad{}+ v \bigl(r_{1} \bigl(2 r_{1} + 3 r_{2}\bigr) + v \bigl(4 r_{1} + 3 r_{2} + 2 v\bigr)\bigr)\bigr)\bigr) + e \bigl(e \bigl(e \bigl(e \bigl(8 r_{1} + 8 v\bigr) + r_{1} \bigl(8 r_{1} + 6 r_{2}\bigr) + u \bigl(18 r_{1} + 18 v\bigr)
\\[-1pt]
&\quad{}+ v \bigl(18 r_{1} + 4 r_{2} + 10 v\bigr)\bigr) + u \bigl(r_{1} \bigl(12 r_{1} + 11 r_{2}\bigr) + u \bigl(14 r_{1} + 14 v\bigr) + v \bigl(28 r_{1} + 7 r_{2} + 16 v\bigr)\bigr) + v \bigl(r_{1} \bigl(4 r_{1} + 5 r_{2}\bigr)
\\[-1pt]
&\quad{}+ v \bigl(8 r_{1} + 4 r_{2} + 4 v\bigr)\bigr)\bigr) + u \bigl(u \bigl(r_{1} \bigl(4 r_{1} + 4 r_{2}\bigr) + u \bigl(4 r_{1} + 4 v\bigr) + v \bigl(10 r_{1} + 2 r_{2} + 6 v\bigr)\bigr) + v \bigl(r_{1} \bigl(2 r_{1} + 2 r_{2}\bigr)
\\[-1pt]
&\quad{}+ v \bigl(4 r_{1} + r_{2} + 2 v\bigr)\bigr)\bigr)\bigr)
\end{aligned}
\]

\noindent\textit{Geometry $(\varepsilon_1,\varepsilon_2,j)=(L,L,1)$.}

\noindent Regime $\mathrm I$:

\[
\begin{aligned}
&\mathcal{E}^{\mathrm{I}}_{L,L,1}=\alpha \bigl(\alpha \bigl(\beta t \bigl(2 \ell_{1} + 2 v\bigr) + t \bigl(4 \ell_{1} r_{1} + u \bigl(2 \ell_{1} + 2 v\bigr) + v \bigl(3 \ell_{1} + 4 r_{1} + 4 v\bigr)\bigr)\bigr) + \beta \bigl(\beta t \bigl(2 \ell_{1} + 2 v\bigr) + t \bigl(4 \ell_{1} r_{1}
\\[-1pt]
&\quad{}+ t \bigl(6 \ell_{1} + 6 v\bigr) + u \bigl(6 \ell_{1} + 6 v\bigr) + v \bigl(6 \ell_{1} + 4 r_{1} + 6 v\bigr)\bigr)\bigr) + t \bigl(t \bigl(10 \ell_{1} r_{1} + u \bigl(4 \ell_{1} + 4 v\bigr) + v \bigl(7 \ell_{1} + 10 r_{1} + 10 v\bigr)\bigr)
\\[-1pt]
&\quad{}+ u \bigl(6 \ell_{1} r_{1} + u \bigl(2 \ell_{1} + 2 v\bigr) + v \bigl(6 \ell_{1} + 6 r_{1} + 8 v\bigr)\bigr) + v \bigl(5 \ell_{1} r_{1} + v \bigl(4 \ell_{1} + 4 r_{1} + 4 v\bigr)\bigr)\bigr)\bigr) + \beta \bigl(\beta t \bigl(t \bigl(2 \ell_{1} + 2 v\bigr)
\\[-1pt]
&\quad{}+ u \bigl(2 \ell_{1} + 2 v\bigr) + v \bigl(2 \ell_{1} + 2 v\bigr)\bigr) + t \bigl(t \bigl(4 \ell_{1} r_{1} + t \bigl(4 \ell_{1} + 4 v\bigr) + u \bigl(8 \ell_{1} + 8 v\bigr) + v \bigl(9 \ell_{1} + 4 r_{1} + 8 v\bigr)\bigr) + u \bigl(4 \ell_{1} r_{1}
\\[-1pt]
&\quad{}+ u \bigl(4 \ell_{1} + 4 v\bigr) + v \bigl(8 \ell_{1} + 4 r_{1} + 6 v\bigr)\bigr) + v \bigl(3 \ell_{1} r_{1} + v \bigl(3 \ell_{1} + 2 r_{1} + 2 v\bigr)\bigr)\bigr)\bigr) + t \bigl(t \bigl(t \bigl(6 \ell_{1} r_{1} + u \bigl(2 \ell_{1} + 2 v\bigr)
\\[-1pt]
&\quad{}+ v \bigl(4 \ell_{1} + 6 r_{1} + 6 v\bigr)\bigr) + u \bigl(8 \ell_{1} r_{1} + u \bigl(2 \ell_{1} + 2 v\bigr) + v \bigl(5 \ell_{1} + 8 r_{1} + 8 v\bigr)\bigr) + v \bigl(5 \ell_{1} r_{1} + v \bigl(4 \ell_{1} + 4 r_{1} + 4 v\bigr)\bigr)\bigr)
\\[-1pt]
&\quad{}+ u \bigl(u \bigl(2 \ell_{1} r_{1} + v \bigl(2 r_{1} + 2 v\bigr)\bigr) + v \bigl(2 \ell_{1} r_{1} + v \bigl(\ell_{1} + 2 r_{1} + 2 v\bigr)\bigr)\bigr)\bigr)
\end{aligned}
\]

\noindent Regime $\mathrm M$:

\[
\begin{aligned}
&\mathcal{E}^{\mathrm{M}}_{L,L,1}=\alpha \bigl(\alpha \bigl(\beta \bigl(e \bigl(2 \ell_{1} + 2 u + 2 v\bigr) + \ell_{1} u\bigr) + e \bigl(4 \ell_{1} r_{1} + u \bigl(4 r_{1} + 4 v\bigr) + v \bigl(3 \ell_{1} + 4 r_{1} + 4 v\bigr)\bigr) + u \bigl(2 \ell_{1} r_{1}
\\[-1pt]
&\quad{}+ u \bigl(2 r_{1} + 2 v\bigr) + v \bigl(\ell_{1} + 2 r_{1} + 2 v\bigr)\bigr)\bigr) + \beta \bigl(\beta \bigl(e \bigl(2 \ell_{1} + 2 u + 2 v\bigr) + \ell_{1} u\bigr) + e \bigl(e \bigl(6 \ell_{1} + 6 u + 6 v\bigr) + 4 \ell_{1} r_{1}
\\[-1pt]
&\quad{}+ u \bigl(8 \ell_{1} + 4 r_{1} + 4 u + 10 v\bigr) + v \bigl(6 \ell_{1} + 4 r_{1} + 6 v\bigr)\bigr) + u \bigl(2 \ell_{1} r_{1} + u \bigl(2 \ell_{1} + 2 r_{1} + 2 v\bigr) + v \bigl(2 \ell_{1} + 2 r_{1} + 2 v\bigr)\bigr)\bigr)
\\[-1pt]
&\quad{}+ e \bigl(e \bigl(10 \ell_{1} r_{1} + u \bigl(10 r_{1} + 10 v\bigr) + v \bigl(7 \ell_{1} + 10 r_{1} + 10 v\bigr)\bigr) + u \bigl(13 \ell_{1} r_{1} + u \bigl(12 r_{1} + 12 v\bigr) + v \bigl(8 \ell_{1} + 16 r_{1} + 16 v\bigr)\bigr)
\\[-1pt]
&\quad{}+ v \bigl(5 \ell_{1} r_{1} + v \bigl(4 \ell_{1} + 4 r_{1} + 4 v\bigr)\bigr)\bigr) + u \bigl(u \bigl(4 \ell_{1} r_{1} + u \bigl(4 r_{1} + 4 v\bigr) + v \bigl(2 \ell_{1} + 6 r_{1} + 6 v\bigr)\bigr) + v \bigl(2 \ell_{1} r_{1} + v \bigl(\ell_{1}
\\[-1pt]
&\quad{}+ 2 r_{1} + 2 v\bigr)\bigr)\bigr)\bigr) + \beta \bigl(\beta e \bigl(e \bigl(2 \ell_{1} + 2 u + 2 v\bigr) + u \bigl(3 \ell_{1} + 2 u + 4 v\bigr) + v \bigl(2 \ell_{1} + 2 v\bigr)\bigr) + e \bigl(e \bigl(e \bigl(4 \ell_{1} + 4 u + 4 v\bigr)
\\[-1pt]
&\quad{}+ 4 \ell_{1} r_{1} + u \bigl(10 \ell_{1} + 4 r_{1} + 6 u + 14 v\bigr) + v \bigl(9 \ell_{1} + 4 r_{1} + 8 v\bigr)\bigr) + u \bigl(5 \ell_{1} r_{1} + u \bigl(5 \ell_{1} + 4 r_{1} + 2 u + 8 v\bigr) + v \bigl(8 \ell_{1}
\\[-1pt]
&\quad{}+ 6 r_{1} + 8 v\bigr)\bigr) + v \bigl(3 \ell_{1} r_{1} + v \bigl(3 \ell_{1} + 2 r_{1} + 2 v\bigr)\bigr)\bigr)\bigr) + e \bigl(e \bigl(e \bigl(6 \ell_{1} r_{1} + u \bigl(6 r_{1} + 6 v\bigr) + v \bigl(4 \ell_{1} + 6 r_{1} + 6 v\bigr)\bigr)
\\[-1pt]
&\quad{}+ u \bigl(11 \ell_{1} r_{1} + u \bigl(10 r_{1} + 10 v\bigr) + v \bigl(7 \ell_{1} + 14 r_{1} + 14 v\bigr)\bigr) + v \bigl(5 \ell_{1} r_{1} + v \bigl(4 \ell_{1} + 4 r_{1} + 4 v\bigr)\bigr)\bigr) + u \bigl(u \bigl(4 \ell_{1} r_{1}
\\[-1pt]
&\quad{}+ u \bigl(4 r_{1} + 4 v\bigr) + v \bigl(2 \ell_{1} + 6 r_{1} + 6 v\bigr)\bigr) + v \bigl(2 \ell_{1} r_{1} + v \bigl(\ell_{1} + 2 r_{1} + 2 v\bigr)\bigr)\bigr)\bigr)
\end{aligned}
\]

\noindent\textit{Geometry $(\varepsilon_1,\varepsilon_2,j)=(L,L,2)$.}

\noindent Regime $\mathrm I$:

\[
\begin{aligned}
&\mathcal{E}^{\mathrm{I}}_{L,L,2}=\alpha \bigl(\alpha t \bigl(\ell_{1} \bigl(4 \ell_{1} + 4 \ell_{2}\bigr) + v \bigl(4 \ell_{1} + 4 \ell_{2}\bigr)\bigr) + \beta t \bigl(\ell_{1} \bigl(4 \ell_{1} + 4 \ell_{2}\bigr) + v \bigl(4 \ell_{1} + 4 \ell_{2}\bigr)\bigr) + t \bigl(t \bigl(\ell_{1} \bigl(12 \ell_{1} + 10 \ell_{2}\bigr)
\\[-1pt]
&\quad{}+ v \bigl(10 \ell_{1} + 10 \ell_{2}\bigr)\bigr) + u \bigl(\ell_{1} \bigl(8 \ell_{1} + 6 \ell_{2}\bigr) + v \bigl(6 \ell_{1} + 6 \ell_{2}\bigr)\bigr) + v \bigl(\ell_{1} \bigl(4 \ell_{1} + 5 \ell_{2}\bigr) + v \bigl(4 \ell_{1} + 4 \ell_{2}\bigr)\bigr)\bigr)\bigr)
\\[-1pt]
&\quad{}+ \beta t \bigl(t \bigl(\ell_{1} \bigl(4 \ell_{1} + 4 \ell_{2}\bigr) + v \bigl(4 \ell_{1} + 4 \ell_{2}\bigr)\bigr) + u \bigl(\ell_{1} \bigl(4 \ell_{1} + 4 \ell_{2}\bigr) + v \bigl(4 \ell_{1} + 4 \ell_{2}\bigr)\bigr) + v \bigl(\ell_{1} \bigl(2 \ell_{1} + 3 \ell_{2}\bigr) + v \bigl(2 \ell_{1} + 2 \ell_{2}\bigr)\bigr)\bigr)
\\[-1pt]
&\quad{}+ t \bigl(t \bigl(t \bigl(\ell_{1} \bigl(8 \ell_{1} + 6 \ell_{2}\bigr) + v \bigl(6 \ell_{1} + 6 \ell_{2}\bigr)\bigr) + u \bigl(\ell_{1} \bigl(12 \ell_{1} + 8 \ell_{2}\bigr) + v \bigl(8 \ell_{1} + 8 \ell_{2}\bigr)\bigr) + v \bigl(\ell_{1} \bigl(4 \ell_{1} + 5 \ell_{2}\bigr) + v \bigl(4 \ell_{1} + 4 \ell_{2}\bigr)\bigr)\bigr)
\\[-1pt]
&\quad{}+ u \bigl(u \bigl(\ell_{1} \bigl(4 \ell_{1} + 2 \ell_{2}\bigr) + v \bigl(2 \ell_{1} + 2 \ell_{2}\bigr)\bigr) + v \bigl(\ell_{1} \bigl(2 \ell_{1} + 2 \ell_{2}\bigr) + v \bigl(2 \ell_{1} + 2 \ell_{2}\bigr)\bigr)\bigr)\bigr)
\end{aligned}
\]

\noindent Regime $\mathrm M$:

\[
\begin{aligned}
&\mathcal{E}^{\mathrm{M}}_{L,L,2}=\alpha \bigl(\alpha \bigl(e \bigl(\ell_{1} \bigl(4 \ell_{1} + 4 \ell_{2}\bigr) + u \bigl(4 \ell_{1} + 4 \ell_{2}\bigr) + v \bigl(4 \ell_{1} + 4 \ell_{2}\bigr)\bigr) + u \bigl(\ell_{1} \bigl(2 \ell_{1} + 2 \ell_{2}\bigr) + u \bigl(2 \ell_{1} + 2 \ell_{2}\bigr) + v \bigl(2 \ell_{1} + 2 \ell_{2}\bigr)\bigr)\bigr)
\\[-1pt]
&\quad{}+ \beta \bigl(e \bigl(\ell_{1} \bigl(4 \ell_{1} + 4 \ell_{2}\bigr) + u \bigl(4 \ell_{1} + 4 \ell_{2}\bigr) + v \bigl(4 \ell_{1} + 4 \ell_{2}\bigr)\bigr) + u \bigl(\ell_{1} \bigl(2 \ell_{1} + 2 \ell_{2}\bigr) + u \bigl(2 \ell_{1} + 2 \ell_{2}\bigr) + v \bigl(2 \ell_{1} + 2 \ell_{2}\bigr)\bigr)\bigr)
\\[-1pt]
&\quad{}+ e \bigl(e \bigl(\ell_{1} \bigl(12 \ell_{1} + 10 \ell_{2}\bigr) + u \bigl(10 \ell_{1} + 10 \ell_{2}\bigr) + v \bigl(10 \ell_{1} + 10 \ell_{2}\bigr)\bigr) + u \bigl(\ell_{1} \bigl(14 \ell_{1} + 13 \ell_{2}\bigr) + u \bigl(12 \ell_{1} + 12 \ell_{2}\bigr) + v \bigl(16 \ell_{1}
\\[-1pt]
&\quad{}+ 16 \ell_{2}\bigr)\bigr) + v \bigl(\ell_{1} \bigl(4 \ell_{1} + 5 \ell_{2}\bigr) + v \bigl(4 \ell_{1} + 4 \ell_{2}\bigr)\bigr)\bigr) + u \bigl(u \bigl(\ell_{1} \bigl(4 \ell_{1} + 4 \ell_{2}\bigr) + u \bigl(4 \ell_{1} + 4 \ell_{2}\bigr) + v \bigl(6 \ell_{1} + 6 \ell_{2}\bigr)\bigr)
\\[-1pt]
&\quad{}+ v \bigl(\ell_{1} \bigl(2 \ell_{1} + 2 \ell_{2}\bigr) + v \bigl(2 \ell_{1} + 2 \ell_{2}\bigr)\bigr)\bigr)\bigr) + \beta e \bigl(e \bigl(\ell_{1} \bigl(4 \ell_{1} + 4 \ell_{2}\bigr) + u \bigl(4 \ell_{1} + 4 \ell_{2}\bigr) + v \bigl(4 \ell_{1} + 4 \ell_{2}\bigr)\bigr) + u \bigl(\ell_{1} \bigl(4 \ell_{1}
\\[-1pt]
&\quad{}+ 5 \ell_{2}\bigr) + u \bigl(4 \ell_{1} + 4 \ell_{2}\bigr) + v \bigl(6 \ell_{1} + 6 \ell_{2}\bigr)\bigr) + v \bigl(\ell_{1} \bigl(2 \ell_{1} + 3 \ell_{2}\bigr) + v \bigl(2 \ell_{1} + 2 \ell_{2}\bigr)\bigr)\bigr) + e \bigl(e \bigl(e \bigl(\ell_{1} \bigl(8 \ell_{1} + 6 \ell_{2}\bigr)
\\[-1pt]
&\quad{}+ u \bigl(6 \ell_{1} + 6 \ell_{2}\bigr) + v \bigl(6 \ell_{1} + 6 \ell_{2}\bigr)\bigr) + u \bigl(\ell_{1} \bigl(12 \ell_{1} + 11 \ell_{2}\bigr) + u \bigl(10 \ell_{1} + 10 \ell_{2}\bigr) + v \bigl(14 \ell_{1} + 14 \ell_{2}\bigr)\bigr) + v \bigl(\ell_{1} \bigl(4 \ell_{1} + 5 \ell_{2}\bigr)
\\[-1pt]
&\quad{}+ v \bigl(4 \ell_{1} + 4 \ell_{2}\bigr)\bigr)\bigr) + u \bigl(u \bigl(\ell_{1} \bigl(4 \ell_{1} + 4 \ell_{2}\bigr) + u \bigl(4 \ell_{1} + 4 \ell_{2}\bigr) + v \bigl(6 \ell_{1} + 6 \ell_{2}\bigr)\bigr) + v \bigl(\ell_{1} \bigl(2 \ell_{1} + 2 \ell_{2}\bigr) + v \bigl(2 \ell_{1}
\\[-1pt]
&\quad{}+ 2 \ell_{2}\bigr)\bigr)\bigr)\bigr)
\end{aligned}
\]

\noindent\textit{Geometry $(\varepsilon_1,\varepsilon_2,j)=(L,R,0)$.}

\noindent Regime $\mathrm I$:

\[
\begin{aligned}
&\mathcal{E}^{\mathrm{I}}_{L,R,0}=\alpha \bigl(\beta t \bigl(r_{1} \bigl(2 r_{1} + 3 r_{2}\bigr) + t \bigl(2 r_{1} + 2 r_{2}\bigr) + u \bigl(6 r_{1} + 6 r_{2}\bigr) + v \bigl(4 r_{1} + 4 r_{2}\bigr)\bigr) + t \bigl(t \bigl(u \bigl(2 r_{1} + 2 r_{2}\bigr) + v \bigl(2 r_{1}
\\[-1pt]
&\quad{}+ 2 r_{2}\bigr)\bigr) + u \bigl(r_{1} \bigl(2 r_{1} + 3 r_{2}\bigr) + u \bigl(2 r_{1} + 2 r_{2}\bigr) + v \bigl(6 r_{1} + 6 r_{2}\bigr)\bigr) + v \bigl(r_{1} \bigl(4 r_{1} + 5 r_{2}\bigr) + v \bigl(4 r_{1} + 4 r_{2}\bigr)\bigr)\bigr)\bigr)
\\[-1pt]
&\quad{}+ \beta \bigl(\beta t \bigl(r_{1} \bigl(2 r_{1} + 2 r_{2}\bigr) + t \bigl(2 r_{1} + 2 r_{2}\bigr) + u \bigl(4 r_{1} + 4 r_{2}\bigr) + v \bigl(2 r_{1} + 2 r_{2}\bigr)\bigr) + t \bigl(t \bigl(r_{1} \bigl(4 r_{1} + 6 r_{2}\bigr) + t \bigl(4 r_{1} + 4 r_{2}\bigr)
\\[-1pt]
&\quad{}+ u \bigl(14 r_{1} + 14 r_{2}\bigr) + v \bigl(8 r_{1} + 8 r_{2}\bigr)\bigr) + u \bigl(r_{1} \bigl(2 r_{1} + 5 r_{2}\bigr) + u \bigl(8 r_{1} + 8 r_{2}\bigr) + v \bigl(8 r_{1} + 8 r_{2}\bigr)\bigr) + v \bigl(r_{1} \bigl(2 r_{1} + 3 r_{2}\bigr)
\\[-1pt]
&\quad{}+ v \bigl(2 r_{1} + 2 r_{2}\bigr)\bigr)\bigr)\bigr) + t \bigl(t \bigl(t \bigl(u \bigl(2 r_{1} + 2 r_{2}\bigr) + v \bigl(2 r_{1} + 2 r_{2}\bigr)\bigr) + u \bigl(r_{1} r_{2} + u \bigl(2 r_{1} + 2 r_{2}\bigr) + v \bigl(6 r_{1} + 6 r_{2}\bigr)\bigr)
\\[-1pt]
&\quad{}+ v \bigl(r_{1} \bigl(4 r_{1} + 5 r_{2}\bigr) + v \bigl(4 r_{1} + 4 r_{2}\bigr)\bigr)\bigr) + u \bigl(u v \bigl(2 r_{1} + 2 r_{2}\bigr) + v \bigl(r_{1} \bigl(2 r_{1} + 2 r_{2}\bigr) + v \bigl(2 r_{1} + 2 r_{2}\bigr)\bigr)\bigr)\bigr)
\end{aligned}
\]

\noindent Regime $\mathrm M$:

\[
\begin{aligned}
&\mathcal{E}^{\mathrm{M}}_{L,R,0}=\alpha \bigl(\beta \bigl(e \bigl(e \bigl(2 r_{1} + 2 r_{2}\bigr) + r_{1} \bigl(2 r_{1} + 3 r_{2}\bigr) + u \bigl(2 r_{1} + 2 r_{2}\bigr) + v \bigl(4 r_{1} + 4 r_{2}\bigr)\bigr) + u \bigl(r_{1} r_{2} + v \bigl(2 r_{1} + 2 r_{2}\bigr)\bigr)\bigr)
\\[-1pt]
&\quad{}+ e \bigl(e v \bigl(2 r_{1} + 2 r_{2}\bigr) + u v \bigl(4 r_{1} + 4 r_{2}\bigr) + v \bigl(r_{1} \bigl(4 r_{1} + 5 r_{2}\bigr) + v \bigl(4 r_{1} + 4 r_{2}\bigr)\bigr)\bigr) + u \bigl(u v \bigl(2 r_{1} + 2 r_{2}\bigr) + v \bigl(r_{1} \bigl(2 r_{1}
\\[-1pt]
&\quad{}+ 2 r_{2}\bigr) + v \bigl(2 r_{1} + 2 r_{2}\bigr)\bigr)\bigr)\bigr) + \beta \bigl(\beta e \bigl(e \bigl(2 r_{1} + 2 r_{2}\bigr) + r_{1} \bigl(2 r_{1} + 2 r_{2}\bigr) + u \bigl(2 r_{1} + 2 r_{2}\bigr) + v \bigl(2 r_{1} + 2 r_{2}\bigr)\bigr)
\\[-1pt]
&\quad{}+ e \bigl(e \bigl(e \bigl(4 r_{1} + 4 r_{2}\bigr) + r_{1} \bigl(4 r_{1} + 6 r_{2}\bigr) + u \bigl(6 r_{1} + 6 r_{2}\bigr) + v \bigl(8 r_{1} + 8 r_{2}\bigr)\bigr) + u \bigl(r_{1} \bigl(2 r_{1} + 4 r_{2}\bigr) + u \bigl(2 r_{1} + 2 r_{2}\bigr)
\\[-1pt]
&\quad{}+ v \bigl(6 r_{1} + 6 r_{2}\bigr)\bigr) + v \bigl(r_{1} \bigl(2 r_{1} + 3 r_{2}\bigr) + v \bigl(2 r_{1} + 2 r_{2}\bigr)\bigr)\bigr)\bigr) + e \bigl(e \bigl(e v \bigl(2 r_{1} + 2 r_{2}\bigr) + u v \bigl(4 r_{1} + 4 r_{2}\bigr) + v \bigl(r_{1} \bigl(4 r_{1}
\\[-1pt]
&\quad{}+ 5 r_{2}\bigr) + v \bigl(4 r_{1} + 4 r_{2}\bigr)\bigr)\bigr) + u \bigl(u v \bigl(2 r_{1} + 2 r_{2}\bigr) + v \bigl(r_{1} \bigl(2 r_{1} + 2 r_{2}\bigr) + v \bigl(2 r_{1} + 2 r_{2}\bigr)\bigr)\bigr)\bigr)
\end{aligned}
\]

\noindent\textit{Geometry $(\varepsilon_1,\varepsilon_2,j)=(L,R,1)$.}

\noindent Regime $\mathrm I$:

\[
\begin{aligned}
&\mathcal{E}^{\mathrm{I}}_{L,R,1}=\alpha \bigl(\alpha \bigl(\beta t \bigl(2 r_{1} + 4 u + 4 v\bigr) + t \bigl(t \bigl(4 u + 4 v\bigr) + u \bigl(2 r_{1} + 4 u + 8 v\bigr) + v \bigl(3 r_{1} + 4 v\bigr)\bigr)\bigr) + \beta \bigl(\beta t \bigl(2 r_{1} + 4 u
\\[-1pt]
&\quad{}+ 4 v\bigr) + t \bigl(3 \ell_{1} r_{1} + t \bigl(2 \ell_{1} + 6 r_{1} + 16 u + 18 v\bigr) + u \bigl(6 \ell_{1} + 6 r_{1} + 12 u + 22 v\bigr) + v \bigl(4 \ell_{1} + 6 r_{1} + 8 v\bigr)\bigr)\bigr)
\\[-1pt]
&\quad{}+ t \bigl(t \bigl(t \bigl(12 u + 12 v\bigr) + u \bigl(2 \ell_{1} + 4 r_{1} + 20 u + 34 v\bigr) + v \bigl(2 \ell_{1} + 7 r_{1} + 14 v\bigr)\bigr) + u \bigl(3 \ell_{1} r_{1} + u \bigl(2 \ell_{1} + 2 r_{1} + 8 u
\\[-1pt]
&\quad{}+ 18 v\bigr) + v \bigl(6 \ell_{1} + 6 r_{1} + 14 v\bigr)\bigr) + v \bigl(5 \ell_{1} r_{1} + v \bigl(4 \ell_{1} + 4 r_{1} + 4 v\bigr)\bigr)\bigr)\bigr) + \beta \bigl(\beta t \bigl(2 \ell_{1} r_{1} + t \bigl(2 \ell_{1} + 2 r_{1} + 4 u + 6 v\bigr)
\\[-1pt]
&\quad{}+ u \bigl(4 \ell_{1} + 2 r_{1} + 4 u + 8 v\bigr) + v \bigl(2 \ell_{1} + 2 r_{1} + 2 v\bigr)\bigr) + t \bigl(t \bigl(6 \ell_{1} r_{1} + t \bigl(4 \ell_{1} + 4 r_{1} + 12 u + 16 v\bigr) + u \bigl(14 \ell_{1} + 8 r_{1}
\\[-1pt]
&\quad{}+ 20 u + 38 v\bigr) + v \bigl(8 \ell_{1} + 9 r_{1} + 12 v\bigr)\bigr) + u \bigl(5 \ell_{1} r_{1} + u \bigl(8 \ell_{1} + 4 r_{1} + 8 u + 20 v\bigr) + v \bigl(8 \ell_{1} + 8 r_{1} + 12 v\bigr)\bigr) + v \bigl(3 \ell_{1} r_{1}
\\[-1pt]
&\quad{}+ v \bigl(2 \ell_{1} + 3 r_{1} + 2 v\bigr)\bigr)\bigr)\bigr) + t \bigl(t \bigl(t \bigl(t \bigl(8 u + 8 v\bigr) + u \bigl(2 \ell_{1} + 2 r_{1} + 20 u + 30 v\bigr) + v \bigl(2 \ell_{1} + 4 r_{1} + 10 v\bigr)\bigr) + u \bigl(\ell_{1} r_{1}
\\[-1pt]
&\quad{}+ u \bigl(2 \ell_{1} + 2 r_{1} + 16 u + 30 v\bigr) + v \bigl(6 \ell_{1} + 5 r_{1} + 18 v\bigr)\bigr) + v \bigl(5 \ell_{1} r_{1} + v \bigl(4 \ell_{1} + 4 r_{1} + 4 v\bigr)\bigr)\bigr) + u \bigl(u \bigl(u \bigl(4 u + 8 v\bigr)
\\[-1pt]
&\quad{}+ v \bigl(2 \ell_{1} + 6 v\bigr)\bigr) + v \bigl(2 \ell_{1} r_{1} + v \bigl(2 \ell_{1} + r_{1} + 2 v\bigr)\bigr)\bigr)\bigr)
\end{aligned}
\]

\noindent Regime $\mathrm M$:

\[
\begin{aligned}
&\mathcal{E}^{\mathrm{M}}_{L,R,1}=\alpha \bigl(\alpha \bigl(\beta \bigl(e \bigl(2 r_{1} + 4 v\bigr) + u \bigl(r_{1} + 2 v\bigr)\bigr) + e \bigl(4 e v + 6 u v + v \bigl(3 r_{1} + 4 v\bigr)\bigr) + u \bigl(2 u v + v \bigl(r_{1} + 2 v\bigr)\bigr)\bigr)
\\[-1pt]
&\quad{}+ \beta \bigl(\beta \bigl(e \bigl(2 r_{1} + 4 v\bigr) + u \bigl(r_{1} + 2 v\bigr)\bigr) + e \bigl(e \bigl(2 \ell_{1} + 6 r_{1} + 2 u + 18 v\bigr) + 3 \ell_{1} r_{1} + u \bigl(2 \ell_{1} + 8 r_{1} + 2 u + 22 v\bigr)
\\[-1pt]
&\quad{}+ v \bigl(4 \ell_{1} + 6 r_{1} + 8 v\bigr)\bigr) + u \bigl(\ell_{1} r_{1} + u \bigl(2 r_{1} + 6 v\bigr) + v \bigl(2 \ell_{1} + 2 r_{1} + 4 v\bigr)\bigr)\bigr) + e \bigl(e \bigl(12 e v + 24 u v + v \bigl(2 \ell_{1} + 7 r_{1}
\\[-1pt]
&\quad{}+ 14 v\bigr)\bigr) + u \bigl(16 u v + v \bigl(4 \ell_{1} + 8 r_{1} + 18 v\bigr)\bigr) + v \bigl(5 \ell_{1} r_{1} + v \bigl(4 \ell_{1} + 4 r_{1} + 4 v\bigr)\bigr)\bigr) + u \bigl(u \bigl(4 u v + v \bigl(2 \ell_{1} + 2 r_{1} + 6 v\bigr)\bigr)
\\[-1pt]
&\quad{}+ v \bigl(2 \ell_{1} r_{1} + v \bigl(2 \ell_{1} + r_{1} + 2 v\bigr)\bigr)\bigr)\bigr) + \beta \bigl(\beta e \bigl(e \bigl(2 \ell_{1} + 2 r_{1} + 2 u + 6 v\bigr) + 2 \ell_{1} r_{1} + u \bigl(2 \ell_{1} + 3 r_{1} + 2 u + 6 v\bigr)
\\[-1pt]
&\quad{}+ v \bigl(2 \ell_{1} + 2 r_{1} + 2 v\bigr)\bigr) + e \bigl(e \bigl(e \bigl(4 \ell_{1} + 4 r_{1} + 4 u + 16 v\bigr) + 6 \ell_{1} r_{1} + u \bigl(6 \ell_{1} + 10 r_{1} + 6 u + 28 v\bigr) + v \bigl(8 \ell_{1} + 9 r_{1}
\\[-1pt]
&\quad{}+ 12 v\bigr)\bigr) + u \bigl(4 \ell_{1} r_{1} + u \bigl(2 \ell_{1} + 5 r_{1} + 2 u + 12 v\bigr) + v \bigl(6 \ell_{1} + 8 r_{1} + 10 v\bigr)\bigr) + v \bigl(3 \ell_{1} r_{1} + v \bigl(2 \ell_{1} + 3 r_{1} + 2 v\bigr)\bigr)\bigr)\bigr)
\\[-1pt]
&\quad{}+ e \bigl(e \bigl(e \bigl(8 e v + 18 u v + v \bigl(2 \ell_{1} + 4 r_{1} + 10 v\bigr)\bigr) + u \bigl(14 u v + v \bigl(4 \ell_{1} + 7 r_{1} + 16 v\bigr)\bigr) + v \bigl(5 \ell_{1} r_{1} + v \bigl(4 \ell_{1} + 4 r_{1} + 4 v\bigr)\bigr)\bigr)
\\[-1pt]
&\quad{}+ u \bigl(u \bigl(4 u v + v \bigl(2 \ell_{1} + 2 r_{1} + 6 v\bigr)\bigr) + v \bigl(2 \ell_{1} r_{1} + v \bigl(2 \ell_{1} + r_{1} + 2 v\bigr)\bigr)\bigr)\bigr)
\end{aligned}
\]

\noindent\textit{Geometry $(\varepsilon_1,\varepsilon_2,j)=(L,R,2)$.}

\noindent Regime $\mathrm I$:

\[
\begin{aligned}
&\mathcal{E}^{\mathrm{I}}_{L,R,2}=\alpha \bigl(\alpha \bigl(\beta t \bigl(2 \ell_{1} + 2 \ell_{2} + 2 v\bigr) + t \bigl(u \bigl(2 \ell_{1} + 2 \ell_{2} + 2 v\bigr) + v \bigl(4 \ell_{1} + 3 \ell_{2} + 4 v\bigr)\bigr)\bigr) + \beta \bigl(\beta t \bigl(2 \ell_{1} + 2 \ell_{2} + 2 v\bigr)
\\[-1pt]
&\quad{}+ t \bigl(\ell_{1} \bigl(2 \ell_{1} + 3 \ell_{2}\bigr) + t \bigl(6 \ell_{1} + 6 \ell_{2} + 6 v\bigr) + u \bigl(6 \ell_{1} + 6 \ell_{2} + 6 v\bigr) + v \bigl(8 \ell_{1} + 6 \ell_{2} + 6 v\bigr)\bigr)\bigr) + t \bigl(t \bigl(u \bigl(4 \ell_{1} + 4 \ell_{2}
\\[-1pt]
&\quad{}+ 4 v\bigr) + v \bigl(10 \ell_{1} + 7 \ell_{2} + 10 v\bigr)\bigr) + u \bigl(\ell_{1} \bigl(2 \ell_{1} + 3 \ell_{2}\bigr) + u \bigl(2 \ell_{1} + 2 \ell_{2} + 2 v\bigr) + v \bigl(10 \ell_{1} + 6 \ell_{2} + 8 v\bigr)\bigr) + v \bigl(\ell_{1} \bigl(4 \ell_{1}
\\[-1pt]
&\quad{}+ 5 \ell_{2}\bigr) + v \bigl(8 \ell_{1} + 4 \ell_{2} + 4 v\bigr)\bigr)\bigr)\bigr) + \beta \bigl(\beta t \bigl(\ell_{1} \bigl(2 \ell_{1} + 2 \ell_{2}\bigr) + t \bigl(2 \ell_{1} + 2 \ell_{2} + 2 v\bigr) + u \bigl(2 \ell_{1} + 2 \ell_{2} + 2 v\bigr) + v \bigl(4 \ell_{1}
\\[-1pt]
&\quad{}+ 2 \ell_{2} + 2 v\bigr)\bigr) + t \bigl(t \bigl(\ell_{1} \bigl(4 \ell_{1} + 6 \ell_{2}\bigr) + t \bigl(4 \ell_{1} + 4 \ell_{2} + 4 v\bigr) + u \bigl(8 \ell_{1} + 8 \ell_{2} + 8 v\bigr) + v \bigl(12 \ell_{1} + 9 \ell_{2} + 8 v\bigr)\bigr)
\\[-1pt]
&\quad{}+ u \bigl(\ell_{1} \bigl(2 \ell_{1} + 5 \ell_{2}\bigr) + u \bigl(4 \ell_{1} + 4 \ell_{2} + 4 v\bigr) + v \bigl(8 \ell_{1} + 8 \ell_{2} + 6 v\bigr)\bigr) + v \bigl(\ell_{1} \bigl(2 \ell_{1} + 3 \ell_{2}\bigr) + v \bigl(4 \ell_{1} + 3 \ell_{2} + 2 v\bigr)\bigr)\bigr)\bigr)
\\[-1pt]
&\quad{}+ t \bigl(t \bigl(t \bigl(u \bigl(2 \ell_{1} + 2 \ell_{2} + 2 v\bigr) + v \bigl(6 \ell_{1} + 4 \ell_{2} + 6 v\bigr)\bigr) + u \bigl(\ell_{1} \ell_{2} + u \bigl(2 \ell_{1} + 2 \ell_{2} + 2 v\bigr) + v \bigl(8 \ell_{1} + 5 \ell_{2} + 8 v\bigr)\bigr)
\\[-1pt]
&\quad{}+ v \bigl(\ell_{1} \bigl(4 \ell_{1} + 5 \ell_{2}\bigr) + v \bigl(8 \ell_{1} + 4 \ell_{2} + 4 v\bigr)\bigr)\bigr) + u \bigl(u v \bigl(2 \ell_{1} + 2 v\bigr) + v \bigl(\ell_{1} \bigl(2 \ell_{1} + 2 \ell_{2}\bigr) + v \bigl(4 \ell_{1} + \ell_{2} + 2 v\bigr)\bigr)\bigr)\bigr)
\end{aligned}
\]

\noindent Regime $\mathrm M$:

\[
\begin{aligned}
&\mathcal{E}^{\mathrm{M}}_{L,R,2}=\alpha \bigl(\alpha \bigl(\beta \bigl(e \bigl(2 \ell_{1} + 2 \ell_{2} + 2 u + 2 v\bigr) + \ell_{2} u\bigr) + e \bigl(4 u v + v \bigl(4 \ell_{1} + 3 \ell_{2} + 4 v\bigr)\bigr) + u \bigl(2 u v + v \bigl(2 \ell_{1} + \ell_{2} + 2 v\bigr)\bigr)\bigr)
\\[-1pt]
&\quad{}+ \beta \bigl(\beta \bigl(e \bigl(2 \ell_{1} + 2 \ell_{2} + 2 u + 2 v\bigr) + \ell_{2} u\bigr) + e \bigl(e \bigl(6 \ell_{1} + 6 \ell_{2} + 6 u + 6 v\bigr) + \ell_{1} \bigl(2 \ell_{1} + 3 \ell_{2}\bigr) + u \bigl(6 \ell_{1} + 8 \ell_{2} + 4 u
\\[-1pt]
&\quad{}+ 10 v\bigr) + v \bigl(8 \ell_{1} + 6 \ell_{2} + 6 v\bigr)\bigr) + u \bigl(\ell_{1} \ell_{2} + u \bigl(2 \ell_{2} + 2 v\bigr) + v \bigl(2 \ell_{1} + 2 \ell_{2} + 2 v\bigr)\bigr)\bigr) + e \bigl(e \bigl(10 u v + v \bigl(10 \ell_{1} + 7 \ell_{2}
\\[-1pt]
&\quad{}+ 10 v\bigr)\bigr) + u \bigl(12 u v + v \bigl(16 \ell_{1} + 8 \ell_{2} + 16 v\bigr)\bigr) + v \bigl(\ell_{1} \bigl(4 \ell_{1} + 5 \ell_{2}\bigr) + v \bigl(8 \ell_{1} + 4 \ell_{2} + 4 v\bigr)\bigr)\bigr) + u \bigl(u \bigl(4 u v + v \bigl(6 \ell_{1}
\\[-1pt]
&\quad{}+ 2 \ell_{2} + 6 v\bigr)\bigr) + v \bigl(\ell_{1} \bigl(2 \ell_{1} + 2 \ell_{2}\bigr) + v \bigl(4 \ell_{1} + \ell_{2} + 2 v\bigr)\bigr)\bigr)\bigr) + \beta \bigl(\beta e \bigl(e \bigl(2 \ell_{1} + 2 \ell_{2} + 2 u + 2 v\bigr) + \ell_{1} \bigl(2 \ell_{1} + 2 \ell_{2}\bigr)
\\[-1pt]
&\quad{}+ u \bigl(4 \ell_{1} + 3 \ell_{2} + 2 u + 4 v\bigr) + v \bigl(4 \ell_{1} + 2 \ell_{2} + 2 v\bigr)\bigr) + e \bigl(e \bigl(e \bigl(4 \ell_{1} + 4 \ell_{2} + 4 u + 4 v\bigr) + \ell_{1} \bigl(4 \ell_{1} + 6 \ell_{2}\bigr) + u \bigl(10 \ell_{1}
\\[-1pt]
&\quad{}+ 10 \ell_{2} + 6 u + 14 v\bigr) + v \bigl(12 \ell_{1} + 9 \ell_{2} + 8 v\bigr)\bigr) + u \bigl(\ell_{1} \bigl(2 \ell_{1} + 4 \ell_{2}\bigr) + u \bigl(4 \ell_{1} + 5 \ell_{2} + 2 u + 8 v\bigr) + v \bigl(10 \ell_{1} + 8 \ell_{2} + 8 v\bigr)\bigr)
\\[-1pt]
&\quad{}+ v \bigl(\ell_{1} \bigl(2 \ell_{1} + 3 \ell_{2}\bigr) + v \bigl(4 \ell_{1} + 3 \ell_{2} + 2 v\bigr)\bigr)\bigr)\bigr) + e \bigl(e \bigl(e \bigl(6 u v + v \bigl(6 \ell_{1} + 4 \ell_{2} + 6 v\bigr)\bigr) + u \bigl(10 u v + v \bigl(14 \ell_{1} + 7 \ell_{2}
\\[-1pt]
&\quad{}+ 14 v\bigr)\bigr) + v \bigl(\ell_{1} \bigl(4 \ell_{1} + 5 \ell_{2}\bigr) + v \bigl(8 \ell_{1} + 4 \ell_{2} + 4 v\bigr)\bigr)\bigr) + u \bigl(u \bigl(4 u v + v \bigl(6 \ell_{1} + 2 \ell_{2} + 6 v\bigr)\bigr) + v \bigl(\ell_{1} \bigl(2 \ell_{1} + 2 \ell_{2}\bigr)
\\[-1pt]
&\quad{}+ v \bigl(4 \ell_{1} + \ell_{2} + 2 v\bigr)\bigr)\bigr)\bigr)
\end{aligned}
\]

\noindent\textit{Geometry $(\varepsilon_1,\varepsilon_2,j)=(R,L,0)$.}

\noindent Regime $\mathrm I$:

\[
\begin{aligned}
&\mathcal{E}^{\mathrm{I}}_{R,L,0}=\alpha \bigl(\beta t \bigl(r_{1} \bigl(2 r_{1} + 3 r_{2}\bigr) + t \bigl(2 r_{1} + 2 r_{2} + 2 v\bigr) + u \bigl(6 r_{1} + 6 r_{2} + 6 v\bigr) + v \bigl(4 r_{1} + 4 r_{2} + 2 v\bigr)\bigr) + t \bigl(t u \bigl(2 r_{1}
\\[-1pt]
&\quad{}+ 2 r_{2} + 2 v\bigr) + u \bigl(r_{1} \bigl(2 r_{1} + 3 r_{2}\bigr) + u \bigl(2 r_{1} + 2 r_{2} + 2 v\bigr) + v \bigl(4 r_{1} + 2 r_{2} + 2 v\bigr)\bigr)\bigr)\bigr) + \beta \bigl(\beta t \bigl(r_{1} \bigl(2 r_{1} + 2 r_{2}\bigr)
\\[-1pt]
&\quad{}+ t \bigl(2 r_{1} + 2 r_{2} + 2 v\bigr) + u \bigl(4 r_{1} + 4 r_{2} + 4 v\bigr) + v \bigl(4 r_{1} + 3 r_{2} + 2 v\bigr)\bigr) + t \bigl(t \bigl(r_{1} \bigl(4 r_{1} + 6 r_{2}\bigr) + t \bigl(4 r_{1} + 4 r_{2} + 4 v\bigr)
\\[-1pt]
&\quad{}+ u \bigl(14 r_{1} + 14 r_{2} + 14 v\bigr) + v \bigl(10 r_{1} + 10 r_{2} + 6 v\bigr)\bigr) + u \bigl(r_{1} \bigl(2 r_{1} + 5 r_{2}\bigr) + u \bigl(8 r_{1} + 8 r_{2} + 8 v\bigr) + v \bigl(10 r_{1} + 12 r_{2} + 8 v\bigr)\bigr)
\\[-1pt]
&\quad{}+ v \bigl(r_{1} \bigl(2 r_{1} + 4 r_{2}\bigr) + v \bigl(4 r_{1} + 5 r_{2} + 2 v\bigr)\bigr)\bigr)\bigr) + t \bigl(t \bigl(t u \bigl(2 r_{1} + 2 r_{2} + 2 v\bigr) + u \bigl(r_{1} r_{2} + u \bigl(2 r_{1} + 2 r_{2} + 2 v\bigr)
\\[-1pt]
&\quad{}+ v \bigl(2 r_{1} + r_{2} + 2 v\bigr)\bigr)\bigr) + u \bigl(u v \bigl(2 r_{1} + 2 v\bigr) + v \bigl(r_{1} \bigl(2 r_{1} + 2 r_{2}\bigr) + v \bigl(4 r_{1} + r_{2} + 2 v\bigr)\bigr)\bigr)\bigr)
\end{aligned}
\]

\noindent Regime $\mathrm M$:

\[
\begin{aligned}
&\mathcal{E}^{\mathrm{M}}_{R,L,0}=\alpha \bigl(\beta \bigl(e \bigl(e \bigl(2 r_{1} + 2 r_{2} + 2 u + 2 v\bigr) + r_{1} \bigl(2 r_{1} + 3 r_{2}\bigr) + u \bigl(6 r_{1} + 4 r_{2} + 2 u + 6 v\bigr) + v \bigl(4 r_{1} + 4 r_{2} + 2 v\bigr)\bigr)
\\[-1pt]
&\quad{}+ u \bigl(r_{1} \bigl(2 r_{1} + 3 r_{2}\bigr) + u \bigl(2 r_{1} + 2 v\bigr) + v \bigl(4 r_{1} + 2 r_{2} + 2 v\bigr)\bigr)\bigr) + e \bigl(e u \bigl(2 r_{1} + 2 v\bigr) + u \bigl(r_{1} \bigl(2 r_{1} + 3 r_{2}\bigr) + u \bigl(2 r_{1}
\\[-1pt]
&\quad{}+ 2 v\bigr) + v \bigl(4 r_{1} + 2 r_{2} + 2 v\bigr)\bigr)\bigr)\bigr) + \beta \bigl(\beta \bigl(e \bigl(e \bigl(2 r_{1} + 2 r_{2} + 2 u + 2 v\bigr) + r_{1} \bigl(2 r_{1} + 2 r_{2}\bigr) + u \bigl(4 r_{1} + 3 r_{2} + 2 u + 4 v\bigr)
\\[-1pt]
&\quad{}+ v \bigl(4 r_{1} + 3 r_{2} + 2 v\bigr)\bigr) + u \bigl(r_{1} \bigl(2 r_{1} + 2 r_{2}\bigr) + u \bigl(2 r_{1} + 2 v\bigr) + v \bigl(4 r_{1} + r_{2} + 2 v\bigr)\bigr)\bigr) + e \bigl(e \bigl(e \bigl(4 r_{1} + 4 r_{2} + 4 u + 4 v\bigr)
\\[-1pt]
&\quad{}+ r_{1} \bigl(4 r_{1} + 6 r_{2}\bigr) + u \bigl(12 r_{1} + 10 r_{2} + 6 u + 14 v\bigr) + v \bigl(10 r_{1} + 10 r_{2} + 6 v\bigr)\bigr) + u \bigl(r_{1} \bigl(8 r_{1} + 11 r_{2}\bigr) + u \bigl(12 r_{1} + 5 r_{2}
\\[-1pt]
&\quad{}+ 2 u + 14 v\bigr) + v \bigl(22 r_{1} + 14 r_{2} + 14 v\bigr)\bigr) + v \bigl(r_{1} \bigl(2 r_{1} + 4 r_{2}\bigr) + v \bigl(4 r_{1} + 5 r_{2} + 2 v\bigr)\bigr)\bigr) + u \bigl(u \bigl(r_{1} \bigl(4 r_{1} + 4 r_{2}\bigr)
\\[-1pt]
&\quad{}+ u \bigl(4 r_{1} + 4 v\bigr) + v \bigl(10 r_{1} + 2 r_{2} + 6 v\bigr)\bigr) + v \bigl(r_{1} \bigl(2 r_{1} + 2 r_{2}\bigr) + v \bigl(4 r_{1} + r_{2} + 2 v\bigr)\bigr)\bigr)\bigr) + e \bigl(e \bigl(e u \bigl(2 r_{1} + 2 v\bigr)
\\[-1pt]
&\quad{}+ u \bigl(r_{1} \bigl(4 r_{1} + 5 r_{2}\bigr) + u \bigl(6 r_{1} + 6 v\bigr) + v \bigl(10 r_{1} + 3 r_{2} + 6 v\bigr)\bigr)\bigr) + u \bigl(u \bigl(r_{1} \bigl(4 r_{1} + 4 r_{2}\bigr) + u \bigl(4 r_{1} + 4 v\bigr) + v \bigl(10 r_{1}
\\[-1pt]
&\quad{}+ 2 r_{2} + 6 v\bigr)\bigr) + v \bigl(r_{1} \bigl(2 r_{1} + 2 r_{2}\bigr) + v \bigl(4 r_{1} + r_{2} + 2 v\bigr)\bigr)\bigr)\bigr)
\end{aligned}
\]

\noindent\textit{Geometry $(\varepsilon_1,\varepsilon_2,j)=(R,L,1)$.}

\noindent Regime $\mathrm I$:

\[
\begin{aligned}
&\mathcal{E}^{\mathrm{I}}_{R,L,1}=\alpha \bigl(\alpha \bigl(\beta t \bigl(2 r_{1} + 4 u + 2 v\bigr) + t \bigl(4 t u + u \bigl(2 r_{1} + 4 u + 2 v\bigr)\bigr)\bigr) + \beta \bigl(\beta t \bigl(2 r_{1} + 4 u + 2 v\bigr) + t \bigl(3 \ell_{1} r_{1}
\\[-1pt]
&\quad{}+ t \bigl(2 \ell_{1} + 6 r_{1} + 16 u + 6 v\bigr) + u \bigl(6 \ell_{1} + 6 r_{1} + 12 u + 14 v\bigr) + v \bigl(4 \ell_{1} + 4 r_{1} + 4 v\bigr)\bigr)\bigr) + t \bigl(t \bigl(12 t u + u \bigl(2 \ell_{1} + 4 r_{1}
\\[-1pt]
&\quad{}+ 20 u + 12 v\bigr)\bigr) + u \bigl(3 \ell_{1} r_{1} + u \bigl(2 \ell_{1} + 2 r_{1} + 8 u + 10 v\bigr) + v \bigl(2 \ell_{1} + 4 r_{1} + 4 v\bigr)\bigr)\bigr)\bigr) + \beta \bigl(\beta t \bigl(2 \ell_{1} r_{1} + t \bigl(2 \ell_{1} + 2 r_{1}
\\[-1pt]
&\quad{}+ 4 u + 2 v\bigr) + u \bigl(4 \ell_{1} + 2 r_{1} + 4 u + 6 v\bigr) + v \bigl(3 \ell_{1} + 2 r_{1} + 2 v\bigr)\bigr) + t \bigl(t \bigl(6 \ell_{1} r_{1} + t \bigl(4 \ell_{1} + 4 r_{1} + 12 u + 4 v\bigr)
\\[-1pt]
&\quad{}+ u \bigl(14 \ell_{1} + 8 r_{1} + 20 u + 24 v\bigr) + v \bigl(10 \ell_{1} + 6 r_{1} + 6 v\bigr)\bigr) + u \bigl(5 \ell_{1} r_{1} + u \bigl(8 \ell_{1} + 4 r_{1} + 8 u + 16 v\bigr) + v \bigl(12 \ell_{1} + 6 r_{1}
\\[-1pt]
&\quad{}+ 10 v\bigr)\bigr) + v \bigl(4 \ell_{1} r_{1} + v \bigl(5 \ell_{1} + 2 r_{1} + 2 v\bigr)\bigr)\bigr)\bigr) + t \bigl(t \bigl(t \bigl(8 t u + u \bigl(2 \ell_{1} + 2 r_{1} + 20 u + 14 v\bigr)\bigr) + u \bigl(\ell_{1} r_{1} + u \bigl(2 \ell_{1}
\\[-1pt]
&\quad{}+ 2 r_{1} + 16 u + 22 v\bigr) + v \bigl(\ell_{1} + 4 r_{1} + 8 v\bigr)\bigr)\bigr) + u \bigl(u \bigl(u \bigl(4 u + 8 v\bigr) + v \bigl(2 r_{1} + 6 v\bigr)\bigr) + v \bigl(2 \ell_{1} r_{1} + v \bigl(\ell_{1} + 2 r_{1}
\\[-1pt]
&\quad{}+ 2 v\bigr)\bigr)\bigr)\bigr)
\end{aligned}
\]

\noindent Regime $\mathrm M$:

\[
\begin{aligned}
&\mathcal{E}^{\mathrm{M}}_{R,L,1}=\alpha \bigl(\alpha \bigl(\beta \bigl(e \bigl(2 r_{1} + 2 u + 2 v\bigr) + u \bigl(2 r_{1} + 2 v\bigr)\bigr) + e u \bigl(2 r_{1} + 2 v\bigr)\bigr) + \beta \bigl(\beta \bigl(e \bigl(2 r_{1} + 2 u + 2 v\bigr) + u \bigl(2 r_{1} + 2 v\bigr)\bigr)
\\[-1pt]
&\quad{}+ e \bigl(e \bigl(2 \ell_{1} + 6 r_{1} + 6 u + 6 v\bigr) + 3 \ell_{1} r_{1} + u \bigl(4 \ell_{1} + 14 r_{1} + 4 u + 18 v\bigr) + v \bigl(4 \ell_{1} + 4 r_{1} + 4 v\bigr)\bigr) + u \bigl(3 \ell_{1} r_{1}
\\[-1pt]
&\quad{}+ u \bigl(6 r_{1} + 6 v\bigr) + v \bigl(2 \ell_{1} + 4 r_{1} + 4 v\bigr)\bigr)\bigr) + e \bigl(e u \bigl(8 r_{1} + 8 v\bigr) + u \bigl(3 \ell_{1} r_{1} + u \bigl(6 r_{1} + 6 v\bigr) + v \bigl(2 \ell_{1} + 4 r_{1} + 4 v\bigr)\bigr)\bigr)\bigr)
\\[-1pt]
&\quad{}+ \beta \bigl(\beta \bigl(e \bigl(e \bigl(2 \ell_{1} + 2 r_{1} + 2 u + 2 v\bigr) + 2 \ell_{1} r_{1} + u \bigl(3 \ell_{1} + 4 r_{1} + 2 u + 6 v\bigr) + v \bigl(3 \ell_{1} + 2 r_{1} + 2 v\bigr)\bigr) + u \bigl(2 \ell_{1} r_{1}
\\[-1pt]
&\quad{}+ u \bigl(2 r_{1} + 2 v\bigr) + v \bigl(\ell_{1} + 2 r_{1} + 2 v\bigr)\bigr)\bigr) + e \bigl(e \bigl(e \bigl(4 \ell_{1} + 4 r_{1} + 4 u + 4 v\bigr) + 6 \ell_{1} r_{1} + u \bigl(10 \ell_{1} + 14 r_{1} + 6 u + 20 v\bigr)
\\[-1pt]
&\quad{}+ v \bigl(10 \ell_{1} + 6 r_{1} + 6 v\bigr)\bigr) + u \bigl(11 \ell_{1} r_{1} + u \bigl(5 \ell_{1} + 14 r_{1} + 2 u + 18 v\bigr) + v \bigl(14 \ell_{1} + 14 r_{1} + 16 v\bigr)\bigr) + v \bigl(4 \ell_{1} r_{1}
\\[-1pt]
&\quad{}+ v \bigl(5 \ell_{1} + 2 r_{1} + 2 v\bigr)\bigr)\bigr) + u \bigl(u \bigl(4 \ell_{1} r_{1} + u \bigl(4 r_{1} + 4 v\bigr) + v \bigl(2 \ell_{1} + 6 r_{1} + 6 v\bigr)\bigr) + v \bigl(2 \ell_{1} r_{1} + v \bigl(\ell_{1} + 2 r_{1} + 2 v\bigr)\bigr)\bigr)\bigr)
\\[-1pt]
&\quad{}+ e \bigl(e \bigl(e u \bigl(6 r_{1} + 6 v\bigr) + u \bigl(5 \ell_{1} r_{1} + u \bigl(10 r_{1} + 10 v\bigr) + v \bigl(3 \ell_{1} + 8 r_{1} + 8 v\bigr)\bigr)\bigr) + u \bigl(u \bigl(4 \ell_{1} r_{1} + u \bigl(4 r_{1} + 4 v\bigr)
\\[-1pt]
&\quad{}+ v \bigl(2 \ell_{1} + 6 r_{1} + 6 v\bigr)\bigr) + v \bigl(2 \ell_{1} r_{1} + v \bigl(\ell_{1} + 2 r_{1} + 2 v\bigr)\bigr)\bigr)\bigr)
\end{aligned}
\]

\noindent\textit{Geometry $(\varepsilon_1,\varepsilon_2,j)=(R,L,2)$.}

\noindent Regime $\mathrm I$:

\[
\begin{aligned}
&\mathcal{E}^{\mathrm{I}}_{R,L,2}=\alpha \bigl(\alpha \bigl(\beta t \bigl(2 \ell_{1} + 2 \ell_{2}\bigr) + t u \bigl(2 \ell_{1} + 2 \ell_{2}\bigr)\bigr) + \beta \bigl(\beta t \bigl(2 \ell_{1} + 2 \ell_{2}\bigr) + t \bigl(\ell_{1} \bigl(2 \ell_{1} + 3 \ell_{2}\bigr) + t \bigl(6 \ell_{1} + 6 \ell_{2}\bigr)
\\[-1pt]
&\quad{}+ u \bigl(6 \ell_{1} + 6 \ell_{2}\bigr) + v \bigl(4 \ell_{1} + 4 \ell_{2}\bigr)\bigr)\bigr) + t \bigl(t u \bigl(4 \ell_{1} + 4 \ell_{2}\bigr) + u \bigl(\ell_{1} \bigl(2 \ell_{1} + 3 \ell_{2}\bigr) + u \bigl(2 \ell_{1} + 2 \ell_{2}\bigr) + v \bigl(4 \ell_{1} + 4 \ell_{2}\bigr)\bigr)\bigr)\bigr)
\\[-1pt]
&\quad{}+ \beta \bigl(\beta t \bigl(\ell_{1} \bigl(2 \ell_{1} + 2 \ell_{2}\bigr) + t \bigl(2 \ell_{1} + 2 \ell_{2}\bigr) + u \bigl(2 \ell_{1} + 2 \ell_{2}\bigr) + v \bigl(2 \ell_{1} + 2 \ell_{2}\bigr)\bigr) + t \bigl(t \bigl(\ell_{1} \bigl(4 \ell_{1} + 6 \ell_{2}\bigr) + t \bigl(4 \ell_{1} + 4 \ell_{2}\bigr)
\\[-1pt]
&\quad{}+ u \bigl(8 \ell_{1} + 8 \ell_{2}\bigr) + v \bigl(6 \ell_{1} + 6 \ell_{2}\bigr)\bigr) + u \bigl(\ell_{1} \bigl(2 \ell_{1} + 5 \ell_{2}\bigr) + u \bigl(4 \ell_{1} + 4 \ell_{2}\bigr) + v \bigl(6 \ell_{1} + 6 \ell_{2}\bigr)\bigr) + v \bigl(\ell_{1} \bigl(2 \ell_{1} + 4 \ell_{2}\bigr)
\\[-1pt]
&\quad{}+ v \bigl(2 \ell_{1} + 2 \ell_{2}\bigr)\bigr)\bigr)\bigr) + t \bigl(t \bigl(t u \bigl(2 \ell_{1} + 2 \ell_{2}\bigr) + u \bigl(\ell_{1} \ell_{2} + u \bigl(2 \ell_{1} + 2 \ell_{2}\bigr) + v \bigl(4 \ell_{1} + 4 \ell_{2}\bigr)\bigr)\bigr) + u \bigl(u v \bigl(2 \ell_{1} + 2 \ell_{2}\bigr)
\\[-1pt]
&\quad{}+ v \bigl(\ell_{1} \bigl(2 \ell_{1} + 2 \ell_{2}\bigr) + v \bigl(2 \ell_{1} + 2 \ell_{2}\bigr)\bigr)\bigr)\bigr)
\end{aligned}
\]

\noindent Regime $\mathrm M$:

\[
\begin{aligned}
&\mathcal{E}^{\mathrm{M}}_{R,L,2}=\alpha \bigl(\alpha \bigl(\beta \bigl(e \bigl(2 \ell_{1} + 2 \ell_{2}\bigr) + u \bigl(2 \ell_{1} + 2 \ell_{2}\bigr)\bigr) + e u \bigl(2 \ell_{1} + 2 \ell_{2}\bigr)\bigr) + \beta \bigl(\beta \bigl(e \bigl(2 \ell_{1} + 2 \ell_{2}\bigr) + u \bigl(2 \ell_{1} + 2 \ell_{2}\bigr)\bigr)
\\[-1pt]
&\quad{}+ e \bigl(e \bigl(6 \ell_{1} + 6 \ell_{2}\bigr) + \ell_{1} \bigl(2 \ell_{1} + 3 \ell_{2}\bigr) + u \bigl(14 \ell_{1} + 14 \ell_{2}\bigr) + v \bigl(4 \ell_{1} + 4 \ell_{2}\bigr)\bigr) + u \bigl(\ell_{1} \bigl(2 \ell_{1} + 3 \ell_{2}\bigr) + u \bigl(6 \ell_{1} + 6 \ell_{2}\bigr)
\\[-1pt]
&\quad{}+ v \bigl(4 \ell_{1} + 4 \ell_{2}\bigr)\bigr)\bigr) + e \bigl(e u \bigl(8 \ell_{1} + 8 \ell_{2}\bigr) + u \bigl(\ell_{1} \bigl(2 \ell_{1} + 3 \ell_{2}\bigr) + u \bigl(6 \ell_{1} + 6 \ell_{2}\bigr) + v \bigl(4 \ell_{1} + 4 \ell_{2}\bigr)\bigr)\bigr)\bigr)
\\[-1pt]
&\quad{}+ \beta \bigl(\beta \bigl(e \bigl(e \bigl(2 \ell_{1} + 2 \ell_{2}\bigr) + \ell_{1} \bigl(2 \ell_{1} + 2 \ell_{2}\bigr) + u \bigl(4 \ell_{1} + 4 \ell_{2}\bigr) + v \bigl(2 \ell_{1} + 2 \ell_{2}\bigr)\bigr) + u \bigl(\ell_{1} \bigl(2 \ell_{1} + 2 \ell_{2}\bigr) + u \bigl(2 \ell_{1} + 2 \ell_{2}\bigr)
\\[-1pt]
&\quad{}+ v \bigl(2 \ell_{1} + 2 \ell_{2}\bigr)\bigr)\bigr) + e \bigl(e \bigl(e \bigl(4 \ell_{1} + 4 \ell_{2}\bigr) + \ell_{1} \bigl(4 \ell_{1} + 6 \ell_{2}\bigr) + u \bigl(14 \ell_{1} + 14 \ell_{2}\bigr) + v \bigl(6 \ell_{1} + 6 \ell_{2}\bigr)\bigr) + u \bigl(\ell_{1} \bigl(8 \ell_{1} + 11 \ell_{2}\bigr)
\\[-1pt]
&\quad{}+ u \bigl(14 \ell_{1} + 14 \ell_{2}\bigr) + v \bigl(14 \ell_{1} + 14 \ell_{2}\bigr)\bigr) + v \bigl(\ell_{1} \bigl(2 \ell_{1} + 4 \ell_{2}\bigr) + v \bigl(2 \ell_{1} + 2 \ell_{2}\bigr)\bigr)\bigr) + u \bigl(u \bigl(\ell_{1} \bigl(4 \ell_{1} + 4 \ell_{2}\bigr) + u \bigl(4 \ell_{1}
\\[-1pt]
&\quad{}+ 4 \ell_{2}\bigr) + v \bigl(6 \ell_{1} + 6 \ell_{2}\bigr)\bigr) + v \bigl(\ell_{1} \bigl(2 \ell_{1} + 2 \ell_{2}\bigr) + v \bigl(2 \ell_{1} + 2 \ell_{2}\bigr)\bigr)\bigr)\bigr) + e \bigl(e \bigl(e u \bigl(6 \ell_{1} + 6 \ell_{2}\bigr) + u \bigl(\ell_{1} \bigl(4 \ell_{1} + 5 \ell_{2}\bigr)
\\[-1pt]
&\quad{}+ u \bigl(10 \ell_{1} + 10 \ell_{2}\bigr) + v \bigl(8 \ell_{1} + 8 \ell_{2}\bigr)\bigr)\bigr) + u \bigl(u \bigl(\ell_{1} \bigl(4 \ell_{1} + 4 \ell_{2}\bigr) + u \bigl(4 \ell_{1} + 4 \ell_{2}\bigr) + v \bigl(6 \ell_{1} + 6 \ell_{2}\bigr)\bigr) + v \bigl(\ell_{1} \bigl(2 \ell_{1} + 2 \ell_{2}\bigr)
\\[-1pt]
&\quad{}+ v \bigl(2 \ell_{1} + 2 \ell_{2}\bigr)\bigr)\bigr)\bigr)
\end{aligned}
\]

\noindent\textit{Geometry $(\varepsilon_1,\varepsilon_2,j)=(R,R,0)$.}

\noindent Regime $\mathrm I$:

\[
\begin{aligned}
&\mathcal{E}^{\mathrm{I}}_{R,R,0}=\beta t \bigl(t \bigl(2 r_{1} r_{2} + v \bigl(2 r_{1} + 2 r_{2}\bigr)\bigr) + u \bigl(r_{1} \bigl(4 r_{1} + 6 r_{2}\bigr) + v \bigl(6 r_{1} + 6 r_{2}\bigr)\bigr) + v \bigl(r_{1} \bigl(2 r_{1} + 4 r_{2}\bigr) + v \bigl(2 r_{1} + 2 r_{2}\bigr)\bigr)\bigr)
\\[-1pt]
&\quad{}+ t \bigl(t u \bigl(r_{1} \bigl(4 r_{1} + 2 r_{2}\bigr) + v \bigl(2 r_{1} + 2 r_{2}\bigr)\bigr) + u \bigl(u \bigl(r_{1} \bigl(4 r_{1} + 2 r_{2}\bigr) + v \bigl(2 r_{1} + 2 r_{2}\bigr)\bigr) + v \bigl(r_{1} \bigl(2 r_{1} + 2 r_{2}\bigr) + v \bigl(2 r_{1}
\\[-1pt]
&\quad{}+ 2 r_{2}\bigr)\bigr)\bigr)\bigr)
\end{aligned}
\]

\noindent Regime $\mathrm M$:

\[
\begin{aligned}
&\mathcal{E}^{\mathrm{M}}_{R,R,0}=\beta \bigl(e \bigl(e \bigl(2 r_{1} r_{2} + u \bigl(2 r_{1} + 2 r_{2}\bigr) + v \bigl(2 r_{1} + 2 r_{2}\bigr)\bigr) + u \bigl(r_{1} \bigl(2 r_{1} + 4 r_{2}\bigr) + u \bigl(2 r_{1} + 2 r_{2}\bigr) + v \bigl(6 r_{1} + 6 r_{2}\bigr)\bigr)
\\[-1pt]
&\quad{}+ v \bigl(r_{1} \bigl(2 r_{1} + 4 r_{2}\bigr) + v \bigl(2 r_{1} + 2 r_{2}\bigr)\bigr)\bigr) + u \bigl(u v \bigl(2 r_{1} + 2 r_{2}\bigr) + v \bigl(r_{1} \bigl(2 r_{1} + 2 r_{2}\bigr) + v \bigl(2 r_{1} + 2 r_{2}\bigr)\bigr)\bigr)\bigr)
\\[-1pt]
&\quad{}+ e \bigl(e u v \bigl(2 r_{1} + 2 r_{2}\bigr) + u \bigl(u v \bigl(2 r_{1} + 2 r_{2}\bigr) + v \bigl(r_{1} \bigl(2 r_{1} + 2 r_{2}\bigr) + v \bigl(2 r_{1} + 2 r_{2}\bigr)\bigr)\bigr)\bigr)
\end{aligned}
\]

\noindent\textit{Geometry $(\varepsilon_1,\varepsilon_2,j)=(R,R,1)$.}

\noindent Regime $\mathrm I$:

\[
\begin{aligned}
&\mathcal{E}^{\mathrm{I}}_{R,R,1}=\alpha \bigl(\beta t \bigl(t \bigl(2 r_{1} + 2 v\bigr) + u \bigl(6 r_{1} + 6 v\bigr) + v \bigl(4 r_{1} + 2 v\bigr)\bigr) + t \bigl(t u \bigl(2 r_{1} + 2 v\bigr) + u \bigl(u \bigl(2 r_{1} + 2 v\bigr) + v \bigl(2 r_{1} + 2 v\bigr)\bigr)\bigr)\bigr)
\\[-1pt]
&\quad{}+ \beta \bigl(\beta t \bigl(t \bigl(2 r_{1} + 2 v\bigr) + u \bigl(4 r_{1} + 4 v\bigr) + v \bigl(3 r_{1} + 2 v\bigr)\bigr) + t \bigl(t \bigl(2 \ell_{1} r_{1} + t \bigl(4 r_{1} + 4 v\bigr) + u \bigl(14 r_{1} + 14 v\bigr) + v \bigl(2 \ell_{1}
\\[-1pt]
&\quad{}+ 10 r_{1} + 6 v\bigr)\bigr) + u \bigl(6 \ell_{1} r_{1} + u \bigl(8 r_{1} + 8 v\bigr) + v \bigl(6 \ell_{1} + 12 r_{1} + 8 v\bigr)\bigr) + v \bigl(4 \ell_{1} r_{1} + v \bigl(2 \ell_{1} + 5 r_{1} + 2 v\bigr)\bigr)\bigr)\bigr)
\\[-1pt]
&\quad{}+ t \bigl(t \bigl(t u \bigl(2 r_{1} + 2 v\bigr) + u \bigl(2 \ell_{1} r_{1} + u \bigl(2 r_{1} + 2 v\bigr) + v \bigl(2 \ell_{1} + r_{1} + 2 v\bigr)\bigr)\bigr) + u \bigl(u \bigl(2 \ell_{1} r_{1} + v \bigl(2 \ell_{1} + 2 v\bigr)\bigr)
\\[-1pt]
&\quad{}+ v \bigl(2 \ell_{1} r_{1} + v \bigl(2 \ell_{1} + r_{1} + 2 v\bigr)\bigr)\bigr)\bigr)
\end{aligned}
\]

\noindent Regime $\mathrm M$:

\[
\begin{aligned}
&\mathcal{E}^{\mathrm{M}}_{R,R,1}=\alpha \bigl(\beta \bigl(e \bigl(e \bigl(2 r_{1} + 2 u + 2 v\bigr) + u \bigl(4 r_{1} + 2 u + 6 v\bigr) + v \bigl(4 r_{1} + 2 v\bigr)\bigr) + u \bigl(2 u v + v \bigl(2 r_{1} + 2 v\bigr)\bigr)\bigr) + e \bigl(2 e u v
\\[-1pt]
&\quad{}+ u \bigl(2 u v + v \bigl(2 r_{1} + 2 v\bigr)\bigr)\bigr)\bigr) + \beta \bigl(\beta \bigl(e \bigl(e \bigl(2 r_{1} + 2 u + 2 v\bigr) + u \bigl(3 r_{1} + 2 u + 4 v\bigr) + v \bigl(3 r_{1} + 2 v\bigr)\bigr) + u \bigl(2 u v + v \bigl(r_{1} + 2 v\bigr)\bigr)\bigr)
\\[-1pt]
&\quad{}+ e \bigl(e \bigl(e \bigl(4 r_{1} + 4 u + 4 v\bigr) + 2 \ell_{1} r_{1} + u \bigl(2 \ell_{1} + 10 r_{1} + 6 u + 14 v\bigr) + v \bigl(2 \ell_{1} + 10 r_{1} + 6 v\bigr)\bigr) + u \bigl(4 \ell_{1} r_{1} + u \bigl(2 \ell_{1}
\\[-1pt]
&\quad{}+ 5 r_{1} + 2 u + 14 v\bigr) + v \bigl(6 \ell_{1} + 14 r_{1} + 14 v\bigr)\bigr) + v \bigl(4 \ell_{1} r_{1} + v \bigl(2 \ell_{1} + 5 r_{1} + 2 v\bigr)\bigr)\bigr) + u \bigl(u \bigl(4 u v + v \bigl(2 \ell_{1} + 2 r_{1} + 6 v\bigr)\bigr)
\\[-1pt]
&\quad{}+ v \bigl(2 \ell_{1} r_{1} + v \bigl(2 \ell_{1} + r_{1} + 2 v\bigr)\bigr)\bigr)\bigr) + e \bigl(e \bigl(2 e u v + u \bigl(6 u v + v \bigl(2 \ell_{1} + 3 r_{1} + 6 v\bigr)\bigr)\bigr) + u \bigl(u \bigl(4 u v + v \bigl(2 \ell_{1} + 2 r_{1}
\\[-1pt]
&\quad{}+ 6 v\bigr)\bigr) + v \bigl(2 \ell_{1} r_{1} + v \bigl(2 \ell_{1} + r_{1} + 2 v\bigr)\bigr)\bigr)\bigr)
\end{aligned}
\]

\noindent\textit{Geometry $(\varepsilon_1,\varepsilon_2,j)=(R,R,2)$.}

\noindent Regime $\mathrm I$:

\[
\begin{aligned}
&\mathcal{E}^{\mathrm{I}}_{R,R,2}=\alpha \bigl(\alpha \bigl(\beta t \bigl(4 u + 2 v\bigr) + t \bigl(4 t u + u \bigl(4 u + 2 v\bigr)\bigr)\bigr) + \beta \bigl(\beta t \bigl(4 u + 2 v\bigr) + t \bigl(t \bigl(2 \ell_{2} + 16 u + 6 v\bigr) + u \bigl(8 \ell_{1}
\\[-1pt]
&\quad{}+ 6 \ell_{2} + 12 u + 14 v\bigr) + v \bigl(4 \ell_{1} + 4 \ell_{2} + 4 v\bigr)\bigr)\bigr) + t \bigl(t \bigl(12 t u + u \bigl(8 \ell_{1} + 2 \ell_{2} + 20 u + 12 v\bigr)\bigr) + u \bigl(u \bigl(8 \ell_{1} + 2 \ell_{2} + 8 u
\\[-1pt]
&\quad{}+ 10 v\bigr) + v \bigl(4 \ell_{1} + 2 \ell_{2} + 4 v\bigr)\bigr)\bigr)\bigr) + \beta \bigl(\beta t \bigl(t \bigl(2 \ell_{2} + 4 u + 2 v\bigr) + u \bigl(4 \ell_{1} + 4 \ell_{2} + 4 u + 6 v\bigr) + v \bigl(2 \ell_{1} + 3 \ell_{2} + 2 v\bigr)\bigr)
\\[-1pt]
&\quad{}+ t \bigl(t \bigl(2 \ell_{1} \ell_{2} + t \bigl(4 \ell_{2} + 12 u + 4 v\bigr) + u \bigl(16 \ell_{1} + 14 \ell_{2} + 20 u + 24 v\bigr) + v \bigl(6 \ell_{1} + 10 \ell_{2} + 6 v\bigr)\bigr) + u \bigl(\ell_{1} \bigl(4 \ell_{1} + 6 \ell_{2}\bigr)
\\[-1pt]
&\quad{}+ u \bigl(12 \ell_{1} + 8 \ell_{2} + 8 u + 16 v\bigr) + v \bigl(14 \ell_{1} + 12 \ell_{2} + 10 v\bigr)\bigr) + v \bigl(\ell_{1} \bigl(2 \ell_{1} + 4 \ell_{2}\bigr) + v \bigl(4 \ell_{1} + 5 \ell_{2} + 2 v\bigr)\bigr)\bigr)\bigr)
\\[-1pt]
&\quad{}+ t \bigl(t \bigl(t \bigl(8 t u + u \bigl(12 \ell_{1} + 2 \ell_{2} + 20 u + 14 v\bigr)\bigr) + u \bigl(\ell_{1} \bigl(4 \ell_{1} + 2 \ell_{2}\bigr) + u \bigl(20 \ell_{1} + 2 \ell_{2} + 16 u + 22 v\bigr) + v \bigl(12 \ell_{1} + \ell_{2} + 8 v\bigr)\bigr)\bigr)
\\[-1pt]
&\quad{}+ u \bigl(u \bigl(\ell_{1} \bigl(4 \ell_{1} + 2 \ell_{2}\bigr) + u \bigl(8 \ell_{1} + 4 u + 8 v\bigr) + v \bigl(10 \ell_{1} + 6 v\bigr)\bigr) + v \bigl(\ell_{1} \bigl(2 \ell_{1} + 2 \ell_{2}\bigr) + v \bigl(4 \ell_{1} + \ell_{2} + 2 v\bigr)\bigr)\bigr)\bigr)
\end{aligned}
\]

\noindent Regime $\mathrm M$:

\[
\begin{aligned}
&\mathcal{E}^{\mathrm{M}}_{R,R,2}=\alpha \bigl(\alpha \bigl(\beta \bigl(e \bigl(2 u + 2 v\bigr) + 2 u v\bigr) + 2 e u v\bigr) + \beta \bigl(\beta \bigl(e \bigl(2 u + 2 v\bigr) + 2 u v\bigr) + e \bigl(e \bigl(2 \ell_{2} + 6 u + 6 v\bigr) + u \bigl(4 \ell_{1}
\\[-1pt]
&\quad{}+ 4 \ell_{2} + 4 u + 18 v\bigr) + v \bigl(4 \ell_{1} + 4 \ell_{2} + 4 v\bigr)\bigr) + u \bigl(6 u v + v \bigl(4 \ell_{1} + 2 \ell_{2} + 4 v\bigr)\bigr)\bigr) + e \bigl(8 e u v + u \bigl(6 u v + v \bigl(4 \ell_{1} + 2 \ell_{2} + 4 v\bigr)\bigr)\bigr)\bigr)
\\[-1pt]
&\quad{}+ \beta \bigl(\beta \bigl(e \bigl(e \bigl(2 \ell_{2} + 2 u + 2 v\bigr) + u \bigl(2 \ell_{1} + 3 \ell_{2} + 2 u + 6 v\bigr) + v \bigl(2 \ell_{1} + 3 \ell_{2} + 2 v\bigr)\bigr) + u \bigl(2 u v + v \bigl(2 \ell_{1} + \ell_{2} + 2 v\bigr)\bigr)\bigr)
\\[-1pt]
&\quad{}+ e \bigl(e \bigl(e \bigl(4 \ell_{2} + 4 u + 4 v\bigr) + 2 \ell_{1} \ell_{2} + u \bigl(6 \ell_{1} + 10 \ell_{2} + 6 u + 20 v\bigr) + v \bigl(6 \ell_{1} + 10 \ell_{2} + 6 v\bigr)\bigr) + u \bigl(\ell_{1} \bigl(2 \ell_{1} + 4 \ell_{2}\bigr)
\\[-1pt]
&\quad{}+ u \bigl(4 \ell_{1} + 5 \ell_{2} + 2 u + 18 v\bigr) + v \bigl(18 \ell_{1} + 14 \ell_{2} + 16 v\bigr)\bigr) + v \bigl(\ell_{1} \bigl(2 \ell_{1} + 4 \ell_{2}\bigr) + v \bigl(4 \ell_{1} + 5 \ell_{2} + 2 v\bigr)\bigr)\bigr) + u \bigl(u \bigl(4 u v
\\[-1pt]
&\quad{}+ v \bigl(6 \ell_{1} + 2 \ell_{2} + 6 v\bigr)\bigr) + v \bigl(\ell_{1} \bigl(2 \ell_{1} + 2 \ell_{2}\bigr) + v \bigl(4 \ell_{1} + \ell_{2} + 2 v\bigr)\bigr)\bigr)\bigr) + e \bigl(e \bigl(6 e u v + u \bigl(10 u v + v \bigl(8 \ell_{1} + 3 \ell_{2} + 8 v\bigr)\bigr)\bigr)
\\[-1pt]
&\quad{}+ u \bigl(u \bigl(4 u v + v \bigl(6 \ell_{1} + 2 \ell_{2} + 6 v\bigr)\bigr) + v \bigl(\ell_{1} \bigl(2 \ell_{1} + 2 \ell_{2}\bigr) + v \bigl(4 \ell_{1} + \ell_{2} + 2 v\bigr)\bigr)\bigr)\bigr)
\end{aligned}
\]

\paragraph{Verification protocol.}
Every row above is obtained by substituting the stated coordinates into
the exact clipping identity, ordering each selected triple on the line,
replacing its Product-Gap weight by the product of its two consecutive
gaps, resolving the minima according to the relevant radial regime,
and collecting in the displayed nonnegative variables. The Horner
forms are exact, and coefficientwise nonnegativity proves every row.

% END OF FULL EDITABLE EXTERIOR TABLES.
% Expected contents: 45 shadow rows, 6 three-witness rows, 24 four-witness rows.

\endgroup

\subsection{Deletion induction and continuity}

The three report regions cover the line, so every reduced atom is
nonnegative. Therefore,

$$
\Psi(O)\geq\Psi(O\setminus\{p\})
$$

whenever $p$ is an ordinary report closest to $y$. Repeatedly delete a
closest ordinary report. Once fewer than three ordinary reports remain,
there are no ordinary triples, so $Z=B=0$ and $\Psi=0$. Restoring the
deleted reports gives $\Psi(O)\geq0$. The expected-cost identity then
implies

$$
C_y(r\mid O)\geq C_y(y\mid O).
$$

For fixed labels, every consecutive-gap product and every
nearest-distance term is continuous in all report coordinates.
Approximating an arbitrary profile by generic profiles and taking a
limit handles repeated reports, nonunique closest reports, coordinate
ties, and radial ties. Thus the three-facility Product-Gap mechanism is
strategyproof in expectation.

\section{Exact Rational Certificate for the Four-Facility Counterexample}
\label{app:k4-certificate}

This appendix verifies the strict manipulation in
Proposition~\ref{prop:k4-manipulation} using exact rational arithmetic.
The strategic agent has true location $0$ and reports
$h=1/1000$. There are $m=100$ ordinary agents at $0$. After deleting
these labels, let $P$ contain one agent at $-1$, one agent at $1$, and,
for every $j\in\{1,\ldots,100\}$, exactly $2^j$ agents at

$$
-1-\frac{1}{100\cdot2^j}.
$$

Define

\[
\begin{aligned}
Z
&=\sum_{F\in\binom{P}{4}}W_4(F),
&
B
&=\sum_{F\in\binom{P}{4}}W_4(F)d(0,F),\\
S_0
&=\sum_{A\in\binom{P}{3}}W_4(A\cup\{0\}),
&
S_h
&=\sum_{A\in\binom{P}{3}}W_4(A\cup\{h\}),\\
T
&=\sum_{A\in\binom{P}{3}}W_4(A\cup\{h\})d(0,A\cup\{h\}),
&
J
&=\sum_{C\in\binom{P}{2}}W_4(C\cup\{0,h\}).
\end{aligned}
\]

Every report in $P$ has absolute value at least $1$, so $T=hS_h$.
Positive-weight selected sets use at most one label at a fixed
coordinate. Hence

$$
U(0)=Z+(m+1)S_0,
\qquad
M(0)=B,
$$

and

$$
U(h)=Z+S_h+m(S_0+J),
\qquad
M(h)=B+T.
$$

Therefore,

\[
\begin{aligned}
\Phi_4(O;h)
&=U(0)M(h)-U(h)M(0)\\
&=(Z+S_0)(B+T)-(Z+S_h)B
+m(S_0T-BJ).
\end{aligned}
\]

We evaluate these quantities by coordinate support. Let the distinct
supports of a finite profile be $x_1<\cdots<x_s$, with multiplicities
$\mu_1,\ldots,\mu_s$. Let $G_\ell(j)$ be the total labeled
Product-Gap weight of all increasing selections of $\ell$ distinct
supports whose rightmost support is $x_j$. Then

$$
G_1(j)=\mu_j
$$

and, for $\ell\geq2$,

$$
G_\ell(j)
=
\mu_j\sum_{i<j}G_{\ell-1}(i)(x_j-x_i).
$$

Thus the total four-set mass is

$$
\mathsf W(P)=\sum_{j=1}^sG_4(j).
$$

Applying the same recurrence after adding $0$ and $h$ gives

$$
S_0=\mathsf W(P\cup\{0\})-\mathsf W(P),
$$

$$
S_h=\mathsf W(P\cup\{h\})-\mathsf W(P),
$$

and

\[
J
=
\mathsf W(P\cup\{0,h\})
-\mathsf W(P\cup\{0\})
-\mathsf W(P\cup\{h\})
+\mathsf W(P).
\]

Every ordinary four-set containing $-1$ or $1$ has nearest distance
$1$. An all-left four-set whose rightmost selected group is the support
indexed by $j$ has nearest distance
$1+1/(100\cdot2^j)$. Consequently,

$$
B
=
Z+
\sum_{j=1}^{100}\frac{G_4(j)}{100\cdot2^j}.
$$

All preceding operations are exact rational additions and
multiplications. They give

$$
\Phi_4(O;h)=-\frac{N_\Phi}{D_\Phi}<0,
$$

where

\[
\begin{aligned}
N_\Phi={}&
119729241211910914414021691080179703547390611638925094828474771\\
&101411794241079540697997847748037180981054038680063806949043408217,
\\[1mm]
D_\Phi={}&
401734511064747568885490523085290650630550748445698208825344\\
&00000000000.
\end{aligned}
\]

Both integers are positive, so the determinant is strictly negative.
For readability,

$$
\frac{M(0)}{U(0)}
\approx
0.00724556377223489236
$$

and

$$
\frac{M(h)}{U(h)}
\approx
0.00724554178194980469.
$$

The decimal values are only a check; the exact negative rational
determinant certifies the profitable deviation.

\end{document}